\documentclass[11pt]{article}
\usepackage[utf8]{inputenc}
\usepackage[T1]{fontenc}
\usepackage{lmodern}
\usepackage{amsmath,amssymb,amsthm,mathtools}
\usepackage[letterpaper,margin=1in]{geometry}
\usepackage{booktabs,array,tabularx}
\usepackage{float}
\usepackage{tikz}
\usepackage{needspace}
\usepackage{etoolbox}
\usepackage[section]{placeins}
\usepackage{enumitem}
\usepackage{microtype}
\usepackage{parskip}
\usepackage{titlesec}
\usepackage[numbers]{natbib}
\usepackage{hyperref}
\hypersetup{breaklinks=true,colorlinks=true,linkcolor=blue,citecolor=blue,urlcolor=blue,pdfborder={0 0 0},pdftitle={Asymptotic Rank of Weighted Graph Tensors via Completely Positive Covers},pdfauthor={Huanyi Zhou}}
\titlespacing*{\section}{0pt}{1.5\baselineskip}{1\baselineskip}
\titlespacing*{\subsection}{0pt}{1.1\baselineskip}{.5\baselineskip}
\setlist[itemize]{leftmargin=2em,itemsep=.6em}
\theoremstyle{plain}
\newtheorem{theorem}{Theorem}[section]
\newtheorem{lemma}[theorem]{Lemma}
\newtheorem{proposition}[theorem]{Proposition}
\newtheorem{corollary}[theorem]{Corollary}

\theoremstyle{definition}
\newtheorem{definition}[theorem]{Definition}
\newtheorem{remark}[theorem]{Remark}

\newcommand{\C}{\mathbb C}
\newcommand{\F}{\mathbb F}
\newcommand{\WT}{\mathsf{WT}}
\newcommand{\CP}{\mathrm{CP}}
\newcommand{\SDP}{\mathrm{SDP}}
\newcommand{\COP}{\mathrm{COP}}
\newcommand{\N}{\mathbb N}
\newcommand{\Tens}{\mathsf{Tens}}
\newcommand{\Arr}{\mathsf{Arr}}
\newcommand{\Ksum}{\oplus_{\!K}}
\newcommand{\Kprod}{\otimes_{\!K}}
\newcommand{\Cprod}{\otimes_{\!C}}
\newcommand{\bigKsum}{\mathop{\bigoplus\nolimits_{\!K}}\displaylimits}
\newcommand{\bigKprod}{\mathop{\bigotimes\nolimits_{\!K}}\displaylimits}
\newcommand{\bigCprod}{\mathop{\bigotimes\nolimits_{\!C}}\displaylimits}
\newcommand{\GT}{\mathsf{GT}}
\newcommand{\Id}{\mathsf{I}}
\newcommand{\R}{\operatorname{R}}
\newcommand{\Q}{\operatorname{Q}}
\newcommand{\aR}{\widetilde{\operatorname{R}}}
\newcommand{\aQ}{\widetilde{\operatorname{Q}}}
\newcommand{\Poly}{\mathcal C}
\DeclareMathOperator{\MaxCut}{MaxCut}
\DeclareMathOperator{\MinCut}{MinCut}
\DeclareMathOperator{\supp}{supp}
\DeclareMathOperator{\rank}{rank}
\title{Asymptotic Rank of Weighted Graph Tensors\\via Completely Positive Covers}
\author{Huanyi Zhou\thanks{Tsinghua University. \href{mailto:zhouhuan24@mails.tsinghua.edu.cn}{zhouhuan24@mails.tsinghua.edu.cn}.}}
\date{}
\makeatletter
\patchcmd{\abstract}{\small}{\normalsize}{}{\PackageError{manuscript}{Abstract font patch failed}{}}
\patchcmd{\l@section}{1.0em}{.45em}{}{\PackageError{manuscript}{Contents spacing patch failed}{}}
\makeatother

\begin{document}
	\maketitle
	\begin{abstract}
		We give a conic upper bound on the asymptotic rank of every weighted graph tensor over every field. If $W$ is the logarithmic edge-weight matrix of its dimension matrix $Q$, then
		\[
		\MaxCut(W)\le\log_2\aR_{\F}(\WT(Q))\le U_{\CP}^{\Delta}(W).
		\]
		Here $U_{\CP}^{\Delta}(W)$ maximizes $\sum_{1\le i<j\le n}W_{i,j}P_{i,j}$ over nonnegative symmetric zero-diagonal matrices $P$ such that $J-2P$ is copositive and $P_{i,j}\le P_{i,k}+P_{k,j}$ for all distinct vertices. The matrix $J$ is all-one, and copositivity means nonnegativity of the quadratic form on nonnegative vectors.
		
		\smallskip
		Our starting point extends OpenAI's polynomial multiplication method and its subsequent arbitrary-field developments from three tensor legs to arbitrary order. We prove
		\[
		\left\lfloor\frac{n^2}{4}\right\rfloor\le\log_2\aR_{\F}(\GT_2(K_n))\le\frac{n^2}{4}.
		\]
		The edge exponent is at most $n/(2(n-1))$ and tends to the optimal value $1/2$, improving the previous Coppersmith--Winograd bounds and answering Question~1.3.1 of Christandl, Vrana, and Zuiddam. The bounds coincide for every even $n$.
		
		\smallskip
		The key construction separates blocks sharing several tensor legs at a cost linear in the number of blocks. Its entropy inequality, together with polynomial degenerations, forces diagonal growth of a symmetric spectral profile; comparing this growth with an interpolation bound gives the complete-graph estimate. Although the construction uses complete-graph symmetry, grouping vertices turns its conclusion into quadratic constraints for every nonnegative vertex weighting. Completely positive combinations of these constraints and path restrictions then yield the weighted theorem, connecting Strassen's asymptotic spectrum with linear conic programming.
		
		\smallskip
		For every odd $m\ge3$, the theorem gives $\log_2\aR_{\F}(\GT_2(C_m))\le m-1+1/(m+1)$, improving the previous cycle bound with a gap of only $1/(m+1)$ above the maximum cut. On weighted triangles, the program recovers the rectangular matrix multiplication curve of Uzna\'nski and Amanov, van den Berg, and Zuiddam. The CP and semidefinite programs, with or without path constraints, all have value $9/4$ on the unweighted triangle, giving this upper bound a natural conic interpretation. Further applications give explicit bounds for complete multipartite graphs, split graphs, and wheels, including new cases of maximum-cut equality.
	\end{abstract}
	\newpage
	\setcounter{tocdepth}{2}
	\begingroup
	\tableofcontents
	\endgroup
	\newpage

	\Needspace{8\baselineskip}
	\section{Introduction}
	\label{sec:introduction}
	\subsection{Weighted graph tensors and matrix multiplication}
	\label{subsec:intro-background}
	A weighted graph tensor is formed from identity tensors on the edges of a graph, with one tensor leg for each vertex and a possibly different dimension for each edge. For a graph with $n\ge2$ labeled vertices, these dimensions are recorded in a symmetric $n\times n$ matrix $Q$ with positive integer entries and diagonal entries one. Over a field $\F$, the identity tensor associated with a pair $(i,j)$, where $1\le i<j\le n$, is
	\[
	\sum_{a=0}^{Q_{i,j}-1}e_a^{(Q_{i,j})}\Cprod e_a^{(Q_{i,j})},
	\]
	with one leg at each endpoint. Here $e_a^{(q)}$ is a standard basis vector of $\F^q$, and $\Cprod$ denotes the tensor product with its legs kept separate. Adding the one-dimensional factor $e_0^{(1)}$ at every other vertex gives an $n$-leg edge tensor $E_{i,j}^{(Q_{i,j})}$, whose notation is symmetric in the two endpoints. These edge tensors together define the \emph{weighted graph tensor associated with $Q$}, denoted $\WT(Q)$. They are combined using the Kronecker product $\Kprod$, which groups all factors at each vertex $v$ into one leg of dimension $D_v$:
	\[
	\WT(Q)=\bigKprod_{1\le i<j\le n}E_{i,j}^{(Q_{i,j})},
	\qquad D_v=\prod_{j=1}^{n}Q_{v,j}\quad(1\le v\le n).
	\]
	Dimension one represents an absent edge. These tensors describe multilinear computation with nonuniform dimensions and networks of pairwise entanglement~\cite{cvz2019,vc2017}; independent edge dimensions also appear in~\cite[Section~2.3]{cgmz2021}. Section~\ref{sec:preliminaries} gives the full coordinate definition.
	
	Ordinary graph tensors are the special case in which every edge has the same dimension. For a simple undirected graph $G$ and an integer $q\ge2$, the \emph{graph tensor of $G$ with edge dimension $q$}, denoted $\GT_q(G)$, is defined by $\GT_q(G)=\WT(Q)$, where $Q_{i,j}=q$ on edges and $Q_{i,j}=1$ on nonedges. For the triangle $K_3$, the tensor
	\[
	\GT_q(K_3)\cong\sum_{i=0}^{q-1}\sum_{j=0}^{q-1}\sum_{h=0}^{q-1}
	e_{(i,j)}\Cprod e_{(j,h)}\Cprod e_{(h,i)}
	\]
	is the matrix multiplication tensor, where $e_{(i,j)}=e_{qi+j}^{(q^2)}$. Unequal dimensions on the three edges give rectangular matrix multiplication.
	
	Tensor rank $\R_{\F}(T)$ is the minimum number of simple tensors needed to express $T$. The savings available when many copies are computed together are measured by the asymptotic rank
	\[
	\aR_{\F}(T)=\lim_{N\to\infty}\R_{\F}(T^{\Kprod N})^{1/N}.
	\]
	The constant $\omega_{\F}$ is the matrix multiplication exponent over $\F$: the infimum exponent for the number of arithmetic operations required to multiply two square matrices. The classical connection with tensor rank originates in Strassen's theory of bilinear complexity~\cite{strassen1973}. Its exponent characterization holds over every field~\cite[Proposition~15.1]{bcs1997} and gives
	$\log_2\aR_{\F}(\GT_2(K_3))=\omega_{\F}$. This identity recasts the matrix multiplication exponent problem as an asymptotic-rank problem for the triangle graph tensor.
	
	Since Strassen's algorithm~\cite{strassen1969}, improvements in matrix multiplication have drawn on approximate algorithms and interpolation~\cite{bini1979,bini1980}, the asymptotic sum inequality~\cite{schonhage1981}, and the laser method and Coppersmith--Winograd tensors~\cite{strassen1986,strassen1987,cw1990}. The latter approach led to a long sequence of refinements through the analysis of tensor powers and asymmetric components~\cite{stothers2010,daviestothers2013,williams2012,legall2014,almanwilliams2021,dwz2023,vxxz2024,alman2025,dupont2026}.
	
	OpenAI introduced a polynomial multiplication approach that first gave a bound below $2.258$~\cite{openai2258} and then led to $\omega_{\C}\le9/4$~\cite{openai2026}. It combines polynomial multiplication with Strassen's asymptotic spectrum~\cite{strassen1988}: blocks that share a tensor leg are separated, and the resulting spectral growth is compared with an interpolation budget. Subsequent work developed arbitrary-field and rectangular versions~\cite{lu2026,uznanski2026,abz2026}, while Pass~\cite{pass2026} gave a constructive approach through polynomial packings. These developments provide a new starting point for higher-order graph tensors.
	
	\subsection{From three tensor legs to complete graphs}
	\label{subsec:intro-complete}
	Earlier work on graph tensors shows that extending a matrix multiplication construction to higher order can reduce its cost per edge. We measure this cost by the \emph{edge exponent} $\tau_{\F}(G)=\Gamma_{\F}(G)/|E(G)|$, where $\Gamma_{\F}(G)$ is the graph exponent of Definition~\ref{def:weighted-exponent}. Lemma~\ref{lem:weighted-spectral-formula} identifies $\Gamma_{\F}(G)$ with $\log_2\aR_{\F}(\GT_2(G))$. Christandl, Vrana, and Zuiddam~\cite[Corollary~1.2.6]{cvz2019} extended the Coppersmith--Winograd construction to obtain
	\[
	\tau_{\C}(K_n)\le\log_7(9/2)<0.772943\qquad(n\ge4).
	\]
	This was smaller than the edge exponent then available for $K_3$ through the ordinary laser method. The subsequent work~\cite[Appendix~C]{brand2026} improved the uniform bound to $0.772318$ by refining the four-party construction.
	
	OpenAI's matrix multiplication bound now gives $\tau_{\C}(K_3)\le3/4$, a substantial improvement over the earlier bounds for the triangle, and the arbitrary-field extensions give the same estimate over every field~\cite{openai2026,abz2026}. This raises a parallel question for complete graphs: can the polynomial multiplication method be extended from three tensor legs to higher order so as to obtain a similarly substantial improvement over the known bounds for $K_n$?
	
	Our first result answers this question affirmatively. We develop higher-order counterparts of OpenAI's polynomial multiplication method for third-order tensors~\cite{openai2026}, building also on its arbitrary-field formulation~\cite{abz2026}. The additional legs require a separation theorem that separates blocks on $n-2$ shared legs simultaneously while keeping the cost linear in the number of blocks. The spectral parameters also acquire a graph interpretation: the three pairwise weights become weights on all edges of $K_n$, while the two-variable polynomial profile becomes a symmetric profile in $n-1$ input lengths, normalized by the average spectral edge weight. We construct higher-order versions of the polynomial degenerations and establish the entropy and growth inequalities needed to coordinate these input directions. Together, these extensions carry the comparison between spectral growth and interpolation over to complete graph tensors, yielding the following bound, which is asymptotically optimal per edge.
	\begin{theorem}[complete graph tensors]
		\label{thm:main}
		For every field $\F$ and every integer $n\ge2$,
		\[
		\left\lfloor\frac{n^2}{4}\right\rfloor
		\le\log_2\aR_{\F}(\GT_2(K_n))\le\frac{n^2}{4}.
		\]
	\end{theorem}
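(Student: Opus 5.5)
The lower bound and the upper bound need different arguments, and I would treat them separately.

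\emph{Lower bound.} I would use the cut (flattening) bound. Choose a bipartition $V(K_n)=S\sqcup \bar S$ with $|S|=\lfloor n/2\rfloor$. The number of crossing edges is $|S|\cdot|\bar S|=\lfloor n^2/4\rfloor$. Flatten $\GT_2(K_n)$ along this cut, grouping the legs in $S$ against those in $\bar S$. The edges inside $S$ or inside $\bar S$ only add tensor factors that are local to one side. Each crossing edge contributes an identity matrix of size $2$. The flattening is therefore a Kronecker product of a rank-$2^{\lfloor n^2/4\rfloor}$ matrix with rank-one pieces, up to reindexing. Flattening rank is multiplicative under $\Kprod$ and bounds tensor rank from below. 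Hence $\aR_{\F}(\GT_2(K_n))\ge 2^{\lfloor n^2/4\rfloor}$, and this holds over every field. This is the $\MaxCut$ lower bound from the abstract, specialized to $K_n$.

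\emph{Upper bound.} Here I would follow the OpenAI/arbitrary-field strategy, carried over to $n$ legs.

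1. Work in Strassen's asymptotic spectrum, via Lemma~\ref{lem:weighted-spectral-formula} and Definition~\ref{def:weighted-exponent}. Fix a spectral point $\phi$ and record its values on the weighted graph tensors $\WT(Q)$ with $Q_{i,j}=2^{w_{i,j}}$. Symmetrize over the symmetric group acting on the vertices; since $\phi$ is monotone, multiplicative and additive, this is harmless. The result is a symmetric profile $f(x_1,\dots,x_{n-1})$ that records the spectral exponent when the edges at a distinguished vertex have lengths $x_1,\dots,x_{n-1}$. Normalize it by the average spectral edge weight $\gamma=\Gamma/\binom n2$.

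2. Build the polynomial degeneration. Multiplying $n-1$ univariate polynomials of degrees $d_1,\dots,d_{n-1}$ gives a tensor whose rank is at most $\sum d_i+1$ by interpolation, valid over large enough extensions and then transferred to arbitrary fields as in the arbitrary-field work. The structure tensor of this product degenerates to a direct sum of blocks. Each block is a weighted complete-graph tensor, and the blocks share $n-2$ legs.

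3. Apply the separation theorem announced in the introduction. It separates blocks sharing $n-2$ legs at a cost linear in the number of blocks. This yields an entropy inequality: the spectral value of the sum is at least an entropy-weighted combination of the block values. Combined with the interpolation budget, this forces the profile $f$ to grow along the diagonal $x_1=\dots=x_{n-1}$ at most linearly. Concretely, I expect an inequality of the form $\Gamma\cdot(\text{growth rate}) \le$ the interpolation exponent.

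4. Solve the resulting one-parameter optimization. The target is $\Gamma\le n^2/4$, i.e.\ $\tau\le n/(2(n-1))$. As a sanity check, $n=3$ should give $9/4$.

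The main obstacle is the separation theorem and its entropy inequality for $n-2$ shared legs. With three legs, only one leg is shared, and the known argument handles it. With many shared legs, a naive restriction would zero out blocks at a cost exponential in $n$. I would first try a random linear-hashing restriction applied on each shared leg independently, with a union-bound analysis; if that fails, I would try a Salem--Spencer-type combinatorial selection.

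Once that theorem is in hand, I expect the remaining work to be a calculus verification that the extremal profile is the diagonal one and that it yields exactly $n^2/4$.
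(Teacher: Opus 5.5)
Your lower bound is correct; it is exactly the paper's maximum-cut flattening. The upper bound has a genuine gap, and it starts with the profile.

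\textbf{The profile is built from the wrong tensors.} By Lemma~\ref{lem:edge-weights}, every spectral point satisfies $\log_2\phi(\WT(Q))=\sum_{i<j}W_{i,j}P^{(\phi)}_{i,j}$. A symmetrized profile of weighted graph tensors is therefore just a linear function of the edge lengths, with slope equal to the average edge weight. It encodes nothing beyond that average, and no growth inequality can be extracted from it. The paper's profile $P_\phi(a_1,\ldots,a_{n-1})$ is instead the $S_n$-symmetrized spectral value of the polynomial multiplication tensor $\Poly_n(a_1,\ldots,a_{n-1})$ itself, which is not a graph tensor. It is raised to the power $1/(n!\,t_\phi)$, where $t_\phi$ is the average edge weight of the fixed $\phi$. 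It satisfies $P_\phi(1,\ldots,1,b)=b$ and the interpolation \emph{upper} bound $P_\phi(a_1,\ldots,a_{n-1})\le(1+\sum_j(a_j-1))^{1/t_\phi}$.

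\textbf{The lower growth bound is missing.} Your step~3 omits the matching \emph{lower} bound on growth, which is the heart of the proof. It comes from two degenerations whose blocks are again polynomial tensors, not graph tensors:
$\Poly_n(\ldots,b)\Kprod E^{(2)}_{n-1,n}\rightsquigarrow\Poly_n(\ldots,b+1)\boxplus_e\Poly_n(\ldots,b-1)$, using quotient and kernel coordinates for $uw-vs$; and
$\Poly_n(\ldots,3h+S)\rightsquigarrow C\boxplus_e B_n\boxplus_e C$ with $B_n\cong\sigma C$.
Pass these through the entropy inequality and the leg-permutation average. The average turns $e^{pH}$ into $e^{H}$ and makes $\sigma C$ count like $C$. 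The result is discrete concavity (take $\theta=A/(A+B)$) and the tripling bound $P_\phi(\ldots,3h+S)\ge3P_\phi(\ldots,h)$. Now raise one coordinate at a time from $(a,\ldots,a)$ to $(a+1,\ldots,a+1)$; at each step the other degrees sum to at most $(n-2)a$. Each step then gains a factor $1+2/(na)$, giving $P_\phi(a,\ldots,a)\ge a^{2(n-1)/n}$. Only comparing this with the interpolation bound yields $t_\phi\le n/(2(n-1))$. Your step~4, a calculus check that an extremal profile is diagonal, does not replace this. No extremal profile is ever identified, and the exponent $2(n-1)/n$ is exactly what must be proved.

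\textbf{Your separation plan cannot work.} In the entropy argument, the blocks $T_w$ inside $A^{\Kprod N}$ use literally the same coordinates on every shared leg. Any local map on such a leg acts identically on all blocks, so there is no block index for a hash to act on. Restriction also cannot increase the flattening rank of a shared leg. Hence no restriction---random, hashed, or Salem--Spencer-selected---can turn the partial sum into a full direct sum once the blocks' total rank on that leg exceeds its dimension. Extra copies are unavoidable.

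\textbf{What the separation step actually needs.} Theorem~\ref{thm:separation} shows that $D_{n,M}=1+(n-1)(M-1)$ copies suffice \emph{and} that every output block keeps an $M$-dimensional edge $E^{(M)}_{n-1,n}$. If the construction returned only the bare blocks, the linear cost would cancel the $M_N$ blocks and leave no entropy factor at all. The retained edge contributes $M_N^{p}$, which becomes exactly $\exp(pH(\theta))$ in the limit. The paper builds this by tensoring with $\Poly_n(M,\ldots,M)$, a degeneration of $\Id_n(D_{n,M})$ by rational interpolation. It then applies a weight degeneration: weights $g^2$ on the shared legs, and $2hu+(n-2)h^2$ and $-2hv$ on the two separated legs. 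These sum to $\sum_j(g_j-h)^2$ on every supported term. The constant coefficient therefore forces each tentative label $g_j$ to equal the block label $h$, while leaving $u$ free as the retained edge, in every characteristic.
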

	The lower bound is the rank of a maximum-cut flattening. Consequently, the two bounds agree for every even complete graph. The upper bound gives
	$\tau_{\F}(K_n)\le n/(2(n-1))$, which tends to $1/2$ and is below $2/3$ for every $n\ge5$. This answers Question~1.3.1 of~\cite{cvz2019} and resolves Question~1.3.2 there for every $K_{2n}$. Table~\ref{tab:complete} compares both the graph exponent and the exponent per edge with the preceding higher-order bounds.
	Concurrent work of Khoruzhii~\cite{khoruzhii2026} records the same complete-graph bound over $\C$ in theorem statements and Lean formalization files; we discuss its relation to this paper in the \hyperref[sec:concurrent-work]{Concurrent Work} section.
	
	\begin{table}[H]
		\centering
		\setlength{\tabcolsep}{3.2pt}
		\begin{tabular}{@{}crrrrrr@{}}
			\toprule
			& \multicolumn{3}{c}{Graph exponent $\Gamma_{\F}(K_n)$}
			& \multicolumn{3}{c}{Edge exponent $\tau_{\F}(K_n)$}\\
			\cmidrule(lr){2-4}\cmidrule(l){5-7}
			Graph & CVZ & BCK$^+$26 & This paper & CVZ & BCK$^+$26 & This paper\\
			\midrule
			$K_4$ & 4.637658 & 4.633908 & 4.000000 & 0.772943 & 0.772318 & 0.666667\\
			$K_5$ & 7.729429 & 7.723180 & 6.250000 & 0.772943 & 0.772318 & 0.625000\\
			$K_6$ & 11.594144 & 11.584770 & 9.000000 & 0.772943 & 0.772318 & 0.600000\\
			$K_7$ & 16.231801 & 16.218678 & 12.250000 & 0.772943 & 0.772318 & 0.583334\\
			$K_8$ & 21.642401 & 21.624904 & 16.000000 & 0.772943 & 0.772318 & 0.571429\\
			$K_{10}$ & 34.782430 & 34.754310 & 25.000000 & 0.772943 & 0.772318 & 0.555556\\
			\bottomrule
		\end{tabular}
		\caption{Upper bounds for complete graph tensors. The CVZ and BCK$^+$26
			columns use the uniform bounds of~\cite[Corollary~1.2.6]{cvz2019}
			and~\cite[Appendix~C, Theorem~48]{brand2026}, respectively, over $\C$.
			Our bounds hold over every field and are exact for even $n$.
			All decimals are rounded upward to six places.}
		\label{tab:complete}
	\end{table}

	It suffices to state these results with edge dimension two. Corollary~\ref{cor:base-independence} proves that, for every integer $q\ge2$,
	\[
	\begin{aligned}
		\log_q\aR_{\F}(\GT_q(G))&=\log_2\aR_{\F}(\GT_2(G)),\\
		\log_q\aQ_{\F}(\GT_q(G))&=\log_2\aQ_{\F}(\GT_2(G)).
	\end{aligned}
	\]
	Here $\aQ_{\F}$ is the asymptotic subrank. Both identities follow by sandwiching powers of the edge dimensions.
	
	The following five parts outline the proof of Theorem~\ref{thm:main}: separation gives an entropy inequality, polynomial degenerations turn it into growth estimates for a symmetric spectral profile, and interpolation converts those estimates into the complete-graph bound.
	
	\textbf{Part 1: separation and the entropy inequality.}
	For $n\ge3$, we extend the polynomial resource of~\cite{openai2026,abz2026} while retaining each input polynomial as a separate tensor leg. For positive integers $a_1,\ldots,a_{n-1}$, the tensor
	\[
	\Poly_n(a_1,\ldots,a_{n-1})
	=\sum_{\substack{0\le i_j<a_j\\1\le j\le n-1}}
	e_{i_1}^{(a_1)}\Cprod\cdots\Cprod e_{i_{n-1}}^{(a_{n-1})}
	\Cprod e_{i_1+\cdots+i_{n-1}}^{(D)},
	\qquad D=1+\sum_{j=1}^{n-1}(a_j-1),
	\]
	has asymptotic rank $D$ over every field (Lemma~\ref{lem:polynomial-rank}). Interpolation first takes place over a rational function field; clearing denominators then gives a formal degeneration over $\F$ itself. Coefficient extraction removes this formal parameter at subexponential cost.
	
	The main difficulty is to separate blocks on several shared legs without losing the linear resource budget. Let $e=(n-1,n)$ and let
	$A=A_0\boxplus_e\cdots\boxplus_e A_{M-1}$
	be a partial direct sum with the first $n-2$ legs shared. Our construction gives
	\[
	A^{\Ksum D_{n,M}}\rightsquigarrow
	\bigKsum_{h=0}^{M-1}\bigl(A_h\Kprod E^{(M)}_{n-1,n}\bigr),
	\qquad D_{n,M}=1+(n-1)(M-1).
	\]
	Here $\rightsquigarrow$ denotes degeneration, $\Ksum$ is a full direct sum, and $E^{(M)}_{i,j}$ is an $M$-dimensional edge tensor on legs $i,j$, with one-dimensional factors on all other legs. The auxiliary tensor $\Poly_n(M,\ldots,M)$ has input labels $g_1,\ldots,g_{n-2},u\in\{0,\ldots,M-1\}$ and output label $v=\sum_{j=1}^{n-2}g_j+u$. The labels $g_j$ serve as tentative block labels on the shared legs. For a block labeled $h$, appropriate local integer weights have total
	$\sum_{j=1}^{n-2}(g_j-h)^2$.
	Scale each local coordinate by a formal parameter raised to its assigned weight and take the constant coefficient. This forces every shared label to equal $h$ and leaves $u$ free. Thus the output blocks become independent on every leg and retain an edge of dimension $M$. The weights are exponents of a formal parameter, so this separation works in every characteristic. Theorem~\ref{thm:separation} gives the local maps.
	
	The retained edge is what makes this construction useful asymptotically. Fix a spectral point $\phi\in\Phi_{n,\F}$, that is, an additive, multiplicative, normalized tensor parameter that is monotone under restriction. Such parameters are also monotone under degeneration, by Lemma~\ref{lem:degeneration-asymptotic}. Define the symmetric zero-diagonal matrix $P^{(\phi)}$ by
	\[
	P_{i,j}^{(\phi)}=P_{j,i}^{(\phi)}
	=\log_2\phi(E^{(2)}_{i,j})\quad(1\le i<j\le n),
	\qquad p=P_{n-1,n}^{(\phi)}.
	\]
	For a partial sum $A=A_1\boxplus_e\cdots\boxplus_e A_s$ of nonzero blocks and a probability vector $\theta=(\theta_1,\ldots,\theta_s)$, retain in $A^{\Kprod N}$ the sector words with frequencies $\theta$. For rational $\theta$ and suitable $N$, their number is $M_N=\exp(NH(\theta_1,\ldots,\theta_s)+o(N))$, where $H(\theta_1,\ldots,\theta_s)=-\sum_{i=1}^{s}\theta_i\ln\theta_i$ and $0\ln0=0$. Every retained block has spectral value $\prod_{i=1}^{s}\phi(A_i)^{N\theta_i}$. Separation therefore gives
	\[
	(n-1)M_N\phi(A)^N
	\ge M_N^{1+p}\prod_{i=1}^{s}\phi(A_i)^{N\theta_i}.
	\]
	Canceling $M_N$ and taking $N$-th roots yields the entropy inequality
	\[
	\phi(A)\ge\exp\bigl(pH(\theta_1,\ldots,\theta_s)\bigr)
	\prod_{i=1}^{s}\phi(A_i)^{\theta_i}.
	\]
	Continuity extends it to every probability vector. This is Corollary~\ref{cor:entropy}: the linear separation cost cancels the number of blocks, while the edge left in each block contributes the entropy factor.
	
	\textbf{Part 2: a symmetric spectral profile.}
	To use the entropy inequality in every input direction, average over all permutations of the tensor legs. Let $S_n$ be the permutation group on the $n$ legs, and write $\pi T$ for the tensor obtained by moving old leg $d$ to new leg $\pi(d)$. Put
	\[
	t_\phi=\frac{1}{\binom n2}\sum_{1\le i<j\le n}P_{i,j}^{(\phi)},
	\qquad \phi_\pi(T)=\phi(\pi T)\quad(\pi\in S_n),
	\]
	and define
	\[
	P_\phi(a_1,\ldots,a_{n-1})
	=\left(\prod_{\pi\in S_n}
	\phi_\pi\bigl(\Poly_n(a_1,\ldots,a_{n-1})\bigr)
	\right)^{1/(n!t_\phi)}.
	\]
	Lemma~\ref{lem:edge-weights} gives $t_\phi>0$, so this normalization is well defined. Every pair of legs has the same average spectral weight:
	$\sum_{\pi\in S_n}P_{i,j}^{(\phi_\pi)}=n!t_\phi$.
	Consequently, multiplying entropy inequalities over all permutations and taking the power $1/(n!t_\phi)$ turns an entropy factor into $\exp(H(\theta_1,\ldots,\theta_s))$. Lemma~\ref{lem:profile-basic} also gives symmetry, the normalization $P_\phi(1,\ldots,1,b)=b$, and the interpolation upper bound
	\[
	P_\phi(a_1,\ldots,a_{n-1})
	\le\left(1+\sum_{j=1}^{n-1}(a_j-1)\right)^{1/t_\phi}.
	\]
	We next obtain lower growth estimates from two explicit tensor degenerations.
	
	\textbf{Part 3: discrete concavity.}
	The first degeneration compares neighboring input lengths. Fix positive integers $a_1,\ldots,a_{n-2}$ and an integer $b\ge2$. A change of bases in homogeneous polynomial multiplication separates a quotient block of input length $b+1$ from a kernel block of length $b-1$. Multiplication preserves the kernel, so the only interaction between these blocks can be removed by local weights. Lemma~\ref{lem:concavity-degeneration} gives
	\[
	\Poly_n(a_1,\ldots,a_{n-2},b)\Kprod E^{(2)}_{n-1,n}
	\rightsquigarrow
	\Poly_n(a_1,\ldots,a_{n-2},b+1)\boxplus_e\Poly_n(a_1,\ldots,a_{n-2},b-1).
	\]
	Apply the entropy inequality to the two blocks and average over leg permutations. If $A=P_\phi(a_1,\ldots,a_{n-2},b+1)$ and $B=P_\phi(a_1,\ldots,a_{n-2},b-1)$, the result is
	\[
	2P_\phi(a_1,\ldots,a_{n-2},b)
	\ge e^{H(\theta,1-\theta)}A^\theta B^{1-\theta}
	\qquad(0\le\theta\le1).
	\]
	The factor two on the left is the normalized contribution of the extra edge. Choosing $\theta=A/(A+B)$ makes the right side equal to $A+B$, proving
	\[
	2P_\phi(a_1,\ldots,a_{n-2},b)
	\ge P_\phi(a_1,\ldots,a_{n-2},b+1)+P_\phi(a_1,\ldots,a_{n-2},b-1).
	\]
	Thus forward differences in each input coordinate are nonincreasing. This is the discrete concavity of Lemma~\ref{lem:concavity}.
	
	\textbf{Part 4: shifted tripling.}
	We next obtain a complementary comparison over a longer interval. Put $S=\sum_{j=1}^{n-2}(a_j-1)$ and $C=\Poly_n(a_1,\ldots,a_{n-2},h)$ for a positive integer $h$. Three consecutive sectors in the last input and output give a second degeneration (Lemma~\ref{lem:tripling-degeneration}):
	\[
	\Poly_n(a_1,\ldots,a_{n-2},3h+S)
	\rightsquigarrow C\boxplus_e B_n(a_1,\ldots,a_{n-2},h)\boxplus_e C.
	\]
	Here $B_n(a_1,\ldots,a_{n-2},h)$ is the aligned middle block of Definition~\ref{def:middle-polynomial-block}. As an individual tensor, it is locally isomorphic to $\sigma C$, where $\sigma$ exchanges the last two legs. The shift $S$ accommodates the total degree in the shared inputs. Local weights remove interactions between distinct sectors while preserving all three blocks.
	
	Use the entropy inequality with probabilities $(1/3,1/3,1/3)$. For each leg permutation,
	\[
	\phi_\pi\bigl(\Poly_n(a_1,\ldots,a_{n-2},3h+S)\bigr)
	\ge 3^{P_{n-1,n}^{(\phi_\pi)}}
	\phi_\pi(C)^{2/3}\phi_\pi(\sigma C)^{1/3}.
	\]
	In the product over $\pi\in S_n$, the factors for $\sigma C$ are a reordering of those for $C$. After normalization, the entropy factor becomes three, giving
	\[
	P_\phi(a_1,\ldots,a_{n-2},3h+S)\ge3P_\phi(a_1,\ldots,a_{n-2},h).
	\]
	Symmetry gives the same inequality in every input coordinate (Lemma~\ref{lem:tripling}). This controls a long increment of the profile; concavity will turn it into a bound on a single increment.
	
	\textbf{Part 5: diagonal growth and the complete-graph bound.}
	We now combine the two inequalities to bound growth when all input lengths increase together. Write $D_a=P_\phi(a,\ldots,a)$ and increase the $n-1$ input lengths from $a$ to $a+1$ one at a time. At any one step, let $f(b)$ be the profile with only that coordinate variable, and let $S$ be the sum of the other input degrees. Shifted tripling and concavity give
	\[
	2f(a)\le f(3a+S)-f(a)
	\le(2a+S)\bigl(f(a+1)-f(a)\bigr).
	\]
	Along this path, $S\le(n-2)a$, so each step multiplies the profile by at least $1+2/(na)$. Multiplying the $n-1$ steps gives
	\[
	D_{a+1}\ge\left(1+\frac{2}{na}\right)^{n-1}D_a
	\ge\left(\frac{a+1}{a}\right)^{2(n-1)/n}D_a,
	\]
	where the second inequality is Bernoulli's inequality. Since $D_1=1$, this telescopes to the lower bound in
	\[
	a^{2(n-1)/n}\le P_\phi(a,\ldots,a)
	\le\bigl((n-1)(a-1)+1\bigr)^{1/t_\phi}.
	\]
	The right side is the interpolation budget from Part~2. Letting $a$ tend to infinity compares the two polynomial growth rates and gives $t_\phi\le n/(2(n-1))$. Finally, the complete-graph tensor is the Kronecker product of its binary edge tensors, so multiplicativity gives
	\[
	\phi(\GT_2(K_n))
	=2^{\sum_{1\le i<j\le n}P_{i,j}^{(\phi)}}
	=2^{\binom n2 t_\phi}\le2^{n^2/4}.
	\]
	Strassen's spectral duality~\cite{strassen1988,wz2026} takes the maximum over $\phi$ and proves the upper bound in Theorem~\ref{thm:main} for $n\ge3$; the case $n=2$ is the rank of one edge. The maximum-cut flattening gives the lower bound stated there. Sections~\ref{sec:separation} and~\ref{sec:polynomial-method} supply the full constructions and growth argument.
	
	The construction uses the symmetry of $K_n$ at several points: every pair of legs carries the same edge resource, averaging over all leg permutations produces one symmetric profile, and the final objective is the sum over every pair. This makes an extension to an irregular graph appear difficult.
	
	Surprisingly, the opposite is true: the complete-graph bound is strong enough to control arbitrary weighted graph tensors. A short grouping argument converts it into an inequality for every nonnegative vertex weighting, and completely positive combinations of these inequalities yield the general weighted bound.
	
	\subsection{Completely positive bounds for weighted graph tensors}
	\label{subsec:intro-result}
	To express the resulting bound, we cover logarithmic edge weights by sums of nonnegative outer products. Let $W$ be the logarithmic edge-weight matrix of $Q$. A real symmetric matrix is \emph{completely positive} if it can be written as
	$Z=\sum_{\nu=1}^{t}a^{(\nu)}(a^{(\nu)})^{\mathsf T}$
	with $a^{(\nu)}\in\mathbb R_{\ge0}^n$. The cone of such matrices is denoted by $\CP_n$. Define the covering value
	\[
	U_{\CP}(W)=\frac14\min\left(
	\mathbf1^{\mathsf T}Z\mathbf1\;\middle|\;
	Z\in\CP_n,\quad Z_{i,j}\ge W_{i,j}\ (1\le i<j\le n)
	\right).
	\]
	Here $\mathbf1\in\mathbb R^n$ is the all-one vector. Each factor supplies edge capacities $a_i^{(\nu)}a_j^{(\nu)}$ at cost
	$\tfrac14(\sum_{i=1}^{n}a_i^{(\nu)})^2$.
	Completely positive and copositive optimization are established forms of linear conic programming~\cite{berman2003,dur2010}; the tensor interpretation of this covering problem is the main connection developed here.
	
	The dual formulation maximizes
	$\sum_{1\le i<j\le n}W_{i,j}P_{i,j}$
	over nonnegative symmetric zero-diagonal matrices $P$ satisfying
	\[
	4\sum_{1\le i<j\le n}a_i a_jP_{i,j}
	\le\left(\sum_{i=1}^{n}a_i\right)^2
	\qquad\text{for every }a\in\mathbb R_{\ge0}^n.
	\]
	Equivalently, $J-2P$ is copositive, where $J$ is the all-one matrix. The strengthened value $U_{\CP}^{\Delta}(W)$ adds the path triangle inequalities
	$P_{i,j}\le P_{i,k}+P_{k,j}$ for distinct vertices. These inequalities express that an edge resource can be transmitted through an intermediate vertex. We obtain the following main theorem.
	
	\begin{theorem}[weighted graph tensors]
		\label{thm:weighted-main}
		Let $\F$ be any field and let $Q$ be a symmetric $n\times n$ positive integer matrix, where $n\ge2$, with diagonal entries equal to one. If $W$ is its logarithmic edge-weight matrix, then
		\[
		\MaxCut(W)\le\log_2\aR_{\F}(\WT(Q))\le U_{\CP}^{\Delta}(W).
		\]
	\end{theorem}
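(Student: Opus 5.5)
Both inequalities will be proved through Strassen's asymptotic spectrum. For the lower bound, fix a cut $(S,\bar S)$ of the vertex set. We flatten $\WT(Q)^{\Kprod N}$ into a matrix by grouping the legs in $S$ against those in $\bar S$. Each crossing edge contributes an identity matrix of size $Q_{i,j}$, and each edge inside one side contributes a rank-one factor after flattening. Hence the flattening rank of $\WT(Q)$ is $\prod_{i\in S,j\notin S}Q_{i,j}$. This rank is multiplicative under Kronecker powers and bounds tensor rank from below. Taking logarithms and maximizing over cuts gives $\MaxCut(W)\le\log_2\aR_{\F}(\WT(Q))$.

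For the upper bound, duality identifies $\log_2\aR_{\F}(\WT(Q))$ with $\max_{\phi}\log_2\phi(\WT(Q))$ over spectral points $\phi\in\Phi_{n,\F}$. By multiplicativity, $\log_2\phi(\WT(Q))=\sum_{i<j}\log_2\phi(E^{(Q_{i,j})}_{i,j})$. The first step is to show $\log_2\phi(E^{(q)}_{i,j})=P^{(\phi)}_{i,j}\log_2 q$, where $P^{(\phi)}_{i,j}=\log_2\phi(E^{(2)}_{i,j})$. This is the same power sandwiching as in Corollary~\ref{cor:base-independence}: $E^{(q)}$ restricts to $E^{(2^k)}$ for $2^k\le q$, and powers of $E^{(q)}$ are squeezed between powers of $E^{(2)}$. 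Then $\log_2\phi(\WT(Q))=\sum_{i<j}W_{i,j}P^{(\phi)}_{i,j}$. It remains to show that $P^{(\phi)}$ is feasible for the program defining $U_{\CP}^{\Delta}(W)$. Nonnegativity is clear from monotonicity, since $E^{(2)}\ge E^{(1)}$ and $\phi(E^{(1)})=1$. The path inequality comes from the restriction $E^{(2)}_{i,k}\Kprod E^{(2)}_{k,j}\ge E^{(2)}_{i,j}$: vertex $k$ holds both legs and can contract them, teleporting the edge. Monotonicity and multiplicativity then give $P_{i,j}\le P_{i,k}+P_{k,j}$.

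The main step, and the expected obstacle, is copositivity of $J-2P^{(\phi)}$. This is the grouping argument: we must show $4\sum_{i<j}a_ia_jP_{i,j}\le(\sum_i a_i)^2$ for every $a\in\mathbb R^n_{\ge0}$. By homogeneity and continuity it suffices to take positive integer $a$; zero coordinates are handled by deleting those vertices. Put $m=\sum_i a_i$ and partition the vertex set of $K_m$ into groups of sizes $a_1,\ldots,a_n$. Merging the legs of group $i$ into leg $i$ is a restriction, so $\GT_2(K_m)$ restricts to a tensor that contains $\bigKprod_{i<j}(E^{(2)}_{i,j})^{\Kprod a_ia_j}$ as a Kronecker factor. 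The remaining factors are edges inside groups, which become local rank-one pieces that can be dropped.

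We then pull $\phi$ back to an $m$-leg spectral point $\psi$. Define $\psi(T)=\phi(\text{merge}(T))$, where merging is the grouping map from $m$-leg to $n$-leg tensors. This map preserves $\Ksum$, $\Kprod$, restriction and the unit, so $\psi\in\Phi_{m,\F}$. The proof of Theorem~\ref{thm:main} actually gives $\psi(\GT_2(K_m))\le2^{m^2/4}$ for every spectral point $\psi$, and we apply it to this $\psi$. We obtain
\[
2^{\sum_{i<j}a_ia_jP_{i,j}}=\phi\Bigl(\bigKprod_{i<j}(E^{(2)}_{i,j})^{\Kprod a_ia_j}\Bigr)\le\psi(\GT_2(K_m))\le2^{m^2/4},
\]
which is exactly the desired quadratic inequality. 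The care needed is in verifying that the merged spectral point is legitimate for every $m\ge3$ and in handling the degenerate cases $m\le2$. Once feasibility is established, $\log_2\phi(\WT(Q))\le U_{\CP}^{\Delta}(W)$ for every $\phi$, and maximizing over $\phi$ completes the proof.
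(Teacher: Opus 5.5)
Your proof is correct and has the same architecture as the paper's: cut flattening for the lower bound, and for the upper bound spectral duality, the edge power law, the path restriction, and grouping $K_m$ for copositivity. The one step you justify differently is the grouping step.

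The paper never constructs a spectral point on $m$ legs. In Lemma~\ref{lem:quadratic-spectral-constraint} it chains $\sum_{i<j}a_ia_jP^{(\phi)}_{i,j}=\log_2\phi(\WT(Q_a))\le\log_2\aR_{\F}(\WT(Q_a))\le\log_2\aR_{\F}(\GT_2(K_A))\le A^2/4$. This uses only that grouping legs sends simple tensors to simple tensors and commutes with Kronecker powers, and it then quotes Theorem~\ref{thm:main} as stated.

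Your pullback $\psi=\phi\circ\mathrm{merge}$ does lie in $\Phi_{m,\F}$, but the verification you defer has one non-obvious point: merging does not commute with $\Ksum$ in $\Tens_m(\F)$. A group of $s\ge2$ legs receives dimension $\prod_v(n_v+m_v)$ in $\mathrm{merge}(A\Ksum B)$, but dimension $\prod_v n_v+\prod_v m_v$ in $\mathrm{merge}(A)\Ksum\mathrm{merge}(B)$. The two tensors differ by zero coordinates, which the paper's semiring does not identify (Remark~\ref{rem:tensor-coordinates}). Additivity of $\psi$ therefore holds only because the two tensors mutually restrict and $\phi$ is monotone. Multiplicativity (up to a coordinate permutation) and monotonicity (Kronecker products of the local maps within each group) are routine.

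You also need not reopen the proof of Theorem~\ref{thm:main}. Every spectral point satisfies $\psi(T)\le\aR_{\F}(T)$, so the theorem's statement already gives $\psi(\GT_2(K_m))\le 2^{m^2/4}$.

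As for what each route buys: the paper's is shorter because it works with rank alone. Yours stays pointwise in the spectrum and exhibits grouping as a map $\Phi_{n,\F}\to\Phi_{m,\F}$. That is more structural, but by duality it yields the same inequality.

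Finally, calling merging ``a restriction'' is a misnomer, since it changes the order of the tensor. What you actually use is that merging is compatible with restriction.
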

	Here $\MaxCut(W)$ is the maximum total weight crossing a cut. Theorem~\ref{thm:weighted-cp} proves this statement, and also
	$U_{\CP}^{\Delta}(W)\le U_{\CP}(W)$.
	Specializing the edge dimensions gives the unweighted consequence.
	\begin{corollary}[ordinary graph tensors]
		\label{cor:intro-graph}
		For every field $\F$ and every finite simple undirected graph $G$ with at least two vertices,
		\[
		\MaxCut(G)\le\log_2\aR_{\F}(\GT_2(G))\le U_{\CP}^{\Delta}(G),
		\]
		where the conic parameter is evaluated on the adjacency matrix of $G$.
	\end{corollary}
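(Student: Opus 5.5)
The corollary follows by specializing Theorem~\ref{thm:weighted-main} to the dimension matrix of an ordinary graph tensor. Let $G$ have vertex set $\{1,\ldots,n\}$ with $n\ge2$. Define $Q$ by $Q_{i,i}=1$, by $Q_{i,j}=2$ if $\{i,j\}\in E(G)$, and by $Q_{i,j}=1$ otherwise. Then $Q$ is a symmetric positive integer matrix with unit diagonal. By the definition of graph tensors in Section~\ref{subsec:intro-background}, $\GT_2(G)=\WT(Q)$ on the nose, so the two asymptotic ranks agree.

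The logarithmic edge-weight matrix has entries $W_{i,j}=\log_2 Q_{i,j}$. This equals $1$ on edges and $\log_2 1=0$ on nonedges and on the diagonal, so $W$ is exactly the adjacency matrix $A_G$ of $G$. I will check this against the precise definition in Section~\ref{sec:preliminaries}, in case the paper uses a different logarithm base. If it uses the natural logarithm, I would rescale instead. Both $\MaxCut$ and $U_{\CP}^{\Delta}$ are positively homogeneous in $W$: the objective is linear in $W$ and the feasible set for $P$ does not depend on $W$. So the base-two normalization matching $\log_2\aR$ is the one that gives the adjacency matrix.

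With $W=A_G$, we have $\MaxCut(W)=\MaxCut(G)$, since the weight crossing a cut is the number of edges crossing it. Also $U_{\CP}^{\Delta}(W)$ is by definition the conic parameter evaluated on the adjacency matrix, which is what $U_{\CP}^{\Delta}(G)$ denotes. Substituting into Theorem~\ref{thm:weighted-main} gives both inequalities, over every field.

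There is no real obstacle here. The only care point is the bookkeeping of the logarithm base in the definition of $W$. If one also wants the statement for $\GT_q(G)$, it follows by scaling with Corollary~\ref{cor:base-independence}.
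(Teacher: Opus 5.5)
Your proposal is correct and is exactly the paper's argument: the last step of the proof of Theorem~\ref{thm:weighted-cp} takes $Q_{i,j}=2$ on edges and $Q_{i,j}=1$ elsewhere, so that $W=A_G$ and $\WT(Q)=\GT_2(G)$. Your hedge about the logarithm base is unnecessary, because Definition~\ref{def:weighted-tensor} fixes $W_{i,j}=\log_2 Q_{i,j}$.
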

	
	The following three parts explain how Theorem~\ref{thm:main} yields Theorem~\ref{thm:weighted-main}: grouping vertices produces quadratic spectral constraints, completely positive covers combine them into a weighted upper bound, and path restrictions strengthen that bound.
	
	\textbf{Part 1: grouping complete graphs.}
	Take nonnegative integers $a_1,\ldots,a_n$ with sum $A\ge2$, and partition the vertices of $K_A$ into groups of these sizes. Merge the tensor legs in each nonempty group and use a one-dimensional leg for an empty group. Between groups $i,j$ there are $a_i a_j$ binary edges, whose labels combine into an edge of dimension $2^{a_i a_j}$. Internal edges become nonzero local vectors and can be removed by local linear functionals. The resulting tensor is $\WT(Q_a)$, where $(Q_a)_{i,j}=2^{a_i a_j}$ for $i\ne j$ and $(Q_a)_{i,i}=1$ for $1\le i\le n$.
	
	Grouping and restriction do not increase rank, including after taking Kronecker powers. For every spectral point $\phi\in\Phi_{n,\F}$, multiplicativity on edge tensors and Theorem~\ref{thm:main} therefore give
	\[
	\begin{aligned}
		\sum_{1\le i<j\le n}a_i a_jP_{i,j}^{(\phi)}
		&=\log_2\phi(\WT(Q_a))\\
		&\le\log_2\aR_{\F}(\WT(Q_a))
		\le\log_2\aR_{\F}(\GT_2(K_A))\le\frac{A^2}{4}.
	\end{aligned}
	\]
	The inequality is immediate when $A\le1$. Scaling proves it for nonnegative rational vectors, and continuity then gives
	\[
	4\sum_{1\le i<j\le n}a_i a_jP_{i,j}^{(\phi)}
	\le\left(\sum_{i=1}^{n}a_i\right)^2
	\qquad(a\in\mathbb R_{\ge0}^n).
	\]
	Equivalently, $J-2P^{(\phi)}$ is copositive. This is Lemma~\ref{lem:quadratic-spectral-constraint}: an unweighted complete-graph estimate produces every nonnegative quadratic constraint in the conic program.
	
	\textbf{Part 2: completely positive covers.}
	Apply the quadratic constraint to each factor of a feasible cover $Z=\sum_{\nu=1}^{t}a^{(\nu)}(a^{(\nu)})^{\mathsf T}$ with $a^{(\nu)}\ge0$. Since spectral edge weights are nonnegative and $Z_{i,j}\ge W_{i,j}$,
	\[
	\begin{aligned}
		\sum_{1\le i<j\le n}W_{i,j}P_{i,j}^{(\phi)}
		&\le\sum_{1\le i<j\le n}Z_{i,j}P_{i,j}^{(\phi)}\\
		&\le\frac14\sum_{\nu=1}^{t}\left(\sum_{i=1}^{n}a_i^{(\nu)}\right)^2
		=\frac14\mathbf1^{\mathsf T}Z\mathbf1.
	\end{aligned}
	\]
	The spectral formula of Lemma~\ref{lem:weighted-spectral-formula} identifies the maximum of the left side with $\log_2\aR_{\F}(\WT(Q))$. Minimizing the right side over covers therefore proves the basic bound $\log_2\aR_{\F}(\WT(Q))\le U_{\CP}(W)$. Proposition~\ref{prop:cp-duality} gives the equivalent copositive maximization over matrices $P$ introduced above. Every spectral matrix is feasible there, so the conic program relaxes the spectral maximization.
	
	\textbf{Part 3: path restrictions and the strengthened bound.}
	There is a second constraint on spectral edge weights that does not come from grouping. For distinct vertices $i,j,k$, two edges on the path $i,k,j$ restrict to one edge on $i,j$:
	\[
	E_{i,j}^{(q)}\preceq E_{i,k}^{(q)}\Kprod E_{k,j}^{(q)}.
	\]
	Indeed, on the middle leg $k$, apply the local map $e_{qa+b}^{(q^2)}\mapsto\delta_{a,b}e_0^{(1)}$ for $0\le a,b<q$, and use the identity on every other leg. This matches the two path labels and leaves a one-dimensional leg at $k$, so the image is exactly $E_{i,j}^{(q)}$. Applying $\phi$, using multiplicativity, and taking base-two logarithms with $q=2$ gives
	\[
	P_{i,j}^{(\phi)}\le P_{i,k}^{(\phi)}+P_{k,j}^{(\phi)}.
	\]
	These are the path inequalities of Lemma~\ref{lem:spectral-triangle}, valid for every vertex triple, including pairs with zero edge weight. Thus every spectral matrix remains feasible after adding the path constraints to the copositive program. It follows that
	\[
	\log_2\aR_{\F}(\WT(Q))
	=\max_{\phi\in\Phi_{n,\F}}\left(
	\sum_{1\le i<j\le n}W_{i,j}P_{i,j}^{(\phi)}\right)
	\le U_{\CP}^{\Delta}(W)\le U_{\CP}(W).
	\]
	Finally, flattening across a cut has rank equal to the product of its crossing edge dimensions. Taking logarithms and maximizing over cuts gives the lower bound $\MaxCut(W)$ (Proposition~\ref{prop:cut}). This proves both inequalities in Theorem~\ref{thm:weighted-main}. The odd-cycle application below shows that the path constraints can strictly improve the upper bound.
	
	The conic formulation also clarifies the relation to semidefinite programming. A positive semidefinite matrix admits an outer-product decomposition with arbitrary real factors; complete positivity requires those factors to be nonnegative. Replacing $\CP_n$ in the covering problem by the positive semidefinite cone $\mathbb S_+^n$ gives $U_{\SDP}(W)$, exactly the Goemans--Williamson semidefinite bound for maximum cut~\cite{gw1995}.
	
	Since the edge demands are nonnegative, every feasible semidefinite cover is entrywise nonnegative. Among such matrices, complete positivity and positive semidefiniteness coincide for $n\le4$, but separate for $n\ge5$~\cite{diananda1962,maxfieldminc1962}. Accordingly, $U_{\SDP}(W)\le U_{\CP}(W)$, with equality for graphs on at most four vertices. Section~\ref{sec:cp} also proves equality of their path-strengthened values in these dimensions. In general, reaching the semidefinite bound for tensor rank remains open: the grouping argument uses nonnegative numbers of vertices, precisely the restriction on the factors in a completely positive decomposition.
	
	\subsection{Applications and exact graph exponents}
	\label{subsec:intro-consequences}
	We now evaluate the weighted bound on concrete graphs. These evaluations recover the rectangular matrix multiplication curve, sharpen the known bounds for odd cycles, and give exact tensor exponents whenever a conic certificate meets a maximum cut. To state the consequences uniformly, we write $\Gamma_{\F}(W)$ for the \emph{weighted graph exponent}, the continuous extension of $\log_2\aR_{\F}(\WT(Q))$ to nonnegative logarithmic edge weights, and $\Gamma_{\F}(G)$ for its value on the adjacency matrix of $G$. Dividing by the total edge weight gives the weighted edge exponent $\tau_{\F}(W)$, extending $\tau_{\F}(G)$. Weighted bipartite graphs give the basic exact case: their graph exponent equals their total edge weight. The examples below illustrate the further consequences, with full proofs in Section~\ref{sec:applications}.
	
	\textbf{Rectangular matrix multiplication.}
	Uzna\'nski~\cite{uznanski2026} derived a rectangular exponent bound from the polynomial method in characteristic zero; Amanov, van den Berg, and Zuiddam~\cite{abz2026} extended the same curve to every field. We recover their bound by evaluating the three-vertex CP program. Here $\omega_{\F}(a,b,c)$ is the rectangular matrix multiplication exponent for multiplying matrices of dimensions $\lceil N^a\rceil\times\lceil N^b\rceil$ and $\lceil N^b\rceil\times\lceil N^c\rceil$; Definition~\ref{def:rectangular-exponent} recalls the precise normalization. For $a\ge b\ge c\ge0$, let
	\[
	\Psi(a,b,c)=
	\begin{cases}
		a+b,&c(a+b)\le ab,\\[1mm]
		\dfrac{(ab+ac+bc)^2}{4abc},&c(a+b)>ab.
	\end{cases}
	\]
	Then $\omega_{\F}(a,b,c)\le\Psi(a,b,c)$, and all four conic values---CP and SDP, with and without path inequalities---equal $\Psi(a,b,c)$ on the corresponding weighted triangle. This gives the rectangular curve a natural characterization by completely positive programming. At $a=b=c=1$, their common value is $9/4$. Thus the constant $9/4$ in the matrix multiplication bound is also the exact value of a natural conic optimization problem.
	
	\textbf{Odd cycles.}
	The gain from the path inequalities is visible on odd cycles. For every odd integer $m\ge3$, we prove
	\[
	m-1\le\Gamma_{\F}(C_m)\le\frac{m^2}{m+1}
	=m-1+\frac1{m+1}.
	\]
	The upper bound is the exact value of both path-strengthened conic programs. It follows by averaging path inequalities around the cycle and combining them with the complete-graph quadratic inequality. This improves the odd-cycle bound of Christandl and Zuiddam~\cite[Theorem~3.9]{cz2019}, recorded also by Christandl, Vrana, and Zuiddam~\cite[Section~1.1]{cvz2019}. Table~\ref{tab:cycles} evaluates the earlier formula using the current bound $\alpha_{\C}\ge1/2$~\cite{uznanski2026,abz2026} for the dual matrix multiplication exponent of Definition~\ref{def:rectangular-exponent}. The remaining gap above the maximum cut tends to zero as the cycle length grows.
	
	\begin{table}[H]
		\centering
		\begin{tabular}{@{}crrrr@{}}
			\toprule
			& \multicolumn{2}{c}{Graph exponent $\Gamma_{\F}(C_m)$}
			& \multicolumn{2}{c}{Edge exponent $\tau_{\F}(C_m)$}\\
			\cmidrule(lr){2-3}\cmidrule(l){4-5}
			Graph & CVZ/CZ & This paper & CVZ/CZ & This paper\\
			\midrule
			$C_5$ & 4.444445 & 4.166667 & 0.888889 & 0.833334\\
			$C_7$ & 6.461539 & 6.125000 & 0.923077 & 0.875000\\
			$C_9$ & 8.470589 & 8.100000 & 0.941177 & 0.900000\\
			$C_{11}$ & 10.476191 & 10.083334 & 0.952381 & 0.916667\\
			$C_{13}$ & 12.480000 & 12.071429 & 0.960000 & 0.928572\\
			\bottomrule
		\end{tabular}
		\caption{Upper bounds for odd-cycle tensors. The CVZ/CZ columns evaluate
			the bound of~\cite[Theorem~3.9]{cz2019}, also recorded in~\cite[Section~1.1]{cvz2019},
			using the current bound $\alpha_{\C}\ge 1/2$ from~\cite{uznanski2026,abz2026}.
			The historical comparison is over $\C$; our columns hold over every field.
			All decimals are rounded upward to six places.}
		\label{tab:cycles}
	\end{table}

	\textbf{Complete multipartite graphs.}
	Nonuniform edge dimensions also help analyze unweighted graphs: grouping the independent parts of a complete multipartite graph gives a smaller weighted tensor with the same asymptotic rank (Proposition~\ref{prop:independent-blowup}). For $G=K_{s_1,\ldots,s_k}$, write $N=\sum_{i=1}^{k}s_i$ for its order and $b=\max(s_1,\ldots,s_k)$ for its largest part size. Theorem~\ref{thm:multipartite-cp} gives
	\[
	\Gamma_{\F}(G)\le
	\begin{cases}
		b(N-b),&b\ge N/2,\\
		N^2/4,&b\le N/2.
	\end{cases}
	\]
	The first branch equals the maximum cut. The second also gives equality whenever the part sizes can be divided into two groups of equal total size. This resolves Question~1.3.2 of~\cite{cvz2019} for all these graphs; examples include $K_{2,2,2,2}$ and $K_{2,3,4,5}$.
	
	\textbf{Split graphs.}
	A complete split graph $J_{a,b}=K_a\vee\overline K_b$ consists of a clique of size $a$, an independent set of size $b$, and all edges between them. We prove
	\[
	\Gamma_{\F}(J_{a,b})=ab\quad(b\ge a),
	\qquad
	\left\lfloor\frac{(a+b)^2}{4}\right\rfloor
	\le\Gamma_{\F}(J_{a,b})\le\frac{(a+b)^2}{4}\quad(b<a).
	\]
	In particular, maximum-cut equality holds when $b\ge a$ and for every complete split graph of even order. This includes book graphs with at least two pages and supplies further exact cases of Question~1.3.2. Arbitrary split graphs inherit the upper bound from a complete split graph with the same specified partition.
	
	\textbf{Wheel graphs.}
	Let $W_m$ denote the wheel with an $m$-vertex rim and one hub. Weighting the hub in the quadratic constraint couples the odd-cycle argument to the spokes and gives
	\[
	\left\lfloor\frac{3m}{2}\right\rfloor\le\Gamma_{\F}(W_m)\le
	\begin{cases}
		25m/16,&m\text{ even},\\
		(5m+1)^2/(16(m+1)),&m\text{ odd}.
	\end{cases}
	\]
	For example, $7\le\Gamma_{\F}(W_5)\le7+1/24$. The displayed upper bounds are the exact values of the path-strengthened conic parameters.
	
	The rank bounds above compare a conic upper bound with a maximum-cut lower bound. For asymptotic subrank, cuts give a complete characterization through the minimum cut. If $W$ is the logarithmic edge-weight matrix of $Q$, then
	\[
	\log_2\aQ_{\F}(\WT(Q))=\MinCut(W).
	\]
	Vrana and Christandl~\cite{vc2017} already state the logarithmically weighted minimum-cut formula and allow parallel edges. Appendices~\ref{sec:cuts} and~\ref{sec:subrank} give the weighted flattening bounds and an explicit tensor proof of this formula over every field, including finite fields. The proof tracks coefficient extraction and its subexponential cost, then passes from multigraphs to arbitrary edge dimensions by binary approximation.
	
	\textbf{Organization.}
	Section~\ref{sec:preliminaries} introduces tensor operations, weighted graph exponents, and linear conic programming. Sections~\ref{sec:separation} and~\ref{sec:polynomial-method} establish the higher-order polynomial construction and the complete-graph bound. Section~\ref{sec:cp} derives completely positive covers, adds path inequalities, and compares them with semidefinite bounds. Section~\ref{sec:applications} develops the applications above, with explicit certificates and graph illustrations. The appendices supply the weighted cut arguments used throughout.
	
	\textbf{AI methodology.}
	Under the author's direction, GPT-6 Pro assisted in developing the higher-order constructions and the weighted and arbitrary-field arguments in Sections~\ref{sec:separation}--\ref{sec:applications}. GPT-6 Astra Ultra assisted with the mathematical exposition and organization throughout the manuscript. The \hyperref[sec:ai-disclosure]{AI Disclosure} describes these roles and the author's responsibility for the paper in more detail.

	\Needspace{8\baselineskip}
	\section{Preliminaries}
	\label{sec:preliminaries}
	The proofs combine explicit local tensor maps with spectral and conic inequalities. We first fix the coordinate conventions needed for these maps, then recall Strassen's asymptotic spectrum~\cite[Section~3]{strassen1988} and its semiring formulation; see Wigderson and Zuiddam~\cite{wz2026} for a detailed account. We apply this framework to weighted graph tensors before introducing the linear conic programs used for their upper bounds.
	
	Fix an arbitrary field $\F$. Unless a coefficient semiring $\mathcal R$ is explicitly specified, all tensors and local maps are over $\F$. The real numbers used for spectral values, entropy, and convex optimization need not belong to $\F$. The subscript $C$ in $\Cprod$ stands for construction, independently of the coefficient field.
	
	Readers familiar with tensor operations may first consult the aligned partial direct sum in Definition~\ref{def:partial-direct-sum}, then proceed to the spectral framework in Subsection~\ref{subsec:asymptotic-spectrum}. The coordinate conventions can be revisited when needed in the separation and polynomial constructions.
	
	\Needspace{9\baselineskip}
	We write $\N=\{0,1,2,\ldots\}$, $[q]=\{1,\ldots,q\}$, and $[q]_0=\{0,\ldots,q-1\}$ for every positive integer $q$. All tensor coordinate indices start at $0$. We use $\log$ for the logarithm to base $2$ and $\ln$ for the natural logarithm. For a positive integer $s$, a \emph{probability vector} is a vector $\theta=(\theta_1,\ldots,\theta_s)$ with $\theta_h\geq0$ for $1\le h\le s$ and $\sum_{h=1}^{s}\theta_h=1$. Its entropy is
	\[
	H(\theta_1,\ldots,\theta_s)=-\sum_{h=1}^{s}\theta_h\ln\theta_h,
	\]
	where $0\ln 0=0$. Thus entropy factors below take the form $e^{H(\theta_1,\ldots,\theta_s)}$.
	
	\subsection{Tensor coordinates and operations}
	\label{subsec:tensor-coordinates}
	We begin with coordinate operations, and then specify which coordinate choices are identified. The distinction matters when several tensor blocks share a leg.
	
	\begin{definition}[tensor arrays]
		\label{def:arrays}
		Let $\mathcal R$ be a commutative semiring: $(\mathcal R,+_{\mathcal R},0_{\mathcal R})$ and $(\mathcal R,\cdot_{\mathcal R},1_{\mathcal R})$ are commutative monoids, multiplication distributes over addition, $0_{\mathcal R}\cdot_{\mathcal R}x=0_{\mathcal R}$ for all $x\in\mathcal R$, and $1_{\mathcal R}\ne0_{\mathcal R}$. For positive integers $k,n_1,\ldots,n_k$, a \emph{$k$-th order tensor array} of dimension $(n_1,\ldots,n_k)$ over $\mathcal R$ is a function
		\[
		T:[n_1]_0\times\cdots\times[n_k]_0\longrightarrow\mathcal R.
		\]
		We write $T_{i_1,\ldots,i_k}$ for its value at $(i_1,\ldots,i_k)$ and set $\dim(T)=(n_1,\ldots,n_k)$. The $d$-th index, for $1\le d\le k$, is called the $d$-th \emph{leg}. Let $\Arr_k(\mathcal R)$ be the disjoint union of these array spaces over all positive dimension vectors. When $\mathcal R=\F$, we call its elements \emph{$k$-th order tensor arrays over $\F$}. An array is nonzero if its support is nonempty.
		
		For $A,B\in\Arr_k(\mathcal R)$ with the same dimension $(n_1,\ldots,n_k)$ and $c\in\mathcal R$, define their \emph{entrywise sum} and scalar multiple by
		\[
		(A+B)_{i_1,\ldots,i_k}=A_{i_1,\ldots,i_k}+_{\mathcal R}B_{i_1,\ldots,i_k},
		\quad
		(cA)_{i_1,\ldots,i_k}=c\cdot_{\mathcal R}A_{i_1,\ldots,i_k}
		\]
		for $0\le i_d<n_d$ and $1\le d\le k$. The \emph{support} of $T$ is
		\[
		\supp(T)=\bigl\{(i_1,\ldots,i_k)\in[n_1]_0\times\cdots\times[n_k]_0\;\bigm|\;
		T_{i_1,\ldots,i_k}\ne0_{\mathcal R}\bigr\}.
		\]
	\end{definition}
	
	The entrywise sum is different from the direct sum defined next. We retain the subscript $K$ in both operations that preserve tensor order.
	\begin{definition}[Kronecker direct sum and product]
		\label{def:kronecker}
		Let $k$ be a positive integer, and let $A,B\in\Arr_k(\mathcal R)$ have dimensions $(n_1,\ldots,n_k)$ and $(m_1,\ldots,m_k)$, respectively. Their \emph{Kronecker direct sum} $A\Ksum B$ has dimension $(n_1+m_1,\ldots,n_k+m_k)$ and entries
		\[
		(A\Ksum B)_{u_1,\ldots,u_k}=
		\begin{cases}
			A_{u_1,\ldots,u_k},&0\le u_d<n_d\text{ for }1\le d\le k,\\
			B_{u_1-n_1,\ldots,u_k-n_k},&n_d\le u_d<n_d+m_d\text{ for }1\le d\le k,\\
			0_{\mathcal R},&\text{otherwise},
		\end{cases}
		\]
		for $0\le u_d<n_d+m_d$ and $1\le d\le k$. Their \emph{Kronecker product} $A\Kprod B$ has dimension $(n_1m_1,\ldots,n_km_k)$ and entries
		\[
		(A\Kprod B)_{i_1m_1+j_1,\ldots,i_km_k+j_k}
		=A_{i_1,\ldots,i_k}\cdot_{\mathcal R}B_{j_1,\ldots,j_k}
		\]
		for $0\le i_d<n_d$, $0\le j_d<m_d$, and $1\le d\le k$.
		For every positive integer $r$, write
		\[
		A^{\Ksum r}=\bigKsum_{h=1}^{r}A,
		\quad
		A^{\Kprod r}=\bigKprod_{h=1}^{r}A
		\]
		for the repeated direct sum and repeated Kronecker product, respectively. Repeated operations on distinct arrays use their displayed order.
	\end{definition}
	
	\begin{definition}[tensor construction]
		\label{def:construction}
		Let $p,q$ be positive integers. For arrays $A\in\Arr_p(\mathcal R)$ and $B\in\Arr_q(\mathcal R)$ with dimensions $(n_1,\ldots,n_p)$ and $(m_1,\ldots,m_q)$, define $A\Cprod B\in\Arr_{p+q}(\mathcal R)$ by
		\[
		(A\Cprod B)_{i_1,\ldots,i_p,j_1,\ldots,j_q}
		=A_{i_1,\ldots,i_p}\cdot_{\mathcal R}B_{j_1,\ldots,j_q}
		\]
		for $0\le i_d<n_d$ and $1\le d\le p$, and for $0\le j_d<m_d$ and $1\le d\le q$.
		For every positive integer $q$ and $d\in[q]_0$, let $e_d^{(q)}$ be the first-order tensor array of dimension $(q)$ with entries $(e_d^{(q)})_i=1_{\mathcal R}$ if $i=d$ and $0_{\mathcal R}$ otherwise, for $i\in[q]_0$. For positive integers $k,q$, define the \emph{diagonal tensor array}
		\[
		\Id_k(q)=\sum_{d=0}^{q-1}\bigCprod_{h=1}^{k}e_d^{(q)}.
		\]
		Thus $\dim(\Id_k(q))=(q,\ldots,q)$ and $\Id_k(q)=\Id_k(1)^{\Ksum q}$ as arrays. The notation $\Id_k(0)$ will be defined after we introduce the tensor semiring.
	\end{definition}
	
	\begin{definition}[local maps and restriction]
		\label{def:restriction}
		Let $k$ be a positive integer. Let $T,S\in\Arr_k(\mathcal R)$ have dimensions $(n_1,\ldots,n_k)$ and $(m_1,\ldots,m_k)$, respectively. Given matrices $L_d\in\mathcal R^{m_d\times n_d}$ for $1\le d\le k$, put $L=(L_1,\ldots,L_k)$ and define
		\[
		(L\cdot T)_{j_1,\ldots,j_k}
		=\sum_{\substack{0\le i_d<n_d\\1\le d\le k}}
		T_{i_1,\ldots,i_k}\prod_{d=1}^{k}(L_d)_{j_d,i_d}
		\]
		for $0\le j_d<m_d$ and $1\le d\le k$. We say that $S$ is a \emph{restriction} of $T$, and write $S\preceq T$ or $T\succeq S$, if $S=L\cdot T$ for some such matrices. We write $S\cong T$ if their dimensions agree and the matrices can be chosen invertible. Finally, write $S\sim T$ if $S\preceq T$ and $T\preceq S$.
	\end{definition}
	
	\begin{definition}[coordinate-permutation equivalence]
		\label{def:coordinate-permutations}
		Let $k$ be a positive integer, and let $A,B\in\Arr_k(\mathcal R)$. Write $A\equiv_{\mathrm{perm}}B$ if $\dim(A)=\dim(B)=(n_1,\ldots,n_k)$ and there are permutations $\sigma_d:[n_d]_0\to[n_d]_0$, for $1\le d\le k$, such that
		\[
		B_{i_1,\ldots,i_k}=A_{\sigma_1(i_1),\ldots,\sigma_k(i_k)}
		\]
		for $0\le i_d<n_d$ and $1\le d\le k$. The permutations act independently within the legs; they do not exchange the legs. Identity permutations, inverse permutations, and composition show that $\equiv_{\mathrm{perm}}$ is an equivalence relation.
	\end{definition}
	
	\begin{definition}[tensors and the tensor semiring]
		\label{def:semiring}
		Let $k$ be a positive integer. For this definition, assume that $\mathcal R$ has no zero divisors: for all $x,y\in\mathcal R$, the equality $xy=0_{\mathcal R}$ implies $x=0_{\mathcal R}$ or $y=0_{\mathcal R}$. Let $\Arr_k^{\times}(\mathcal R)$ be the set of nonzero arrays in $\Arr_k(\mathcal R)$, and put
		\[
		\Tens_k(\mathcal R)
		=\bigl(\Arr_k^{\times}(\mathcal R)/{\equiv_{\mathrm{perm}}}\bigr)
		\sqcup\{0_k\},
		\]
		where $0_k$ is a new formal element. An element of $\Tens_k(\mathcal R)$ is a \emph{$k$-th order tensor over $\mathcal R$}; when $\mathcal R=\F$, it is a \emph{$k$-th order tensor over $\F$}. For nonzero classes, define
		\[
		[A]\Ksum[B]=[A\Ksum B],\qquad
		[A]\Kprod[B]=[A\Kprod B].
		\]
		Extend these operations by $T\Ksum0_k=0_k\Ksum T=T$ and $T\Kprod0_k=0_k\Kprod T=0_k$ for every $T\in\Tens_k(\mathcal R)$. The unit is the class of $\Id_k(1)$, and we set $\Id_k(0)=0_k$. We use the same notation for a nonzero array and its class, and write $0$ for $0_k$ when the order is clear. For every tensor $T$, set $T^{\Ksum0}=0_k$ and $T^{\Kprod0}=\Id_k(1)$.
	\end{definition}
	
	The extra assumption on $\mathcal R$ is used only to construct this semiring and holds in particular for $\F$. It ensures that a product of two nonzero arrays is nonzero. The array definitions themselves apply to every commutative coefficient semiring. Our use of coordinate-permutation classes follows Wigderson and Zuiddam~\cite[Example~2.20]{wz2026}; we explicitly adjoin the zero element, as permitted by their semiring convention~\cite[Definition~2.1]{wz2026}.
	
	We verify the semiring properties. Permutations on $A$ and $B$ combine into a block permutation on each leg of $A\Ksum B$. For $A\Kprod B$, the permutations on leg $d$ act on the paired index by $(i,j)\mapsto(\sigma_d(i),\tau_d(j))$. Thus both operations are independent of the representatives. Associativity follows by concatenating three coordinate blocks for direct sums and by identifying the three paired indices for products. Swapping the two blocks on every leg proves commutativity of $\Ksum$. The permutation $im_d+j\mapsto jn_d+i$ on leg $d$ proves commutativity of $\Kprod$.
	
	For distributivity, the coordinates on each leg of $A\Kprod(B\Ksum C)$ have the form $(i,j)$ with $j$ in either the $B$ block or the $C$ block. List the pairs from the $B$ block first and those from the $C$ block second. This gives the coordinate order of $(A\Kprod B)\Ksum(A\Kprod C)$. Terms choosing different blocks on different legs have coefficient zero on both sides, and the remaining coefficients agree by the definition of $\Kprod$. Hence distributivity also holds. The class of $\Id_k(1)$ is a multiplicative unit. Finally, all identities involving $0_k$ follow from its defining rules. Therefore $\Tens_k(\mathcal R)$ is a commutative semiring.
	
	\begin{remark}[coordinates and tensor relations]
		\label{rem:tensor-coordinates}
		For a nonzero tensor, its dimension vector is the common dimension vector of its representatives. The formal zero tensor has no fixed positive dimension vector. In a coordinate identity, it is written as the zero coefficient table of the stated ambient dimensions. This convention is used for entrywise sums and local maps, not to replace the defining rules for $\Ksum$ and $\Kprod$ involving $0_k$.
		
		The operation $\Cprod$ also descends to nonzero classes, because coordinate permutations of its factors act on disjoint groups of legs. For $S\in\Tens_p(\mathcal R)$ and $T\in\Tens_q(\mathcal R)$, set $0_p\Cprod T=S\Cprod0_q=0_{p+q}$. Scalar multiplication descends in the same way, with scalar zero giving the formal zero tensor.
		
		The relations $\preceq$, $\cong$, and $\sim$ descend to nonzero tensors: changing input and output representatives only composes their local maps with permutation matrices. Extend them by $0_k\preceq T$ for every $T$, by $T\preceq0_k$ only for $T=0_k$, and by $0_k\cong0_k$. Thus $\sim$ still means mutual restriction and is not the equivalence used to define tensors. In particular, adding zero coordinates can change a tensor, although it gives a mutually restricting tensor.
		
		Entrywise addition and the action of a specified local matrix require chosen, aligned coordinates. For example, $(1,0)\equiv_{\mathrm{perm}}(0,1)$, but $(1,0)+(1,0)=(2,0)$ and $(1,0)+(0,1)=(1,1)$ are not coordinate-permutation equivalent over $\F$. Throughout the paper, a formula for tensor entries, an entrywise sum, or a local substitution uses the displayed coordinate representatives. The labels in $e_d^{(q)}$ refer to those coordinates. With this convention, coordinate calculations describe tensors without identifying them under arbitrary invertible local maps or under mutual restriction.
	\end{remark}
	
	\begin{remark}[tensor construction and Kronecker product]
		The operation $\Cprod$ concatenates the legs, whereas $\Kprod$ pairs corresponding legs and retains their number. For example, over $\F$, the tensor $e_0^{(2)}\Cprod e_0^{(2)}+e_1^{(2)}\Cprod e_1^{(2)}$ is a second-order tensor of rank $2$, whereas
		\[
		e_0^{(2)}\Kprod e_0^{(2)}+e_1^{(2)}\Kprod e_1^{(2)}=(1,0,0,1)
		\]
		is a nonzero first-order tensor of rank $1$. Tensor rank, defined in Definition~\ref{def:rank}, is invariant under invertible changes of coordinates within each fixed leg. It need not be invariant under regrouping the legs. Our notation records this distinction.
	\end{remark}
	
	The separation construction uses a partial direct sum: two legs have separate coordinates for each block, while the other legs use the same coordinates. The following definition makes these shared coordinates part of the data.
	
	\begin{definition}[partial direct sum in aligned coordinates]
		\label{def:partial-direct-sum}
		Let $k\ge2$ and $M\ge1$ be integers, and let $e=(u,v)$ with $1\le u<v\le k$. For $0\le h<M$, let $T_h$ be a $k$-th order tensor in specified coordinates of positive ambient dimensions $(n_{1,h},\ldots,n_{k,h})$. Assume that $n_{d,h}=n_d$ for every $d\notin\{u,v\}$ and every $0\le h<M$, and identify these coordinates in the displayed order. Put
		\[
		\beta_h=\sum_{t=0}^{h-1}n_{u,t},\qquad
		\gamma_h=\sum_{t=0}^{h-1}n_{v,t}\;(0\le h\le M),
		\]
		where the sums are empty when $h=0$. Define $m_u=\beta_M$, $m_v=\gamma_M$, and $m_d=n_d$ for $d\notin\{u,v\}$. For $1\le d\le k$ and $0\le h<M$, let
		\[
		\eta_{d,h}=\begin{cases}
			\beta_h,&d=u,\\
			\gamma_h,&d=v,\\
			0,&d\notin\{u,v\}.
		\end{cases}
		\]
		The \emph{partial direct sum} $A=T_0\boxplus_e\cdots\boxplus_e T_{M-1}$ is specified in ambient dimensions $(m_1,\ldots,m_k)$ by
		\[
		A_{i_1+\eta_{1,h},\ldots,i_k+\eta_{k,h}}
		=(T_h)_{i_1,\ldots,i_k}
		\]
		for every $0\le h<M$ and every $(i_1,\ldots,i_k)$ with $0\le i_d<n_{d,h}$ for $1\le d\le k$. All entries not assigned by this formula are zero. The sectors assigned to different $h$ are disjoint: the $u$-th coordinate lies in $[\beta_h,\beta_{h+1})$ and the $v$-th coordinate lies in $[\gamma_h,\gamma_{h+1})$. In particular, entries using different sectors on the two legs are zero. A zero block uses the zero coefficient table in its specified ambient dimensions. If every block is zero, $A$ is the formal zero tensor written in the stated output coordinates. For $M=2$, the notation is $T_0\boxplus_e T_1$. Iterating this binary construction in the displayed order gives the same coordinates as the formula above.
	\end{definition}
	
	This construction requires the aligned representatives, as in Remark~\ref{rem:tensor-coordinates}; independent changes of coordinates on a shared leg of different blocks need not give the same tensor. Thus $\boxplus_e$ is not an operation on tensor classes alone, and it is not one of the tensor semiring operations. In particular, a specified zero block retains its coordinate sector. Section~\ref{sec:separation} uses these partial sums to state the separation problem. The polynomial degenerations below retain the same shared coordinates throughout, so their partial sums use precisely this convention.
	
	We will often describe local maps through substitutions in formal multilinear expressions.
	\begin{definition}[formal multilinear notation]
		\label{def:multilinear-notation}
		Let $k,n_1,\ldots,n_k$ be positive integers, and let $T$ be a $k$-th order tensor in specified coordinates of dimensions $(n_1,\ldots,n_k)$. Introduce distinct commuting indeterminates $x_i^{(d)}$ for $1\le d\le k$ and $0\le i<n_d$. The \emph{formal multilinear expression} of $T$ in these coordinates is
		\[
		\sum_{\substack{0\le i_d<n_d\\1\le d\le k}}
		T_{i_1,\ldots,i_k}\prod_{d=1}^{k}x^{(d)}_{i_d}.
		\]
		For positive integers $m_1,\ldots,m_k$ and matrices $L_d\in\F^{m_d\times n_d}$ for $1\le d\le k$, put $L=(L_1,\ldots,L_k)$. The basis-vector images and the corresponding substitutions are
		\[
		\begin{aligned}
			L_d(e_i^{(n_d)})&=\sum_{j=0}^{m_d-1}(L_d)_{j,i}e_j^{(m_d)},\\
			x_i^{(d)}&\longmapsto\sum_{j=0}^{m_d-1}(L_d)_{j,i}y_j^{(d)}
		\end{aligned}
		\qquad(1\le d\le k,\ 0\le i<n_d),
		\]
		where the $y_j^{(d)}$ are distinct commuting output indeterminates. The first line determines each $L_d$ by linear extension; substituting the second line into the expression of $T$ gives the formal multilinear expression of $L\cdot T$ from Definition~\ref{def:restriction}. In particular, a one-dimensional output leg still has its basis vector $e_0^{(1)}$ and its coordinate symbol $y_0^{(d)}$.
	\end{definition}
	
	A product of symbols records one basis tensor and keeps the $k$ legs separate. This notation specifies local maps by linear substitutions within the coordinate groups and identifies their images from the resulting expressions. We use it for simultaneous separation in Theorem~\ref{thm:separation}, the two polynomial constructions in Lemmas~\ref{lem:concavity-degeneration} and~\ref{lem:tripling-degeneration}, the path restriction in Lemma~\ref{lem:spectral-triangle}, and the coordinatewise weight substitutions in Lemma~\ref{lem:multigraph-mincut}.
	
	\begin{definition}[permutations of tensor legs]
		\label{def:leg-permutation}
		For every positive integer $k$, let $S_k$ be the set of permutations of $\{1,\ldots,k\}$. For $\pi\in S_k$ and a nonzero $k$-th order tensor $T$ of dimension $(n_1,\ldots,n_k)$, define $\pi T$ by the dimensions $(n_{\pi^{-1}(1)},\ldots,n_{\pi^{-1}(k)})$ and the coordinate formula
		\[
		(\pi T)_{j_1,\ldots,j_k}=T_{j_{\pi(1)},\ldots,j_{\pi(k)}}
		\]
		for $0\le j_d<n_{\pi^{-1}(d)}$ and $1\le d\le k$, and set $\pi0_k=0_k$.
	\end{definition}
	
	A change of representative for $T$ induces a change of representative for $\pi T$, so this action is well-defined. Thus old leg $d$ becomes new leg $\pi(d)$. This operation permutes the participants, whereas $\equiv_{\mathrm{perm}}$ only permutes coordinates within each fixed leg.
	
	\subsection{Asymptotic rank and the spectrum}
	\label{subsec:asymptotic-spectrum}
	We now use local maps to define rank and subrank, and then describe their behavior under large Kronecker powers. Strassen's spectral duality will provide the link between these tensor parameters and the numerical inequalities in our proof.
	
	\Needspace{13\baselineskip}
	\begin{definition}[rank and subrank]
		\label{def:rank}
		For a positive integer $k$ and a $k$-th order tensor $T$, its \emph{rank} $\R_{\F}(T)$ is the least nonnegative integer $r$ such that $T\preceq\Id_k(r)$. For $k\ge2$, its \emph{subrank} $\Q_{\F}(T)$ is the greatest nonnegative integer $q$ such that $\Id_k(q)\preceq T$. Equivalently, fix coordinates of dimensions $(n_1,\ldots,n_k)$ for $T$. Then $\R_{\F}(T)$ is the smallest nonnegative integer $r$ for which
		\[
		T=\sum_{h=1}^{r}\bigCprod_{d=1}^{k}v_{d,h},
		\quad v_{d,h}\in\F^{n_d}\;(1\le d\le k,\ 1\le h\le r).
		\]
		An empty sum denotes the zero coefficient table in these coordinates. A tensor of the form $\bigCprod_{d=1}^{k}v_d$, with $v_d\in\F^{n_d}$ for $1\le d\le k$, is called a \emph{simple tensor}. The rank of the zero tensor is $0$; for $k\ge2$, its subrank is also $0$.
	\end{definition}
	
	For $r\ge1$, the two definitions of rank agree by taking the vectors $v_{d,h}$ as the columns of the local maps from $\Id_k(r)$; the case $r=0$ follows from the conventions for the zero tensor. The standard basis expansion is a finite sum of simple tensors, so every tensor has finite rank. A nonzero tensor restricts to $\Id_k(1)$ by selecting one nonzero entry and rescaling it. For $k\ge2$ and nonzero $T$, we also have $\Q_{\F}(T)\le\min_{1\le d\le k}(n_d)$: grouping leg $d$ against all other legs gives a matrix of rank at most $n_d$, while the corresponding matrix of $\Id_k(q)$ has rank $q$. Local maps cannot increase this matrix rank. The general statement about such \emph{flattenings} is defined and proved in Definition~\ref{def:flattening-rank} and Lemma~\ref{lem:flattening-properties}. Hence the maximum defining subrank exists. Composing restrictions also shows that rank and subrank are unchanged under $\sim$.
	
	\begin{definition}[asymptotic rank and subrank]
		\label{def:asymptotic-rank}
		For every integer $k\ge2$ and every nonzero $k$-th order tensor $T$, define
		\[
		\aR_{\F}(T)=\lim_{N\to\infty}\R_{\F}(T^{\Kprod N})^{1/N},
		\quad
		\aQ_{\F}(T)=\lim_{N\to\infty}\Q_{\F}(T^{\Kprod N})^{1/N}.
		\]
		Set $\aR_{\F}(0)=\aQ_{\F}(0)=0$.
	\end{definition}
	
	These limits exist by Fekete's lemma~\cite[no.~98]{polyaszego1998}; see also~\cite[Theorem~2.2]{schrijver2003} and its application to asymptotic rank and subrank in~\cite[Lemma~2.10 and Corollaries~2.11--2.12]{wz2026}. Indeed, taking products of rank decompositions and of subrank restrictions gives $\R_{\F}(A\Kprod B)\le\R_{\F}(A)\R_{\F}(B)$ and $\Q_{\F}(A\Kprod B)\ge\Q_{\F}(A)\Q_{\F}(B)$ for all tensors $A,B$ of the same order $k\ge2$ over $\F$. Fix a nonzero tensor $T$. The sequence $\log\R_{\F}(T^{\Kprod N})$ is subadditive, so Fekete's lemma gives the first equality below. For the second equality, apply the same lemma to the nonnegative subadditive sequence $N\log\R_{\F}(T)-\log\Q_{\F}(T^{\Kprod N})$. We obtain
	\[
	\aR_{\F}(T)=\inf_{N\ge1}\bigl(\R_{\F}(T^{\Kprod N})^{1/N}\bigr),
	\quad
	\aQ_{\F}(T)=\sup_{N\ge1}\bigl(\Q_{\F}(T^{\Kprod N})^{1/N}\bigr).
	\]
	Both limits are finite because $1\le\Q_{\F}(T^{\Kprod N})\le\R_{\F}(T^{\Kprod N})\le\R_{\F}(T)^N$ for every positive integer $N$ and every nonzero $T$.
	
	We next recall the preordered semiring framework underlying Strassen's theory. Its spectral points will turn asymptotic tensor comparisons into inequalities between nonnegative real numbers~\cite{strassen1988,wz2026}.
	
	\Needspace{17\baselineskip}
	\begin{definition}[Strassen preorder and asymptotic restriction]
		\label{def:strassen-preorder}
		Let \(\mathcal S\) be a commutative semiring with identities \(0_{\mathcal S}\) and \(1_{\mathcal S}\). For each \(r\in\mathbb N\), let \(r_{\mathcal S}\) denote the sum of \(r\) copies of \(1_{\mathcal S}\) in \(\mathcal S\) (so \(0_{\mathcal S}\) is the empty sum and \(1_{\mathcal S}\) is the sum of one copy). A preorder \(\preceq\) on \(\mathcal S\) is called a \emph{Strassen preorder} if the following properties hold:
		\begin{enumerate}[label=(\arabic*)]
			\item For all \(r,s\in\mathbb N\), we have \(r_{\mathcal S}\preceq s_{\mathcal S}\) if and only if \(r\le s\).
			\item For all \(a,b,c\in\mathcal S\), \(a\preceq b\) implies \(a+c\preceq b+c\) and \(ac\preceq bc\).
			\item For every \(a,b\in\mathcal S\) with \(b\ne0_{\mathcal S}\), there exists a positive integer \(r\) such that \(a\preceq r_{\mathcal S}b\).
		\end{enumerate}
		For tensors $S,T$ of a fixed order $k\ge2$ over $\F$, write $S\lesssim T$ if there are positive integers $h_N$, indexed by positive integers $N$, with $\log h_N=o(N)$ such that
		\[
		S^{\Kprod N}\preceq T^{\Kprod N}\Kprod\Id_k(h_N)
		\]
		for every sufficiently large $N$. This is the \emph{asymptotic restriction preorder}.
	\end{definition}

	For $\mathcal S=\Tens_k(\F)$, the element $r_{\mathcal S}$ is $\Id_k(r)$, with $\Id_k(0)=0_k$; it is not entrywise scalar multiplication by $r$. Restriction is a Strassen preorder on this semiring for $k\ge2$. For the first property, a nontrivial flattening of $\Id_k(r)$ has matrix rank $r$, so a restriction $\Id_k(r)\preceq\Id_k(s)$ requires $r\le s$. Conversely, coordinate projection gives such a restriction whenever $r\le s$. For the second property, a restriction with nonzero target is preserved under direct sum by block-diagonal local matrices and under Kronecker product by the corresponding Kronecker products of local matrices. If the target is zero, a direct sum can be restricted to its other summand by deleting the source summand, and a product restricts to zero. Thus the same property holds in every case. For the third property, if $a$ and $b$ are nonzero classes, then $\Id_k(1)\preceq b$ and $a\preceq\Id_k(\R_{\F}(a))$. Taking $\R_{\F}(a)$ direct-sum copies of the first restriction and composing gives $a\preceq b^{\Ksum\R_{\F}(a)}$. If $a$ is the formal zero tensor, we may take $r=1$.
	
	\begin{definition}[degeneration]
		\label{def:degeneration}
		Let $k\ge2$ be an integer, and let $S,T$ be nonzero $k$-th order tensors. Fix coordinate representatives of dimensions $(m_1,\ldots,m_k)$ and $(n_1,\ldots,n_k)$, respectively. We write $S\unlhd T$, and say that $T$ \emph{degenerates to} $S$, if there are matrices $L_d(\varepsilon)\in\F[\varepsilon]^{m_d\times n_d}$ for $1\le d\le k$, integers $a,b$ with $0\le a\le b$, and $k$-th order tensors $S_j$, written in the same output coordinates for $a+1\le j\le b$, such that
		\[
		L(\varepsilon)\cdot T
		=\varepsilon^a S+\sum_{j=a+1}^{b}\varepsilon^jS_j.
		\]
		Here $L(\varepsilon)=(L_1(\varepsilon),\ldots,L_k(\varepsilon))$, and the equality is coefficientwise in the fixed output coordinates; the $S_j$ may be zero. Changing representatives only composes the local matrices with constant permutation matrices, so this relation is independent of the chosen representatives. Define $0_k\unlhd T$ for every tensor $T$, and let $S\unlhd0_k$ hold only for $S=0_k$. Equivalently, write $T\rightsquigarrow S$, with the source on the left. An empty sum is allowed. Local matrices with Laurent-polynomial entries give the same notion: multiplying each matrix by a sufficiently large power of $\varepsilon$ clears its negative exponents and increases all exponents in the identity by the same amount.
	\end{definition}
	
	\begin{remark}[degenerations by weights]
		\label{rem:weights}
		Use the formal multilinear notation of Definition~\ref{def:multilinear-notation}. After any fixed local maps, assign an integer weight $w(x)$ to each coordinate symbol $x$ and make the coordinate substitutions $x\mapsto\varepsilon^{w(x)}x$. The weight of a supported term is the sum of the weights of its coordinate symbols. If every such total is nonnegative, the resulting formal multilinear expression is a polynomial in $\varepsilon$, and its constant coefficient is exactly the sum of the terms of total weight zero. The tensor represented by this coefficient is therefore a degeneration of the original tensor. Individual weights may be negative, since Laurent-polynomial local maps are allowed by Definition~\ref{def:degeneration}. We call this procedure a \emph{weight degeneration}; it is used in Sections~\ref{sec:separation} and~\ref{sec:polynomial-inequalities} and Appendix~\ref{sec:subrank}.
	\end{remark}

	\begin{lemma}[coefficient extraction; cf. Bini~\cite{bini1980} and~\cite{abz2026}]
		\label{lem:degeneration-asymptotic}
		For every field $\F$, every integer $k\ge2$, and all $k$-th order tensors $S,T$ over $\F$, the relation $S\unlhd T$ implies $S\lesssim T$. More precisely, if the leading degree in Definition~\ref{def:degeneration} is $a$, then
		\[
		S^{\Kprod N}\preceq T^{\Kprod N}\Kprod\Id_k(h_N),
		\qquad h_N=\binom{aN+k-1}{k-1}\quad(N\ge1).
		\]
	\end{lemma}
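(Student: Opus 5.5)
The plan is Bini's interpolation argument, carried out so that only $\F$ itself is used, even when $\F$ is finite. If $S$ or $T$ is the formal zero tensor the claim is immediate from the conventions, so assume both are nonzero. Fix a degeneration $L(\varepsilon)\cdot T=\varepsilon^aS+\sum_{j>a}\varepsilon^jS_j$ with $L_d(\varepsilon)\in\F[\varepsilon]^{m_d\times n_d}$.

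\textbf{Step 1: pass to powers.} Taking Kronecker powers of the local matrices gives local maps $L(\varepsilon)^{\Kprod N}$ over $\F[\varepsilon]$ with
\[
L(\varepsilon)^{\Kprod N}\cdot T^{\Kprod N}=\varepsilon^{aN}S^{\Kprod N}+(\text{terms of degree}>aN).
\]
So it suffices to prove the single-degeneration statement with $a$ replaced by $aN$. Concretely, if $T'$ degenerates to $S'$ with leading degree $c$, then $S'\preceq T'\Kprod\Id_k\bigl(\binom{c+k-1}{k-1}\bigr)$.

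\textbf{Step 2: truncate and expand.} Reduce every entry of every $L_d(\varepsilon)$ modulo $\varepsilon^{c+1}$ and write
\[
L_d(\varepsilon)=\sum_{i=0}^{c}\varepsilon^{i}L_{d,i},\qquad L_{d,i}\in\F^{m_d\times n_d}.
\]
Truncation does not change the coefficients up to degree $c$. Extracting the coefficient of $\varepsilon^c$ by multilinearity gives
\[
S'=\sum_{i_1+\cdots+i_k=c}(L_{1,i_1},\ldots,L_{k,i_k})\cdot T'.
\]
The sum is indexed by the compositions $(i_1,\ldots,i_k)\in\N^k$ with total $c$, and there are exactly $\binom{c+k-1}{k-1}$ of them.

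\textbf{Step 3: realize the sum as one restriction.} Set $h=\binom{c+k-1}{k-1}$ and index the coordinates of $\Id_k(h)$ by these compositions $\iota$. Define the local map on leg $d$ of $T'\Kprod\Id_k(h)$ by
\[
e_x\Kprod e_\iota\longmapsto L_{d,\iota_d}(e_x).
\]
The $\Id_k(h)$ factor forces all $k$ legs to use the same composition $\iota$. Hence the image is $\sum_\iota(L_{1,\iota_1},\ldots,L_{k,\iota_k})\cdot T'=S'$, which is the required restriction.

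\textbf{Step 4: conclude.} Applying Steps 2 and 3 with $c=aN$ gives
\[
h_N=\binom{aN+k-1}{k-1}\le(aN+k)^{k-1},
\]
so $\log h_N=O(\log N)=o(N)$, and therefore $S\lesssim T$.

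The only delicate point is bookkeeping: matching the coordinate-level Kronecker product of the local matrices with the product of degenerations in Step 1, and confirming that coefficient extraction is valid coefficientwise over an arbitrary field. No evaluation at field points is used, so finite fields cause no difficulty.
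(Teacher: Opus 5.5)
Your proposal is correct and matches the paper's proof in every step. You pass to the $N$-th Kronecker power so that the leading degree becomes $aN$, write the $\varepsilon^{aN}$ coefficient as a sum over the $\binom{aN+k-1}{k-1}$ degree compositions of the coefficient matrices, and realize that sum as one restriction from $h_N$ copies of $T^{\Kprod N}$; your composition-indexed map on $T'\Kprod\Id_k(h)$ is exactly the paper's concatenation of local matrices over direct-sum copies, and truncating modulo $\varepsilon^{c+1}$ instead of expanding $L_d(\varepsilon)^{\otimes N}$ up to degree $DN$ is purely cosmetic.
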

	\begin{proof}
		The claim is immediate for $S=0$. Otherwise take polynomial local maps $L_1(\varepsilon),\ldots,L_k(\varepsilon)$ as in Definition~\ref{def:degeneration}. Let $D$ be the largest degree of an entry among these local matrices. After taking the $N$-th Kronecker power, expand the local map on leg $d$ as
		\[
		L_d(\varepsilon)^{\otimes N}=\sum_{j=0}^{DN}\varepsilon^jL_{d,N,j},
		\]
		where we set $L_{d,N,j}=0$ for every $j>DN$ and whenever the corresponding degree is absent. Here the power on the left is the ordinary Kronecker power of a matrix. The coefficient of $\varepsilon^{aN}$ in the resulting tensor is exactly $S^{\Kprod N}$, since every tensor term before taking the power has degree at least $a$. Therefore
		\[
		S^{\Kprod N}
		=\sum_{\substack{0\le j_1,\ldots,j_k\le aN\\j_1+\cdots+j_k=aN}}
		(L_{1,N,j_1},\ldots,L_{k,N,j_k})\cdot T^{\Kprod N}.
		\]
		There are $h_N$ possible degree tuples. Give each tuple its own direct-sum copy of $T^{\Kprod N}$, and concatenate its local matrices on each leg. The image is precisely the displayed entrywise sum, so it is a restriction of $h_N$ copies. Since $k$ and $a$ are fixed, $h_N$ is polynomial in $N$, and $\log h_N=o(N)$. This proves the explicit restriction and hence $S\lesssim T$ over every field, without interpolation or division.
	\end{proof}
	
	The following elementary specialization observation lets us use interpolation without assuming that the coefficient field contains sufficiently many elements.
	\begin{lemma}[rational interpolation as a degeneration]
		\label{lem:rational-specialization}
		Let $k\ge2$, let $T$ be a $k$-th order tensor over $\F$, let $z$ be an indeterminate, and let $r\ge1$. If its scalar extension to $\F(z)$ has rank at most $r$, then $T\unlhd\Id_k(r)$ over $\F$.
	\end{lemma}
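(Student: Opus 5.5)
The plan is to take a rank decomposition over $\F(z)$ and clear denominators. I will then specialize along a formal parameter so that the decomposition becomes a degeneration over $\F$ with polynomial local matrices.

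Fix coordinates of dimensions $(n_1,\ldots,n_k)$ for $T$. By hypothesis there are vectors $v_{d,h}\in\F(z)^{n_d}$, for $1\le d\le k$ and $1\le h\le r$, with $T=\sum_{h=1}^{r}\bigCprod_{d=1}^{k}v_{d,h}$ over $\F(z)$; if the rank is smaller, pad with zero terms. Equivalently, there are matrices $L_d\in\F(z)^{n_d\times r}$ with $(L_1,\ldots,L_k)\cdot\Id_k(r)=T$.

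\textbf{Step 1: clear denominators.} Let $g\in\F[z]$ be a common nonzero denominator of all entries. Replace $L_d$ by $gL_d\in\F[z]^{n_d\times r}$. Then $(gL_1,\ldots,gL_k)\cdot\Id_k(r)=g^kT$, an identity of tensors with entries in $\F[z]$.

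\textbf{Step 2: expand around the lowest order.} Write $g^k=z^a u(z)$ with $u(0)\neq0$. The obstacle is that $z$ is not yet the parameter $\varepsilon$ of Definition~\ref{def:degeneration}, and the leading coefficient must be a nonzero scalar multiple of $T$. Set $\varepsilon=z$ and read the identity coefficientwise in $\F[\varepsilon]$. Since $T$ has entries in $\F$, the right side $\varepsilon^a u(\varepsilon)T$ expands as $\varepsilon^a u(0)T+\sum_{j>a}\varepsilon^j u_j T$. This has exactly the required form with $S_j=u_jT$, after rescaling one local matrix by $u(0)^{-1}\in\F$.

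\textbf{Step 3: handle the zero tensor.} If $T=0_k$, the claim holds by the convention $0_k\unlhd\Id_k(r)$. Otherwise the displayed identity is precisely $L(\varepsilon)\cdot\Id_k(r)=\varepsilon^aT+\sum_{j=a+1}^{b}\varepsilon^jS_j$ with polynomial matrices, which is $T\unlhd\Id_k(r)$.

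I expect no real difficulty here. The only point to check is that $g^k$ need not equal $z^a$ exactly, which is why the lowest-order coefficient $u(0)$ is factored out. No field-size assumption is used, since nothing is evaluated at elements of $\F$.
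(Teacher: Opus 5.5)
Your proposal is correct and follows essentially the same route as the paper. You clear denominators in an $\F(z)$-decomposition to obtain polynomial local maps whose image is a nonzero polynomial multiple of $T$, use $z$ itself as the degeneration parameter, and rescale one leg by the inverse of the lowest nonzero coefficient; the only cosmetic difference is that you use one common denominator $g$ on every leg, giving $g^kT$, where the paper clears each leg with its own $f_d$.
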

	\begin{proof}
		The assertion is immediate if $T=0$. Otherwise choose local matrices $L_d(z)$ over $\F(z)$ whose image of $\Id_k(r)$ is $T$. For each leg choose a nonzero polynomial $f_d(z)\in\F[z]$ clearing every denominator in $L_d(z)$, and put $B_d(z)=f_d(z)L_d(z)$. Then
		\[
		(B_1(z),\ldots,B_k(z))\cdot\Id_k(r)=f(z)T,
		\qquad f(z)=\prod_{d=1}^{k}f_d(z)\ne0.
		\]
		Write $f(z)=cz^a+\sum_{j=a+1}^{b}c_jz^j$, where $b=\deg f$ and $c\in\F\setminus\{0\}$. Multiply $B_1$ by $c^{-1}$ and use $z$ as the degeneration parameter. The leading coefficient is $T$, as required.
	\end{proof}
	
	Degeneration is compatible with direct sum and Kronecker product. For a direct sum, first align the two leading degrees by multiplying one local map in the lower-degree degeneration by a suitable power of the parameter, and then take block-diagonal maps. For a Kronecker product, take the Kronecker products of the local maps; their leading degrees add. It is also transitive. To see the latter explicitly, clear Laurent exponents in two successive degenerations, use parameters $z$ and $\varepsilon$, and replace $z$ by $\varepsilon^M$. Taking $M$ larger than the leading degree in the second degeneration places every error term of the first strictly above the desired leading coefficient. Thus composing the local maps again gives a degeneration over the same field.
	
	\Needspace{11\baselineskip}
	\begin{definition}[asymptotic spectrum]
		\label{def:spectrum}
		For every integer $k\ge2$, the \emph{asymptotic spectrum of $k$-th order tensors over $\F$}, denoted $\Phi_{k,\F}$, consists of the maps $\phi:\Tens_k(\F)\to\mathbb R_{\ge0}$ satisfying the following conditions for all $A,B\in\Tens_k(\F)$:
		\begin{enumerate}[label=(\arabic*),itemsep=.35em]
			\item \emph{Normalization:} $\phi(0)=0$ and $\phi(\Id_k(1))=1$.
			\item \emph{Additivity:} $\phi(A\Ksum B)=\phi(A)+\phi(B)$.
			\item \emph{Multiplicativity:} $\phi(A\Kprod B)=\phi(A)\phi(B)$.
			\item \emph{Monotonicity:} if $A\preceq B$, then $\phi(A)\le\phi(B)$.
		\end{enumerate}
		An element of $\Phi_{k,\F}$ is called a \emph{spectral point}.
	\end{definition}

	For every $\phi\in\Phi_{k,\F}$ and every positive integer $q$, additivity gives $\phi(\Id_k(q))=q$. Thus $1\le\phi(T)\le\R_{\F}(T)$ for every nonzero $T$. Spectral points are also monotone under asymptotic restriction. Indeed, $S\lesssim T$ gives $\phi(S)^N\le h_N\phi(T)^N$ for all sufficiently large $N$; taking $N$-th roots and using $h_N^{1/N}\to1$ gives $\phi(S)\le\phi(T)$. In particular, Lemma~\ref{lem:degeneration-asymptotic} shows that spectral points are monotone under degeneration. The following theorem characterizes asymptotic restriction through spectral points and expresses asymptotic rank as their maximum.
	
	\Needspace{10\baselineskip}
	\begin{theorem}[Strassen's spectral duality~\cite{strassen1988,cvz2023,wz2026}]
		\label{thm:spectral-duality}
		For every integer $k\ge2$ and every $k$-th order tensor $T$,
		\[
		\aR_{\F}(T)=\max_{\phi\in\Phi_{k,\F}}\bigl(\phi(T)\bigr).
		\]
		Moreover, for $k$-th order tensors $S,T$, we have $S\lesssim T$ if and only if $\phi(S)\le\phi(T)$ for every $\phi\in\Phi_{k,\F}$.
	\end{theorem}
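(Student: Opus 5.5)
This is Strassen's spectral duality theorem, which the paper cites rather than reproves. The plan is to derive it from the general Positivstellensatz / Kadison--Dubois-type theorem for Strassen preordered semirings, in the version of~\cite{strassen1988,wz2026}. That theorem says: for a commutative semiring $\mathcal S$ with a Strassen preorder $\le$, the asymptotic preorder $a\lesssim b$ (meaning $a^N\le b^N\cdot 2^{o(N)}$) is characterized by the monotone semiring homomorphisms $\mathcal S\to\mathbb R_{\ge0}$. The work therefore has three parts: place $\Tens_k(\F)$ in this abstract framework, identify the abstract objects with the concrete ones in Definitions~\ref{def:strassen-preorder} and~\ref{def:spectrum}, and then read off the rank formula.

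\textbf{Step 1: verify the hypotheses.} The text after Definition~\ref{def:strassen-preorder} already checks that restriction $\preceq$ is a Strassen preorder on $\Tens_k(\F)$, with $r_{\mathcal S}=\Id_k(r)$. The semiring's monotone homomorphisms to $\mathbb R_{\ge0}$ are exactly the spectral points $\Phi_{k,\F}$, since normalization, additivity, multiplicativity and monotonicity are precisely the homomorphism and order conditions.

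\textbf{Step 2: match the asymptotic preorders.} The abstract asymptotic preorder is: $S^{\Kprod N}\preceq T^{\Kprod N}\Kprod\Id_k(h_N)$ with $h_N=2^{o(N)}$. This is literally the definition of $\lesssim$. One small point needs care, because the abstract theorem is sometimes stated with a fixed $h$ that may be exponential, or with a different growth condition. I would use the standard reduction: if $S^{\Kprod N}\preceq T^{\Kprod N}\Kprod\Id_k(h_N)$ holds along a subsequence with $h_N^{1/N}\to1$, then it holds for all large $N$. To see this, write an arbitrary $N$ as $qN_0+r$ and bound the remainder $S^{\Kprod r}\preceq T^{\Kprod r}\Kprod\Id_k(c)$ with a constant $c$, using property (3). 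Granting this, the abstract theorem gives: $S\lesssim T$ if and only if $\phi(S)\le\phi(T)$ for all $\phi\in\Phi_{k,\F}$.

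\textbf{Step 3: the rank formula.} Let $r_N=\R_{\F}(T^{\Kprod N})$. Every $\phi$ satisfies $\phi(T)^N\le\phi(\Id_k(r_N))=r_N$, so $\phi(T)\le\aR_{\F}(T)$. For the reverse inequality, take any real $c>\max_\phi\phi(T)$. The maximum exists because $\Phi_{k,\F}$ is a compact subset of a product of intervals $[1,\R_{\F}(\cdot)]$ and evaluation is continuous, by Tychonoff's theorem. Choose a rational $c=u/v$, so that $\phi(T^{\Kprod v})<u=\phi(\Id_k(u))$ for all $\phi$. The characterization from Step 2 then gives $T^{\Kprod v}\lesssim\Id_k(u)$. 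Hence $\R_{\F}(T^{\Kprod vN})\le u^N h_N$, so $\aR_{\F}(T)\le c$. Letting $c$ decrease to the maximum gives the formula. For $T=0$ both sides are zero.

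\textbf{Main obstacle.} The real difficulty is inside the abstract Positivstellensatz. One must construct, for a pair $S\not\lesssim T$, a monotone homomorphism separating them. The route is Zorn's lemma applied to a maximal Strassen preorder extending $\le$ that still fails $S\le T$, followed by showing that a maximal such preorder is total and induces an embedding into $\mathbb R$ via the Archimedean property (3). I would cite this step from~\cite{strassen1988,wz2026}, since the paper treats it as known, and only carry out Steps 1--3 explicitly.
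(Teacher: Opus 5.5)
Your route is the same as the paper's: the paper gives no argument beyond citing Strassen's abstract duality theorem. It only checks, in the paragraph after Definition~\ref{def:strassen-preorder}, that restriction is a Strassen preorder on $\Tens_k(\F)$, and, as you note, the spectral points are exactly the monotone semiring homomorphisms to $\mathbb R_{\ge0}$. Steps~1 and~2 are fine in substance, but Step~3 fails as written.

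\textbf{The error in Step~3.} From $c=u/v>\phi(T)$ you only get $\phi(T^{\Kprod v})=\phi(T)^v<(u/v)^v$, which need not be below $u$. Even granting $T^{\Kprod v}\lesssim\Id_k(u)$, the bound $\R_{\F}(T^{\Kprod vN})\le u^Nh_N$ gives $\aR_{\F}(T)\le u^{1/v}$, not $\aR_{\F}(T)\le u/v=c$. With direct sums, $\phi(T^{\Ksum v})=v\phi(T)<u$ does hold, but it only yields $\aR_{\F}(T^{\Ksum v})\le u$. Getting from there to $\aR_{\F}(T)\le u/v$ needs $\aR_{\F}(T^{\Ksum v})\ge v\,\aR_{\F}(T)$, which is most naturally obtained from the formula you are proving.

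\textbf{The repair.} Use Kronecker powers consistently and drop $c$; the characterization uses non-strict inequalities, so no auxiliary strict bound is needed. For $T\ne0$ put $M=\max_{\phi}\phi(T)\ge1$ and $u_v=\lceil M^v\rceil\le2M^v$. Then $\phi(T^{\Kprod v})\le u_v=\phi(\Id_k(u_v))$ for every $\phi$, so $T^{\Kprod v}\lesssim\Id_k(u_v)$. Hence $\R_{\F}(T^{\Kprod vN})\le u_v^Nh_N$ and $\aR_{\F}(T)\le u_v^{1/v}\le2^{1/v}M$. Letting $v\to\infty$ gives $\aR_{\F}(T)\le M$.

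\textbf{Smaller points.}
\begin{itemize}
\item The maximum also needs $\Phi_{k,\F}\ne\varnothing$. This holds because each flattening rank $\operatorname{FR}_S$ is a spectral point by Lemma~\ref{lem:flattening-properties}. The intervals should also be $[0,\R_{\F}(\cdot)]$, to allow $\phi(0)=0$.
\item The subsequence reduction in Step~2 is unnecessary. The standard formulations give the relation for all, or all sufficiently large, $N$, and these are equivalent by property~(3) applied to finitely many $N$. As you sketched it, with $N_0$ fixed, it would only give a factor $2^{\varepsilon N}$, so $N_0$ would have to grow with $N$.
\item Your sketch of the abstract proof is loose. A total extension in which $S\le T$ fails separates only weakly, giving $\phi(T)\le\phi(S)$, whereas the theorem needs $\phi(S)>\phi(T)$. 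Since you, like the paper, cite this step, it does not affect the argument.
\end{itemize}
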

	
	\subsection{Weighted graph tensors, spectral weights, and cuts}
	\label{subsec:graph-tensors}
	\label{subsec:weighted-tensors}
	For weighted graph tensors, spectral values are determined by weights on pairs of tensor legs. We first define weighted graph tensors in coordinates, allowing independent edge dimensions as in Christandl, Gesmundo, Micha\l ek, and Zuiddam~\cite[Section~2.3]{cgmz2021}. Throughout the paper, graphs have at least two vertices, labeled by $\{1,\ldots,n\}$, and an undirected edge is written $(i,j)$ with $i<j$. This notation records its endpoints without orienting the edge.
	
	\begin{definition}[weighted graph tensors and graph tensors]
		\label{def:graph-tensor}
		\label{def:weighted-tensor}
		Let $n\ge2$ and $q\ge1$ be integers. For $1\le i<j\le n$, define the $n$-th order \emph{edge tensor}
		\[
		E^{(q)}_{i,j}=\sum_{a=0}^{q-1}\bigCprod_{d=1}^{n}v_{d,a},
		\qquad
		v_{d,a}=\begin{cases}
			e_a^{(q)},&d\in\{i,j\},\\
			e_0^{(1)},&d\notin\{i,j\},
		\end{cases}
		\]
		for $1\le d\le n$ and $0\le a<q$. We also write $E^{(q)}_{j,i}=E^{(q)}_{i,j}$ when the endpoints occur in reverse order. A symmetric positive integer matrix $Q$ of order $n$ with diagonal entries one is called a \emph{dimension matrix}. Its \emph{weighted graph tensor over $\F$} is
		\[
		\WT(Q)=\bigKprod_{1\le i<j\le n}E^{(Q_{i,j})}_{i,j},
		\]
		with the factors ordered lexicographically. Its \emph{logarithmic edge-weight matrix} is the real symmetric matrix $W$ whose entries are $W_{i,j}=\log_2 Q_{i,j}$ for $1\le i,j\le n$. Thus $W$ is nonnegative and has zero diagonal, and dimension one represents an absent edge.
		
		For a coordinate description, put
		\[
		\Lambda(Q)=\prod_{1\le i<j\le n}[Q_{i,j}]_0,
		\qquad D_v=\prod_{j=1}^{n}Q_{v,j}\quad(1\le v\le n).
		\]
		Extend each labeling $\lambda\in\Lambda(Q)$ symmetrically by $\lambda_{j,i}=\lambda_{i,j}$ and $\lambda_{i,i}=0$. Its mixed-radix coordinate at vertex $v$ is
		\[
		I_v^Q(\lambda)=\sum_{j=1}^{n}\lambda_{v,j}
		\prod_{h=j+1}^{n}Q_{v,h}.
		\]
		Then
		\[
		\WT(Q)=\sum_{\lambda\in\Lambda(Q)}\bigCprod_{v=1}^{n}
		e_{I_v^Q(\lambda)}^{(D_v)}.
		\]
		An empty product equals one. In particular, a vertex incident only with dimension-one edges has a leg of dimension one.
		
		For a finite simple graph $G=(V,E)$ with $V=\{1,\ldots,n\}$, the \emph{graph tensor} with edge dimension $q$ is the special case
		\[
		\GT_q(G)=\bigKprod_{(i,j)\in E}E^{(q)}_{i,j}=\WT(Q),
		\qquad
		Q_{i,j}=\begin{cases}q,&(i,j)\in E,\\1,&(i,j)\notin E,\end{cases}
		\quad(1\le i<j\le n),
		\]
		with $Q$ completed symmetrically and given diagonal one. If $d_v=\deg_G(v)$ and $e_{v,1},\ldots,e_{v,d_v}$ are the incident edges in lexicographic order, the coordinate of an edge labeling $\lambda:E\to[q]_0$ is
		\[
		I_v(\lambda)=\sum_{t=1}^{d_v}\lambda(e_{v,t})q^{d_v-t},
		\qquad
		\GT_q(G)=\sum_{\lambda:E\to[q]_0}\bigCprod_{v=1}^{n}
		e_{I_v(\lambda)}^{(q^{d_v})}.
		\]
		The tensor of an edgeless graph is $\Id_n(1)$.
	\end{definition}
	
	\begin{lemma}[pairing edge labels; cf.~\cite{cvz2019}]
		\label{lem:graph-multiplicative}
		For dimension matrices $Q,Q'$ of the same order, their entrywise product $Q\circ Q'$ satisfies
		\begin{equation}
			\label{eq:weighted-pairing}
			\WT(Q)\Kprod\WT(Q')\cong\WT(Q\circ Q').
		\end{equation}
		If $Q_{i,j}\le Q'_{i,j}$ for every $1\le i<j\le n$, then $\WT(Q)\preceq\WT(Q')$. In particular, for positive integers $p,q,N$,
		\[
		\GT_p(G)\Kprod\GT_q(G)\cong\GT_{pq}(G),
		\qquad
		\GT_q(G)^{\Kprod N}\cong\GT_{q^N}(G).
		\]
	\end{lemma}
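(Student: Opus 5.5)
We prove the three assertions of Lemma~\ref{lem:graph-multiplicative} in order. All three come from explicit coordinate maps, using the mixed-radix description of $\WT(Q)$ in Definition~\ref{def:weighted-tensor}.

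\textbf{The pairing isomorphism \eqref{eq:weighted-pairing}.} Both sides are sums over edge labelings. On the left, a basis term is indexed by a pair $(\lambda,\lambda')\in\Lambda(Q)\times\Lambda(Q')$. At vertex $v$ it sits at the Kronecker coordinate $I_v^Q(\lambda)D'_v+I_v^{Q'}(\lambda')$, where $D'_v=\prod_jQ'_{v,j}$. On the right, a basis term is indexed by $\mu\in\Lambda(Q\circ Q')$. We identify the two index sets by the bijection
\[
\mu_{i,j}=\lambda_{i,j}Q'_{i,j}+\lambda'_{i,j}.
\]
At each vertex $v$, both coordinates are mixed-radix encodings of the same tuple $(\lambda_{v,j},\lambda'_{v,j})_{j\ne v}$, only with the digits listed in different orders. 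On leg $v$ we therefore take the permutation matrix sending $I_v^Q(\lambda)D'_v+I_v^{Q'}(\lambda')$ to $I_v^{Q\circ Q'}(\mu)$. This is a bijection of $[D_vD'_v]_0$, and the product of these local permutations carries the left support onto the right support, with all coefficients equal to one. So the left side is carried exactly onto the right, which gives $\cong$; in fact the relation holds already as $\equiv_{\mathrm{perm}}$.

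\textbf{Monotonicity in $Q$.} By the definition of $\WT$ as a Kronecker product and compatibility of restriction with $\Kprod$ (the second Strassen-preorder property), it suffices to show $E^{(q)}_{i,j}\preceq E^{(q')}_{i,j}$ for $q\le q'$. For this, project legs $i$ and $j$ onto the first $q$ coordinates and use the identity on all other legs. The image is $\sum_{a<q}e_a\Cprod e_a$ on the edge, which is $E^{(q)}_{i,j}$. Kronecker products of the factorwise local maps then give $\WT(Q)\preceq\WT(Q')$.

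\textbf{Graph-tensor specializations.} The statements for $\GT$ follow by taking $Q,Q'$ to be the dimension matrices with entries $p$ and $q$ on the edges of $G$. Their entrywise product has entries $pq$ on edges and $1$ on nonedges. Induction on $N$ then gives $\GT_q(G)^{\Kprod N}\cong\GT_{q^N}(G)$.

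\textbf{Main obstacle.} The only real point is bookkeeping: checking that the digit-reordering on each leg is a genuine coordinate bijection and that it is consistent across vertices. Consistency holds because the map $(\lambda,\lambda')\mapsto\mu$ is defined edgewise, independently of $v$. This is also why the result can be stated with $\cong$, or even $\equiv_{\mathrm{perm}}$, rather than mere mutual restriction.
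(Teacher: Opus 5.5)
Your proposal is correct and follows essentially the same route as the paper: the edgewise division-with-remainder bijection $\mu_{i,j}=\lambda_{i,j}Q'_{i,j}+\lambda'_{i,j}$ induces independent coordinate permutations on each vertex leg, and monotonicity comes from keeping only the first $Q_{i,j}$ labels on each edge at both endpoints. The graph-tensor cases then follow, as in the paper, by specializing to constant edge dimensions and iterating. Your factor-by-factor reduction via compatibility of restriction with $\Kprod$ is only a cosmetic variant of the paper's direct projection on the whole tensor.
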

	\begin{proof}
		For each pair $(i,j)$, associate labels $a\in[Q_{i,j}]_0$ and $b\in[Q'_{i,j}]_0$ with $Q'_{i,j}a+b\in[Q_{i,j}Q'_{i,j}]_0$. This is a bijection by division with remainder. Pairing all incident labels edge by edge gives a permutation of the coordinates on each vertex leg, and these independent permutations send each supported term on the left of~\eqref{eq:weighted-pairing} to its corresponding term on the right. For monotonicity, retain only the first $Q_{i,j}$ labels on the pair $(i,j)$, making the same projection at both endpoints. The remaining assertions follow by taking constant dimensions on the edges and iterating the pairing identity.
	\end{proof}
	
	For the triangle, a choice of coordinates gives
	\[
	\GT_q(K_3)\cong\sum_{0\le a,b,c\le q-1}
	e_{aq+b}^{(q^2)}\Cprod e_{bq+c}^{(q^2)}\Cprod e_{cq+a}^{(q^2)}.
	\]
	Its contraction with three $q\times q$ matrices $A,B,C$ is $\sum_{0\le a,b,c\le q-1}A_{a,b}B_{b,c}C_{c,a}=\operatorname{tr}(ABC)$. Hence it is the matrix multiplication tensor. Lemma~\ref{lem:graph-multiplicative} gives $\GT_2(K_3)^{\Kprod N}\cong\GT_{2^N}(K_3)$. The classical rank characterization of the matrix multiplication exponent is valid over every field~\cite[Proposition~15.1]{bcs1997}; applying it along powers of two, with projection and padding between successive powers, gives $\omega_{\F}=\log_2\aR_{\F}(\GT_2(K_3))$.
	
	The edge dimension is a normalization parameter for both asymptotic quantities. The next observation justifies stating unweighted bounds with edge dimension two.
	\begin{corollary}[independence of the edge-dimension base]
		\label{cor:base-independence}
		For every field $\F$, every finite simple graph $G$ with at least two vertices, and every integer $q\ge2$,
		\[
		\begin{aligned}
			\log_q\aR_{\F}(\GT_q(G))&=\log_2\aR_{\F}(\GT_2(G)),\\
			\log_q\aQ_{\F}(\GT_q(G))&=\log_2\aQ_{\F}(\GT_2(G)).
		\end{aligned}
		\]
	\end{corollary}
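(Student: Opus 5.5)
The plan is to sandwich powers of $q$ between powers of $2$ and use the pairing identities of Lemma~\ref{lem:graph-multiplicative} together with monotonicity of $\aR_{\F}$ and $\aQ_{\F}$ under restriction. The first step is to record that both parameters are multiplicative along powers. For every positive integer $N$, Lemma~\ref{lem:graph-multiplicative} gives $\GT_q(G)^{\Kprod N}\cong\GT_{q^N}(G)$, and since $\aR_{\F}(T^{\Kprod N})=\aR_{\F}(T)^N$ by the limit definition (a subsequence of the defining limit), we get $\aR_{\F}(\GT_{q^N}(G))=\aR_{\F}(\GT_q(G))^N$. The same holds for $\aQ_{\F}$. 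Isomorphism preserves rank and subrank of all Kronecker powers, so both asymptotic parameters are invariant under $\cong$.

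The second step is the sandwich. Fix $N$ and put $m_N=\lfloor N\log_2 q\rfloor$ and $M_N=\lceil N\log_2 q\rceil$, so that $2^{m_N}\le q^N\le 2^{M_N}$. The monotonicity part of Lemma~\ref{lem:graph-multiplicative}, with constant dimension matrices on the edges of $G$, gives
\[
\GT_{2^{m_N}}(G)\preceq\GT_{q^N}(G)\preceq\GT_{2^{M_N}}(G).
\]
Since restriction passes to Kronecker powers (Strassen preorder property (2)), $\aR_{\F}$ and $\aQ_{\F}$ are monotone under $\preceq$. Combined with the power identities, writing $r_p=\aR_{\F}(\GT_p(G))$, this yields
\[
r_2^{m_N}\le r_q^{N}\le r_2^{M_N},
\]
and the analogous chain for $\aQ_{\F}$.

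The last step takes base-two logarithms, divides by $N\log_2 q$, and lets $N\to\infty$. Since $m_N/(N\log_2 q)\to1$ and $M_N/(N\log_2 q)\to1$, we obtain $\log_q r_q=\log_2 r_2$, and likewise for subrank. The logarithms are finite and nonnegative because the tensors are nonzero, so $1\le\aQ_{\F}\le\aR_{\F}<\infty$ as shown after Definition~\ref{def:asymptotic-rank}.

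The only point needing care is monotonicity of the asymptotic parameters under restriction together with the power identity $\aR_{\F}(T^{\Kprod N})=\aR_{\F}(T)^N$. Both follow directly from the Fekete limits, or alternatively from the spectral characterization in Theorem~\ref{thm:spectral-duality} for rank. For subrank I would argue directly from the definition, since $\Q_{\F}$ is monotone under restriction and $\Q_{\F}((T^{\Kprod N})^{\Kprod L})=\Q_{\F}(T^{\Kprod NL})$. I do not expect a genuine obstacle here.
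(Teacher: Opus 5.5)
Your proposal is correct and matches the paper's proof: the paper sandwiches $\GT_q(G)^{\Kprod N}$ between $\GT_2(G)^{\Kprod d_N}$ and $\GT_2(G)^{\Kprod(d_N+1)}$ with $d_N=\lfloor N\log_2 q\rfloor$, using pairing and projection from Lemma~\ref{lem:graph-multiplicative}. It then applies monotonicity under restriction and the power identity $f(T^{\Kprod m})=f(T)^m$ before taking logarithms and letting $N\to\infty$, exactly as you do.
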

	\begin{proof}
		Let $d_N=\lfloor N\log_2q\rfloor$. Pairing and projection of edge labels from Lemma~\ref{lem:graph-multiplicative} give
		\[
		\GT_2(G)^{\Kprod d_N}\preceq\GT_q(G)^{\Kprod N}
		\preceq\GT_2(G)^{\Kprod(d_N+1)}.
		\]
		Both asymptotic rank and asymptotic subrank are monotone under restriction and satisfy $f(T^{\Kprod m})=f(T)^m$ for every nonnegative integer $m$. The latter identity follows by taking the corresponding subsequence in their defining limits. Applying either parameter $f$, taking base-two logarithms, dividing by $N$, and using $d_N/N\to\log_2q$ proves
		$\log_2f(\GT_q(G))=(\log_2q)\log_2f(\GT_2(G))$. All graph tensors are nonzero, so the logarithms are defined.
	\end{proof}
	
	Each spectral point assigns a real weight to every pair of tensor legs. These weights provide the linear coordinates for the weighted graph exponent.
	
	\begin{definition}[spectral edge weights]
		\label{def:spectral-edge-weights}
		Let $n\geq2$ be an integer and let $\phi\in\Phi_{n,\F}$. Define its symmetric zero-diagonal matrix of \emph{spectral edge weights} $P^{(\phi)}$ by
		\[
		P_{i,i}^{(\phi)}=0,\qquad P_{i,j}^{(\phi)}=P_{j,i}^{(\phi)}=\log_2\phi(E^{(2)}_{i,j})\quad(1\le i<j\le n).
		\]
		Here $E^{(2)}_{i,j}$ is the $n$-th order edge tensor from Definition~\ref{def:graph-tensor}: only legs $i$ and $j$ have dimension greater than one. We call $P_{i,j}^{(\phi)}$ its \emph{spectral edge weight under $\phi$}.
	\end{definition}
	
	The next lemma extends the single-edge calculation in~\cite[Section~2]{openai2026}. The single-edge power law determines the value of a spectral point on every weighted graph tensor, because the edge tensors multiply independently.
	
	\begin{lemma}[spectral values of weighted graph tensors]
		\label{lem:edge-weights}
		Let $n\ge2$, let $\phi\in\Phi_{n,\F}$, and let $P^{(\phi)}$ be its spectral edge-weight matrix. For $1\le i<j\le n$ and every positive integer $q$,
		\[
		0\le P_{i,j}^{(\phi)}\le1,
		\qquad \phi(E^{(q)}_{i,j})=q^{P_{i,j}^{(\phi)}}.
		\]
		For every dimension matrix $Q$ with logarithmic edge-weight matrix $W$,
		\[
		\phi(\WT(Q))
		=\prod_{1\le i<j\le n}Q_{i,j}^{P_{i,j}^{(\phi)}}
		=2^{\sum_{1\le i<j\le n}W_{i,j}P_{i,j}^{(\phi)}}.
		\]
		Consequently, for every simple graph $G$ on $[n]$ and every positive integer $q$,
		\[
		\phi(\GT_q(G))=q^{\sum_{(i,j)\in E(G)}P_{i,j}^{(\phi)}}.
		\]
		Moreover, $\sum_{j=2}^{n}P_{1,j}^{(\phi)}\ge1$.
	\end{lemma}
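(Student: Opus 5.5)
The plan is to derive all claims from the power law $\phi(E^{(q)}_{i,j})=q^{P_{i,j}^{(\phi)}}$. This law follows from the monotonicity and multiplicativity of $\phi$ together with Lemma~\ref{lem:graph-multiplicative}.

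\textbf{The power law.} By the pairing identity of Lemma~\ref{lem:graph-multiplicative}, $(E^{(q)}_{i,j})^{\Kprod N}\cong E^{(q^N)}_{i,j}$ and $E^{(p)}_{i,j}\Kprod E^{(q)}_{i,j}\cong E^{(pq)}_{i,j}$. Restriction-monotonicity gives $E^{(p)}_{i,j}\preceq E^{(q)}_{i,j}$ whenever $p\le q$. Hence $g(q)=\log_2\phi(E^{(q)}_{i,j})$ is nondecreasing in $q$ and additive under multiplication of the argument. It is finite because $E^{(q)}_{i,j}$ is nonzero, so $1\le\phi\le\R_{\F}$. To identify $g$, set $d_N=\lfloor N\log_2 q\rfloor$. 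Then $E^{(2)\Kprod d_N}\preceq E^{(q)\Kprod N}\preceq E^{(2)\Kprod(d_N+1)}$, exactly as in Corollary~\ref{cor:base-independence}, so
\[
d_N\,g(2)\le N\,g(q)\le(d_N+1)\,g(2).
\]
Dividing by $N$ and letting $N\to\infty$ yields $g(q)=g(2)\log_2 q$, that is, $\phi(E^{(q)}_{i,j})=q^{P^{(\phi)}_{i,j}}$. The case $q=1$ is consistent, since $E^{(1)}_{i,j}=\Id_n(1)$ has value $1$.

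\textbf{Range of $P^{(\phi)}_{i,j}$.} From $1\le\phi(E^{(2)}_{i,j})\le\R_{\F}(E^{(2)}_{i,j})=2$ we get $0\le P^{(\phi)}_{i,j}\le1$. The rank is $2$ because the edge tensor is a $2\times2$ identity matrix padded with one-dimensional legs.

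\textbf{Weighted graph tensors.} Since $\WT(Q)$ is the Kronecker product of its edge tensors, multiplicativity gives
\[
\phi(\WT(Q))=\prod_{1\le i<j\le n}Q_{i,j}^{P^{(\phi)}_{i,j}}=2^{\sum_{1\le i<j\le n}W_{i,j}P^{(\phi)}_{i,j}}.
\]
The graph case is the specialization $Q_{i,j}\in\{1,q\}$.

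\textbf{The bound $\sum_{j\ge2}P_{1,j}^{(\phi)}\ge1$.} This is the only step that is not formal. Consider the star tensor $T=\bigKprod_{j=2}^{n}E^{(2)}_{1,j}$. Its leg $1$ has dimension $2^{n-1}$, and each leg $j$ for $j\ge2$ has dimension $2$. I would show that $\Id_n(2)\preceq T$ when $n\ge2$. On each leg $j\ge2$, apply the identity map. On leg $1$, whose coordinate records the tuple $(\lambda_2,\ldots,\lambda_n)$, apply the functional-valued map sending the basis vector of the tuple $(a,\ldots,a)$ to $e_a^{(2)}$ and every other tuple to $0$. The image is then $\sum_a e_a\Cprod\cdots\Cprod e_a=\Id_n(2)$. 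Monotonicity and normalization now give
\[
2\le\phi(T)=2^{\sum_{j\ge2}P_{1,j}^{(\phi)}}.
\]
The main care needed is verifying this restriction in the coordinates of Definition~\ref{def:graph-tensor}. The equality $\phi(T)=2^{\sum_{j\ge2}P_{1,j}^{(\phi)}}$ itself is just the weighted formula with $Q$ supported on the star.
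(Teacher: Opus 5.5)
Your proposal is correct and follows the paper's proof essentially step for step. It uses the same $\lfloor N\log_2 q\rfloor$ sandwich from monotonicity and multiplicativity for the power law, the bound $1\le\phi\le\R_{\F}$ for the range of $P^{(\phi)}_{i,j}$, multiplicativity over the edge factors of $\WT(Q)$, and the same constant-tuple map on leg $1$ that restricts the star tensor to $\Id_n(2)$.
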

	\begin{proof}
		Fix integers $1\leq i<j\leq n$, and set $f(q)=\phi(E^{(q)}_{i,j})$ for every positive integer $q$. If $q\leq r$, projecting the two nontrivial legs onto their first $q$ coordinates restricts $E^{(r)}_{i,j}$ to $E^{(q)}_{i,j}$. Thus $f$ is nondecreasing. For all positive integers $q,r$, the local pairing of the edge labels gives $E^{(q)}_{i,j}\Kprod E^{(r)}_{i,j}\cong E^{(qr)}_{i,j}$, so $f(qr)=f(q)f(r)$. The tensor $E^{(q)}_{i,j}$ is nonzero and has rank $q$: its defining sum has $q$ terms, and its flattening across leg $i$ has rank $q$. Hence $1\leq f(q)\leq q$, which in particular proves $0\leq P_{i,j}^{(\phi)}=\log_2f(2)\leq1$.
		
		Let $q\geq2$ be an integer. For every positive integer $N$, put $\ell_N=\lfloor N\log_2q\rfloor$. Since $2^{\ell_N}\leq q^N<2^{\ell_N+1}$, monotonicity and multiplicativity give
		\[
		f(2)^{\ell_N}\leq f(q)^N\leq f(2)^{\ell_N+1}.
		\]
		Taking logarithms and dividing by $N$ yields
		\[
		\frac{\ell_N}{N}P_{i,j}^{(\phi)}
		\leq\log_2f(q)
		\leq\frac{\ell_N+1}{N}P_{i,j}^{(\phi)}.
		\]
		Both outer expressions tend to $P_{i,j}^{(\phi)}\log_2q$. Therefore $f(q)=q^{P_{i,j}^{(\phi)}}$. For $q=1$, the formula follows from $E^{(1)}_{i,j}=\Id_n(1)$.
		
		Consider the star tensor $T=\bigKprod_{j=2}^n E^{(2)}_{1,j}$. At its first leg, a basis vector is indexed by a tuple $(b_2,\ldots,b_n)\in\{0,1\}^{n-1}$. Define $L_1$ on this basis by sending the vector to $e_b^{(2)}$ if $b_2=\cdots=b_n=b$, and to zero otherwise, and extend linearly. Let $L_j$ be the identity for $2\le j\le n$. Only the two constant edge labelings have nonzero images, each with coefficient one. Hence
		\[
		(L_1,\ldots,L_n)\cdot T
		=\sum_{b=0}^{1}\bigCprod_{j=1}^{n}e_b^{(2)}=\Id_n(2).
		\]
		Restriction monotonicity and multiplicativity now imply
		\[
		2=\phi(\Id_n(2))\leq\phi(T)
		=\prod_{j=2}^n\phi(E^{(2)}_{1,j})
		=2^{\sum_{j=2}^n P_{1,j}^{(\phi)}}.
		\]
		Taking logarithms proves the star inequality. Finally, Definition~\ref{def:weighted-tensor} gives
		\[
		\WT(Q)=\bigKprod_{1\le i<j\le n}E^{(Q_{i,j})}_{i,j}.
		\]
		Multiplicativity and the single-edge formula therefore give the asserted product for $\phi(\WT(Q))$; writing $Q_{i,j}=2^{W_{i,j}}$ gives its logarithmic form. Setting $Q_{i,j}=q$ on the edges of $G$ and $Q_{i,j}=1$ elsewhere proves the graph specialization.
	\end{proof}

	\begin{definition}[weighted graph and edge exponents]
		\label{def:weighted-exponent}
		\label{def:edge-exponent}
		For a nonnegative symmetric real matrix $W$ of order $n\ge2$ with zero diagonal, its \emph{support graph} has vertex set $\{1,\ldots,n\}$ and edges $(i,j)$ with $1\le i<j\le n$ and $W_{i,j}>0$. Define its \emph{weighted graph exponent} by
		\[
		\Gamma_{\F}(W)=\max_{\phi\in\Phi_{n,\F}}
		\left(\sum_{1\le i<j\le n}W_{i,j}P_{i,j}^{(\phi)}\right).
		\]
		When $\sum_{1\le i<j\le n}W_{i,j}>0$, its \emph{weighted edge exponent} is
		\[
		\tau_{\F}(W)=\frac{\Gamma_{\F}(W)}{\sum_{1\le i<j\le n}W_{i,j}}.
		\]
		For a simple graph $G$ with adjacency matrix $A_G$, put $\Gamma_{\F}(G)=\Gamma_{\F}(A_G)$ and, when $E(G)\ne\varnothing$, $\tau_{\F}(G)=\Gamma_{\F}(G)/|E(G)|$. These are its \emph{graph exponent} and \emph{edge exponent}, respectively.
	\end{definition}
	
	The maximum defining $\Gamma_{\F}$ is attained. The spectrum is a closed subset of the compact product $\prod_{T\in\Tens_n(\F)}[0,\R_{\F}(T)]$: normalization, additivity, multiplicativity, and monotonicity are closed conditions. Its projection onto the finitely many spectral edge weights is therefore compact~\cite{strassen1988,wz2026}. After recalling weighted cuts, we identify this definition with logarithmic asymptotic rank whenever the weights arise from tensor dimensions.
	
	\begin{definition}[weighted cuts]
		\label{def:cuts}
		For a symmetric nonnegative real matrix $W$ of order $n\ge2$ with zero diagonal and a nonempty proper subset $S\subset\{1,\ldots,n\}$, define
		\[
		\operatorname{cut}_W(S)=
		\sum_{\substack{1\le i<j\le n\\|\{i,j\}\cap S|=1}}W_{i,j}.
		\]
		The \emph{maximum-cut weight} and \emph{minimum-cut weight} are
		\[
		\MaxCut(W)=\max_{\varnothing\ne S\subsetneq\{1,\ldots,n\}}
		\bigl(\operatorname{cut}_W(S)\bigr),\qquad
		\MinCut(W)=\min_{\varnothing\ne S\subsetneq\{1,\ldots,n\}}
		\bigl(\operatorname{cut}_W(S)\bigr).
		\]
		For a simple graph $G=(V,E)$, write
		$\delta_G(S)=\{(i,j)\in E:|\{i,j\}\cap S|=1\}$.
		Then $\MaxCut(G)=\MaxCut(A_G)$ and $\MinCut(G)=\MinCut(A_G)$ are the largest and smallest values of $|\delta_G(S)|$ over nonempty proper $S\subset V$.
	\end{definition}
	
	\Needspace{13\baselineskip}
	\begin{lemma}[weighted spectral formula]
		\label{lem:weighted-spectral-formula}
		For every dimension matrix $Q$ with logarithmic edge-weight matrix $W$,
		\begin{equation}
			\label{eq:weighted-spectral-formula}
			\Gamma_{\F}(W)=\log_2\aR_{\F}(\WT(Q)).
		\end{equation}
		In particular, $\Gamma_{\F}(G)=\log_2\aR_{\F}(\GT_2(G))$. On nonnegative weight matrices, $\Gamma_{\F}$ is monotone, convex, positively homogeneous, and subadditive. Moreover,
		\begin{equation}
			\label{eq:weighted-lipschitz}
			|\Gamma_{\F}(W)-\Gamma_{\F}(W')|
			\le\sum_{1\le i<j\le n}|W_{i,j}-W'_{i,j}|.
		\end{equation}
	\end{lemma}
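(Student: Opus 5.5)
The identity \eqref{eq:weighted-spectral-formula} follows by combining Strassen's spectral duality with the single-edge power law. By Theorem~\ref{thm:spectral-duality} applied to the $n$-th order tensor $\WT(Q)$, which is nonzero, we have $\aR_{\F}(\WT(Q))=\max_{\phi\in\Phi_{n,\F}}\phi(\WT(Q))$. Lemma~\ref{lem:edge-weights} evaluates each term as $\phi(\WT(Q))=2^{\sum_{i<j}W_{i,j}P_{i,j}^{(\phi)}}$. Since $\log_2$ is increasing, taking logarithms commutes with the maximum and gives exactly the definition of $\Gamma_{\F}(W)$. The specialization to $\GT_2(G)$ uses $Q_{i,j}=2$ on edges, so $W=A_G$.

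For the structural properties, I would regard $\Gamma_{\F}$ as the support function of the set $\mathcal P=\{P^{(\phi)}:\phi\in\Phi_{n,\F}\}$, viewed as vectors indexed by pairs $i<j$. This set is compact by the remark after Definition~\ref{def:weighted-exponent}, and Lemma~\ref{lem:edge-weights} places it in $[0,1]^{\binom n2}$. A pointwise maximum of linear functionals $W\mapsto\sum W_{i,j}P_{i,j}$ is convex and positively homogeneous. Subadditivity then follows from these two: $\Gamma(W+W')=2\Gamma(\tfrac12W+\tfrac12W')\le\Gamma(W)+\Gamma(W')$. Monotonicity on nonnegative matrices holds because every $P_{i,j}^{(\phi)}\ge0$, so each linear functional is nondecreasing in $W$, and hence so is their maximum.

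For the Lipschitz estimate \eqref{eq:weighted-lipschitz}, let $\phi$ attain the maximum for $W$. Then
\[
\Gamma_{\F}(W)-\Gamma_{\F}(W')\le\sum_{i<j}(W_{i,j}-W'_{i,j})P_{i,j}^{(\phi)}\le\sum_{i<j}|W_{i,j}-W'_{i,j}|,
\]
using $0\le P_{i,j}^{(\phi)}\le1$. Exchanging the roles of $W$ and $W'$ gives the absolute value.

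The only point needing care is that the maximum in Theorem~\ref{thm:spectral-duality} is attained and matches the attained maximum in Definition~\ref{def:weighted-exponent}. Both are attained over the same spectrum, and $\log_2$ is a monotone bijection, so the two maxima agree. I expect no substantive obstacle; everything reduces to earlier results.
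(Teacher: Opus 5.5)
Your proposal is correct and follows the paper's proof. Strassen's spectral duality (Theorem~\ref{thm:spectral-duality}) combined with the weighted formula of Lemma~\ref{lem:edge-weights} gives the identity, and the analytic properties, including the Lipschitz bound via a maximizing $\phi$, follow from writing $\Gamma_{\F}$ as a maximum of linear functionals with coefficients in $[0,1]$.
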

	
	\begin{proof}
		The weighted identity in Lemma~\ref{lem:edge-weights} says that $\log_2\phi(\WT(Q))$ is the linear functional $\sum_{1\le i<j\le n}W_{i,j}P_{i,j}^{(\phi)}$. Maximizing over $\phi\in\Phi_{n,\F}$ and applying Theorem~\ref{thm:spectral-duality} identifies this maximum with both $\Gamma_{\F}(W)$ and $\log_2\aR_{\F}(\WT(Q))$, proving~\eqref{eq:weighted-spectral-formula}. Taking $W=A_G$ gives the graph identity. The analytic properties follow because $\Gamma_{\F}$ is the maximum of linear functions with nonnegative coefficients. Every coefficient belongs to $[0,1]$ by Lemma~\ref{lem:edge-weights}, so each such function changes by at most the right-hand side of~\eqref{eq:weighted-lipschitz}. Taking the maximum proves that inequality in both directions.
	\end{proof}
	
	\begin{remark}[real edge weights]
		For an arbitrary nonnegative weight matrix $W$, define a dimension matrix $Q^{(s)}$ by $Q^{(s)}_{i,j}=2^{\lfloor sW_{i,j}\rfloor}$ for $1\le i,j\le n$ and every positive integer $s$. Lemma~\ref{lem:weighted-spectral-formula}, positive homogeneity, and~\eqref{eq:weighted-lipschitz} give
		\[
		\Gamma_{\F}(W)
		=\lim_{s\to\infty}\frac{1}{s}\log_2\aR_{\F}(\WT(Q^{(s)})).
		\]
		Indeed, each entry of the logarithmic edge-weight matrix of $Q^{(s)}$, divided by $s$, differs from the corresponding entry of $W$ by less than $1/s$. Thus $\Gamma_{\F}$ extends normalized logarithmic asymptotic rank continuously to all nonnegative real edge weights. The normalization $\tau_{\F}(W)$ measures this exponent per unit of total edge weight.
	\end{remark}
	
	The two asymptotic tensor parameters have complementary cut bounds. For every dimension matrix $Q$ with logarithmic edge-weight matrix $W$,
	\[
	\MaxCut(W)\le\log_2\aR_{\F}(\WT(Q)),\qquad
	\log_2\aQ_{\F}(\WT(Q))=\MinCut(W).
	\]
	Proposition~\ref{prop:cut} proves the first inequality by flattening across a maximum cut. Vrana and Christandl~\cite[Theorem~1 and Section~3]{vc2017} proved the minimum-cut equality in the setting of entanglement conversion, including nonuniform edge dimensions. Appendix~\ref{sec:subrank} gives an explicit tensor proof of their weighted formula over every field, including finite fields. For ordinary graphs these statements specialize to the maximum-cut lower bound and minimum-cut equality with binary edge dimensions. In particular,
	\[
	\MaxCut(K_n)=\left\lfloor\frac{n^2}{4}\right\rfloor,
	\qquad \MinCut(K_n)=n-1,
	\]
	since a cut with $s$ vertices on one side has $s(n-s)$ edges.

	\subsection{Linear conic optimization}
	\label{subsec:conic-preliminaries}
	The weighted spectral formula expresses the tensor exponent as the maximum of a linear function of spectral edge weights. Section~\ref{sec:weighted-cp} will bound this maximum by linear optimization over cones of real symmetric matrices. We recall the conic framework and its duality theorem here, then specify the cones and covering problems needed for that argument.
	
	\Needspace{14\baselineskip}
	\begin{definition}[linear conic programs and dual cones]
		\label{def:conic-program}
		Let $V$ be a finite-dimensional real inner-product space and let $\mathcal K\subseteq V$ be a closed convex cone. Its \emph{dual cone} is
		\[
		\mathcal K^*=\{y\in V\mid \langle y,x\rangle\ge0\text{ for every }x\in\mathcal K\}.
		\]
		We write $\operatorname{int}(\mathcal K)$ for the interior of $\mathcal K$ in $V$: a point $x$ belongs to this interior if some open ball about $x$, for the norm induced by the inner product, is contained in $\mathcal K$.
		Given a linear map $\mathcal A:V\to\mathbb R^r$, vectors $b\in\mathbb R^r$ and $c\in V$, and the adjoint map $\mathcal A^*$, the associated primal and dual linear conic programs are
		\[
		\begin{aligned}
			\text{minimize }&\langle c,x\rangle
			&\quad\text{subject to }&\mathcal A(x)=b,\quad x\in\mathcal K,\\
			\text{maximize }&\langle b,y\rangle
			&\quad\text{subject to }&c-\mathcal A^*(y)\in\mathcal K^*.
		\end{aligned}
		\]
		Linear inequalities can be included by adjoining nonnegative slack variables to the cone.
	\end{definition}

	For every feasible pair, weak duality follows from
	\[
	\langle c,x\rangle-\langle b,y\rangle
	=\langle c-\mathcal A^*(y),x\rangle\ge0.
	\]
	\emph{Slater's condition} requires a feasible point in the interior of the relevant cone. It is a sufficient condition for equality of the primal and dual optimal values. Strict-feasibility conditions for convex programming originate in Slater's work~\cite{slater1950}; Duffin~\cite{duffin1956} developed an early duality theory for linear programs over general cones. We use the following finite-dimensional formulation; see also~\cite[Sections~5.2 and~5.9]{boyd2004}.
	
	\begin{theorem}[conic strong duality under Slater's condition]
		\label{thm:conic-strong-duality}
		Consider the primal--dual pair of Definition~\ref{def:conic-program}, and suppose that the primal is feasible and has finite infimum. Each of the following conditions implies equality of the primal infimum and the dual supremum:
		\begin{enumerate}
			\item \emph{Primal strict feasibility:} there is an $x\in\operatorname{int}(\mathcal K)$ with $\mathcal A(x)=b$. In this case the dual supremum is attained.
			\item \emph{Dual strict feasibility:} there is a $y\in\mathbb R^r$ with $c-\mathcal A^*(y)\in\operatorname{int}(\mathcal K^*)$. In this case the primal infimum is attained.
		\end{enumerate}
		If both conditions hold, both optimal values are attained and equal.
	\end{theorem}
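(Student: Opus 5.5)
We will obtain Theorem~\ref{thm:conic-strong-duality} by the standard separating-hyperplane argument in the finite-dimensional space $\mathbb R\times\mathbb R^r$. Weak duality has already been checked above, so in each case it suffices to prove the reverse inequality together with attainment. We treat the primal strictly feasible case first. The dual strictly feasible case will follow by exchanging the roles of primal and dual.

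\textbf{Case 1.} Let $p^*$ be the finite primal infimum. Consider the convex set
\[
\mathcal C=\{(\langle c,x\rangle-t,\ \mathcal A(x)-b)\mid x\in\mathcal K,\ t>0\}\subseteq\mathbb R\times\mathbb R^r.
\]
We will show that $(p^*,0)\notin\mathcal C$, which holds because no feasible $x$ has objective value below $p^*$. The separation theorem then gives a nonzero pair $(\mu,y)$ with $\mu(\langle c,x\rangle-t)-\langle y,\mathcal A(x)-b\rangle\ge\mu p^*$ for all $x\in\mathcal K$ and $t>0$. Letting $t\to\infty$ shows that $\mu\ge0$.

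The key step is to exclude $\mu=0$, and this is where strict feasibility is used. If $\mu=0$, the inequality becomes $\langle\mathcal A^*(y),x\rangle\le\langle y,b\rangle$ on $\mathcal K$. At the interior point $x_0$ we have $\mathcal A(x_0)=b$, so the linear functional $\mathcal A^*(y)$ attains its maximum over $\mathcal K$ at an interior point. Hence it is zero, which forces $\mathcal A^*(y)=0$. Then the separating pair is degenerate; we will handle this either by restricting to the range of $\mathcal A$ (assuming $\mathcal A$ surjective after removing redundant equations) or by separating from the relative interior. With that reduction, $y=0$, which contradicts $(\mu,y)\ne0$.

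Once $\mu>0$, normalize to $\mu=1$. The resulting inequality $\langle c-\mathcal A^*(y),x\rangle+\langle y,b\rangle\ge p^*$ for all $x\in\mathcal K$ has the following consequences.
\begin{itemize}
\item Since $\mathcal K$ is a cone, scaling $x$ shows that $c-\mathcal A^*(y)\in\mathcal K^*$, so $y$ is dual feasible.
\item Setting $x=0$ gives $\langle b,y\rangle\ge p^*$.
\end{itemize}
Combined with weak duality, $y$ is dual optimal, so the dual supremum is attained and equals $p^*$.

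\textbf{Case 2.} Here we use symmetry: the dual program is itself a conic program over the cone $\mathcal K^*\times\mathbb R^r$. Its conic dual is the original primal, using $(\mathcal K^*)^*=\mathcal K$ because $\mathcal K$ is closed and convex. A Slater point for the dual lies in the interior of $\mathcal K^*\times\mathbb R^r$, so we can apply the Case~1 argument with the roles reversed. We must first check that the dual supremum is finite; this holds because weak duality bounds it by the primal value. The argument then gives equal values and attainment in the primal.

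When both Slater conditions hold, the two cases together give equal optimal values, attained on both sides. The main obstacle is the $\mu=0$ exclusion in Case~1. We expect the cleanest treatment is to note first that redundant rows of $\mathcal A$ may be removed without changing either program, which makes $\mathcal A$ surjective. Alternatively, one can cite~\cite[Section~5.9]{boyd2004} for the general statement.
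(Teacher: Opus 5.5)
Your proposal is correct after one sign repair. It takes a different route from the paper only in the sense that the paper gives no proof of this theorem: it quotes it as background from Slater~\cite{slater1950}, Duffin~\cite{duffin1956}, and \cite[Sections~5.2 and~5.9]{boyd2004}. Your separating-hyperplane argument for primal strict feasibility, followed by dualizing the dual for dual strict feasibility, is the standard proof and works as a self-contained substitute for that citation. It also shows what the citation leaves implicit. Closedness of $\mathcal K$ is used only in Case~2, through $(\mathcal K^*\times\mathbb R^r)^*=\mathcal K\times\{0\}$. Discarding redundant equations is legitimate only because primal feasibility gives $b\in\operatorname{range}(\mathcal A)$.

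The sign must be fixed. As displayed, $\mathcal C=\{(\langle c,x\rangle-t,\mathcal A(x)-b)\}$ contains $(p^*,0)$ as soon as some feasible $x$ has $\langle c,x\rangle>p^*$. With that $-t$, letting $t\to\infty$ forces $\mu\le0$, not $\mu\ge0$. Both of your justifications are really the ones for $\mathcal C=\{(\langle c,x\rangle+t,\ \mathcal A(x)-b)\mid x\in\mathcal K,\ t>0\}$, so use that set. After normalizing $\mu=1$, let $t\to0^+$. Note that $(p^*,0)$ always lies in the closure of this set, so strict separation is never available. What you use is weak separation of a point from a non-closed convex set, which is valid in finite dimensions.

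For the case $\mu=0$, commit to the reduction rather than offering two options, and state why it is harmless. If $\mathcal A^*(y)=\mathcal A^*(y')$, then $y-y'\perp\operatorname{range}(\mathcal A)\ni b$, so $\langle b,y\rangle=\langle b,y'\rangle$. Dual feasibility, value, and attainment therefore depend only on $\mathcal A^*(y)$. Replacing $\mathbb R^r$ by $\operatorname{range}(\mathcal A)$ changes neither program, and afterwards $\mathcal A^*(y)=0$ forces $y=0$. Your relative-interior alternative also works: when $\mathcal A^*(y)=0$, both $\mathcal C$ and the point lie in the separating hyperplane, contradicting properness.

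In Case~2, the reformulated constraint map $(s,y)\mapsto s+\mathcal A^*(y)$ is automatically onto $V$, so no reduction is needed there. Your finiteness check, that the dual is feasible and bounded above by $p^*$ by weak duality, is exactly the hypothesis Case~1 requires.
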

	
	For a formulation with linear inequalities, strict feasibility is understood after adjoining their slack variables: it requires interior membership in the original cone and strict satisfaction of the inequalities. Equality constraints remain exact.
	
	We use $\mathbb S^n$ for the real symmetric $n\times n$ matrices, with inner product $\langle X,Z\rangle=\operatorname{tr}(XZ)$, and $I_n$ for the identity matrix. We write $\mathbf e_1,\ldots,\mathbf e_n$ for the standard coordinate vectors of $\mathbb R^n$. A matrix $X$ is \emph{positive semidefinite}, written $X\succeq0$, if $a^{\mathsf T}Xa\ge0$ for every $a\in\mathbb R^n$. More generally, $X\succeq Y$ means that $X-Y$ is positive semidefinite. This order on symmetric matrices is distinct from the restriction preorder $T\succeq S$ on tensors: the type of the arguments specifies which relation is intended.
	
	\begin{definition}[semidefinite, completely positive, and copositive cones]
		\label{def:matrix-cones}
		Define
		\[
		\begin{aligned}
			\mathbb S_+^n
			&=\{Z\in\mathbb S^n\mid Z\succeq0\},\\
			\CP_n
			&=\left\{\sum_{\nu=1}^t b^{(\nu)}(b^{(\nu)})^{\mathsf T}
			\;\middle|\;t\in\mathbb N,\ b^{(\nu)}\in\mathbb R_{\ge0}^n\ (1\le\nu\le t)\right\},\\
			\COP_n
			&=\{X\in\mathbb S^n\mid a^{\mathsf T}Xa\ge0
			\text{ for every }a\in\mathbb R_{\ge0}^n\}.
		\end{aligned}
		\]
		These are the positive semidefinite, completely positive, and copositive cones. A representation in the definition of $\CP_n$ is a \emph{completely positive factorization}.
	\end{definition}
	
	All three cones are closed and convex~\cite{berman2003,dur2010}, and
	\[
	(\mathbb S_+^n)^*=\mathbb S_+^n,
	\qquad \CP_n^*=\COP_n,
	\qquad \COP_n^*=\CP_n.
	\]
	For example, $\langle X,bb^{\mathsf T}\rangle=b^{\mathsf T}Xb$ proves the middle identity, and the bipolar theorem gives the last. Optimization over $\mathbb S_+^n$ is \emph{semidefinite programming} (SDP); optimization over $\CP_n$ is \emph{completely positive programming} (CP). Every completely positive matrix is positive semidefinite and entrywise nonnegative. Conversely, every entrywise nonnegative matrix in $\mathbb S_+^n$ is completely positive when $n\le4$, whereas this converse fails for every $n\ge5$~\cite{maxfieldminc1962,diananda1962}. Thus the low-dimensional equality concerns positive semidefinite matrices with nonnegative entries, rather than the whole positive semidefinite cone.
	
	The particular conic programs in this paper cover edge weights. Their general form allows a symmetric cost matrix $J$ and additional linear constraints, which makes their common structure explicit.
	
	\begin{definition}[conic matrix covers]
		\label{def:general-conic-cover}
		Let $W\in\mathbb S^n$ be nonnegative with zero diagonal, let $J\in\mathbb S^n$, let $\mathcal A:\mathbb S^n\to\mathbb R^r$ be linear, and let $b\in\mathbb R^r$. For $\mathcal K=\CP_n$ or $\mathcal K=\mathbb S_+^n$, define
		\[
		U_{\mathcal K}(W;J,\mathcal A,b)
		=\frac14\inf_{\substack{Z\in\mathcal K,\ \mathcal A(Z)=b\\
				Z_{i,j}\ge W_{i,j}\ (1\le i<j\le n)}}\bigl(\langle J,Z\rangle\bigr).
		\]
		The corresponding dual is
		\[
		\sup_{\substack{P=P^{\mathsf T},\ P_{i,i}=0,\ P_{i,j}\ge0\\
				y\in\mathbb R^r,\ J-2P-\mathcal A^*(y)\in\mathcal K^*}}
		\left(\sum_{1\le i<j\le n}W_{i,j}P_{i,j}
		+\frac14\langle b,y\rangle\right).
		\]
		Here the entrywise conditions on $P$ range over $1\le i,j\le n$. When there are no additional linear constraints, the map $\mathcal A$, the vector $b$, and the dual variable $y$ are omitted.
	\end{definition}
	
	Indeed, using multipliers $P_{i,j}\ge0$ for the edge constraints and $y/4$ for $\mathcal A(Z)=b$, the Lagrangian is
	\[
	\sum_{1\le i<j\le n}W_{i,j}P_{i,j}
	+\frac14\langle b,y\rangle
	+\frac14\langle J-2P-\mathcal A^*(y),Z\rangle.
	\]
	Its infimum over $Z\in\mathcal K$ gives the stated dual. Whenever the hypotheses of Theorem~\ref{thm:conic-strong-duality} hold, the two optimal values agree. Section~\ref{sec:weighted-cp} specializes these covers to the all-one cost matrix and establishes the relevant attainment conditions directly.

	\Needspace{8\baselineskip}
	\section{Separating Several Shared Tensor Legs}
	\label{sec:separation}
	
	We begin the complete-graph argument by extending shared-leg separation to higher order. The tensors under consideration have blocks with disjoint coordinates on two legs and may share all their remaining legs. These matched blocks form a partial direct sum in the sense of Definition~\ref{def:partial-direct-sum}. We show that a linear number of full direct-sum copies suffices to separate all the shared legs at once, with each output block retaining an edge tensor. We then apply this construction to blocks of a fixed type to obtain the entropy inequality used in the polynomial argument. The construction extends the single-shared-leg separation of OpenAI~\cite[Proposition~3.1 and Corollary~3.2]{openai2026}, using the polynomial-resource formulation of~\cite{abz2026} to work over an arbitrary field.

	\subsection{An interpolation resource}
	\label{subsec:interpolation-resource}
	We begin with a polynomial tensor whose asymptotic cost is linear in the sum of its input lengths. This is the higher-order analogue of the resource used in~\cite{openai2026,abz2026}. Its support supplies the linear relation needed for separation over every characteristic.
	
	\Needspace{7\baselineskip}
	\begin{definition}[polynomial multiplication tensor]
		\label{def:polynomial}
		Let $n\geq2$ be an integer, and let $a_1,\ldots,a_{n-1}$ be positive integers. Put $D=1+\sum_{j=1}^{n-1}(a_j-1)$. The \emph{polynomial multiplication tensor} is the $n$-th order tensor
		\[
		\Poly_n(a_1,\ldots,a_{n-1})
		=\sum_{i_1=0}^{a_1-1}\cdots\sum_{i_{n-1}=0}^{a_{n-1}-1}
		e_{i_1}^{(a_1)}\Cprod\cdots\Cprod
		e_{i_{n-1}}^{(a_{n-1})}\Cprod
		e_{i_1+\cdots+i_{n-1}}^{(D)}.
		\]
		Its first $n-1$ legs are the input legs, and its last leg is the output leg. Thus its dimension vector is $(a_1,\ldots,a_{n-1},D)$.
	\end{definition}
	
	For $1\leq j\leq n-1$, write an input polynomial as $f_j(x)=\sum_{i=0}^{a_j-1}u_{j,i}x^i$. Contracting the first $n-1$ tensor legs against these coefficient vectors gives the coefficient vector of $\prod_{j=1}^{n-1}f_j(x)$.
	
	The next lemma is the many-input version of the interpolation and output-flattening argument in~\cite[Equation~(4.2)]{openai2026}.
	
	\begin{lemma}[asymptotic rank of polynomial multiplication]
		\label{lem:polynomial-rank}
		For every field $\F$, every integer $n\geq2$, and all positive integers $a_1,\ldots,a_{n-1}$,
		\[
		\aR_{\F}\bigl(\Poly_n(a_1,\ldots,a_{n-1})\bigr)
		=1+\sum_{j=1}^{n-1}(a_j-1).
		\]
		Moreover, $\Poly_n(a_1,\ldots,a_{n-1})\unlhd\Id_n(D)$, where $D=1+\sum_{j=1}^{n-1}(a_j-1)$. If $\F$ contains at least $D$ distinct elements, then the ordinary rank also equals $D$.
	\end{lemma}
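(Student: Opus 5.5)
The argument combines an interpolation upper bound with a flattening lower bound. For the upper bound I would first show that the scalar extension of $\Poly_n(a_1,\ldots,a_{n-1})$ to $\F(z)$ has rank at most $D$. Choose $D$ distinct evaluation points $\alpha_0,\ldots,\alpha_{D-1}\in\F(z)$, for instance $\alpha_t=z^t$; they are distinct because $z$ is transcendental.

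The product polynomial $\prod_j f_j$ has degree at most $D-1$. For each $t$, the evaluation $\prod_j f_j(\alpha_t)=\prod_j\bigl(\sum_i u_{j,i}\alpha_t^i\bigr)$ is a product of linear forms, one in each input leg. Lagrange interpolation (inverting the $D\times D$ Vandermonde matrix $V$, whose determinant $\prod_{s<t}(\alpha_t-\alpha_s)$ is nonzero) recovers the coefficients of the product. In tensor language this reads
\[
\Poly_n(a_1,\ldots,a_{n-1})=\sum_{t=0}^{D-1}\Bigl(\bigCprod_{j=1}^{n-1}\sum_{i=0}^{a_j-1}\alpha_t^i e_i^{(a_j)}\Bigr)\Cprod w_t,
\]
where $w_t$ is the $t$-th column of $V^{-1}$. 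To verify this, contract both sides and use the fact that the coefficient of $e^{(D)}_s$ on the right is $\sum_t (V^{-1})_{s,t}\prod_j f_j(\alpha_t)$.

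Lemma~\ref{lem:rational-specialization} then gives $\Poly_n\unlhd\Id_n(D)$ over $\F$. Lemma~\ref{lem:degeneration-asymptotic} gives $\Poly_n\lesssim\Id_n(D)$, and spectral monotonicity together with Theorem~\ref{thm:spectral-duality} (or directly $\R(\Poly_n^{\Kprod N})\le h_N D^N$) gives $\aR_{\F}\le D$. If $\F$ itself has at least $D$ distinct elements, the same decomposition works with points in $\F$, so $\R_{\F}\le D$.

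For the lower bound I would flatten the output leg against all input legs. The flattening of $\Poly_n^{\Kprod N}$ is the Kronecker power of the flattening of $\Poly_n$, and matrix rank is multiplicative under Kronecker products. So it suffices to show that the single flattening has rank $D$. Every output index $s\in[D]_0$ is attained by some input tuple with $\sum_j i_j=s$, because each $i_j$ can range over $[a_j]_0$ and the maximum total degree is $D-1$. Each input tuple contributes exactly one output index, so the matrix has a nonzero entry in every row. Its rows also have disjoint supports, since different tuples index different columns. Hence the rows are linearly independent and the rank is $D$.

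It follows that $\R_{\F}(\Poly_n^{\Kprod N})\ge D^N$, which gives $\aR_{\F}\ge D$, and the ordinary rank is at least $D$. The flattening-rank facts should be available from Lemma~\ref{lem:flattening-properties}.

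I expect the main obstacle to be the bookkeeping for the field extension. One needs to state the interpolation identity in coordinates with the Vandermonde inverse and check that it holds over $\F(z)$ rather than over $\F$. After that, the specialization lemma does the work, and the rest of the argument is routine.
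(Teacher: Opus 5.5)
Your proposal is correct and follows the paper's proof essentially step for step. The upper bound interpolates at $D$ distinct points of $\F(z)$ and then applies Lemma~\ref{lem:rational-specialization} and Lemma~\ref{lem:degeneration-asymptotic}; the columns of your $V^{-1}$ are exactly the paper's Lagrange coefficient vectors $c_r$. The lower bound uses the inputs-versus-output flattening with Lemma~\ref{lem:flattening-properties}, and your flattening is simply the transpose of the paper's.

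One wording fix in the lower bound: in your orientation, nonzero rows come from every $s\in[D]_0$ being attained. Disjoint row supports come from each input tuple having a unique sum, so each column has a single nonzero entry. You state both facts, but your ``so'' and ``since'' point to the wrong justifications.
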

	\begin{proof}
		Put $D=1+\sum_{j=1}^{n-1}(a_j-1)$, and choose $D$ distinct elements $\xi_1,\ldots,\xi_D$ of the infinite rational function field $\F(z)$. For every integer $1\leq r\leq D$, let
		\[
		L_r(x)=\prod_{\substack{1\leq s\leq D\\s\neq r}}
		\frac{x-\xi_s}{\xi_r-\xi_s},
		\]
		and let $c_r\in\F(z)^D$ be its coefficient vector, in increasing order of degree. Lagrange interpolation states that every polynomial $f$ of degree at most $D-1$ satisfies $f(x)=\sum_{r=1}^D f(\xi_r)L_r(x)$. Apply this identity to $f=\prod_{j=1}^{n-1}f_j$, whose degree is at most $D-1$. For $1\leq j\leq n-1$ and $1\leq r\leq D$, define the evaluation vector
		\[
		v_{j,r}=\sum_{i=0}^{a_j-1}\xi_r^i e_i^{(a_j)}.
		\]
		The interpolation formula is an identity in the formal input coefficients over $\F(z)$. Equating their coefficients therefore gives the tensor identity
		\[
		\Poly_n(a_1,\ldots,a_{n-1})
		=\sum_{r=1}^D v_{1,r}\Cprod\cdots\Cprod v_{n-1,r}\Cprod c_r.
		\]
		This is a sum of $D$ simple tensors over $\F(z)$. Lemma~\ref{lem:rational-specialization} gives $\Poly_n(a_1,\ldots,a_{n-1})\unlhd\Id_n(D)$ over $\F$, and Lemma~\ref{lem:degeneration-asymptotic} gives the asymptotic-rank upper bound $D$. If $\F$ itself has $D$ distinct elements, choose the interpolation points there instead; the same identity gives ordinary rank at most $D$.
		
		For the lower bound, put $S=\{1,\ldots,n-1\}$ and use the flattening from Definition~\ref{def:flattening-rank}. Its rows are input tuples $\boldsymbol i=(i_1,\ldots,i_{n-1})$, and its columns are output indices $0\le s<D$, with
		\[
		\operatorname{Flat}_S(\Poly_n(a_1,\ldots,a_{n-1}))_{\boldsymbol i,s}
		=\begin{cases}1,&i_1+\cdots+i_{n-1}=s,\\0,&\text{otherwise}.\end{cases}
		\]
		Every value $0\le s<D$ is a sum of allowed input indices, so every column is nonzero. The column supports are pairwise disjoint, since an input tuple has a unique sum. Thus the $D$ columns are linearly independent over $\F$ and the flattening rank is $D$. Lemma~\ref{lem:flattening-properties} gives ordinary rank at least $D$ and flattening rank $D^N$ for the $N$-th Kronecker power. Taking $N$-th roots gives asymptotic rank at least $D$. Together with interpolation, this proves both the asserted asymptotic equality and the ordinary-rank equality when $\F$ has enough interpolation points.
	\end{proof}

	\begin{remark}[finite fields]
		The asymptotic equality holds even when the field has too few elements for a direct interpolation formula with $D$ evaluation points. The ordinary rank over a small finite field can be larger. Rational interpolation gives a degeneration over the original field, and coefficient extraction converts it into the asymptotic-rank bound without extending that field.
	\end{remark}

	\subsection{A finite separation construction}
	\label{subsec:finite-separation}
	We extend the shared-leg separation construction of OpenAI~\cite[Section~3]{openai2026} to simultaneous separation on all $k-2$ shared legs. The auxiliary polynomial tensor records a sum of tentative block labels. Integer weights then retain exactly the entries for which all shared legs choose the same block. Since these weights are exponents of a formal variable, the construction works unchanged in positive characteristic.
	
	\Needspace{16\baselineskip}
	\begin{theorem}[simultaneous separation]
		\label{thm:separation}
		Fix a field $\F$, integers $k\ge2$ and $M\ge1$, and $e=(k-1,k)$. Let $A_0,\ldots,A_{M-1}$ be $k$-th order tensors in aligned coordinates, with common dimensions on the first $k-2$ legs, and put
		\[
		A=A_0\boxplus_e\cdots\boxplus_e A_{M-1},
		\qquad D_{k,M}=1+(k-1)(M-1).
		\]
		Then
		\[
		A^{\Ksum D_{k,M}}\rightsquigarrow
		\bigKsum_{h=0}^{M-1}\bigl(A_h\Kprod E^{(M)}_{k-1,k}\bigr).
		\]
		Zero blocks are permitted with their specified ambient dimensions. The sum on the right separates all $k$ legs.
	\end{theorem}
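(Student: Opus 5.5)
The plan is to feed the polynomial resource of Lemma~\ref{lem:polynomial-rank} into $A$ and then apply a weight degeneration in the sense of Remark~\ref{rem:weights}. First, $A^{\Ksum D_{k,M}}=A\Kprod\Id_k(D_{k,M})$. Lemma~\ref{lem:polynomial-rank} with all $k-1$ input lengths equal to $M$ gives $D=1+(k-1)(M-1)=D_{k,M}$ and $\Poly_k(M,\ldots,M)\unlhd\Id_k(D_{k,M})$. Since degeneration is compatible with Kronecker products, we get
\[
A^{\Ksum D_{k,M}}\rightsquigarrow A\Kprod\Poly_k(M,\ldots,M).
\]
Transitivity of degeneration then reduces the theorem to showing
\[
A\Kprod\Poly_k(M,\ldots,M)\rightsquigarrow\bigKsum_{h=0}^{M-1}\bigl(A_h\Kprod E^{(M)}_{k-1,k}\bigr).
\]
Here I order the legs of $\Poly_k$ so that the shared legs $1,\ldots,k-2$ carry the inputs $g_1,\ldots,g_{k-2}$, leg $k-1$ carries the input $u$, and leg $k$ carries the output $v=\sum_jg_j+u$.

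In the formal multilinear notation, a supported term of the product consists of a term of some block $A_h$ and a term $(g_1,\ldots,g_{k-2},u,v)$ of the resource. The coordinate on leg $d\le k-2$ is a pair (shared coordinate of $A$, $g_d$), and it does not see $h$. The coordinates on legs $k-1$ and $k$ are pairs (coordinate in sector $h$, $u$) and (coordinate in sector $h$, $v$), so they do determine $h$. I assign the following integer weights:
\[
w=g_d^2\ \text{on leg }d\le k-2,\qquad w=2hu+(k-2)h^2\ \text{on leg }k-1,\qquad w=-2hv\ \text{on leg }k.
\]
Using $v-u=\sum_jg_j$, the total weight of a supported term is $\sum_{j=1}^{k-2}(g_j-h)^2\ge0$. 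By Remark~\ref{rem:weights}, the constant coefficient is a degeneration over every field, since only exponents of $\varepsilon$ are involved. That coefficient keeps exactly the terms with $g_1=\cdots=g_{k-2}=h$, while $u\in[M]_0$ stays free and $v=(k-2)h+u$.

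Next I identify the surviving tensor. In block $h$, the shared legs now carry coordinates (shared coordinate, $h$). Legs $k-1$ and $k$ carry (sector-$h$ coordinate, $u$) and (sector-$h$ coordinate, $(k-2)h+u$), and $u\mapsto(k-2)h+u$ is injective. Distinct $h$ therefore use disjoint coordinate sets on every one of the $k$ legs. After restricting each leg to the used coordinates and permuting them block by block, the result is the full direct sum of the tensors $A_h\Kprod E^{(M)}_{k-1,k}$; for this I match $u$ on leg $k-1$ with $v-(k-2)h$ on leg $k$ to form the $M$-dimensional edge. Dropping unused coordinates is a coordinate projection, so it is a restriction, and composing it with the degeneration is again a degeneration. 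Zero blocks $A_h$ simply contribute zero summands with their sectors, which matches the stated convention.

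The main obstacle is the bookkeeping in this last step. The Kronecker coordinates on the shared legs must really be pairs in which the label $g_d$ is retained, so that the surviving blocks become disjoint there. One also has to check that the extracted coefficient equals the full direct sum exactly, with no cross terms, rather than merely containing it. I would handle this by writing the retained support explicitly as a disjoint union over $h$ and exhibiting the per-leg coordinate bijections.
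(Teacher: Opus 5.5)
Your proposal is correct and follows the paper's proof essentially step for step: the paper uses the same reduction $A^{\Ksum D_{k,M}}\cong A\Kprod\Id_k(D_{k,M})\rightsquigarrow A\Kprod\Poly_k(M,\ldots,M)$ via Lemma~\ref{lem:polynomial-rank}, the same local weights $g^2$, $2hu+(k-2)h^2$, $-2hv$ with total $\sum_j(g_j-h)^2$, and the same relabeling $v\mapsto v-(k-2)h$ on the last leg. The only cosmetic difference is that the paper first proves the intermediate degeneration with an arbitrary length $t$ for the last input (the retained edge dimension) and then specializes to $t=M$.
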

	\begin{proof}
		We first construct a degeneration that makes the block label $h$, already visible on the last two legs, visible on every shared leg as well.
		
		Put $\ell=k-2$. More generally, for an arbitrary positive integer $t$, we prove
		\begin{equation}
			\bigKsum_{h=0}^{M-1}\bigl(A_h\Kprod E^{(t)}_{k-1,k}\bigr)
			\unlhd A\Kprod\Poly_k(\underbrace{M,\ldots,M}_{\ell\text{ inputs}},t).
			\label{eq:resource-separation}
		\end{equation}
		Here $t$ is the dimension of the edge retained in each block.
		
		For $k=2$ the auxiliary tensor is $\Poly_2(t)=\Id_2(t)$, and the list of length-$M$ inputs is empty.
		
		Let $d_1,\ldots,d_\ell$ be the common shared dimensions, and let $b_h,c_h$ be the last two dimensions of $A_h$. Define
		\[
		\mathcal I_h=\left(\prod_{j=1}^{\ell}[d_j]_0\right)\times[b_h]_0\times[c_h]_0.
		\]
		Use the formal multilinear notation of Definition~\ref{def:multilinear-notation}. For $(a_1,\ldots,a_\ell,b,c)\in\mathcal I_h$, write a term of $A_h$ as
		\[
		\alpha_{h,a_1,\ldots,a_\ell,b,c}
		\left(\prod_{j=1}^{\ell}x_{j,a_j}\right)y_{h,b}z_{h,c},
		\]
		where $a_j$ runs through the common coordinates of shared leg $j$, and $b,c$ run through the local coordinates of block $h$ on the two separate legs. To record the block label on the shared legs, use the auxiliary indices $g_1,\ldots,g_\ell$ as tentative labels. The remaining auxiliary indices are $u,v$, with
		\[
		0\le g_j<M,\qquad 0\le u<t,\qquad
		v=g_1+\cdots+g_\ell+u.
		\]
		The output index lies between $0$ and $\ell(M-1)+t-1$. Pairing the corresponding legs in the Kronecker product gives coordinate symbols
		$X_{j,a_j,g_j}$, $Y_{h,b,u}$, and $Z_{h,c,v}$.
		We want the total weight to be $\sum_{j=1}^{\ell}(g_j-h)^2$. A shared leg does not know $h$, so it cannot contribute the cross term $-2hg_j$ directly. The last leg does know $h$ and the sum $v$; using $v=\sum_{j=1}^{\ell}g_j+u$ moves all cross terms onto the two separate legs. This motivates the following local integer weights:
		\[
		w(X_{j,a,g})=g^2,\qquad
		w(Y_{h,b,u})=2hu+\ell h^2,\qquad
		w(Z_{h,c,v})=-2hv.
		\]
		On every supported term the sum of the weights is
		\[
		\sum_{j=1}^{\ell}g_j^2+2hu+\ell h^2-2hv
		=\sum_{j=1}^{\ell}(g_j-h)^2.
		\]
		This is nonnegative and vanishes precisely when $g_j=h$ for every shared leg. The argument concerns integers indexing powers of the formal degeneration parameter; it never reduces the squares modulo the characteristic of $\F$.
		
		Make the coordinate substitutions $x\mapsto\varepsilon^{w(x)}x$ from Remark~\ref{rem:weights}. The preceding nonnegativity calculation shows that the resulting expression is polynomial in $\varepsilon$. Its constant coefficient is
		\[
		\sum_{h=0}^{M-1}\sum_{(a_1,\ldots,a_\ell,b,c)\in\mathcal I_h}\sum_{u=0}^{t-1}
		\alpha_{h,a_1,\ldots,a_\ell,b,c}
		\left(\prod_{j=1}^{\ell}X_{j,a_j,h}\right)
		Y_{h,b,u}Z_{h,c,\ell h+u}.
		\]
		On the last leg, make the further coordinate substitutions $Z_{h,c,\ell h+u}\mapsto\widehat Z_{h,c,u}$ for $0\le h<M$, $0\le c<c_h$, and $0\le u<t$, where the $\widehat Z_{h,c,u}$ are new output symbols, and set all unused last-leg symbols to zero. The block label $h$ is already part of each last-leg coordinate, so these substitutions are well defined. In sector $h$, the resulting expression represents $A_h\Kprod E^{(t)}_{k-1,k}$. Distinct $h$ now use separate coordinates on every shared leg as well; the last two legs were already separate. Project away unused auxiliary coordinates, and omit zero summands in the full direct sum. These are restrictions; any zero coordinates retained in a chosen target representative can be restored by local embeddings. Thus the original coefficient table of each $A_h$ is unchanged, the label $u$ contributes exactly the retained edge, and $h$ now distinguishes all legs. Remark~\ref{rem:weights}, followed by these coordinate projections, proves~\eqref{eq:resource-separation}, including zero blocks with their specified coordinates.
		
		It remains to obtain this auxiliary tensor from the stated number of source copies. Take $t=M$. Lemma~\ref{lem:polynomial-rank} gives
		\[
		\Poly_k(M,\ldots,M)\unlhd\Id_k(D_{k,M}).
		\]
		Taking its Kronecker product with $A$ and composing with~\eqref{eq:resource-separation} proves the stated degeneration, because $A\Kprod\Id_k(D_{k,M})$ is a full direct sum of $D_{k,M}$ copies of $A$ up to coordinate permutations. The resulting full direct sum has an $M$-dimensional edge in each block, as required.
	\end{proof}
	
	\subsection{The entropy inequality}
	\label{subsec:entropy}
	
	We now extend the entropy inequality of OpenAI~\cite[Section~3]{openai2026} to $k$-th order tensors using the simultaneous separation theorem. The construction has two effects on spectral values: it produces $M$ independent blocks and leaves an $M$-dimensional edge in each block. Its cost is only $D_{k,M}\le(k-1)M$ source copies, so the linear cost cancels the number of blocks asymptotically, leaving the contribution of the retained edge. To quantify this gain, we apply the construction to Kronecker powers and retain sector words with prescribed frequencies. These words have the same spectral value, and entropy measures their number. This gives the following inequality.
	
	\Needspace{12\baselineskip}
	\begin{corollary}[entropy inequality]
		\label{cor:entropy}
		Let $k\ge2$ and $s\ge1$ be integers. Let $A=A_1\boxplus_e\cdots\boxplus_e A_s$, where $e=(k-1,k)$ and the nonzero tensors $A_1,\ldots,A_s$ have aligned shared coordinates as in Definition~\ref{def:partial-direct-sum}. Let $\phi\in\Phi_{k,\F}$ and let $P_{k-1,k}^{(\phi)}$ be its spectral edge weight from Definition~\ref{def:spectral-edge-weights}. For every probability vector $\theta=(\theta_1,\ldots,\theta_s)$,
		\[
		\phi(A)\ge
		\exp\bigl(P_{k-1,k}^{(\phi)}H(\theta_1,\ldots,\theta_s)\bigr)
		\prod_{i=1}^{s}\phi(A_i)^{\theta_i},
		\]
		where $H(\theta_1,\ldots,\theta_s)=-\sum_{i=1}^{s}\theta_i\ln\theta_i$ and $0\ln0=0$.
	\end{corollary}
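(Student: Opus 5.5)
We follow the outline sketched in Part~1 of the introduction. First reduce to rational $\theta$ with all $\theta_i>0$. Coordinates with $\theta_i=0$ can be dropped: deleting the sectors of $A_i$ on the last two legs is a restriction of $A$ to the partial sum of the remaining blocks, and those terms contribute nothing to either side. Irrational $\theta$ follow afterwards by continuity of the right-hand side in $\theta$. This continuity holds because each $\phi(A_i)\ge1$ is a fixed positive number and $H$ is continuous.

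Fix such a rational $\theta$ and take $N$ with every $N\theta_i$ an integer. Expanding $A^{\Kprod N}$ by distributivity on the last two legs, $A^{\Kprod N}$ is (up to coordinate permutations) the partial sum $\boxplus_e$ over all words $w\in[s]^N$ of the blocks $A_{w_1}\Kprod\cdots\Kprod A_{w_N}$. On the shared legs these blocks are aligned, and on legs $k-1,k$ their sectors are disjoint. Restrict on legs $k-1,k$ to the sectors of words of type $\theta$, that is, words in which letter $i$ occurs $N\theta_i$ times. This restriction yields a partial sum $B=B_0\boxplus_e\cdots\boxplus_e B_{M_N-1}$ with $M_N=\binom{N}{N\theta_1,\ldots,N\theta_s}$. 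Each $B_h$ is a coordinate permutation of $\bigKprod_i A_i^{\Kprod N\theta_i}$, so $\phi(B_h)=\prod_i\phi(A_i)^{N\theta_i}$. Here one must check that reordering the Kronecker factors is a within-leg coordinate permutation. It is, since commutativity of $\Kprod$ in $\Tens_k$ is realized by such permutations. It is harmless for alignment, because we never need the blocks to have identical shared coordinates, only common dimensions.

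Now apply Theorem~\ref{thm:separation} with $M=M_N$. Monotonicity under restriction and degeneration (Lemma~\ref{lem:degeneration-asymptotic}), together with additivity, multiplicativity, and Lemma~\ref{lem:edge-weights} for $\phi(E^{(M)}_{k-1,k})=M^{p}$ with $p=P^{(\phi)}_{k-1,k}$, give
\[
(1+(k-1)(M_N-1))\,\phi(A)^N\ \ge\ \phi(B^{\Ksum D_{k,M_N}})\ \ge\ M_N\cdot M_N^{p}\prod_{i=1}^s\phi(A_i)^{N\theta_i}.
\]
Bound the left factor by $(k-1)M_N$ and cancel $M_N$. This leaves $(k-1)\phi(A)^N\ge M_N^{p}\prod_i\phi(A_i)^{N\theta_i}$. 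Taking $N$-th roots along multiples $N\to\infty$, use $(k-1)^{1/N}\to1$ and the standard multinomial estimate $\frac1N\ln M_N\to H(\theta)$ to obtain the inequality.

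The main obstacle is the bookkeeping in the second step. One must show that the type-class restriction of $A^{\Kprod N}$ is genuinely a partial direct sum in the sense of Definition~\ref{def:partial-direct-sum}, with common shared dimensions and sector-disjointness on legs $k-1,k$, so that Theorem~\ref{thm:separation} applies verbatim. We will verify this directly from the mixed-radix coordinates of the Kronecker product. On each of the last two legs the index of $A^{\Kprod N}$ determines the word $w$, and distinct words occupy disjoint index sets after a fixed permutation of that leg. On the shared legs the index is independent of $w$. The remaining steps are routine.
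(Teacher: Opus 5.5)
Your proposal is correct and follows the paper's proof essentially step for step. You restrict $A^{\Kprod N}$ to the type class of sector words, apply Theorem~\ref{thm:separation} with $M=M_N$, use $D_{k,M_N}\le(k-1)M_N$ so the block count cancels, and finish with the multinomial estimate and continuity in $\theta$. Your preliminary reduction to $\theta_i>0$ and the reordering of Kronecker factors are harmless but unnecessary: the type-class argument handles zero frequencies directly, and multiplicativity already gives $\phi(A_{w_1}\Kprod\cdots\Kprod A_{w_N})=\prod_{r}\phi(A_{w_r})$ without any reordering.
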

	\begin{proof}
		Put $p=P_{k-1,k}^{(\phi)}$. Lemma~\ref{lem:edge-weights} gives $\phi(E^{(q)}_{k-1,k})=q^p$ for every positive integer $q$.
		
		First suppose that all entries of $\theta$ are rational, and let $N$ range through positive multiples of their common denominator. Define the set of sector words
		\[
		\mathcal W_N=
		\bigl\{w\in\{1,\ldots,s\}^N:
		|\{r\in[N]\mid w_r=i\}|=N\theta_i
		\text{ for every }1\le i\le s\bigr\}.
		\]
		Put
		\[
		M_N=|\mathcal W_N|=\frac{N!}{\prod_{i=1}^{s}(N\theta_i)!},
		\qquad T_w=\bigKprod_{r=1}^{N}A_{w_r}\quad(w\in\mathcal W_N),
		\]
		where the factors in $T_w$ retain their order $r=1,\ldots,N$. Let $d_1,\ldots,d_{k-2}$ be the common dimensions of the shared legs of the $A_i$. Every $T_w$ has dimension $d_j^N$ on shared leg $j$, with the same product coordinate basis. Enumerate $\mathcal W_N$ in lexicographic order as $w^{(0)},\ldots,w^{(M_N-1)}$, and define the aligned partial direct sum
		\[
		B_N=T_{w^{(0)}}\boxplus_e\cdots\boxplus_e T_{w^{(M_N-1)}}.
		\]
		Thus $B_N$ has $M_N$ blocks, separated on the last two legs and aligned on the first $k-2$ legs, exactly as in Definition~\ref{def:partial-direct-sum}.
		
		To obtain $B_N$ from $A^{\Kprod N}$, on each of the last two legs project onto the coordinate sectors whose words belong to $\mathcal W_N$, and order the retained sectors as above. Use the identity on each shared leg. Every supported term has the same sector word on the last two legs, because each factor of $A$ has matched sectors. The block at word $w$ is precisely $T_w$, so these local maps prove $B_N\preceq A^{\Kprod N}$.
		
		Each $T_w$ contains exactly $N\theta_i$ factors equal to $A_i$ for every $1\le i\le s$. Multiplicativity of $\phi$ therefore gives the common block value
		\[
		\phi(T_w)=C_N,\qquad
		C_N=\prod_{i=1}^{s}\phi(A_i)^{N\theta_i}\quad(w\in\mathcal W_N).
		\]
		Apply Theorem~\ref{thm:separation} with $M=M_N$ and with its block $A_h$ equal to $T_{w^{(h)}}$ for $0\le h<M_N$. The displayed partial direct sum for $B_N$ satisfies its alignment hypothesis, and the theorem gives
		\[
		B_N^{\Ksum D_{k,M_N}}\rightsquigarrow
		\bigKsum_{h=0}^{M_N-1}
		\bigl(T_{w^{(h)}}\Kprod E^{(M_N)}_{k-1,k}\bigr).
		\]
		Here $D_{k,M_N}=1+(k-1)(M_N-1)$. Using $D_{k,M_N}\le(k-1)M_N$, restriction monotonicity, and degeneration monotonicity from Lemma~\ref{lem:degeneration-asymptotic}, followed by additivity and multiplicativity, yields
		\begin{align*}
			(k-1)M_N\phi(A)^N
			&\ge D_{k,M_N}\phi(B_N)\\
			&\ge\sum_{h=0}^{M_N-1}\phi(T_{w^{(h)}})\phi(E^{(M_N)}_{k-1,k})\\
			&=M_N^{1+p}C_N.
		\end{align*}
		Canceling $M_N$ and taking $N$-th roots yields
		\[
		\phi(A)\ge
		(k-1)^{-1/N}M_N^{p/N}
		\prod_{i=1}^{s}\phi(A_i)^{\theta_i}.
		\]
		
		Stirling's formula, with $s$ and $\theta$ fixed, gives
		\[
		\frac{\ln M_N}{N}
		=-\sum_{i=1}^{s}\theta_i\ln\theta_i
		+O_s\left(\frac{\ln(N+1)}{N}\right)
		\longrightarrow H(\theta_1,\ldots,\theta_s).
		\]
		The prefactor $(k-1)^{-1/N}$ tends to one. Thus the growth in the number of sectors contributes exactly $\exp(pH(\theta_1,\ldots,\theta_s))$, proving the claimed inequality for rational $\theta$. Spectral monotonicity is applied to each fixed degeneration before taking this limit; Lemma~\ref{lem:degeneration-asymptotic} allows its degree to depend on $N$.
		
		Finally, every $A_i$ is nonzero, so $\phi(A_i)\ge1$. The function $(\theta_1,\ldots,\theta_s)\mapsto H(\theta_1,\ldots,\theta_s)$ is continuous on the probability simplex, since $x\ln x\to0$ as $x\to0^+$. The function $(\theta_1,\ldots,\theta_s)\mapsto\prod_{i=1}^{s}\phi(A_i)^{\theta_i}$ is also continuous there. Every probability vector is a limit of rational probability vectors, so the same entropy inequality holds for every probability vector $\theta$, as claimed.
	\end{proof}

	\Needspace{8\baselineskip}
	\section{Polynomial Multiplication and the Complete Graph Bound}
	\label{sec:polynomial-method}
	
	The entropy inequality of Corollary~\ref{cor:entropy} is the bridge from the separation construction to the complete-graph bound. The proof compares two estimates for a symmetric spectral profile of polynomial multiplication: interpolation gives an upper bound, while two degenerations force growth as the input lengths increase. The resulting growth estimate will constrain the spectral values of the complete graph tensor.
	
	\subsection{Polynomial tensors and spectral profiles}
	\label{sec:profile}
	
	For $n\ge3$, the tensor $\Poly_n$ from Definition~\ref{def:polynomial} has $n-1$ independent polynomial inputs, each retained as a separate leg. Lemma~\ref{lem:polynomial-rank} gives its linear interpolation bound over every field. We now symmetrize its spectral values over all $n$ legs, extending the profile used by OpenAI~\cite[Section~4]{openai2026}. The normalization uses the spectral edge weights from Definition~\ref{def:spectral-edge-weights}.
	
	The following definition extends the six-factor symmetrization used in~\cite[Equation~(4.4)]{openai2026} to all permutations of the $n$ tensor legs.
	
	\Needspace{16\baselineskip}
	\begin{definition}[symmetric spectral profile]
		\label{def:profile}
		Let $n\geq3$ be an integer, let $\phi\in\Phi_{n,\F}$, and let $P_{i,j}^{(\phi)}$ be its spectral edge weights. Define their average by
		\[
		m=\binom n2,
		\quad t_\phi=\frac1m\sum_{1\leq i<j\leq n}P_{i,j}^{(\phi)}.
		\]
		For every permutation $\pi\in S_n$, put $\phi_\pi(T)=\phi(\pi T)$, where the action on tensor legs is defined in Section~\ref{sec:preliminaries}. For all positive integers $a_1,\ldots,a_{n-1}$, define the \emph{symmetric spectral profile}
		\[
		P_\phi(a_1,\ldots,a_{n-1})
		=\left(\prod_{\pi\in S_n}
		\phi_\pi\bigl(\Poly_n(a_1,\ldots,a_{n-1})\bigr)
		\right)^{1/(n!t_\phi)}.
		\]
	\end{definition}
	
	Lemma~\ref{lem:edge-weights} gives $t_\phi\geq1/m>0$, so the definition uses a positive exponent. Every map $\phi_\pi$ is a spectral point: permutation of the legs preserves restriction, the diagonal tensors, and both semiring operations.
	
	For integers $1\leq u<v\leq n$ and $\pi\in S_n$, Definition~\ref{def:spectral-edge-weights} gives $P_{u,v}^{(\phi_\pi)}=\log_2\phi_\pi(E^{(2)}_{u,v})$. The image of the pair $(u,v)$ under $\pi$ is the pair $\left(\min(\pi(u),\pi(v)),\max(\pi(u),\pi(v))\right)$. For any fixed pair $1\leq i<j\leq n$, exactly $2(n-2)!$ permutations send $(u,v)$ to $(i,j)$ in this sense: there are two choices for the images of $u,v$, and $(n-2)!$ choices for the remaining images. Consequently,
	\begin{equation}
		\sum_{\pi\in S_n}P_{u,v}^{(\phi_\pi)}
		=2(n-2)!\sum_{1\leq i<j\leq n}P_{i,j}^{(\phi)}
		=n!t_\phi
		\; (1\leq u<v\leq n).
		\label{eq:edge-average}
	\end{equation}
	
	\begin{lemma}[basic properties of the profile]
		\label{lem:profile-basic}
		For every integer $n\geq3$ and every spectral point $\phi\in\Phi_{n,\F}$, the profile in Definition~\ref{def:profile} is positive and symmetric in its $n-1$ arguments. For every positive integer $b$ and all positive integers $a_1,\ldots,a_{n-1}$, it satisfies
		\[
		P_\phi(1,\ldots,1,b)=b,
		\quad
		P_\phi(a_1,\ldots,a_{n-1})
		\leq\left(1+\sum_{j=1}^{n-1}(a_j-1)\right)^{1/t_\phi}.
		\]
	\end{lemma}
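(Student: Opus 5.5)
We verify the four properties of $P_\phi$ in turn: positivity, symmetry, the normalization $P_\phi(1,\ldots,1,b)=b$, and the interpolation bound. Throughout we use that every $\phi_\pi$ is a spectral point.

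\textbf{Positivity.} Each $\Poly_n(a_1,\ldots,a_{n-1})$ is nonzero, so $\phi_\pi(\Poly_n)\ge1$ for every $\pi\in S_n$. Since $t_\phi>0$, the product is at least one and the normalized power is positive.

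\textbf{Symmetry.} Let $\rho$ be a permutation of the first $n-1$ legs, extended to $S_n$ by fixing leg $n$. Relabeling the inputs shows that $\Poly_n(a_{\rho^{-1}(1)},\ldots,a_{\rho^{-1}(n-1)})$ is coordinate-permutation equal to $\rho\,\Poly_n(a_1,\ldots,a_{n-1})$. The output index is the same sum $i_1+\cdots+i_{n-1}$, and the inputs are only reordered. Hence
\[
\phi_\pi\bigl(\Poly_n(a_{\rho^{-1}(1)},\ldots)\bigr)=\phi_{\pi\rho}\bigl(\Poly_n(a_1,\ldots)\bigr).
\]
Since $\pi\mapsto\pi\rho$ is a bijection of $S_n$, the product over $\pi$ is unchanged.

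\textbf{Normalization.} This is the step needing a small coordinate check. For $a_1=\cdots=a_{n-2}=1$ and $a_{n-1}=b$, we have $D=b$. All of the first $n-2$ legs are one-dimensional, and the tensor is $\sum_{i<b}e_0\Cprod\cdots\Cprod e_0\Cprod e_i^{(b)}\Cprod e_i^{(b)}$, which is exactly $E^{(b)}_{n-1,n}$. Then $\pi E^{(b)}_{n-1,n}=E^{(b)}_{\pi(n-1),\pi(n)}$, so Lemma~\ref{lem:edge-weights} applied to $\phi_\pi$ gives
\[
\phi_\pi(E^{(b)}_{n-1,n})=b^{P^{(\phi_\pi)}_{n-1,n}}.
\]
Taking the product over $\pi$ and using~\eqref{eq:edge-average} with $(u,v)=(n-1,n)$ gives $b^{n!t_\phi}$. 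Raising to the power $1/(n!t_\phi)$ yields $b$.

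\textbf{Interpolation bound.} By Lemma~\ref{lem:polynomial-rank}, $\Poly_n\unlhd\Id_n(D)$. Spectral points are monotone under degeneration (Lemma~\ref{lem:degeneration-asymptotic}), and $\phi_\pi(\Id_n(D))=D$. Therefore each factor satisfies $\phi_\pi(\Poly_n)\le D$, the product is at most $D^{n!}$, and the normalized power is at most $D^{1/t_\phi}$.
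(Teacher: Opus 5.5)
Your proof is correct and follows the paper's argument step for step: positivity from $\phi_\pi\ge1$ on nonzero tensors, symmetry via $\rho\Poly_n(a_1,\ldots,a_{n-1})=\Poly_n(a_{\rho^{-1}(1)},\ldots,a_{\rho^{-1}(n-1)})$ and the reindexing $\pi\mapsto\pi\rho$, and normalization via $\Poly_n(1,\ldots,1,b)=E^{(b)}_{n-1,n}$ together with~\eqref{eq:edge-average}. The only deviation is in the interpolation bound, where you apply degeneration monotonicity to $\Poly_n\unlhd\Id_n(D)$ directly instead of bounding each $\phi_\pi$ by the asymptotic rank through Theorem~\ref{thm:spectral-duality}; both routes rest on Lemma~\ref{lem:polynomial-rank}, and yours avoids invoking duality.
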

	\begin{proof}
		Every polynomial multiplication tensor is nonzero, and a spectral point has value at least one on a nonzero tensor. Thus each factor in the definition of $P_\phi$ is positive, and so is $P_\phi$.
		
		Let $\sigma$ permute the first $n-1$ tensor legs and fix leg $n$. By the defining coefficient formula, permuting the input lengths of $\Poly_n$ by $\sigma$ gives $\sigma\Poly_n$. The map $\pi\mapsto\pi\sigma$ is a bijection of $S_n$. Hence replacing $\Poly_n$ by $\sigma\Poly_n$ reindexes the factors defining $P_\phi$ and does not change their product. This proves symmetry.
		
		The normalization comes from a single edge. In $\Poly_n(1,\ldots,1,b)$, the first $n-2$ indices are zero, so its coefficient formula reduces to $E^{(b)}_{n-1,n}$. Lemma~\ref{lem:edge-weights}, applied to each spectral point $\phi_\pi$, therefore gives
		\[
		\phi_\pi\bigl(\Poly_n(1,\ldots,1,b)\bigr)
		=b^{P_{n-1,n}^{(\phi_\pi)}}.
		\]
		The permutation average~\eqref{eq:edge-average} says that the sum of these exponents is $n!t_\phi$. Substituting into Definition~\ref{def:profile} yields
		\[
		P_\phi(1,\ldots,1,b)
		=\left(b^{\sum_{\pi\in S_n}P_{n-1,n}^{(\phi_\pi)}}\right)^{1/(n!t_\phi)}
		=\left(b^{n!t_\phi}\right)^{1/(n!t_\phi)}=b.
		\]
		For the upper bound, put $D=1+\sum_{j=1}^{n-1}(a_j-1)$. Theorem~\ref{thm:spectral-duality} bounds the value of every spectral point by asymptotic rank, and Lemma~\ref{lem:polynomial-rank} computes that rank. Thus, for every $\pi\in S_n$,
		\[
		\phi_\pi\bigl(\Poly_n(a_1,\ldots,a_{n-1})\bigr)
		\le\aR_{\F}\bigl(\Poly_n(a_1,\ldots,a_{n-1})\bigr)=D.
		\]
		Multiplying these $n!$ inequalities and taking the power $1/(n!t_\phi)$ proves the stated upper bound $D^{1/t_\phi}$.
	\end{proof}
	
	\subsection{Discrete concavity}
	\label{sec:polynomial-inequalities}
	\label{subsec:concavity}
	
	To compare input lengths $b-1$, $b$, and $b+1$, we add an edge that doubles the length-$b$ input and its output. The additional coordinates allow us to separate multiplication blocks of lengths $b+1$ and $b-1$. This extends the determinant construction of~\cite[Lemma~4.1]{openai2026}, while retaining every shared input as a separate leg.
	
	Fix an integer $n\ge3$, positive integers $a_1,\ldots,a_{n-2}$, and an integer $b\ge2$, and write
	\[
	S=\sum_{j=1}^{n-2}(a_j-1).
	\]
	In the formal multilinear notation of Definition~\ref{def:multilinear-notation}, the tensor $\Poly_n(a_1,\ldots,a_{n-2},b)\Kprod E^{(2)}_{n-1,n}$ is represented by
	\[
	\sum_{\substack{0\le i_j<a_j\ (1\le j\le n-2)\\0\le i<b,\ \alpha\in\{0,1\}}}
	\left(\prod_{j=1}^{n-2}X^{(j)}_{i_j}\right)
	Y_{i,\alpha}Z_{i+\sum_{j=1}^{n-2}i_j,\alpha}.
	\]
	Here the two values of $\alpha$ distinguish the two coordinates of the extra edge. In particular, the same $\alpha$ occurs on the last input and the output.
	
	To expose the two desired blocks, it is convenient to package these coordinates as polynomial coefficients. Let $V_d$ be the space of homogeneous polynomials of degree $d$ in $u,v$, with basis $(u^{d-i}v^i)_{i=0}^d$, and let
	\[
	U_d=\{Fs+Gw\mid F,G\in V_d\}.
	\]
	The symbols $u,v,s,w$ are auxiliary polynomial variables. Encode the input coordinates by
	\[
	f_j=\sum_{i=0}^{a_j-1}X^{(j)}_iu^{a_j-1-i}v^i,
	\qquad
	F=\sum_{i=0}^{b-1}u^{b-1-i}v^i(Y_{i,0}s+Y_{i,1}w).
	\]
	If $g_{r,0},g_{r,1}$ are the coefficients of $u^{S+b-1-r}v^rs$ and $u^{S+b-1-r}v^rw$ in $(\prod_{j=1}^{n-2}f_j)F$, then the preceding formal expression is exactly
	\[
	\sum_{r=0}^{S+b-1}\bigl(g_{r,0}Z_{r,0}+g_{r,1}Z_{r,1}\bigr).
	\]
	Indeed, each output coefficient is obtained by summing the products of input coordinates whose degree indices add to $r$. Thus the tensor is the coefficient array of the map
	\[
	\begin{aligned}
		V_{a_1-1}\times\cdots\times V_{a_{n-2}-1}\times U_{b-1}
		&\longrightarrow U_{S+b-1},\\
		(f_1,\ldots,f_{n-2},F)&\longmapsto
		\left(\prod_{j=1}^{n-2}f_j\right)F.
	\end{aligned}
	\]
	This map is multilinear, meaning linear in each argument when the others are fixed. Recording its output coordinates and pairing them with independent output symbols gives precisely the formal multilinear expression of its coefficient tensor. We will use this equivalence to describe local changes of bases without merging the shared input legs.
	
	The two blocks arise from replacing $s,w$ by $u,v$. This substitution forgets multiples of $D=uw-vs$. The vectors that remain visible will give the larger multiplication block, and the multiples of $D$ will give the smaller one. The following elementary basis calculation makes this decomposition explicit.
	
	\begin{lemma}[coordinates for the two multiplication blocks]
		\label{lem:binary-splitting}
		For every integer $d\ge1$, define
		\[
		\begin{aligned}
			q_{d,i}&=u^{d-i}v^is &&(0\le i\le d),\\
			q_{d,d+1}&=v^dw,\\
			k_{d,j}&=(uw-vs)u^{d-1-j}v^j &&(0\le j<d).
		\end{aligned}
		\]
		These vectors form an ordered basis of $U_d$, with the $q$-vectors first and the $k$-vectors second. The linear map
		\[
		\rho_d:U_d\longrightarrow V_{d+1},\qquad
		\rho_d(Fs+Gw)=uF+vG
		\]
		sends the $q$-vectors to the monomial basis of $V_{d+1}$ and has kernel spanned by the $k$-vectors. Under the identification $g\mapsto Dg$, this kernel is a copy of $V_{d-1}$.
	\end{lemma}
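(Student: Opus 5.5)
The plan is a dimension count combined with explicit images under $\rho_d$. First, $\dim U_d=2(d+1)$, because $U_d=V_d s\oplus V_d w$ and the monomials $u^{d-i}v^is$, $u^{d-i}v^iw$ form a basis. There are $(d+2)+d=2d+2$ listed vectors, so it suffices to show they span $U_d$, or equivalently that they are linearly independent. The map $\rho_d$ is evidently linear and well defined, since the decomposition $Fs+Gw$ is unique. We then compute its values on the listed vectors: $\rho_d(q_{d,i})=u^{d+1-i}v^i$ for $0\le i\le d$, and $\rho_d(q_{d,d+1})=v^{d+1}$. These are exactly the $d+2$ monomials of $V_{d+1}$, so $\rho_d$ is surjective and its kernel has dimension $2(d+1)-(d+2)=d$.

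Next we check that the $k$-vectors lie in the kernel. Write $k_{d,j}=Fs+Gw$ with $F=-u^{d-1-j}v^{j+1}$ and $G=u^{d-j}v^j$. Then $\rho_d(k_{d,j})=-u^{d-j}v^{j+1}+u^{d-j}v^{j+1}=0$. The $k$-vectors are linearly independent: their $w$-components $u^{d-j}v^j$, for $0\le j<d$, are distinct monomials (they equal $u\cdot u^{d-1-j}v^j$). Since there are $d$ of them, they form a basis of $\ker\rho_d$.

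Linear independence of the whole list now follows from a standard argument. Suppose a linear combination vanishes. Applying $\rho_d$ kills the $k$-part and sends the $q$-part to a combination of distinct monomials, so every $q$-coefficient is zero. Independence of the $k$-vectors then forces the remaining coefficients to vanish. Counting dimensions gives the ordered basis.

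For the last assertion, consider the map $V_{d-1}\to U_d$ given by $g\mapsto Dg=(uw-vs)g$. It is linear and sends the monomial basis $u^{d-1-j}v^j$ to $k_{d,j}$, so it is an isomorphism onto the kernel. Injectivity already follows from the independence shown above. No step is a real obstacle. The only care needed is in the exponent bookkeeping for $q_{d,d+1}$ and in the sign in $D$; note that $d\ge1$ guarantees that $V_{d-1}$ is defined.
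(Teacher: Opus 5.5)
Your proposal is correct and follows essentially the paper's argument: compute the images under $\rho_d$, use them to kill the $q$-coefficients of any relation, prove independence of the $k$-vectors, and conclude by the dimension count $2d+2=\dim U_d$. The differences are cosmetic. You prove independence of the $k$-vectors through their distinct $w$-components, where the paper uses that $D\ne0$ is not a zero divisor. You identify $\ker\rho_d$ by rank--nullity from surjectivity, where the paper reads it off from the $q$-coordinates in the new basis.
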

	\begin{proof}
		Substitution gives $\rho_d(q_{d,i})=u^{d+1-i}v^i$ for $0\le i\le d+1$, and $\rho_d(k_{d,j})=0$ because $uw-vs$ becomes $uv-vu=0$. To check independence, apply $\rho_d$ to a relation among the displayed vectors. The monomials force every coefficient of a $q$-vector to vanish. The remaining relation has the form
		$D\sum_{j=0}^{d-1}\lambda_ju^{d-1-j}v^j=0$.
		Since the polynomial ring over $\F$ has no zero divisors and $D\ne0$, all $\lambda_j$ vanish. There are $d+2+d=2d+2=\dim U_d$ independent vectors, so they form a basis. In this basis, a vector maps to zero precisely when all its $q$-coordinates vanish. Hence $\ker\rho_d$ is exactly the copy of $V_{d-1}$ spanned by the $k$-vectors.
	\end{proof}
	
	The basis separates what survives the substitution from what it kills. Multiplication by any shared polynomial product preserves the killed part: $f(Dg)=D(fg)$. Consequently a kernel input never produces a visible output. This one-way interaction is exactly what makes a weight degeneration possible.
	
	\begin{lemma}[two-block polynomial degeneration]
		\label{lem:concavity-degeneration}
		For the parameters above and $e=(n-1,n)$, in aligned shared monomial coordinates,
		\[
		\Poly_n(a_1,\ldots,a_{n-2},b)\Kprod E^{(2)}_{n-1,n}
		\rightsquigarrow
		\Poly_n(a_1,\ldots,a_{n-2},b+1)\boxplus_e\Poly_n(a_1,\ldots,a_{n-2},b-1).
		\]
	\end{lemma}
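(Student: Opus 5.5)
We realize the degeneration through the multilinear map $(f_1,\ldots,f_{n-2},F)\mapsto(\prod_j f_j)F$ from $V_{a_1-1}\times\cdots\times V_{a_{n-2}-1}\times U_{b-1}$ to $U_{S+b-1}$. Its coefficient tensor is the source tensor, as shown above. First we apply invertible changes of basis on the last two legs only, keeping the monomial coordinates of every shared input leg. On the last input leg we use the basis of Lemma~\ref{lem:binary-splitting} with $d=b-1$: the $b+1$ vectors $q_{b-1,i}$ followed by the $b-1$ vectors $k_{b-1,j}$. On the output leg we use the analogous basis with $d=S+b-1$: the $S+b+1$ vectors $q$ followed by the $S+b-1$ vectors $k$. (If $S+b-1\ge1$ this is fine; $b\ge2$ guarantees it.) Since these are invertible local maps, the tensor is unchanged up to $\cong$. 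We then compute its coefficients in the new coordinates.

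The structure of the coefficients comes from two facts.

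1. For a kernel input $F=Dg$, the product $fF=D(fg)$ lies in the span of the output $k$-vectors. Its $k$-coordinates are exactly the coefficients of $fg$, where $f=\prod_j f_j$ and $g\in V_{b-2}$. So the (kernel input, kernel output) block is $\Poly_n(a_1,\ldots,a_{n-2},b-1)$, and the (kernel input, visible output) block is zero.
2. For a $q$-input, $\rho$ is multiplicative in the sense $\rho_{S+b-1}(fF)=f\rho_{b-1}(F)$. Since the $q$-vectors map to monomials, the visible output coordinates of $fq_{b-1,i}$ are the coefficients of $f\cdot u^{b-i}v^i$. So the (visible input, visible output) block is exactly $\Poly_n(a_1,\ldots,a_{n-2},b+1)$ in the shared monomial coordinates.

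This requires checking that the output $q$-coordinates of a vector are read off from its $\rho$-image. That holds because $\rho$ kills the $k$-vectors and maps the $q$-vectors bijectively onto the monomials. The only unwanted part is the block (visible input, kernel output), whose coefficients are arbitrary.

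The block structure is triangular: the kernel input never reaches the visible output. So a weight degeneration in the sense of Remark~\ref{rem:weights} removes the off-diagonal block. Give weight $0$ to every shared-leg symbol. On the last input leg, give weight $0$ to the $q$-coordinates and $-1$ to the $k$-coordinates. On the output leg, give weight $0$ to the $q$-coordinates and $+1$ to the $k$-coordinates. The term totals are then:

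| input | output | total weight |
|---|---|---|
| visible | visible | $0$ |
| kernel | kernel | $0$ |
| visible | kernel | $1$ |
| kernel | visible | none (no support) |

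All totals are nonnegative, and the constant coefficient is exactly the sum of the two diagonal blocks.

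Those two blocks use disjoint coordinate sectors on legs $n-1$ and $n$ and identical monomial coordinates on the shared legs. After ordering the sectors (visible first), this is precisely the aligned partial direct sum $\Poly_n(\ldots,b+1)\boxplus_e\Poly_n(\ldots,b-1)$ of Definition~\ref{def:partial-direct-sum}.

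The main obstacle is the bookkeeping of step 2 and the alignment. We must confirm that the index conventions (degree index $i$ for $u^{d-i}v^i$, including $q_{d,d+1}=v^dw\mapsto v^{d+1}$) make the visible block literally the coefficient pattern $e_{i_1+\cdots+i_{n-2}+i}$ with $D=S+b+1$. This is also what keeps the shared legs' coordinates identical across both blocks, so the result is the aligned sum and not merely an isomorphic tensor. The encoding of $f_j$ by $u^{a_j-1-i}v^i$ makes the product degrees add correctly, so this should be a direct check.
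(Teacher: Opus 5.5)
Your proposal is correct and follows the paper's proof essentially step for step: the same change to the $q$/$k$ bases of Lemma~\ref{lem:binary-splitting} on the last input and output legs, the same identities $\rho_{S+b-1}(fF)=f\rho_{b-1}(F)$ and $f(Dg)=D(fg)$ to identify the two diagonal blocks and the vanishing kernel-to-visible block, and the same weights ($-1$ on kernel inputs, $+1$ on kernel outputs) to remove the visible-to-kernel block. The only detail the paper writes out that you leave implicit is that the basis change acts by $B^{\mathsf T}$ on the last input leg and $C^{-1}$ on the output leg; this is routine.
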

	\begin{proof}
		Use Lemma~\ref{lem:binary-splitting} on the last input with $d=b-1$ and on the output with $d=S+b-1$. Both integers are positive. The bases depend only on these degrees; in particular they are the same for every choice of the shared inputs.
		
		Write their product as
		\[
		f=\prod_{j=1}^{n-2}f_j=\sum_{t=0}^{S}c_tu^{S-t}v^t,
		\qquad c_t=0\quad(t\in\mathbb Z\setminus\{0,\ldots,S\}).
		\]
		The identity $\rho_{S+b-1}(fF)=f\rho_{b-1}(F)$ says that the visible output coefficients are those of ordinary multiplication by $f$. The identity $f(Dg)=D(fg)$ says the same for the kernel coefficients. Explicitly,
		\begin{equation}
			\label{eq:quotient-kernel-coordinates}
			\begin{aligned}
				fq_{b-1,i}
				&=\sum_{r=0}^{S+b}c_{r-i}q_{S+b-1,r}
				+\sum_{r=0}^{S+b-2}\eta_{r,i}k_{S+b-1,r}
				&&(0\le i\le b),\\
				fk_{b-1,j}
				&=\sum_{r=0}^{S+b-2}c_{r-j}k_{S+b-1,r}
				&&(0\le j\le b-2),
			\end{aligned}
		\end{equation}
		where $\eta_{r,i}=0$ for $i<b$ and $\eta_{r,b}=c_{r-b+1}$.
		For $i<b$, multiplication of $q_{b-1,i}$ contains only $s$-terms and gives the first sum directly. For $i=b$, its $w$-terms are recovered using
		$k_{S+b-1,r}=u^{S+b-1-r}v^rw-u^{S+b-2-r}v^{r+1}s$;
		the accompanying $s$-terms are canceled by the first sum. The second line follows directly from $f(Dg)=D(fg)$. These identities exhibit the two multiplication blocks and the sole extra block $\eta$.
		
		Return to the formal multilinear notation of Definition~\ref{def:multilinear-notation}. Let $\widehat Y_i,\widetilde Y_j$ be the last-input coordinate symbols in the $q$- and $k$-bases, and let $\widehat Z_r,\widetilde Z_r$ be the corresponding output symbols. In these coordinates, equation~\eqref{eq:quotient-kernel-coordinates} gives the expression
		\[
		\begin{aligned}
			&\sum_{i=0}^{b}\sum_{r=0}^{S+b}c_{r-i}\widehat Y_i\widehat Z_r
			+\sum_{i=0}^{b}\sum_{r=0}^{S+b-2}\eta_{r,i}\widehat Y_i\widetilde Z_r\\
			&\hspace{35mm}
			+\sum_{j=0}^{b-2}\sum_{r=0}^{S+b-2}c_{r-j}\widetilde Y_j\widetilde Z_r.
		\end{aligned}
		\]
		Here $c_t$ and $\eta_{r,i}$ are multilinear polynomials in the separate shared coordinate groups $X^{(1)},\ldots,X^{(n-2)}$. More explicitly, if $B$ and $C$ have the new input and output basis vectors as columns in the old bases, then an input coordinate vector changes from $y$ to $\widetilde y$ by $y=B\widetilde y$, while the output coordinate vector changes by $g\mapsto C^{-1}g$. Hence the coefficient tensor is transformed by $B^{\mathsf T}$ on the last input leg and $C^{-1}$ on the output leg. Both maps are invertible. Thus the displayed expression represents a tensor locally isomorphic to the original one, with every shared coordinate group unchanged.
		
		We now remove only the middle sum. Make the coordinate substitutions $\widetilde Y_j\mapsto\varepsilon^{-1}\widetilde Y_j$ for $0\le j\le b-2$ and $\widetilde Z_r\mapsto\varepsilon\widetilde Z_r$ for $0\le r\le S+b-2$, leaving all other coordinate symbols unchanged. The three sums acquire factors $1,\varepsilon,1$, respectively. There is no term $\widetilde Y_j\widehat Z_r$, since a kernel input has no visible output, so the resulting expression is polynomial in $\varepsilon$. Its constant coefficient consists exactly of the first and third sums. By Remark~\ref{rem:weights}, the tensor represented by this coefficient is a degeneration of the original tensor.
		
		In the first sum, the last-input length is $b+1$ and the output length is $S+b+1$; its coefficients $c_{r-i}$ are exactly those of $\Poly_n(a_1,\ldots,a_{n-2},b+1)$. In the third, the corresponding lengths are $b-1$ and $S+b-1$, giving $\Poly_n(a_1,\ldots,a_{n-2},b-1)$. Their last two coordinate sectors are disjoint and their shared coordinates remain aligned. By Definition~\ref{def:partial-direct-sum}, the retained tensor is their stated $\boxplus_e$ sum. This proves the degeneration.
	\end{proof}
	
	It remains to turn this tensor comparison into an additive inequality. The entropy inequality supplies a weighted geometric mean of the two blocks; choosing its weights in proportion to their profile values converts that mean into their sum.
	
	\begin{lemma}[discrete concavity]
		\label{lem:concavity}
		For every integer $n\ge3$, every $\phi\in\Phi_{n,\F}$, all positive integers $a_1,\ldots,a_{n-2}$, and every integer $b\ge2$,
		\[
		2P_\phi(a_1,\ldots,a_{n-2},b)
		\ge P_\phi(a_1,\ldots,a_{n-2},b+1)
		+P_\phi(a_1,\ldots,a_{n-2},b-1).
		\]
		The same inequality holds in every input coordinate.
	\end{lemma}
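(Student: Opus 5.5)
The plan is to feed the degeneration of Lemma~\ref{lem:concavity-degeneration} into the entropy inequality of Corollary~\ref{cor:entropy} for every leg permutation, then average with the profile normalization. Write $T_b=\Poly_n(a_1,\ldots,a_{n-2},b)$, $T_{b\pm1}$ similarly, and $e=(n-1,n)$. Fix $\pi\in S_n$. Since $\phi_\pi$ is a spectral point, monotonicity under degeneration (Lemma~\ref{lem:degeneration-asymptotic}) applied to Lemma~\ref{lem:concavity-degeneration} gives
\[
\phi_\pi(T_b)\,\phi_\pi(E^{(2)}_{n-1,n})\ge\phi_\pi\bigl(T_{b+1}\boxplus_e T_{b-1}\bigr).
\]
Here the left side uses multiplicativity, and $\phi_\pi(E^{(2)}_{n-1,n})=2^{P^{(\phi_\pi)}_{n-1,n}}$ by Definition~\ref{def:spectral-edge-weights}. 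Both blocks are nonzero with aligned shared coordinates, so Corollary~\ref{cor:entropy}, applied with $k=n$ to the spectral point $\phi_\pi$ and $\theta=(\theta,1-\theta)$, bounds the right side below by $\exp\bigl(P^{(\phi_\pi)}_{n-1,n}H(\theta,1-\theta)\bigr)\phi_\pi(T_{b+1})^\theta\phi_\pi(T_{b-1})^{1-\theta}$.

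Next I multiply these inequalities over all $\pi\in S_n$ and raise both sides to the power $1/(n!t_\phi)$. By~\eqref{eq:edge-average}, $\sum_\pi P^{(\phi_\pi)}_{n-1,n}=n!t_\phi$. The edge factor therefore becomes $2$ and the entropy factor becomes $e^{H(\theta,1-\theta)}$. By Definition~\ref{def:profile}, the remaining factors become the profile values, giving
\[
2P_\phi(a_1,\ldots,a_{n-2},b)\ge e^{H(\theta,1-\theta)}A^\theta B^{1-\theta},
\]
with $A=P_\phi(\ldots,b+1)$ and $B=P_\phi(\ldots,b-1)$, both positive by Lemma~\ref{lem:profile-basic}.

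Then I choose $\theta=A/(A+B)$. A direct computation gives
\[
e^{H}A^\theta B^{1-\theta}=\exp\bigl(-\theta\ln\theta-(1-\theta)\ln(1-\theta)+\theta\ln A+(1-\theta)\ln B\bigr)=\exp\bigl(\ln(A+B)\bigr)=A+B,
\]
since $\ln A-\ln\theta=\ln B-\ln(1-\theta)=\ln(A+B)$. This yields the displayed inequality in the last coordinate. The symmetry in Lemma~\ref{lem:profile-basic} then transfers it to every input coordinate.

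The only delicate point is checking that Corollary~\ref{cor:entropy} applies to each $\phi_\pi$ with the same pair $e=(n-1,n)$. The permutation enters only through the spectral point, not through the tensor, so the aligned partial direct sum from Lemma~\ref{lem:concavity-degeneration} is used unchanged. The normalization then relies entirely on~\eqref{eq:edge-average}.
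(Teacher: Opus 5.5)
Your proposal is correct and follows the paper's proof essentially step for step. You apply degeneration monotonicity and the entropy inequality to each $\phi_\pi$ on the fixed pair $(n-1,n)$, multiply over $S_n$ with a common $\theta$, and normalize via~\eqref{eq:edge-average}. You then choose $\theta=A/(A+B)$ to turn the weighted geometric mean into $A+B$, and finish with the symmetry from Lemma~\ref{lem:profile-basic}.
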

	\begin{proof}
		Write $C=\Poly_n(a_1,\ldots,a_{n-2},b)$ and $C_\pm=\Poly_n(a_1,\ldots,a_{n-2},b\pm1)$, and put $e=(n-1,n)$. Lemma~\ref{lem:concavity-degeneration} gives the degeneration $C\Kprod E^{(2)}_{n-1,n}\rightsquigarrow C_+\boxplus_e C_-$ with aligned shared coordinates. For every $\pi\in S_n$ and every real number $0\leq\theta\leq1$, apply Corollary~\ref{cor:entropy} to this partial sum with spectral point $\phi_\pi$ and probability vector $(\theta,1-\theta)$. Combining its conclusion with degeneration monotonicity from Lemma~\ref{lem:degeneration-asymptotic} gives
		\[
		2^{P_{n-1,n}^{(\phi_\pi)}}\phi_\pi(C)
		\geq
		\exp\bigl(P_{n-1,n}^{(\phi_\pi)}H(\theta,1-\theta)\bigr)
		\phi_\pi(C_+)^\theta\phi_\pi(C_-)^{1-\theta}.
		\]
		The factor on the left is the spectral value of the additional edge tensor $E^{(2)}_{n-1,n}$. The entropy factor on the right uses the same pair of legs, because these are precisely the two unshared legs of the retained blocks.
		
		Use one common value of $\theta$ for all permutations. Multiply over all $\pi\in S_n$, take the power $1/(n!t_\phi)$, and apply~\eqref{eq:edge-average}. By the definition of $P_\phi$, the resulting inequality is
		\[
		2P_\phi(a_1,\ldots,a_{n-2},b)
		\geq e^{H(\theta,1-\theta)}
		P_\phi(a_1,\ldots,a_{n-2},b+1)^\theta
		P_\phi(a_1,\ldots,a_{n-2},b-1)^{1-\theta}.
		\]
		Put $A=P_\phi(a_1,\ldots,a_{n-2},b+1)$ and $B=P_\phi(a_1,\ldots,a_{n-2},b-1)$. Both are positive by Lemma~\ref{lem:profile-basic}, so we may choose $\theta=A/(A+B)$. Then $A/\theta=B/(1-\theta)=A+B$, and
		\[
		e^{H(\theta,1-\theta)}A^\theta B^{1-\theta}
		=\left(\frac A\theta\right)^\theta
		\left(\frac B{1-\theta}\right)^{1-\theta}
		=A+B.
		\]
		Substituting this identity gives $2P_\phi(a_1,\ldots,a_{n-2},b)\ge A+B$, which is the claimed discrete concavity in the last input. Symmetry from Lemma~\ref{lem:profile-basic} gives the same conclusion in every input coordinate.
	\end{proof}
	
	\subsection{Shifted tripling}
	\label{subsec:tripling}
	
	We next compare an input length $h$ with $3h+S$, where $S$ is the sum of the other input degrees. The resulting factor of three provides a lower bound on a long increment of the profile; concavity will then give a lower bound on a single increment. The construction extends~\cite[Lemma~4.2]{openai2026}.

	For the formal comparison, it is useful to name the middle block without changing its shared coordinates.
	\begin{definition}[the middle polynomial block]
		\label{def:middle-polynomial-block}
		Let $n\ge3$, let $a_1,\ldots,a_{n-2},h$ be positive integers, and put $S=\sum_{j=1}^{n-2}(a_j-1)$. Define $B_n(a_1,\ldots,a_{n-2},h)$ in dimensions $(a_1,\ldots,a_{n-2},h+S,h)$ by
		\[
		\bigl(B_n(a_1,\ldots,a_{n-2},h)\bigr)_{i_1,\ldots,i_{n-2},y,r}
		=\begin{cases}
			1,&y+\sum_{j=1}^{n-2}i_j=S+r,\\
			0,&\text{otherwise}.
		\end{cases}
		\]
		Here $0\le i_j<a_j$ for $1\le j\le n-2$, $0\le y<h+S$, and $0\le r<h$. The shared indices $i_j$ use the same monomial coordinates as $\Poly_n$.
	\end{definition}
	
	\Needspace{13\baselineskip}
	\begin{lemma}[three-sector polynomial degeneration]
		\label{lem:tripling-degeneration}
		Let $n\ge3$ and let $a_1,\ldots,a_{n-2},h$ be positive integers. Put $S=\sum_{j=1}^{n-2}(a_j-1)$, $e=(n-1,n)$, and $C=\Poly_n(a_1,\ldots,a_{n-2},h)$. With the aligned middle block from Definition~\ref{def:middle-polynomial-block},
		\[
		\Poly_n(a_1,\ldots,a_{n-2},3h+S)
		\rightsquigarrow C\boxplus_e B_n(a_1,\ldots,a_{n-2},h)\boxplus_e C.
		\]
		As an individual tensor, $B_n(a_1,\ldots,a_{n-2},h)$ is locally isomorphic to $\sigma C$, where $\sigma$ exchanges the last two legs.
	\end{lemma}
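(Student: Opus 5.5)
The plan is a weight degeneration with three coordinate sectors on the last input leg and three on the output leg, leaving every shared leg untouched. Put $L=3h+S$. By Definition~\ref{def:polynomial}, the output of $\Poly_n(a_1,\ldots,a_{n-2},L)$ has dimension $L+S=3h+2S$. In the formal multilinear notation of Definition~\ref{def:multilinear-notation}, its terms are $\bigl(\prod_j X^{(j)}_{i_j}\bigr)Y_yZ_{y+\sigma}$ with $\sigma=\sum_j i_j\in[0,S]$. Partition the last input into
\[
Y_0=[0,h),\qquad Y_1=[h,2h+S),\qquad Y_2=[2h+S,3h+S),
\]
and the output into
\[
Z_0=[0,h+S),\qquad Z_1=[h+S,2h+S),\qquad Z_2=[2h+S,3h+2S).
\]
I would then list, using only $0\le\sigma\le S$, which sector pairs actually occur. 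An input in $Y_0$ gives an output below $h+S$, so it lies in $Z_0$. An input in $Y_2$ gives an output at least $2h+S$, so it lies in $Z_2$. An input in $Y_1$ gives an output in $[h,2h+2S)$, which can meet $Z_0$, $Z_1$ and $Z_2$. Hence the only off-diagonal pairs that occur are $(Y_1,Z_0)$ and $(Y_1,Z_2)$.

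Next I assign integer weights as in Remark~\ref{rem:weights}: weight $-1$ on the coordinates in $Y_0$ and $Y_2$, and $0$ on $Y_1$; weight $1$ on $Z_0$ and $Z_2$, and $0$ on $Z_1$; weight $0$ on all shared coordinates. Each diagonal pair then has total weight $0$, and both occurring off-diagonal pairs have weight $1$. All totals are nonnegative, so the constant coefficient is exactly the sum of the diagonal sector terms, and it is a degeneration.

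I would then identify the three diagonal blocks after shifting local indices within each sector.
\begin{itemize}
\item Sector $0$ has inputs $y<h$ and outputs $y+\sigma<h+S$, which is exactly $C$.
\item Sector $2$: writing $y=2h+S+y'$, the output becomes $2h+S+(y'+\sigma)$, which is again $C$ after the shift.
\item Sector $1$: writing $y=h+y'$ with $y'<h+S$ and output $h+S+r$ with $r<h$, the condition $y+\sigma=h+S+r$ becomes $y'+\sigma=S+r$. This is precisely Definition~\ref{def:middle-polynomial-block}.
\end{itemize}
The sectors are disjoint on the last two legs and the shared monomial coordinates are unchanged. By Definition~\ref{def:partial-direct-sum}, the retained tensor is therefore $C\boxplus_e B_n\boxplus_e C$ in the stated order.

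For the local isomorphism $B_n\cong\sigma C$, reverse the coordinates on the last two legs only: $y''=h+S-1-y$ and $r''=h-1-r$. The defining condition $y+\sigma=S+r$ then becomes $y''=r''+\sigma$. Here $r''\in[0,h)$ plays the role of the length-$h$ input of $C$, and $y''\in[0,h+S)$ plays the role of its output. With these two legs exchanged, this is exactly $\sigma C$, and the dimensions $(a_1,\ldots,a_{n-2},h+S,h)$ match.

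The main obstacle I expect is the careful verification that no forbidden cross pair is realized, in particular at the sector boundaries. This rests entirely on the bound $0\le\sigma\le S$, which is exactly why the shift $S$ is built into both $L$ and the sector widths. The identification of the middle block also needs care to keep the shared coordinates aligned rather than reversed.
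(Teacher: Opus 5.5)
Your proposal is correct and follows the paper's proof. You use the same sector pairs on the last two legs, and your weighting gives the same total on every term as the paper's (the paper puts $+1$ on $Y_1$ and $-1$ on $Z_1$, which differs from yours by a constant on each of the two legs); the only variation is that, for $B_n\cong\sigma C$, the paper reflects the shared input bases $i_j\mapsto a_j-1-i_j$ while you reverse the last two legs, and both routes yield $y''=r''+\sum_j i_j$ and are equally valid.
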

	\begin{proof}
		We first choose three pairs of coordinate sectors for the retained blocks.
		
		Set $B=3h+S$. Index a supported entry of $\Poly_n(a_1,\ldots,a_{n-2},B)$ by its shared input indices $i_1,\ldots,i_{n-2}$, its last-input index $y$, and its output index $z$. The defining conditions are
		\[
		0\leq i_j<a_j\; (1\leq j\leq n-2),
		\quad 0\leq y<B,
		\quad z=y+\sum_{j=1}^{n-2}i_j.
		\]
		In particular, $0\leq\sum_{j=1}^{n-2}i_j\leq S$ and $0\leq z<B+S$. Introduce three last-input lengths $b_0=b_2=h$ and $b_1=h+S$, and three output lengths $c_0=c_2=h+S$ and $c_1=h$. For $r\in\{0,1,2\}$, put
		\[
		\beta_r=\sum_{t=0}^{r-1}b_t,
		\quad \gamma_r=\sum_{t=0}^{r-1}c_t,
		\quad Y_r=\{\beta_r+i\mid 0\le i<b_r\},
		\quad Z_r=\{\gamma_r+j\mid 0\le j<c_r\}.
		\]
		The sums for $r=0$ are empty. These are disjoint partitions of $[B]_0$ and $[B+S]_0$, since $\sum_{r=0}^2b_r=B$ and $\sum_{r=0}^2c_r=B+S$. Their offsets satisfy $\beta_0=\gamma_0=0$, $\beta_1=h$, $\gamma_1=h+S$, and $\beta_2=\gamma_2=2h+S$.
		
		The support equation bounds the output index between the last-input index and that index plus $S$. With the chosen offsets, an outer input sector can reach only its matching output sector. The middle input sector may also reach the outer output sectors, and we remove precisely these interactions by local weights. Give a last-input coordinate weight $+1$ when it belongs to $Y_1$, and weight zero otherwise. Give an output coordinate weight $-1$ when it belongs to $Z_1$, and weight zero otherwise. All shared coordinates have weight zero. If $y\in Y_0$, then $y<h$ and $z\le h-1+S$, so $z\in Z_0$. If $y\in Y_2$, then $2h+S\le y<B$ and $2h+S\le z<B+S$, so $z\in Z_2$. Consequently, $z\in Z_1$ requires $y\in Y_1$.
		
		It follows that every supported term has nonnegative total weight. Terms with $y\in Y_r$ and $z\in Z_r$, for some $r\in\{0,1,2\}$, have weight zero. Every other supported term has $y\in Y_1$ and $z\in Z_0\cup Z_2$, and hence has weight $+1$. In the formal multilinear notation of Definition~\ref{def:multilinear-notation}, make the coordinate substitutions $x\mapsto\varepsilon^{w(x)}x$ with these weights. The resulting expression is polynomial in $\varepsilon$, and its constant coefficient consists exactly of the three matched blocks in the sectors $Y_r\times Z_r$. By Remark~\ref{rem:weights}, the tensor represented by this coefficient is a degeneration of the original tensor.
		We now identify these three blocks in their original shared coordinates.
		
		In sector $r=0$, write $y=y'$ and $z=z'$. In sector $r=2$, write $y=2h+S+y'$ and $z=2h+S+z'$. In either case, the support condition becomes
		\[
		0\le y'<h,
		\quad 0\le z'<h+S,
		\quad z'=y'+\sum_{j=1}^{n-2}i_j.
		\]
		Every choice of the shared input indices and $y'\in[h]_0$ gives exactly one allowed $z'$. Thus, after these translations of the last two coordinates, each outer block equals $C$ in the original shared coordinates.
		
		In the middle sector, put $y'=y-h$ and $r=z-(h+S)$. The support equation becomes
		\[
		y'+\sum_{j=1}^{n-2}i_j=S+r,
		\qquad 0\le y'<h+S,\quad 0\le r<h.
		\]
		Thus this block is exactly $B_n(a_1,\ldots,a_{n-2},h)$ in its original shared coordinates, by Definition~\ref{def:middle-polynomial-block}. To identify its individual isomorphism type, define $\bar i_j=a_j-1-i_j$ for $1\leq j\leq n-2$. Replacing $i_j$ by $a_j-1-\bar i_j$ in the support equation gives
		\[
		y=h+r+\sum_{j=1}^{n-2}\bar i_j,
		\quad z=h+S+r.
		\]
		Conversely, choose arbitrary integers $0\leq\bar i_j<a_j$ for $1\leq j\leq n-2$ and $0\leq r<h$. These equations give $h\leq y\leq2h+S-1$ and $h+S\leq z\leq2h+S-1$, so they specify a supported entry in the middle block. Therefore the displayed parametrization describes all and only its supported entries, each with coefficient one.
		
		Reflect each shared input basis, retaining the translated last-input and output coordinates. The equation becomes $y'=r+\sum_{j=1}^{n-2}\bar i_j$. This is the defining support equation of $C$ with its last two legs exchanged: $r$ is its length-$h$ input, while $y'$ is its length-$(h+S)$ output. Let $\sigma\in S_n$ denote that exchange. Thus $B_n(a_1,\ldots,a_{n-2},h)\cong\sigma C$. We use this isomorphism to evaluate the middle block as an individual tensor; the three-block partial sum retains its original aligned shared coordinates.
		
		The retained blocks are therefore $C$, $B_n(a_1,\ldots,a_{n-2},h)$ from Definition~\ref{def:middle-polynomial-block}, and $C$, with aligned shared coordinates. Thus
		\[
		\Poly_n(a_1,\ldots,a_{n-2},3h+S)
		\rightsquigarrow C\boxplus_e B_n(a_1,\ldots,a_{n-2},h)\boxplus_e C,
		\qquad e=(n-1,n),
		\]
		which proves the degeneration together with the individual isomorphism just established.
	\end{proof}
	
	The three blocks need not have the same value under one spectral point: the middle block exchanges the last two legs. Averaging over all leg permutations removes that distinction. The uniform entropy weight $1/3$ then yields the factor of three.
	
	\begin{lemma}[shifted tripling inequality]
		\label{lem:tripling}
		For every integer $n\ge3$, every $\phi\in\Phi_{n,\F}$, and all positive integers $a_1,\ldots,a_{n-2},h$, put $S=\sum_{j=1}^{n-2}(a_j-1)$. Then
		\[
		P_\phi(a_1,\ldots,a_{n-2},3h+S)
		\ge3P_\phi(a_1,\ldots,a_{n-2},h).
		\]
		By symmetry, the analogous inequality holds in every input coordinate, with $S$ the sum of the other input degrees.
	\end{lemma}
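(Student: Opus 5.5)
The plan is to mirror the proof of Lemma~\ref{lem:concavity}, with the two-block degeneration replaced by the three-sector degeneration of Lemma~\ref{lem:tripling-degeneration}. Write $C=\Poly_n(a_1,\ldots,a_{n-2},h)$, $B=B_n(a_1,\ldots,a_{n-2},h)$ and $e=(n-1,n)$. Lemma~\ref{lem:tripling-degeneration} gives
\[
\Poly_n(a_1,\ldots,a_{n-2},3h+S)\rightsquigarrow C\boxplus_e B\boxplus_e C
\]
with aligned shared coordinates. All three blocks are nonzero. Hence for each $\pi\in S_n$ we may apply Corollary~\ref{cor:entropy} to this partial sum with the spectral point $\phi_\pi$ and the probability vector $(1/3,1/3,1/3)$, and combine it with degeneration monotonicity from Lemma~\ref{lem:degeneration-asymptotic}. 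This yields
\[
\phi_\pi\bigl(\Poly_n(a_1,\ldots,a_{n-2},3h+S)\bigr)\ge 3^{P_{n-1,n}^{(\phi_\pi)}}\phi_\pi(C)^{2/3}\phi_\pi(B)^{1/3}.
\]
Here the entropy factor is $\exp(P_{n-1,n}^{(\phi_\pi)}\ln 3)$.

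Next, $B$ is locally isomorphic to $\sigma C$ as an individual tensor, where $\sigma$ is the transposition of legs $n-1,n$. Since $\phi_\pi$ is invariant under local isomorphism, this gives $\phi_\pi(B)=\phi(\pi\sigma C)=\phi_{\pi\sigma}(C)$. We multiply the displayed inequalities over all $\pi\in S_n$. Because $\pi\mapsto\pi\sigma$ is a bijection of $S_n$, the product $\prod_\pi\phi_\pi(B)^{1/3}$ equals $\prod_\pi\phi_\pi(C)^{1/3}$. The right side therefore becomes $3^{\sum_\pi P_{n-1,n}^{(\phi_\pi)}}\prod_\pi\phi_\pi(C)$. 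By~\eqref{eq:edge-average}, the exponent of $3$ is $n!t_\phi$. Raising both sides to the power $1/(n!t_\phi)$ then gives $P_\phi(a_1,\ldots,a_{n-2},3h+S)\ge 3P_\phi(a_1,\ldots,a_{n-2},h)$ by Definition~\ref{def:profile}.

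For the other input coordinates, we use the symmetry of $P_\phi$ from Lemma~\ref{lem:profile-basic}. Permuting the arguments moves any coordinate into the last position, and $S$ becomes the sum of the other input degrees.

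The only delicate point is that the entropy inequality must be applied to the aligned partial sum containing $B$ itself, not to $\sigma C$. This matters because $B$ retains the shared coordinates required by Definition~\ref{def:partial-direct-sum}. The isomorphism $B\cong\sigma C$ is used afterwards, only to evaluate the spectral value of the individual block. This is legitimate because Corollary~\ref{cor:entropy} involves only the individual values $\phi(A_i)$. The rest of the argument is the permutation bookkeeping.
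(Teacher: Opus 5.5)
Your proposal is correct and follows essentially the same route as the paper's proof. Both apply Corollary~\ref{cor:entropy} with weights $(1/3,1/3,1/3)$ to the aligned three-block sum from Lemma~\ref{lem:tripling-degeneration}, evaluate the middle block through $B_n\cong\sigma C$, and cancel the $\sigma$-twisted factors using the bijection $\pi\mapsto\pi\sigma$ of $S_n$. The paper then finishes with~\eqref{eq:edge-average} and symmetry of the profile, exactly as you do.
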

	\begin{proof}
		Put $B=3h+S$, $C=\Poly_n(a_1,\ldots,a_{n-2},h)$, and $e=(n-1,n)$, and let $\sigma$ exchange the last two legs. Lemma~\ref{lem:tripling-degeneration} degenerates $\Poly_n(a_1,\ldots,a_{n-2},B)$ to the aligned partial sum $C\boxplus_e B_n(a_1,\ldots,a_{n-2},h)\boxplus_e C$. For every $\pi\in S_n$, apply Corollary~\ref{cor:entropy} to this partial sum with spectral point $\phi_\pi$ and probability vector $(1/3,1/3,1/3)$, whose entropy is $\ln3$. The same degeneration lemma identifies the middle block as individually locally isomorphic to $\sigma C$, so its spectral value is $\phi_\pi(\sigma C)$. Combining the entropy inequality with degeneration monotonicity from Lemma~\ref{lem:degeneration-asymptotic} gives
		\[
		\phi_\pi\bigl(\Poly_n(a_1,\ldots,a_{n-2},B)\bigr)
		\geq3^{P_{n-1,n}^{(\phi_\pi)}}
		\phi_\pi(C)^{2/3}\phi_\pi(\sigma C)^{1/3}.
		\]
		Since $\pi\mapsto\pi\sigma$ is a bijection of $S_n$,
		\[
		\prod_{\pi\in S_n}\phi_\pi(\sigma C)
		=\prod_{\pi\in S_n}\phi_\pi(C).
		\]
		Multiplying the spectral inequalities over all $\pi\in S_n$, and then taking the power $1/(n!t_\phi)$, therefore gives
		\[
		P_\phi(a_1,\ldots,a_{n-2},B)
		\geq
		3^{\frac{1}{n!t_\phi}\sum_{\pi\in S_n}P_{n-1,n}^{(\phi_\pi)}}
		\left(\prod_{\pi\in S_n}\phi_\pi(C)\right)^{1/(n!t_\phi)}
		=3P_\phi(a_1,\ldots,a_{n-2},h),
		\]
		where the final equality uses~\eqref{eq:edge-average}. Since $B=3h+S$, this proves the shifted tripling inequality in the last input. Symmetry from Lemma~\ref{lem:profile-basic} proves the assertion for every other input coordinate.
	\end{proof}
	
	\subsection{Diagonal growth}
	\label{sec:growth}
	
	We now extend the diagonal-growth step of OpenAI~\cite[Section~5 and Lemma~5.1]{openai2026} to $n-1$ input coordinates. Discrete concavity bounds a long increment by its first step, while shifted tripling gives a lower bound on that long increment. We apply the resulting estimate along a path that increases each coordinate once. Keeping track of all $n-1$ steps gives the exponent $2(n-1)/n$. Comparing this growth with the linear rank bound from Subsection~\ref{sec:profile} then bounds the average spectral edge weight.
	
	\begin{lemma}[diagonal growth]
		\label{lem:diagonal-growth}
		Let $n\geq 3$ be an integer. Suppose that $F:\N_{>0}^{n-1}\to\mathbb R_{>0}$ is symmetric, that $F(1,\ldots,1)=1$, and that it satisfies the concavity and shifted tripling inequalities in Lemmas~\ref{lem:concavity} and~\ref{lem:tripling}, with $F$ in place of $P_\phi$. Then, for every positive integer $a$,
		\[
		F(\underbrace{a,\ldots,a}_{n-1\text{ coordinates}})
		\geq a^{2(n-1)/n}.
		\]
	\end{lemma}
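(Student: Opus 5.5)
The plan is to follow the outline of Part 5 in the introduction: increase the $n-1$ coordinates from $a$ to $a+1$ one at a time, bound the multiplicative gain of each step by combining shifted tripling with concavity, and telescope in $a$. Fix $a\ge1$. For $0\le m\le n-1$, let $x^{(m)}$ be the point with $m$ coordinates equal to $a+1$ and the remaining $n-1-m$ equal to $a$. By symmetry of $F$ we may order the coordinates as we like. Step $m\to m+1$ changes one coordinate from $a$ to $a+1$, and the other coordinates are $m$ copies of $a+1$ and $n-2-m$ copies of $a$. Write $f(b)$ for $F$ with that coordinate equal to $b$ and the others fixed, and let $S$ be the sum of the other input degrees:
\[
S=m\,a+(n-2-m)(a-1)\le(n-2)a .
\]

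\textbf{Step bound.} Shifted tripling with $h=a$ gives $f(3a+S)\ge3f(a)$. Concavity (Lemma~\ref{lem:concavity}, applicable for $b\ge2$ in the varying coordinate) says that the forward differences $f(b+1)-f(b)$ are nonincreasing for $b\ge1$. Hence
\[
f(3a+S)-f(a)=\sum_{b=a}^{3a+S-1}\bigl(f(b+1)-f(b)\bigr)\le(2a+S)\bigl(f(a+1)-f(a)\bigr).
\]
Combining the two inequalities gives $2f(a)\le(2a+S)(f(a+1)-f(a))$, so
\[
f(a+1)\ge\Bigl(1+\frac{2}{2a+S}\Bigr)f(a)\ge\Bigl(1+\frac{2}{na}\Bigr)f(a),
\]
using $2a+S\le na$ and positivity of $f$.

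\textbf{Telescoping.} Multiplying the $n-1$ steps gives $D_{a+1}\ge(1+2/(na))^{n-1}D_a$, where $D_a=F(a,\ldots,a)$. The remaining task is to show
\[
\Bigl(1+\frac{2}{na}\Bigr)^{n-1}\ge\Bigl(1+\frac1a\Bigr)^{2(n-1)/n},
\]
which is equivalent to $1+\frac{2}{na}\ge(1+\frac1a)^{2/n}$. Since $2/n\le1$ for $n\ge3$, this follows from Bernoulli's inequality in the form $(1+x)^r\le1+rx$ for $0\le r\le1$ and $x\ge0$. Telescoping from $D_1=F(1,\ldots,1)=1$ then yields $D_a\ge a^{2(n-1)/n}$.

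\textbf{Main obstacle.} The delicate point is the validity of the concavity telescoping at the left endpoint, and the accounting of $S$ along the path. The differences used start at $b=a$. Monotonicity of $f(b+1)-f(b)$ for $b\ge a$ only needs concavity at $b\ge a+1\ge2$, which the lemma provides, so even $a=1$ is fine. Shifted tripling needs $h=a\ge1$, which holds. I would check carefully that $S$ is computed as the sum of the other input \emph{degrees} (lengths minus one), consistent with the statement of Lemma~\ref{lem:tripling}. The bound $S\le(n-2)a$ holds with room to spare, since each degree is at most $a$.
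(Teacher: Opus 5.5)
Your proposal is correct and follows the paper's proof essentially step for step: the same coordinate-by-coordinate path with $S\le(n-2)a$, the same combination of shifted tripling and concavity giving the factor $1+2/(na)$ per step, and the same telescoping from $D_1=1$. The only cosmetic difference is that you use Bernoulli's inequality in the form $(1+1/a)^{2/n}\le 1+2/(na)$, whereas the paper uses the equivalent form $(1+2/(na))^{n/2}\ge 1+1/a$.
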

	\begin{proof}
		For every positive integer $a$, write $D_a=F(a,\ldots,a)$. We first compare $D_{a+1}$ with $D_a$. Start from $(a,\ldots,a)$ and increase its coordinates one at a time until reaching $(a+1,\ldots,a+1)$.
		
		Fix an integer $j$ with $1\leq j\leq n-1$, and consider the step that increases coordinate $j$. The earlier coordinates already equal $a+1$, and the later coordinates still equal $a$. Define
		\[
		F_j(b)=F(\underbrace{a+1,\ldots,a+1}_{j-1\text{ coordinates}},
		b,\underbrace{a,\ldots,a}_{n-1-j\text{ coordinates}})
		\quad(b\in\N_{>0}).
		\]
		The total degree in the other coordinates is
		\[
		S_j=\sum_{i=1}^{j-1}a+\sum_{i=j+1}^{n-1}(a-1)
		=(j-1)a+(n-1-j)(a-1)\leq(n-2)a.
		\]
		By symmetry, the concavity and shifted tripling inequalities apply to coordinate $j$. Thus $F_j(3a+S_j)\geq3F_j(a)$, and concavity gives
		$F_j(h+1)-F_j(h)\leq F_j(h)-F_j(h-1)$ for every integer $h\geq2$.
		Thus every forward difference starting at an integer $h\geq a$ is at most
		$F_j(a+1)-F_j(a)$. In particular,
		\begin{align*}
			2F_j(a)
			&\leq F_j(3a+S_j)-F_j(a)\\
			&=\sum_{h=a}^{3a+S_j-1}\bigl(F_j(h+1)-F_j(h)\bigr)\\
			&\leq(2a+S_j)\bigl(F_j(a+1)-F_j(a)\bigr).
		\end{align*}
		Since $2a+S_j\leq na$ and $F_j(a)>0$, division by $2a+S_j$ gives
		\[
		F_j(a+1)\geq\left(1+\frac{2}{2a+S_j}\right)F_j(a)
		\geq\left(1+\frac{2}{na}\right)F_j(a).
		\]
		The values at successive steps agree: $F_1(a)=D_a$,
		$F_j(a+1)=F_{j+1}(a)$ for $1\leq j<n-1$, and
		$F_{n-1}(a+1)=D_{a+1}$. Multiplying the $n-1$ inequalities therefore gives
		\begin{equation}
			D_{a+1}\geq\left(1+\frac{2}{na}\right)^{n-1}D_a.
			\label{eq:diagonal-step}
		\end{equation}
		This argument also applies when $a=1$, since concavity at input length $2$ compares the first two forward differences.
		
		For every positive real number $x$ and every real $r\geq 1$, Bernoulli's inequality gives $(1+x)^r\geq1+rx$. With $x=2/(na)$ and $r=n/2$, it follows that
		\[
		\left(1+\frac{2}{na}\right)^{n/2}\geq 1+\frac1a.
		\]
		Raise this inequality to the positive power $2(n-1)/n$ and combine it with~\eqref{eq:diagonal-step}. We find
		\[
		\frac{D_{a+1}}{D_a}
		\geq\left(\frac{a+1}{a}\right)^{2(n-1)/n}.
		\]
		Finally, $D_1=1$. Multiplying the last inequality for $a=1,\ldots,A-1$ gives
		\[
		D_A\geq
		\prod_{a=1}^{A-1}\left(\frac{a+1}{a}\right)^{2(n-1)/n}
		=A^{2(n-1)/n}
		\]
		for every integer $A\geq 2$. The statement for $A=1$ follows directly from the normalization.
	\end{proof}
	
	\subsection{The complete graph bound and its consequences}
	\label{subsec:complete-graph-bound}
	
	We apply the preceding growth estimate to the symmetric spectral profile $P_\phi$. Its lower growth rate, compared with the interpolation upper bound from Subsection~\ref{sec:profile}, constrains the average edge weight $t_\phi$. Since a complete graph contains every pair of vertices as an edge, this average determines its value under $\phi$. Spectral duality then gives the desired asymptotic-rank upper bound, while a maximum cut gives the lower bound.
	\begin{theorem}[complete graph bound]
		\label{thm:complete-graph-bound}
		For every field $\F$ and every integer $n\geq2$,
		\[
		\left\lfloor\frac{n^2}{4}\right\rfloor
		\leq\log_2\aR_{\F}\bigl(\GT_2(K_n)\bigr)
		\leq\frac{n^2}{4}.
		\]
	\end{theorem}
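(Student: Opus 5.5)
The plan is to handle the upper bound through spectral duality and the lower bound through a maximum-cut flattening. For $n=2$ both sides equal one: $\GT_2(K_2)=E^{(2)}_{1,2}$ has rank two, its flattening across leg $1$ has rank two, and $\aR_{\F}=2$. So assume $n\ge3$.

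For the upper bound, fix $\phi\in\Phi_{n,\F}$ and apply Lemma~\ref{lem:diagonal-growth} with $F=P_\phi$. The hypotheses hold as follows. Symmetry and positivity come from Lemma~\ref{lem:profile-basic}. The normalization $F(1,\ldots,1)=1$ is the case $b=1$ of $P_\phi(1,\ldots,1,b)=b$. Concavity and shifted tripling in every coordinate are Lemmas~\ref{lem:concavity} and~\ref{lem:tripling}. This gives, for every positive integer $a$,
\[
a^{2(n-1)/n}\le P_\phi(a,\ldots,a)\le\bigl((n-1)(a-1)+1\bigr)^{1/t_\phi}\le\bigl((n-1)a\bigr)^{1/t_\phi},
\]
where the middle inequality is the interpolation bound of Lemma~\ref{lem:profile-basic}. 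Take logarithms, divide by $\ln a$, and let $a\to\infty$. The left exponent is $2(n-1)/n$, and the right side has exponent $1/t_\phi+o(1)$. Hence $t_\phi\le n/(2(n-1))$.

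Next, Lemma~\ref{lem:edge-weights} with $G=K_n$ and $q=2$ gives
\[
\log_2\phi(\GT_2(K_n))=\sum_{i<j}P^{(\phi)}_{i,j}=\binom n2 t_\phi\le\frac{n(n-1)}{2}\cdot\frac{n}{2(n-1)}=\frac{n^2}{4}.
\]
Maximizing over $\phi$ and invoking Theorem~\ref{thm:spectral-duality} yields $\log_2\aR_{\F}(\GT_2(K_n))\le n^2/4$. Equivalently, this is Lemma~\ref{lem:weighted-spectral-formula} applied to the adjacency matrix.

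For the lower bound, choose $S$ with $|S|=\lfloor n/2\rfloor$. The flattening of $\GT_2(K_n)$ separating legs in $S$ from the rest is a Kronecker product of the edge flattenings. Crossing edges contribute identity matrices of size $2$. Internal edges contribute rank-one factors, since they become vectors on one side. The flattening rank is therefore $2^{|S|(n-|S|)}=2^{\lfloor n^2/4\rfloor}$. Flattening rank is multiplicative under Kronecker powers and bounds rank from below, by Lemma~\ref{lem:flattening-properties} as used in Lemma~\ref{lem:polynomial-rank}. Taking $N$-th roots gives $\aR_{\F}\ge2^{\lfloor n^2/4\rfloor}$; this is Proposition~\ref{prop:cut}.

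The only step needing care is the limit comparison. Note that $t_\phi>0$ is guaranteed by Lemma~\ref{lem:edge-weights}, and the bound $(n-1)a$ is harmless asymptotically. All substantive work was already done in the diagonal growth lemma, so I do not expect a genuine obstacle here.
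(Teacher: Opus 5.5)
Your proposal is correct and follows the paper's proof essentially step for step. You compare diagonal growth of $P_\phi$ with the interpolation bound to get $t_\phi\le n/(2(n-1))$, then use multiplicativity on edge tensors and spectral duality for the upper bound, the balanced-cut flattening for the lower bound, and treat $n=2$ separately.
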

	\begin{proof}
		First suppose that $n\geq 3$, and fix an arbitrary spectral point $\phi\in\Phi_{n,\F}$. Let $P_\phi$ and $t_\phi$ be as in Definition~\ref{def:profile}. Lemma~\ref{lem:profile-basic} gives positivity, symmetry, and $P_\phi(1,\ldots,1)=1$. Lemmas~\ref{lem:concavity} and~\ref{lem:tripling} give the two inequalities required by Lemma~\ref{lem:diagonal-growth}. Therefore, for every positive integer $a$,
		\[
		a^{2(n-1)/n}
		\leq P_\phi(a,\ldots,a)
		\leq\bigl((n-1)(a-1)+1\bigr)^{1/t_\phi}.
		\]
		For $a\geq2$, take logarithms and divide by $\log a$. The expression
		$((n-1)(a-1)+1)/a$ tends to the positive constant $n-1$. Hence
		\[
		\frac{2(n-1)}{n}
		\leq\frac{1}{t_\phi}\,
		\frac{\log((n-1)(a-1)+1)}{\log a}
		\longrightarrow\frac{1}{t_\phi}.
		\]
		It follows that $t_\phi\leq n/(2(n-1))$. By the definition of the average spectral edge weight $t_\phi$ and Lemma~\ref{lem:edge-weights},
		\[
		\phi(\GT_2(K_n))
		=2^{\sum_{1\leq i<j\leq n}P_{i,j}^{(\phi)}}
		=2^{\binom n2\,t_\phi}
		\leq 2^{n^2/4}.
		\]
		This holds for every $\phi\in\Phi_{n,\F}$. Taking the maximum over $\Phi_{n,\F}$ and applying Theorem~\ref{thm:spectral-duality} gives $\aR_{\F}(\GT_2(K_n))\leq2^{n^2/4}$.
		
		For the lower bound, Proposition~\ref{prop:cut} gives
		$\log_2\aR_{\F}(\GT_2(K_n))\geq\MaxCut(K_n)$.
		A cut with $s$ vertices on one side has $s(n-s)$ crossing edges, where
		$1\leq s\leq n-1$. Since $s(n-s)=n^2/4-(s-n/2)^2$, its maximum is attained at
		$s=\lfloor n/2\rfloor$ or $s=\lceil n/2\rceil$, and equals $\lfloor n^2/4\rfloor$.
		This proves both inequalities for $n\geq3$.
		
		Finally, $\GT_2(K_2)$ is the $2\times2$ identity matrix viewed as a second-order tensor. Its $N$-th Kronecker power has matrix rank $2^N$, so its asymptotic rank is $2$. Thus both bounds are equal to $1$ when $n=2$, completing the proof.
	\end{proof}
	
	The next two consequences identify the questions from~\cite{cvz2019} answered by the theorem. The edge exponent $\tau_{\F}(G)$ is defined in Definition~\ref{def:edge-exponent}.
	
	\begin{corollary}[maximum-cut equality for even complete graphs]
		\label{cor:even-cliques}
		For every positive integer $m$,
		\[
		\log_2\aR_{\F}\bigl(\GT_2(K_{2m})\bigr)
		=\MaxCut(K_{2m})=m^2,
		\quad \tau_{\F}(K_{2m})=\frac{m}{2m-1}.
		\]
		Thus the equality asked for in~\cite[Question~1.3.2]{cvz2019} holds for every even complete graph.
	\end{corollary}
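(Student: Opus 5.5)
This is a direct specialization of Theorem~\ref{thm:complete-graph-bound} to $n=2m$, where the two bounds coincide. The plan has three short steps.

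First, apply Theorem~\ref{thm:complete-graph-bound} with $n=2m\ge2$. The lower bound is $\lfloor (2m)^2/4\rfloor=m^2$, and the upper bound is $(2m)^2/4=m^2$. Hence $\log_2\aR_{\F}(\GT_2(K_{2m}))=m^2$.

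Second, identify $m^2$ with $\MaxCut(K_{2m})$. A cut with $s$ vertices on one side has $s(2m-s)=m^2-(s-m)^2$ crossing edges, so the maximum over $1\le s\le 2m-1$ is attained at $s=m$ and equals $m^2$. This is the same computation recorded in the proof of Theorem~\ref{thm:complete-graph-bound}. It shows that the maximum-cut lower bound of Proposition~\ref{prop:cut} is attained, which is precisely the equality asked for in~\cite[Question~1.3.2]{cvz2019}.

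Third, compute the edge exponent. By Lemma~\ref{lem:weighted-spectral-formula}, $\Gamma_{\F}(K_{2m})=\log_2\aR_{\F}(\GT_2(K_{2m}))=m^2$. Since $|E(K_{2m})|=\binom{2m}{2}=m(2m-1)$, Definition~\ref{def:edge-exponent} gives
\[
\tau_{\F}(K_{2m})=\frac{m^2}{m(2m-1)}=\frac{m}{2m-1}.
\]

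There is no real obstacle. The only point to check is that the case $m=1$ is covered: Theorem~\ref{thm:complete-graph-bound} treats $n=2$ separately, and both bounds equal $1=m^2$ there.
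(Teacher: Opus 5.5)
Your proposal is correct and follows the paper's proof essentially verbatim: specialize Theorem~\ref{thm:complete-graph-bound} to $n=2m$ so that the bounds coincide at $m^2$, note that a balanced cut attains $m^2$ and is a maximum cut, and divide by $\binom{2m}{2}=m(2m-1)$ to get $\tau_{\F}(K_{2m})$. Your extra check of the case $m=1$ is a harmless addition.
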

	\begin{proof}
		Apply Theorem~\ref{thm:complete-graph-bound} with $n=2m$. Its lower and upper bounds both equal $m^2$. A balanced cut of $K_{2m}$ has $m^2$ edges and is a maximum cut. Finally, $K_{2m}$ has $\binom{2m}{2}=m(2m-1)$ edges, so dividing the exponent $m^2$ by this number gives the formula for $\tau_{\F}(K_{2m})$.
	\end{proof}
	
	\begin{corollary}[edge exponents below two thirds]
		\label{cor:two-thirds}
		For every integer $n\geq5$,
		\[
		\tau_{\F}(K_n)\leq\frac{n}{2(n-1)}<\frac23.
		\]
		In particular, the existence question in~\cite[Question~1.3.1]{cvz2019} has an affirmative answer for every $n\geq5$. Moreover,
		$\lim_{n\to\infty}\tau_{\F}(K_n)=1/2$.
	\end{corollary}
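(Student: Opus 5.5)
The plan is to read everything off Theorem~\ref{thm:complete-graph-bound} together with the definition of the edge exponent. By Lemma~\ref{lem:weighted-spectral-formula}, $\Gamma_{\F}(K_n)=\log_2\aR_{\F}(\GT_2(K_n))$, and $K_n$ has $\binom n2=n(n-1)/2$ edges. Dividing the upper bound $n^2/4$ of Theorem~\ref{thm:complete-graph-bound} by this count gives
\[
\tau_{\F}(K_n)\le\frac{n^2/4}{n(n-1)/2}=\frac{n}{2(n-1)}.
\]
The strict inequality $n/(2(n-1))<2/3$ is equivalent to $3n<4(n-1)$, that is, $n>4$. This gives the first display for every $n\ge5$.

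For the existence question of~\cite[Question~1.3.1]{cvz2019}, it remains to compare with the threshold that question uses, namely whether some $K_n$ has edge exponent below $2/3$ (the value $\tau(K_4)=4/6$ is the benchmark there). The displayed bound exhibits every $K_n$ with $n\ge5$ as a witness, so the answer is affirmative. I would quote the question's precise wording to confirm that $2/3$ is the intended threshold. This matching of the open question's formulation is the only step that is not a computation, so it is where I expect any difficulty to lie.

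For the limit, the lower bound of Theorem~\ref{thm:complete-graph-bound} gives
\[
\tau_{\F}(K_n)\ge\frac{\lfloor n^2/4\rfloor}{n(n-1)/2}\ge\frac{(n^2-1)/4}{n(n-1)/2}=\frac{n+1}{2n}.
\]
Both this lower bound and the upper bound $n/(2(n-1))$ tend to $1/2$, so squeezing gives $\lim_{n\to\infty}\tau_{\F}(K_n)=1/2$.
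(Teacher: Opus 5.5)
Your proof is correct and follows the paper's argument exactly: divide the bounds of Theorem~\ref{thm:complete-graph-bound} by $\binom n2$, reduce the strict inequality to $n>4$, and squeeze both bounds to get the limit. One small correction to your parenthetical: the $2/3$ threshold in the Christandl--Vrana--Zuiddam question is the per-edge exponent of matrix multiplication ($K_3$) when $\omega=2$, not a value attached to $K_4$; in fact $K_4$ cannot beat it, since its maximum cut already forces $\tau_{\F}(K_4)\ge 4/6$, and this is why the statement starts at $n\ge5$.
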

	\begin{proof}
		Divide the upper bound in Theorem~\ref{thm:complete-graph-bound} by
		$|E(K_n)|=n(n-1)/2$. The strict inequality
		$n/(2(n-1))<2/3$ is equivalent to $3n<4n-4$, or $n>4$.
		For the limit, both bounds in the theorem give
		\[
		\frac{\lfloor n^2/4\rfloor}{\binom n2}
		\leq\tau_{\F}(K_n)\leq\frac{n}{2(n-1)}.
		\]
		The two bounding quantities tend to $1/2$, which proves the claim.
	\end{proof}
	
	For odd $n$, the upper bound exceeds the maximum-cut lower bound by $1/4$.

	\Needspace{8\baselineskip}
	\section{From Complete Graphs to Weighted Graph Tensors}
	\label{sec:weighted-cp}\label{sec:cp}
	The complete-graph estimate can be used even when the target graph has unequal edge weights and little symmetry. Grouping vertices of a complete graph produces quadratic inequalities for arbitrary nonnegative vertex weights. Combining these inequalities through completely positive matrices gives an upper bound for every weighted graph tensor; elementary path restrictions strengthen it further.
	
	Throughout this section, $n\ge2$, the field $\F$ is arbitrary, and $J=\mathbf1\mathbf1^{\mathsf T}$ is the all-one matrix. We specialize the generalized covers of Definition~\ref{def:general-conic-cover} to this cost matrix, with no additional linear constraints beyond the edge-covering constraints. The weighted tensors, logarithmic edge-weight matrices, and exponents are those of Definitions~\ref{def:weighted-tensor} and~\ref{def:weighted-exponent}.
	
	\subsection{A quadratic consequence of the complete graph bound}
	\label{subsec:quadratic-constraint}
	We now apply the complete graph theorem at a different number of vertices. The key operation is to group several vertices into one tensor leg. A simple tensor remains simple after grouping, so this operation cannot increase rank, either before or after taking Kronecker powers.
	
	\begin{lemma}[quadratic spectral constraint]
		\label{lem:quadratic-spectral-constraint}
		For every $\phi\in\Phi_{n,\F}$ and every $a=(a_1,\ldots,a_n)\in\mathbb R_{\ge0}^n$,
		\begin{equation}
			\label{eq:quadratic-spectral-constraint}
			4\sum_{1\le i<j\le n}a_i a_jP_{i,j}^{(\phi)}
			\le\left(\sum_{i=1}^n a_i\right)^2.
		\end{equation}
	\end{lemma}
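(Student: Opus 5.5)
The plan is to prove \eqref{eq:quadratic-spectral-constraint} first for nonnegative integer vectors by a grouping restriction of a larger complete graph tensor, and then extend it by homogeneity and continuity. Both sides are homogeneous of degree two in $a$. Once the inequality holds for $\mathbb N^n$, it holds for $\mathbb Q_{\ge0}^n$ by clearing denominators. It then holds on $\mathbb R_{\ge0}^n$, since both sides are continuous polynomials in $a$.

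Fix $a\in\N^n$ with $A=\sum_i a_i$. If $A\le1$, the left side vanishes because at most one $a_i$ is nonzero, so assume $A\ge2$. Partition the vertex set of $K_A$ into groups $G_1,\ldots,G_n$ with $|G_i|=a_i$. Let $Q_a$ be the dimension matrix with $(Q_a)_{i,j}=2^{a_ia_j}$ for $i\ne j$ and diagonal one. The key claim is
\[
\WT(Q_a)^{\Kprod N}\preceq \text{(regrouping of)}\ \GT_2(K_A)^{\Kprod N}
\]
in the sense of ranks:
\[
\R_{\F}(\WT(Q_a)^{\Kprod N})\le\R_{\F}(\GT_2(K_A)^{\Kprod N}).
\]
To see this, regroup the $A$ legs of $\GT_2(K_A)$ into $n$ legs according to the partition, using a one-dimensional leg for an empty group. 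Regrouping sends simple tensors to simple tensors, so it does not increase rank. It also commutes with Kronecker powers.

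In the regrouped tensor, the binary edges between $G_i$ and $G_j$ combine, by the pairing of Lemma~\ref{lem:graph-multiplicative}, into one edge of dimension $2^{a_ia_j}$ on legs $i,j$. Each edge inside a group $G_i$ contributes a factor $e_0\otimes e_0+e_1\otimes e_1$ on the single leg $i$. This factor is a nonzero vector, and a local linear functional on leg $i$ maps it to $1$. Applying these functionals removes the internal edges. After a coordinate permutation, the result is exactly $\WT(Q_a)$. Hence $\aR_{\F}(\WT(Q_a))\le\aR_{\F}(\GT_2(K_A))\le 2^{A^2/4}$ by Theorem~\ref{thm:complete-graph-bound}.

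Now fix $\phi\in\Phi_{n,\F}$. Lemma~\ref{lem:edge-weights} gives
\[
\log_2\phi(\WT(Q_a))=\sum_{i<j}a_ia_jP^{(\phi)}_{i,j}.
\]
Theorem~\ref{thm:spectral-duality} (or the bound $\phi\le\R$ applied to powers) gives $\phi(\WT(Q_a))\le\aR_{\F}(\WT(Q_a))$. Combining these yields $4\sum_{i<j}a_ia_jP_{i,j}^{(\phi)}\le A^2$.

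The main obstacle is the bookkeeping in the grouping step. First, the regrouped $K_A$ tensor has order $n$, while Theorem~\ref{thm:complete-graph-bound} concerns order $A$, so the argument must compare ranks rather than spectral points. Second, the internal-edge contraction must be carried out together with the mixed-radix coordinate identification of Definition~\ref{def:weighted-tensor}. I will handle both by working with explicit sums of simple tensors: any rank decomposition of $\GT_2(K_A)^{\Kprod N}$, regrouped and then contracted, yields a decomposition of $\WT(Q_a)^{\Kprod N}$ of the same length.
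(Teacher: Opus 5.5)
Your proposal is correct and follows essentially the same route as the paper's proof. Both prove the integer case by grouping the legs of $K_A$ into $n$ parts, removing internal edges with local functionals to obtain $\WT(Q_a)$, and invoking the complete-graph bound through asymptotic rank and the spectral formula; both then extend to real vectors by rational scaling and continuity.
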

	\begin{proof}
		The products $a_i a_j$ arise by counting edges between groups of vertices. First suppose that the $a_i$ are nonnegative integers, and put $A=\sum_{i=1}^n a_i$. When $A\le1$, every product $a_i a_j$ with $i\ne j$ vanishes, so the claim is immediate. Assume $A\ge2$, partition the vertices of $K_A$ into groups of sizes $a_1,\ldots,a_n$, and group the tensor legs within each nonempty group. Add a one-dimensional leg for every empty group.
		
		There are $a_i a_j$ edges between distinct groups $i$ and $j$. Pairing their binary labels gives an edge tensor of dimension $2^{a_i a_j}$. After a local coordinate permutation, an edge whose endpoints lie in the same group contributes the vector $e_0^{(4)}+e_3^{(4)}$ on a factor of that group's single leg. Apply to this factor the coordinate functional $e_0^{(4)}\mapsto e_0^{(1)}$ and $e_r^{(4)}\mapsto0$ for $1\le r\le3$. Its value on the internal edge vector is $e_0^{(1)}$. On each grouped leg, combine these functionals with the identity maps on its external-edge factors. The resulting restriction is therefore $\WT(Q_a)$, where $(Q_a)_{i,i}=1$ for $1\le i\le n$ and $(Q_a)_{i,j}=2^{a_i a_j}$ for $i\ne j$. Grouping and restriction commute with Kronecker powers, and hence
		\[
		\log_2\aR_{\F}(\WT(Q_a))
		\le\log_2\aR_{\F}(\GT_2(K_A))
		\le\frac{A^2}{4}.
		\]
		The last step is Theorem~\ref{thm:complete-graph-bound}. By Lemma~\ref{lem:weighted-spectral-formula}, the left side is at least $\sum_{1\le i<j\le n}a_i a_jP_{i,j}^{(\phi)}$.
		
		This proves~\eqref{eq:quadratic-spectral-constraint} for integer vectors. For nonnegative rational $a_i$, multiply the vector by a common positive denominator and divide the resulting inequality by its square. Finally, approximate an arbitrary nonnegative real vector by nonnegative rational vectors and use continuity. This proves~\eqref{eq:quadratic-spectral-constraint} for all $a\in\mathbb R_{\ge0}^n$.
	\end{proof}
	
	The complete graph theorem thus controls every nonnegative rank-one weight matrix. The completely positive cone from Definition~\ref{def:matrix-cones} records nonnegative combinations of these rank-one matrices. For the spectral edge-weight matrix $P^{(\phi)}$, equation~\eqref{eq:quadratic-spectral-constraint} is precisely
	\begin{equation}
		\label{eq:spectral-copositivity}
		J-2P^{(\phi)}\in\COP_n.
	\end{equation}
	Indeed, $a^{\mathsf T}(J-2P^{(\phi)})a=(\sum_{i=1}^n a_i)^2-4\sum_{1\le i<j\le n}a_i a_jP_{i,j}^{(\phi)}$.
	
	\subsection{Completely positive covers}
	\label{subsec:cp-covers}
	The preceding quadratic inequality controls edge weights of the form $a_i a_j$ with $a\ge0$. A completely positive matrix sums such contributions and therefore supplies capacities on all vertex pairs. We require these capacities to cover the prescribed logarithmic edge weights, while charging for the entire matrix, including its diagonal.
	
	\Needspace{10\baselineskip}
	\begin{definition}[completely positive covering bound]
		\label{def:cp-cover}
		For a symmetric nonnegative matrix $W$ with zero diagonal, define
		\begin{equation}
			\label{eq:cp-cover}
			U_{\CP}(W)=\frac14\min_{\substack{Z\in\CP_n\\
					Z_{i,j}\ge W_{i,j}\ (1\le i<j\le n)}}\bigl(\langle J,Z\rangle\bigr).
		\end{equation}
		For a graph $G$ with adjacency matrix $A_G$, write $U_{\CP}(G)=U_{\CP}(A_G)$.
	\end{definition}
	
	This minimum exists. Feasibility follows by taking the outer products of the factors
	$\sqrt{W_{i,j}}(\mathbf e_i+\mathbf e_j)$ over all pairs $1\le i<j\le n$ with positive weight. Completely positive matrices have nonnegative entries, so a bound on $\langle J,Z\rangle$ bounds every entry. A minimizing sequence therefore lies in a compact set, and closedness of $\CP_n$ supplies a minimizer. Writing $Z=\sum_{\nu=1}^t b^{(\nu)}(b^{(\nu)})^{\mathsf T}$ with $b^{(\nu)}=(b_1^{(\nu)},\ldots,b_n^{(\nu)})\in\mathbb R_{\ge0}^n$, the covering problem equivalently minimizes
	\begin{equation}
		\label{eq:cp-factor-cost}
		\frac14\sum_{\nu=1}^t\left(\sum_{i=1}^n b_i^{(\nu)}\right)^2
		\quad\text{subject to}\quad
		\sum_{\nu=1}^t b_i^{(\nu)}b_j^{(\nu)}\ge W_{i,j}\quad(1\le i<j\le n),
	\end{equation}
	with every $b_i^{(\nu)}\ge0$.
	
	Each summand has a concrete tensor interpretation. For integer coordinates $b_i$, it is the complete graph resource obtained by the grouping construction in Lemma~\ref{lem:quadratic-spectral-constraint}; it supplies edge weights $b_i b_j$ at cost at most $(\sum_{i=1}^n b_i)^2/4$. Rational scaling and continuity extend the inequality to real factors. Thus complete positivity records exactly the nonnegative rank-one factors to which the complete graph argument applies.
	
	\Needspace{10\baselineskip}
	\begin{proposition}[covering and copositive formulations]
		\label{prop:cp-duality}
		For every symmetric nonnegative zero-diagonal $W$,
		\begin{equation}
			\label{eq:cp-dual}
			U_{\CP}(W)=\max_{\substack{P=P^{\mathsf T},\ P_{i,i}=0,\ P_{i,j}\ge0\\
					J-2P\in\COP_n}}\left(\sum_{1\le i<j\le n}W_{i,j}P_{i,j}\right).
		\end{equation}
		Here the entrywise conditions range over $1\le i,j\le n$. Both optima are attained. Every feasible $P$ has $0\le P_{i,j}\le1$.
	\end{proposition}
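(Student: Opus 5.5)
This is a direct instance of the conic duality of Definition~\ref{def:general-conic-cover}. Take $\mathcal K=\CP_n$ and cost matrix $J$, with no extra linear constraints. Attaching multipliers $P_{i,j}\ge0$ to the covering constraints $Z_{i,j}\ge W_{i,j}$ turns the Lagrangian displayed after that definition into
\[
\sum_{1\le i<j\le n}W_{i,j}P_{i,j}+\tfrac14\langle J-2P,Z\rangle .
\]
Its infimum over $Z\in\CP_n$ is finite exactly when $J-2P\in\CP_n^*=\COP_n$. That is the right-hand side of~\eqref{eq:cp-dual}. Weak duality is then immediate: for feasible $Z$ and $P$,
\[
\tfrac14\langle J,Z\rangle-\sum_{i<j}W_{i,j}P_{i,j}
=\tfrac14\langle J-2P,Z\rangle+\sum_{i<j}(Z_{i,j}-W_{i,j})P_{i,j}\ge0 .
\]

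For equality I will verify Slater's condition in both directions and apply Theorem~\ref{thm:conic-strong-duality}, treating the covering inequalities through slack variables.

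\emph{Dual strict feasibility.} Take $P=0$. Then $J-2P=J$, and $J$ lies in the interior of $\COP_n$: for $a\ge0$ with $\|a\|=1$ we have $a^{\mathsf T}Ja=(\sum a_i)^2\ge1$, so small symmetric perturbations of $J$ stay copositive. The constraints $P_{i,j}\ge0$ hold only non-strictly at $P=0$. I will therefore instead take $P_{i,j}=\delta$ for all $i\ne j$ with $\delta>0$ small. Then $J-2P$ is still interior by the same perturbation argument. This already gives attainment of the primal minimum, which was also shown directly after Definition~\ref{def:cp-cover}.

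\emph{Primal strict feasibility.} Take $Z=c(J+I_n)$ with $c>\max W_{i,j}$. This matrix is positive definite with all entries positive, so it lies in the interior of $\CP_n$. Matrices that are positive definite and entrywise positive are interior to $\CP_n$ because $\CP_n$ has nonempty interior equal to this set intersected appropriately; the standard fact is that $I_n+\epsilon$-perturbations of $J+I_n$ remain completely positive, for example via diagonal dominance certificates~\cite{berman2003}. The edge constraints then hold strictly. Theorem~\ref{thm:conic-strong-duality} now gives equality of the two values together with attainment of the dual maximum. Finiteness of the primal value follows from its feasibility and nonnegativity.

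The bound $0\le P_{i,j}\le1$ comes from testing copositivity on $a=\mathbf e_i+\mathbf e_j$:
\[
a^{\mathsf T}(J-2P)a=4-4P_{i,j}\ge0 .
\]

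The main obstacle is the primal interior point, that is, justifying rigorously that $J+I_n$ lies in $\operatorname{int}(\CP_n)$. I would first try to avoid needing it. The dual feasible set is compact (entries lie in $[0,1]$) and nonempty, so the supremum is attained by continuity of the objective. Dual strict feasibility alone already gives equality of the values by Theorem~\ref{thm:conic-strong-duality}. Thus the only delicate ingredient is the interior membership of $J$ in $\COP_n$, which follows by the norm argument above.
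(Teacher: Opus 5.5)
Your final argument is the paper's proof: the Lagrangian dual with $\CP_n^*=\COP_n$, then the strictly feasible dual point $P_{i,j}=\delta$ (for which $J-2P=(1-2\delta)J+2\delta I_n$ is interior to $\COP_n$ and every multiplier is positive), which gives equality of the values and primal attainment via Theorem~\ref{thm:conic-strong-duality}, and finally dual attainment by compactness after testing copositivity on $\mathbf e_i+\mathbf e_j$. You should delete the primal-interior paragraph. Its general claim that entrywise positive, positive definite matrices lie in $\operatorname{int}(\CP_n)$ is false for $n\ge5$; this is exactly the gap between doubly nonnegative and completely positive matrices. Although $J+I_n$ is in fact interior (write $J+I_n+E=(1-\epsilon)J+(\epsilon J+I_n+E)$, where the second term is nonnegative and diagonally dominant for small $\epsilon$ and small $E$), your proof never needs it.
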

	\begin{proof}
		Specialize the dual in Definition~\ref{def:general-conic-cover} to $\mathcal K=\CP_n$, with the all-one cost matrix and no additional linear constraints. Since $\CP_n^*=\COP_n$, this gives the maximization problem in~\eqref{eq:cp-dual}. We verify that its optimum equals the covering minimum and that both are attained.
		
		There is no duality gap. For example, set $P_{i,j}=\varepsilon$ for $1\le i,j\le n$ with $i\ne j$, where $0<\varepsilon<1/2$. Then
		\[
		J-2P=(1-2\varepsilon)J+2\varepsilon I_n
		\]
		is positive definite and hence belongs to the interior of $\COP_n$, and all nonnegativity constraints on the multipliers are strict. After introducing slack variables $s_{i,j}=Z_{i,j}-W_{i,j}\ge0$ for $1\le i<j\le n$, the primal is a conic program over $\CP_n\times\mathbb R_{\ge0}^{\binom n2}$. The chosen $P$ gives an interior dual slack in both factors. Since the primal is feasible with finite value, the dual strict-feasibility case of Theorem~\ref{thm:conic-strong-duality} gives equality of the two values. Primal attainment was proved above. For dual attainment, evaluating copositivity on $\mathbf e_i+\mathbf e_j$ yields $4-4P_{i,j}\ge0$. Consequently every feasible entry lies in $[0,1]$, and the dual feasible set is a closed subset of $[0,1]^{\binom n2}$. It is nonempty because $P=0$ is feasible. Compactness now gives a dual maximizer, proving both attainment statements and~\eqref{eq:cp-dual}.
	\end{proof}
	
	The inequality $\Gamma_{\F}(W)\le U_{\CP}(W)$ can already be read from either formulation. By~\eqref{eq:spectral-copositivity}, every spectral matrix is feasible in~\eqref{eq:cp-dual}. Alternatively, given a feasible factorization $Z=\sum_{\nu=1}^t b^{(\nu)}(b^{(\nu)})^{\mathsf T}$, equation~\eqref{eq:quadratic-spectral-constraint} gives, for every spectral point,
	\begin{align}
		\sum_{1\le i<j\le n}W_{i,j}P_{i,j}^{(\phi)}
		&\le\sum_{\nu=1}^t\sum_{1\le i<j\le n}b_i^{(\nu)}b_j^{(\nu)}P_{i,j}^{(\phi)}\notag\\
		&\le\frac14\sum_{\nu=1}^t\left(\sum_{i=1}^n b_i^{(\nu)}\right)^2
		=\frac14\langle J,Z\rangle.
		\label{eq:cp-tensor-certificate}
	\end{align}
	Maximizing the left side over spectral points gives $\Gamma_{\F}(W)$ by Lemma~\ref{lem:weighted-spectral-formula}. Thus every feasible completely positive factorization certifies $\Gamma_{\F}(W)\le\langle J,Z\rangle/4$.
	
	\subsection{Path inequalities and the weighted graph theorem}
	\label{subsec:path-constraints}
	The quadratic constraints do not exhaust the relations between edge tensors. A pairwise resource can also be passed through an intermediate vertex. Keeping this elementary restriction gives a useful strengthening, especially for odd cycles.
	
	\begin{lemma}[spectral triangle inequalities]
		\label{lem:spectral-triangle}
		For every three distinct vertices $i,j,k$, every positive integer $q$, and every $\phi\in\Phi_{n,\F}$,
		\[
		E_{i,j}^{(q)}\preceq E_{i,k}^{(q)}\Kprod E_{k,j}^{(q)},
		\qquad
		P_{i,j}^{(\phi)}\le P_{i,k}^{(\phi)}+P_{k,j}^{(\phi)}.
		\]
	\end{lemma}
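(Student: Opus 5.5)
The plan is to prove the tensor restriction first, by an explicit local map, and then read off the spectral inequality from monotonicity, multiplicativity and Lemma~\ref{lem:edge-weights}.

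\textbf{Step 1: the restriction.} Write $E_{i,k}^{(q)}\Kprod E_{k,j}^{(q)}$ in the formal multilinear notation of Definition~\ref{def:multilinear-notation}. After the label pairing of Lemma~\ref{lem:graph-multiplicative}, it is
\[
\sum_{0\le a,b<q}x^{(i)}_a\,x^{(k)}_{(a,b)}\,x^{(j)}_b,
\]
with all other legs one-dimensional. Here $(a,b)$ denotes the mixed-radix coordinate of leg $k$, of dimension $q^2$; its order depends on whether $i<j$, but only through a fixed coordinate permutation on leg $k$.

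On leg $k$, use the local map $e_{(a,b)}\mapsto\delta_{a,b}\,e_0^{(1)}$, that is, the substitution $x^{(k)}_{(a,b)}\mapsto\delta_{a,b}\,y^{(k)}_0$. Use the identity on every other leg. The image is $\sum_{a}x^{(i)}_a x^{(j)}_a\,y^{(k)}_0$, which is exactly the coordinate expression of $E_{i,j}^{(q)}$, since leg $k$ of $E_{i,j}^{(q)}$ is one-dimensional. Legs $i$ and $j$ keep dimension $q$, so no further projection is needed. This proves $E_{i,j}^{(q)}\preceq E_{i,k}^{(q)}\Kprod E_{k,j}^{(q)}$.

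\textbf{Step 2: the spectral inequality.} Take $q=2$ and apply any $\phi\in\Phi_{n,\F}$. Monotonicity and multiplicativity give
\[
\phi(E_{i,j}^{(2)})\le\phi(E_{i,k}^{(2)})\,\phi(E_{k,j}^{(2)}).
\]
Taking base-two logarithms and using Definition~\ref{def:spectral-edge-weights}, together with the symmetric convention $E_{k,j}=E_{j,k}$, yields $P_{i,j}^{(\phi)}\le P_{i,k}^{(\phi)}+P_{k,j}^{(\phi)}$. All values are at least one because the tensors are nonzero, so the logarithms are defined.

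\textbf{Main obstacle.} The only delicate point is bookkeeping. One must check that the coordinate ordering on leg $k$ of the Kronecker product, coming from the lexicographic factor order and the convention $E_{k,j}=E_{j,k}$, still lets the diagonal pairs $a=b$ be selected by a single linear functional. This holds up to a coordinate permutation on leg $k$, which is harmless for restriction. The remaining steps are immediate.
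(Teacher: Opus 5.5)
Your proposal is correct and matches the paper's proof essentially step for step: the same diagonal functional $x^{(k)}_{qa+b}\mapsto\delta_{a,b}\,y^{(k)}_0$ on the middle leg, followed by monotonicity, multiplicativity, and base-two logarithms at $q=2$. The bookkeeping issue you flag is even milder than you suggest, because the written factor order of the Kronecker product already fixes the coordinate $qa+b$ on leg $k$ whether or not $i<j$, and the functional $\delta_{a,b}$ is symmetric in $a,b$ in any case.
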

	\begin{proof}
		Put $T=E_{i,k}^{(q)}\Kprod E_{k,j}^{(q)}$. The label $a$ on the first edge occurs on legs $i,k$, and the label $b$ on the second occurs on legs $k,j$. Thus leg $k$ has dimension $q^2$, with coordinate $qa+b$ recording the pair $(a,b)$, while legs $i,j$ each have dimension $q$ and all other legs have dimension one. In the formal multilinear notation of Definition~\ref{def:multilinear-notation}, $T$ is represented by
		\[
		\left(\sum_{a=0}^{q-1}\sum_{b=0}^{q-1}
		x_a^{(i)}x_{qa+b}^{(k)}x_b^{(j)}\right)
		\prod_{\substack{1\le v\le n\\v\notin\{i,j,k\}}}x_0^{(v)}.
		\]
		To match the two edge labels, make the coordinate substitutions
		\[
		x_{qa+b}^{(k)}\longmapsto\delta_{a,b}y_0^{(k)}
		\qquad(0\le a,b<q),
		\]
		leaving every other coordinate symbol unchanged. Here $y_0^{(k)}$ is the sole output symbol on leg $k$, and $\delta_{a,b}$ equals one when $a=b$ and zero otherwise. The resulting expression is
		\[
		\left(\sum_{a=0}^{q-1}x_a^{(i)}x_a^{(j)}\right)y_0^{(k)}
		\prod_{\substack{1\le v\le n\\v\notin\{i,j,k\}}}x_0^{(v)}.
		\]
		Each label $a$ contributes once, with coefficient one, and the factor $y_0^{(k)}$ records the one-dimensional output at the middle vertex. This is the formal multilinear expression of $E_{i,j}^{(q)}$ from Definition~\ref{def:graph-tensor}. The substitutions therefore prove $E_{i,j}^{(q)}\preceq T$ over every field.
		
		Restriction monotonicity and multiplicativity now give
		\[
		\phi(E_{i,j}^{(2)})\le
		\phi(E_{i,k}^{(2)})\phi(E_{k,j}^{(2)}).
		\]
		Taking base-two logarithms and using Definition~\ref{def:spectral-edge-weights} proves $P_{i,j}^{(\phi)}\le P_{i,k}^{(\phi)}+P_{k,j}^{(\phi)}$, completing both assertions.
	\end{proof}
	
	\begin{definition}[path-strengthened completely positive bound]
		\label{def:path-cp-bound}
		For a symmetric nonnegative zero-diagonal weight matrix $W$, define
		\begin{equation}
			\label{eq:path-cp-bound}
			U^{\Delta}_{\CP}(W)=
			\max_{\substack{P=P^{\mathsf T},\ P_{i,i}=0,\ P_{i,j}\ge0\\
					J-2P\in\COP_n\\
					P_{i,j}\le P_{i,k}+P_{k,j}\ (i,j,k\text{ distinct})}}
			\left(\sum_{1\le i<j\le n}W_{i,j}P_{i,j}\right).
		\end{equation}
		All indices in the constraints range from $1$ to $n$. Write $U^{\Delta}_{\CP}(G)=U^{\Delta}_{\CP}(A_G)$ for a graph $G$.
	\end{definition}
	
	The feasible set is a closed subset of the compact feasible set in Proposition~\ref{prop:cp-duality}, so the maximum exists. The variables belong to every vertex pair, including nonedges: a zero demand $W_{i,j}=0$ leaves $P_{i,j}$ available in the path inequalities.
	
	\Needspace{12\baselineskip}
	\begin{theorem}[weighted graph tensor bound]
		\label{thm:weighted-cp}
		Let $\F$ be any field and let $Q$ be a dimension matrix from Definition~\ref{def:weighted-tensor}. Let $W$ be its logarithmic edge-weight matrix. Then
		\begin{equation}
			\label{eq:weighted-cp-main}
			\MaxCut(W)
			\le\log_2\aR_{\F}(\WT(Q))
			\le U^{\Delta}_{\CP}(W).
		\end{equation}
		More generally, the same inequalities hold with $\Gamma_{\F}(W)$ in place of the logarithmic rank for every real nonnegative zero-diagonal weight matrix $W$, and
		\[
		U^{\Delta}_{\CP}(W)\le U_{\CP}(W)\le\sum_{1\le i<j\le n}W_{i,j}.
		\]
		In particular, every finite simple graph satisfies
		\begin{equation}
			\label{eq:graph-cp-main}
			\MaxCut(G)\le\log_2\aR_{\F}(\GT_2(G))
			\le U^{\Delta}_{\CP}(G).
		\end{equation}
	\end{theorem}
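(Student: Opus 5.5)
The plan is to reduce everything to the weighted spectral formula, Lemma~\ref{lem:weighted-spectral-formula}, and show that every spectral edge-weight matrix is feasible for the program defining $U^{\Delta}_{\CP}$. Fix $\phi\in\Phi_{n,\F}$ and put $P=P^{(\phi)}$. By Definition~\ref{def:spectral-edge-weights} it is symmetric with zero diagonal. By Lemma~\ref{lem:edge-weights} its off-diagonal entries are nonnegative. Lemma~\ref{lem:quadratic-spectral-constraint}, in the form~\eqref{eq:spectral-copositivity}, gives $J-2P\in\COP_n$. Lemma~\ref{lem:spectral-triangle} gives the path inequalities for all distinct $i,j,k$. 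Hence $P$ is feasible in~\eqref{eq:path-cp-bound}, so $\sum_{i<j}W_{i,j}P^{(\phi)}_{i,j}\le U^{\Delta}_{\CP}(W)$. Maximizing over $\phi$ yields $\Gamma_{\F}(W)\le U^{\Delta}_{\CP}(W)$ for every real nonnegative $W$. When $W$ comes from a dimension matrix $Q$, Lemma~\ref{lem:weighted-spectral-formula} identifies $\Gamma_{\F}(W)$ with $\log_2\aR_{\F}(\WT(Q))$, which gives the upper half of~\eqref{eq:weighted-cp-main}.

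For the chain $U^{\Delta}_{\CP}(W)\le U_{\CP}(W)\le\sum_{i<j}W_{i,j}$, I proceed as follows. The first inequality holds because the feasible set of~\eqref{eq:path-cp-bound} is contained in that of the copositive program~\eqref{eq:cp-dual}, and Proposition~\ref{prop:cp-duality} identifies the latter's maximum with $U_{\CP}(W)$. For the second inequality, either of two routes works. One can use the bound $0\le P_{i,j}\le1$ on feasible $P$ from Proposition~\ref{prop:cp-duality}. Alternatively, one can take the explicit cover $Z=\sum_{i<j}W_{i,j}(\mathbf e_i+\mathbf e_j)(\mathbf e_i+\mathbf e_j)^{\mathsf T}$, whose cost is $\tfrac14\sum_{i<j}4W_{i,j}$.

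The lower bound $\MaxCut(W)\le\log_2\aR_{\F}(\WT(Q))$ is exactly Proposition~\ref{prop:cut}, the flattening across a maximum cut. Its content is as follows. Flattening $\WT(Q)$ across a cut $S$ gives a matrix that is a Kronecker product of identity matrices of sizes $Q_{i,j}$ over crossing pairs, tensored with rank-one factors for the other edges. Its rank is therefore $\prod_{\text{crossing}}Q_{i,j}$. This rank is multiplicative under Kronecker powers and does not increase under restriction, so it lower-bounds asymptotic rank.

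For real $W$, the lower bound $\MaxCut(W)\le\Gamma_{\F}(W)$ comes from the dyadic approximation $Q^{(s)}$ in the remark on real edge weights. Combine positive homogeneity of $\Gamma_{\F}$ with the Lipschitz bound~\eqref{eq:weighted-lipschitz}. Then observe that $\MaxCut$ is likewise homogeneous and $1$-Lipschitz in the $\ell^1$ sense, so the integer case passes to the limit. The graph case~\eqref{eq:graph-cp-main} follows by taking $Q_{i,j}=2$ on edges, so that $W=A_G$.

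The only step that needs real care is this passage from integer dimension matrices to arbitrary real weights for the cut bound. Even that is routine once the Lipschitz estimate is in hand, since all the substantive inputs (the quadratic constraint and the path restriction) are already available.
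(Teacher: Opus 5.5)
Your proposal is correct and follows essentially the same route as the paper. The upper bound comes from feasibility of every spectral matrix $P^{(\phi)}$ in the path-strengthened copositive program, via the quadratic constraint and the path restrictions; the lower bound is the cut flattening; and real weights are handled by dyadic approximation with positive homogeneity and the Lipschitz estimate. Your alternative explicit cover for $U_{\CP}(W)\le\sum_{i<j}W_{i,j}$ is also valid, since it is the same cover the paper uses after Definition~\ref{def:cp-cover} to show feasibility.
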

	\begin{proof}
		We first bound $\log_2\aR_{\F}(\WT(Q))$ from below by $\MaxCut(W)$. Proposition~\ref{prop:cut} shows that flattening across a cut of weight $\operatorname{cut}_W(S)$ gives
		\[
		\log_2\aR_{\F}(\WT(Q))\ge\operatorname{cut}_W(S).
		\]
		Maximizing over the nonempty proper vertex sets $S$ gives the lower bound $\MaxCut(W)$. By Lemma~\ref{lem:weighted-spectral-formula}, this is also a lower bound for $\Gamma_{\F}(W)$ when $W$ is the logarithmic edge-weight matrix of $Q$.
		
		To prove $\Gamma_{\F}(W)\le U^{\Delta}_{\CP}(W)$, let $W$ be any real nonnegative zero-diagonal weight matrix and fix $\phi\in\Phi_{n,\F}$. We show that its spectral edge-weight matrix $P^{(\phi)}$ satisfies every constraint in the program defining $U^{\Delta}_{\CP}(W)$. It is symmetric and zero-diagonal by definition, and its entries are nonnegative by Lemma~\ref{lem:edge-weights}. The quadratic constraint of Lemma~\ref{lem:quadratic-spectral-constraint} gives
		\[
		a^{\mathsf T}(J-2P^{(\phi)})a
		=\left(\sum_{i=1}^n a_i\right)^2
		-4\sum_{1\le i<j\le n}a_i a_jP_{i,j}^{(\phi)}\ge0
		\quad(a\in\mathbb R_{\ge0}^n).
		\]
		Thus $J-2P^{(\phi)}$ is copositive. Lemma~\ref{lem:spectral-triangle} supplies the remaining path constraints,
		$P_{i,j}^{(\phi)}\le P_{i,k}^{(\phi)}+P_{k,j}^{(\phi)}$
		for distinct $i,j,k$. Every spectral matrix is therefore feasible in~\eqref{eq:path-cp-bound}, so
		\[
		\Gamma_{\F}(W)
		=\max_{\phi\in\Phi_{n,\F}}
		\left(\sum_{1\le i<j\le n}W_{i,j}P_{i,j}^{(\phi)}\right)
		\le U^{\Delta}_{\CP}(W).
		\]
		For logarithmic edge weights, Lemma~\ref{lem:weighted-spectral-formula} identifies the left side with $\log_2\aR_{\F}(\WT(Q))$. Together with the cut lower bound, this proves~\eqref{eq:weighted-cp-main}.
		
		Removing the path constraints enlarges the feasible set to the copositive program in Proposition~\ref{prop:cp-duality}, so $U^{\Delta}_{\CP}(W)\le U_{\CP}(W)$. Every feasible matrix in that program has $P_{i,j}\le1$, by the same proposition. Since $W$ is nonnegative, its objective is at most $\sum_{1\le i<j\le n}W_{i,j}$, proving the remaining comparison $U_{\CP}(W)\le\sum_{1\le i<j\le n}W_{i,j}$.
		
		It remains to extend the cut lower bound to arbitrary real weights. For a positive integer $s$, put $W^{(s)}_{i,j}=\lfloor sW_{i,j}\rfloor/s$. The matrix $sW^{(s)}$ is the logarithmic edge-weight matrix of a dimension matrix with entries $2^{\lfloor sW_{i,j}\rfloor}$. Applying the lower bound already proved and dividing by $s$ gives
		\[
		\MaxCut(W^{(s)})\le\Gamma_{\F}(W^{(s)}),
		\]
		using positive homogeneity from Lemma~\ref{lem:weighted-spectral-formula}. As $s\to\infty$, that lemma gives $\Gamma_{\F}(W^{(s)})\to\Gamma_{\F}(W)$. The maximum-cut weights also converge, since they are maxima of finitely many linear functions of the edge weights. This proves $\MaxCut(W)\le\Gamma_{\F}(W)$ for every such $W$. Finally, take $Q_{i,j}=2$ when $(i,j)$ is an edge of $G$ and $Q_{i,j}=1$ otherwise. Then $W=A_G$ and $\WT(Q)=\GT_2(G)$, which gives~\eqref{eq:graph-cp-main}.
	\end{proof}
	
	A useful special case of~\eqref{eq:cp-tensor-certificate} uses a single nonnegative vector:
	\[
	\Gamma_{\F}(W)\le\frac14\left(\sum_{i=1}^n b_i\right)^2
	\quad\text{whenever}\quad b_i b_j\ge W_{i,j}\ \ (1\le i<j\le n).
	\]
	Several of the exact graph evaluations in Section~\ref{sec:graph-applications} will follow from one such vector and a matching cut. For other graphs the path constraints carry additional information. Before turning to these applications, we compare the conic bounds with the familiar semidefinite relaxation for maximum cut.

	\subsection{Comparison with semidefinite bounds for Max-Cut}
	\label{subsec:sdp-comparison}
	Replacing the completely positive cone in the covering problem by the positive semidefinite cone gives a form of the Goemans--Williamson relaxation. This comparison explains both the value $n^2/4$ on complete graphs and the role of nonnegative factors in the tensor argument.
	
	\begin{definition}[semidefinite covering bound]
		\label{def:sdp-bound}
		For a nonnegative edge-weight matrix $W$, define
		\[
		U_{\SDP}(W)=\frac14\min_{\substack{Z\in\mathbb S_+^n\\
				Z_{i,j}\ge W_{i,j}\ (1\le i<j\le n)}}\bigl(\langle J,Z\rangle\bigr).
		\]
		For a graph $G$, write $U_{\SDP}(G)=U_{\SDP}(A_G)$.
	\end{definition}
	
	The minimum is attained: the covering constraints and positive semidefiniteness make every entry of a feasible matrix nonnegative, so a bounded objective bounds all its entries. Closedness then gives compact sublevel sets.
	
	Because $W$ is nonnegative, every feasible cover has nonnegative off-diagonal entries; its diagonal entries are nonnegative by positive semidefiniteness. Thus the only difference between the two covering problems is the existence of a Gram factorization with nonnegative coordinates. Every completely positive cover is a semidefinite cover. Conversely, the two feasible sets agree for $n\le4$ by the theorem recalled in Definition~\ref{def:matrix-cones}, but can differ for $n\ge5$.
	
	\begin{proposition}[SDP covering and the Goemans--Williamson relaxation]
		\label{prop:sdp-gw}
		For every nonnegative edge-weight matrix $W$,
		\begin{align}
			U_{\SDP}(W)
			&=\max_{\substack{X\succeq0\\X_{i,i}=1}}
			\left(\frac12\sum_{1\le i<j\le n}W_{i,j}(1-X_{i,j})\right) \label{eq:sdp-gw}\\
			&=\max_{\substack{P=P^{\mathsf T},\ P_{i,i}=0\\ J-2P\succeq0}}
			\left(\sum_{1\le i<j\le n}W_{i,j}P_{i,j}\right).\notag
		\end{align}
		In particular,
		\[
		\MaxCut(W)\leq U_{\SDP}(W)\leq U_{\CP}(W),
		\qquad
		U_{\SDP}(W)=U_{\CP}(W)\quad(n\leq4).
		\]
	\end{proposition}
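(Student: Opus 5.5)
The plan is to prove the two equalities in \eqref{eq:sdp-gw} by conic duality, and then read off the comparisons.

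\textbf{First equality: the dual of the SDP cover.}
Specialize Definition~\ref{def:general-conic-cover} to $\mathcal K=\mathbb S_+^n$, cost matrix $J$, and no extra constraints. Since $(\mathbb S_+^n)^*=\mathbb S_+^n$, the dual program maximizes $\sum_{i<j}W_{i,j}P_{i,j}$ over symmetric zero-diagonal $P\ge0$ with $J-2P\succeq0$.

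For strong duality we reuse the strict feasibility point from Proposition~\ref{prop:cp-duality}. With $P_{i,j}=\varepsilon$ off the diagonal, $J-2P=(1-2\varepsilon)J+2\varepsilon I_n$ is positive definite, and the multipliers are strictly positive. The primal is feasible with finite value, and its attainment was noted after Definition~\ref{def:sdp-bound}. Theorem~\ref{thm:conic-strong-duality} then gives equality, with the dual attained because the feasible entries lie in a compact set.

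\textbf{Removing the sign constraint on $P$.}
The displayed second line has no sign constraint on $P$. I will argue that dropping $P_{i,j}\ge0$ does not change the maximum. Given a feasible $P$, let $P^+$ be obtained by setting its negative entries to zero. The objective does not decrease, since $W\ge0$. The difficulty is that positive semidefiniteness of $J-2P^+$ is not automatic.

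The cleaner route is to dualize directly. Without the sign constraint on $P$, the edge constraints in the primal become equalities $Z_{i,j}=W_{i,j}$. So I need to show that the optimal SDP cover can be taken with $Z_{i,j}=W_{i,j}$, i.e.\ that excess coverage never helps.

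To see this, take $Z=X$-type data with diagonal $d$. Write $d_id_j$-style bounds via PSD: $|Z_{i,j}|\le\sqrt{Z_{i,i}Z_{j,j}}$. Then try decreasing an off-diagonal entry toward $W_{i,j}$ while adding a compensating diagonal term. This seems delicate.

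Instead I will prove the middle (Goemans--Williamson) formula and both sign-free forms together, as follows. For $X\succeq0$ with unit diagonal, set $P=(J-X)/2$ off the diagonal and $0$ on it. Then
\[
J-2P=X-J+J=X\quad\text{off-diagonal},\qquad (J-2P)_{i,i}=1=X_{i,i},
\]
so $J-2P=X\succeq0$. Conversely, any zero-diagonal $P$ with $J-2P\succeq0$ gives $X=J-2P$, which is PSD with unit diagonal. This is a bijection, and the objective satisfies $\tfrac12 W_{i,j}(1-X_{i,j})=W_{i,j}P_{i,j}$. Hence the Goemans--Williamson value equals the sign-free $P$ program.

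It remains to show the sign-free program equals the sign-constrained dual. At a maximizer $X$ of the Goemans--Williamson program, an entry $X_{i,j}>1$ is impossible, since $|X_{i,j}|\le1$. So $P_{i,j}=(1-X_{i,j})/2\ge0$ automatically, and the constraint $P\ge0$ is redundant. This disposes of the main obstacle.

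\textbf{Comparisons.}
For $\MaxCut(W)\le U_{\SDP}(W)$, take $X=ss^{\mathsf T}$ with $s\in\{\pm1\}^n$ the cut sign vector; its objective is exactly the cut weight.

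For $U_{\SDP}(W)\le U_{\CP}(W)$, note that $\CP_n\subseteq\mathbb S_+^n$, so every CP cover is an SDP cover.

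For equality when $n\le4$: every SDP cover is entrywise nonnegative. Its off-diagonal entries satisfy $Z_{i,j}\ge W_{i,j}\ge0$, and its diagonal entries are nonnegative by positive semidefiniteness. By the Diananda and Maxfield--Minc theorem recalled after Definition~\ref{def:matrix-cones}, such a matrix is completely positive when $n\le4$. The two feasible sets therefore coincide, and so do the minima.
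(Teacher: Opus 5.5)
Your proof is correct and follows the paper's argument. You dualize the SDP cover, note that $P\ge0$ is redundant because a unit-diagonal positive semidefinite $X=J-2P$ has $|X_{i,j}|\le1$, substitute $X=J-2P$, and use $X=ss^{\mathsf T}$ and the cone inclusions for the comparisons. The only difference is that the paper invokes primal strict feasibility through $Z=tJ+sI_n$, which gives dual attainment directly, whereas you use the dual Slater point $P_{i,j}=\varepsilon$ and obtain attainment from compactness. Both are valid, and the abandoned detour about $P^+$ and equality-constrained covers can simply be deleted.
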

	\begin{proof}
		Specialize the dual in Definition~\ref{def:general-conic-cover} to $\mathcal K=\mathbb S_+^n$, with the all-one cost matrix and no additional linear constraints. Its objective is $\sum_{1\le i<j\le n}W_{i,j}P_{i,j}$, maximized over symmetric zero-diagonal matrices $P\ge0$ with $J-2P\succeq0$.
		The primal has a strictly feasible point $Z=tJ+sI_n$ with $s>0$ and
		$t>\max_{1\le i<j\le n}(W_{i,j})$, so Theorem~\ref{thm:conic-strong-duality} gives the second
		formula.  The condition $P_{i,j}\geq0$ is redundant there: the diagonal
		entries of $X=J-2P$ equal one, and each two-by-two principal submatrix
		implies $X_{i,j}\leq1$.  The substitution $X=J-2P$ gives the first formula.
		For a cut with sign vector $s\in\{-1,1\}^n$, the matrix $X=ss^{\mathsf T}$
		is feasible and its objective is the cut weight.  The remaining
		comparisons follow from inclusion of the covering feasible sets and their equality for $n\le4$.
	\end{proof}
	
	Equation~\eqref{eq:sdp-gw} is the Goemans--Williamson relaxation~\cite{gw1995}.
	If $X_{i,j}=\langle u_i,u_j\rangle$ for unit vectors $u_i$, a random
	hyperplane separates $u_i$ and $u_j$ with probability
	$\arccos(X_{i,j})/\pi$.  Consequently,
	\begin{equation}
		\label{eq:gw-sandwich}
		\begin{gathered}
			\alpha_{\mathrm{GW}}U_{\SDP}(W)
			\leq\MaxCut(W)\leq U_{\SDP}(W),\\
			\alpha_{\mathrm{GW}}=
			\min_{0<\theta\leq\pi}\left(\frac{2\theta}{\pi(1-\cos\theta)}\right)
			=0.878567\ldots.
		\end{gathered}
	\end{equation}
	For rational weights, the relaxation can be evaluated to prescribed
	additive accuracy in polynomial time.  Together with rounding, this
	gives an $(\alpha_{\mathrm{GW}}-\varepsilon)$-approximation for each fixed
	$\varepsilon>0$.
	
	Khot, Kindler, Mossel, and O'Donnell~\cite{kkmo2007} originally established
	the optimality of this worst-case approximation threshold assuming the
	Unique Games Conjecture.  OpenAI's Unique Games
	Theorem~\cite{openaiugc2026} supplies that hypothesis. Combined with the reduction,
	it rules out a deterministic $(\alpha_{\mathrm{GW}}+\varepsilon)$ approximation
	for Max-Cut in polynomial time, for any fixed $\varepsilon>0$, unless
	$\mathrm P=\mathrm{NP}$.
	
	In contrast, completely positive optimization is not a semidefinite
	program: even membership in $\CP_n$ is NP-hard when $n$ varies, as are
	weak membership problems for $\CP_n$ and $\COP_n$~\cite{dickinsongijben2014}.
	For our specific covering objective, the following reduction from CLIQUE establishes the corresponding hardness statement.
	
	\begin{proposition}[complexity of the completely positive bound]
		\label{prop:cp-hardness}
		When the number of vertices is part of the input, evaluating
		$U_{\CP}(W)$ to a prescribed rational additive accuracy is NP-hard under
		polynomial-time oracle reductions.  Here the input consists of a
		nonnegative rational matrix $W$ and a positive rational error tolerance,
		both encoded in binary.
	\end{proposition}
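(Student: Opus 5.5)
Reduce from CLIQUE through the Motzkin--Straus theorem. For a graph $G$ on $n$ vertices with clique number $\omega(G)$, Motzkin--Straus gives
\[
\max_{x\in\Delta_n}x^{\mathsf T}A_Gx=1-\frac1{\omega(G)},
\]
where $\Delta_n$ is the standard simplex. This is a copositivity statement. The matrix $\lambda J-A_G$ is copositive exactly when $\lambda\ge1-1/\omega(G)$, since for $x\ge0$ we have $x^{\mathsf T}Jx=(\mathbf 1^{\mathsf T}x)^2$ and copositivity can be checked on the simplex. The dual formulation in Proposition~\ref{prop:cp-duality} maximizes over $P$ with $J-2P\in\COP_n$. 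Choose $W$ so that the optimal $P$ is forced to be a multiple of the adjacency matrix, or so that the optimum is an explicit function of that threshold.

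The first attempt takes $W=A_{\bar G}$, or $W=A_G$, and computes $U_{\CP}$ in closed form. For $W=A_G$, the copositive dual admits $P=cA_G$ whenever $J-2cA_G$ is copositive. That holds iff $2c(1-1/\omega)\le1$, that is, iff $c\le\omega/(2(\omega-1))$. This gives the lower bound $U_{\CP}(A_G)\ge|E|\,\omega/(2(\omega-1))$.

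For a matching primal certificate, use a CP cover of $A_G$ by clique indicator vectors. For a clique of size $\omega$, the rank-one factor $\mathbf 1_K\mathbf 1_K^{\mathsf T}$ costs $\omega^2/4$ and covers $\binom\omega2$ edges, i.e. $\omega/(2(\omega-1))$ per edge. This matches exactly when edges are partitioned into maximum cliques, so the plain adjacency matrix is not obviously exact.

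To get exactness, use a more robust encoding. Take a single vertex weighting and choose $W$ so that the dual problem reduces to one copositivity threshold. One option is $W=J-I$ restricted suitably. A cleaner option is to compare with a scaled version using the primal factor form~\eqref{eq:cp-factor-cost}. Add an apex vertex $0$ joined to all vertices with a large weight, so that any good cover must use factors concentrated on the apex. Then the residual quadratic problem becomes $\max_{x\in\Delta}x^{\mathsf T}A_Gx$.

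Concretely, for the primal I would examine $W$ with demands $W_{0,i}=\mu$ and $W_{i,j}=\mu'$ on the edges of $G$. Expanding a single factor $b=(b_0,x)$ gives cost $(b_0+\mathbf 1^{\mathsf T}x)^2/4$. The goal is to show that the optimal cover value is a strictly monotone, explicitly computable function $g(\omega(G))$, with gaps between consecutive values of $\omega$ at least $1/\mathrm{poly}(n)$. Evaluating $U_{\CP}$ to additive accuracy smaller than half the gap then determines $\omega(G)$, which is NP-hard. A single oracle call already suffices, consistent with oracle reductions.

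The main obstacle is exactness of the closed form. The dual lower bound via $P\propto A_G$ is clean thanks to Motzkin--Straus. The primal upper bound, however, needs a CP decomposition whose cost matches. Nonnegative factors supported on cliques are natural, but they cover non-clique structure inefficiently. My first fallback is therefore to bound $U_{\CP}$ only by two values separated according to whether $\omega(G)\ge k$. This gives a gap-decision version: distinguish $\omega\ge k$ from $\omega<k$ via the dual certificate on one side and a clique-based primal certificate on the other, after embedding $G$ together with a gadget that makes the threshold sharp, for instance taking $W=\lambda A_G$ plus a term forcing the factor mass onto the simplex. Polynomial bit-size of all quantities keeps the accuracy requirement polynomial.
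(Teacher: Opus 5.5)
There is a genuine gap. You never produce a weight matrix $W$ for which $U_{\CP}(W)$ is provably determined, or even separated, by whether $\omega(G)\ge k$. The candidates you consider do not have this property.

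The Motzkin--Straus threshold controls only the single direction $P=cA_G$. That direction is the dual of an \emph{aggregated} covering problem with the one constraint $\langle A_G,Z\rangle\ge 2|E|$. But $U_{\CP}$ imposes its demands pair by pair, and its optimal dual point need not be a multiple of $A_G$. For example, $U_{\CP}$ is additive over disjoint unions: block-diagonal covers combine, and block-diagonal dual points remain feasible because $(s_1+s_2)^2\ge s_1^2+s_2^2$. Hence $U_{\CP}(K_4\sqcup mK_2)=4+m$, whereas your lower bound gives $\tfrac23(6+m)$. So the value is not a function of $\omega(G)$ and $|E(G)|$ alone.

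The apex gadget does not repair this. With equal spoke entries $p\ge1/2$, minimize $x^{\mathsf T}(J-2P)x$ over the apex coordinate. This shows that the admissible rim blocks are exactly $4p(1-p)$ times the feasible points of the copositive program for $U_{\CP}(A_G)$. You are therefore led back to $U_{\CP}(A_G)$, not to the standard quadratic program $\max_{x\in\Delta_n}x^{\mathsf T}A_Gx$. Your final fallback, an unspecified gadget ``that makes the threshold sharp,'' is precisely the missing step. A single evaluation gives one value of the support function of the dual feasible set, and you have not shown that any one direction encodes the clique number.

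The paper does not look for such a $W$. Instead it spends polynomially many oracle calls and tests membership of one point rather than optimizing in one fixed direction.

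\begin{itemize}
\item \emph{Oracle conversion.} The dual feasible set $K=\{P\ge0: J-2P\in\COP_n\}$ is downward closed. Its support function therefore satisfies $h_K(v)=h_K(v_+)=U_{\CP}(v_+)$, so an additive-accuracy oracle for $U_{\CP}$ is a weak validity oracle for $K$.
\item \emph{Well-boundedness.} Since $[0,1/2]^{\binom n2}\subseteq K\subseteq[0,1]^{\binom n2}$, $K$ is a well-bounded centered body. The Gr\"otschel--Lov\'asz--Schrijver equivalence then converts weak validity into weak membership.
\item \emph{Robust test point.} Your Motzkin--Straus threshold is applied to the single point $P_0=\tfrac t2A_G+\eta(J-I_n)$, with $t$ strictly between $\tfrac{k+1}{k}$ and $\tfrac{k}{k-1}$ and $\eta=1/(16n^2)$. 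When $\omega(G)\le k$, the ball of radius $\eta/2$ about $P_0$ lies inside $K$. When $G$ has a $(k+1)$-clique, that ball is disjoint from $K$. Weak membership therefore decides CLIQUE.
\end{itemize}

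This is also why the statement is phrased for polynomial-time oracle reductions rather than a single evaluation of $U_{\CP}$.
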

	\begin{proof}
		We first show that an additive-value oracle for $U_{\CP}$ gives weak membership for the dual feasible set in Proposition~\ref{prop:cp-duality}.
		
		Write $m=\binom n2$ and identify a zero-diagonal symmetric matrix with its
		$m$ upper-triangular entries.  The feasible set
		\[
		K=\{P\geq0\mid J-2P\in\COP_n\}
		\]
		is convex, closed, and downward closed.  Moreover,
		$[0,1/2]^m\subseteq K\subseteq[0,1]^m$: the first inclusion follows from
		entrywise nonnegativity of $J-2P$, and the second by testing copositivity
		on $\mathbf e_i+\mathbf e_j$.  Thus the ball of radius $1/4$ about
		$c=(1/4,\ldots,1/4)$ lies in $K$, and $K$ lies in the ball of radius $n$
		about $c$.  For a rational objective $v$, write $h_K(v)=\max_{p\in K}(v\cdot p)$ and $(v_+)_{i,j}=\max(v_{i,j},0)$ for each pair $1\le i<j\le n$. Identify $v_+$ with its symmetric zero-diagonal matrix. Then
		\[
		h_K(v)=h_K(v_+)=U_{\CP}(v_+).
		\]
		Indeed, coordinates with negative coefficients can be set to zero
		without leaving $K$.  An additive-value oracle therefore supplies weak
		validity for this explicitly centered convex body.  The polynomial
		oracle equivalence of weak validity and weak membership
		\cite[Chapter~4]{gls1988} supplies weak membership for $K$.
		
		We show that even this membership problem decides CLIQUE with a
		polynomially bounded gap.  Given a graph with adjacency matrix $A$ and
		$2\leq k<n$, put
		\[
		t=\frac12\left(\frac{k}{k-1}+\frac{k+1}{k}\right),\qquad
		\eta=\frac1{16n^2},\qquad
		P_0=\frac t2 A+\eta(J-I_n).
		\]
		If the clique number is at most $k$, the Motzkin--Straus theorem
		\cite{motzkinstraus1965} gives, for every $x\geq0$ with
		$\mathbf1^{\mathsf T}x=1$,
		\[
		x^{\mathsf T}(J-2P_0)x
		\geq 1-t\frac{k-1}{k}-2\eta
		=\frac1{2k^2}-2\eta\geq\frac3{8n^2}.
		\]
		If there is a clique of size $k+1$, its uniform probability vector
		instead gives
		\[
		x^{\mathsf T}(J-2P_0)x
		\leq1-t\frac{k}{k+1}
		=-\frac1{2(k^2-1)}<-\frac1{2n^2}.
		\]
		Let $\Delta P$ be a symmetric zero-diagonal perturbation and write
		$\|\Delta P\|_{\mathrm{coord}}=(\sum_{1\le i<j\le n}(\Delta P_{i,j})^2)^{1/2}$
		for the Euclidean norm of its $m$ upper-triangular coordinates. If
		$\|\Delta P\|_{\mathrm{coord}}\le\eta/2$, then either quadratic evaluation changes by at most $\eta$, since
		$4\left|\sum_{1\le i<j\le n}\Delta P_{i,j}x_i x_j\right|
		\leq4\|\Delta P\|_{\mathrm{coord}}\sum_{1\le i<j\le n}x_i x_j\leq\eta$. In the first case the
		whole Euclidean ball of radius $\eta/2$ about $P_0$ is therefore in $K$, including its
		nonnegativity constraints; in the second case that ball is disjoint
		from $K$.  Weak membership with tolerance $\eta/4$ distinguishes the
		cases.  The centers, radii, tolerances, and all rational coefficients
		have polynomial encoding length, so the oracle reduction is polynomial. Hence an oracle for the stated additive evaluation problem would decide CLIQUE, proving the claimed NP-hardness.
	\end{proof}

	\begin{remark}[the role of nonnegative factors]
		For $2\le n\le4$, Theorem~\ref{thm:weighted-cp} and Proposition~\ref{prop:sdp-gw} therefore imply
		\[
		\Gamma_{\F}(W)\le U_{\SDP}(W).
		\]
		The extension to larger matrices is the first open problem at the end of the paper. The nonnegative coordinates in a CP factorization have an essential role in the present proof: they become numbers of vertices in the complete graph used for grouping. A semidefinite factorization may contain negative coordinates, for which that construction has no corresponding vertex groups.
	\end{remark}
	
	\subsection{Path inequalities and the value of the comparison}
	\label{subsec:sdp-triangles}
	The preceding comparison concerns the basic covering problems. Our tensor theorem also uses path inequalities, so we now add the same constraints to the Goemans--Williamson formulation. This lets us compare the strengthened bounds directly and determine whether their agreement on at most four vertices survives the added constraints.
	
	\begin{definition}[path-strengthened semidefinite bound]
		\label{def:sdp-triangle}
		Define
		\[
		U^{\Delta}_{\SDP}(W)=
		\max_{\substack{P=P^{\mathsf T},\ P_{i,i}=0,\ J-2P\succeq0\\
				P_{i,j}\le P_{i,k}+P_{k,j}\ (i,j,k\text{ distinct})}}
		\left(\sum_{1\le i<j\le n}W_{i,j}P_{i,j}\right),
		\]
		where all indices range from $1$ to $n$.
	\end{definition}
	
	The maximum is attained. Indeed, the matrix $P=0$ is feasible, and the constraints define a closed subset of the compact set of correlation matrices $J-2P$.
	
	Cut matrices satisfy the path inequalities, and every positive semidefinite matrix is copositive. Hence
	\begin{equation}
		\label{eq:sdp-triangle-comparison}
		\MaxCut(W)\le U^{\Delta}_{\SDP}(W)
		\le\min\bigl(U_{\SDP}(W),U^{\Delta}_{\CP}(W)\bigr).
	\end{equation}
	The equality $U_{\CP}(W)=U_{\SDP}(W)$ for $n\le4$ extends to the path-strengthened bounds. The next proof establishes this by a geometric construction.
	
	\begin{proposition}[equality of the strengthened bounds in small order]
		\label{prop:small-order-path-equality}
		For every nonnegative zero-diagonal weight matrix $W$ of order $2\le n\le4$,
		\[
		U^{\Delta}_{\CP}(W)=U^{\Delta}_{\SDP}(W).
		\]
		Consequently, for every field $\F$ and every weighted graph tensor on between two and four vertices, with logarithmic edge-weight matrix $W$,
		\[
		\log_2\aR_{\F}(\WT(Q))\le U^{\Delta}_{\SDP}(W).
		\]
	\end{proposition}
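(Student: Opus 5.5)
The inequality $U^{\Delta}_{\SDP}(W)\le U^{\Delta}_{\CP}(W)$ is already in~\eqref{eq:sdp-triangle-comparison}, since every positive semidefinite matrix is copositive. For the reverse inequality I would use that $W$ is nonnegative. It then suffices to show that every $P$ feasible for the path-strengthened CP program in Definition~\ref{def:path-cp-bound} is entrywise dominated by some $P'$ feasible for the program of Definition~\ref{def:sdp-triangle}, meaning $P'_{i,j}\ge P_{i,j}$ for all $i\ne j$. Applying this to a CP maximizer gives $U^{\Delta}_{\SDP}(W)\ge U^{\Delta}_{\CP}(W)$. The tensor consequence then follows at once from Theorem~\ref{thm:weighted-cp}.

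Write $X=J-2P$, a copositive matrix with unit diagonal. For $n\le4$, Diananda's theorem (recalled after Definition~\ref{def:matrix-cones}) gives $X=S+N$ with $S\succeq0$ and $N$ entrywise nonnegative. We may take $N$ zero-diagonal by moving its diagonal into $S$. Then $S_{i,i}\le1$, so
\[
X'=S+\operatorname{diag}(1-S_{1,1},\ldots,1-S_{n,n})
\]
is positive semidefinite with unit diagonal, and $X'_{i,j}=X_{i,j}-N_{i,j}\le X_{i,j}$. Hence $P'=(J-X')/2$ dominates $P$ and satisfies $J-2P'\succeq0$. This gives domination by a matrix that is feasible for the basic SDP program.

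The main obstacle is the path inequalities: an arbitrary Diananda decomposition may destroy $P'_{i,j}\le P'_{i,k}+P'_{k,j}$. To handle this, I would choose $N$ minimally. Among the dominating PSD correlation matrices, I would take $P'$ minimizing $\sum_{i<j}(P'_{i,j}-P_{i,j})$ (a compact set, so a minimizer exists). I would then argue by contradiction: suppose some path inequality $P'_{i,j}>P'_{i,k}+P'_{k,j}$ fails. Since $P$ satisfies it, $P'_{i,j}>P_{i,j}$, so the entry $(i,j)$ was raised. I would try to lower $P'_{i,j}$ slightly, or redistribute the increase onto the pairs $(i,k)$ and $(k,j)$, while keeping positive semidefiniteness and not increasing the objective.

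For this local step I would use the geometric picture suggested by the statement. Represent $X'$ as the Gram matrix of unit vectors $u_1,\ldots,u_n$ in $\mathbb R^{n}$, so that $P'_{i,j}=\sin^2(\vartheta_{i,j}/2)$, where $\vartheta_{i,j}$ is the angle between $u_i$ and $u_j$. With $n\le4$ there are only three or four vectors, so the violated triangle $i,k,j$ plus at most one extra vertex can be handled case by case. Rotating $u_j$ toward $u_i$ within the plane of $u_i,u_j,u_k$ lowers $P'_{i,j}$ and keeps $X'$ a Gram matrix. The case split is by which off-diagonal entries of $N$ are positive, i.e., which pairs were raised.

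The remaining checks are that the rotation does not push any entry below $P$, and that it creates no new path violation involving the fourth vertex. This is exactly where the restriction $n\le4$ and the path inequalities of $P$ itself must enter. It is the step I expect to need the most care. If the minimality argument stalls, the fallback is a direct parametrization of the $n=3$ and $n=4$ cases by the angles $\vartheta_{i,j}$.
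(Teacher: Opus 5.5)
\textbf{The path-inequality step is the gap, and it is the whole content of the proposition.} Your reduction to entrywise domination is correct. So is the Diananda step: for $n\le4$, writing $J-2P=S+N$ does yield a positive semidefinite correlation matrix $J-2P'$ with $P'\ge P$, which recovers $U_{\SDP}=U_{\CP}$ for the basic programs. But nothing in your plan shows that a minimizer of $\sum_{i<j}(P'_{i,j}-P_{i,j})$ satisfies the path inequalities.
\begin{itemize}
\item At a minimizer you cannot, by definition, lower $P'_{i,j}$ alone and stay feasible. The contradiction would therefore require exhibiting a concrete feasible move that does not increase the objective, and you do not exhibit one.
\item The rotation you suggest changes every entry in row $j$. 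It can push some $P'_{j,l}$ below $P_{j,l}$, destroying domination, or create a new violated triangle with the fourth vertex.
\item Redistributing the increase onto $(i,k)$ and $(k,j)$ may raise the objective.
\item A case split by the support of $N$ is not well founded, because the Diananda decomposition is not canonical.
\end{itemize}
As written, this is a hope rather than an argument.

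\textbf{The missing idea is to choose the correction in a special additive form.} If $P'_{i,j}=P_{i,j}+d_i+d_j$ with $d\ge0$, then $P'_{i,k}+P'_{k,j}-P'_{i,j}=P_{i,k}+P_{k,j}-P_{i,j}+2d_k\ge0$, so every path inequality survives automatically. The paper realizes such a shift by using $n\le4$ in a different way.
\begin{itemize}
\item Every semimetric on at most four points is a nonnegative combination of cut semimetrics. Hence $4P_{i,j}=\|y_i-y_j\|^2$ for suitable vectors $y_i$.
\item Copositivity of $J-2P$ says that every weighted variance $\sum_{i<j}a_ia_j\|y_i-y_j\|^2$, with $a$ a probability vector, is at most $1$. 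This forces the smallest enclosing ball of the $y_i$ to have radius at most $1$; call its center $c$.
\item Put $d_i=(1-\|y_i-c\|^2)/4\ge0$ and $u_i=(y_i-c,\,2\sqrt{d_i}\,\mathbf e_i)$. These are unit vectors with $(1-\langle u_i,u_j\rangle)/2=P_{i,j}+d_i+d_j$, so $J-2P'$ is their Gram matrix.
\end{itemize}
In your language, this is a decomposition with $N_{i,j}=2(d_i+d_j)$ off the diagonal. Diananda's theorem is never needed: the restriction $n\le4$ enters only through the cut-cone fact, which makes $4P$ a squared Euclidean distance matrix.
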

	\begin{proof}
		By~\eqref{eq:sdp-triangle-comparison}, it suffices to prove
		$U^{\Delta}_{\CP}(W)\le U^{\Delta}_{\SDP}(W)$. Fix a matrix $P$ feasible for $U^{\Delta}_{\CP}(W)$. We will construct a matrix $P'$ feasible for $U^{\Delta}_{\SDP}(W)$ with $P'_{i,j}\ge P_{i,j}$ for $i\ne j$.
		
		We first represent the entries of $P$ as squared Euclidean distances. A nonnegative symmetric zero-diagonal matrix satisfying the path inequalities is a \emph{semimetric}. For a vertex subset $S$, its cut semimetric has entries $(\mathbf1_S(i)-\mathbf1_S(j))^2$, where $\mathbf1_S(i)$ is one if $i\in S$ and zero otherwise. Every semimetric on at most four points is a nonnegative combination of cut semimetrics~\cite{dezalaurent1997}. To make this step explicit, suppose $n=4$ and put
		\[
		\begin{gathered}
			s_1=P_{1,2}+P_{3,4},\quad s_2=P_{1,3}+P_{2,4},\quad
			s_3=P_{1,4}+P_{2,3},\quad M=\max(s_1,s_2,s_3),\\
			\alpha_i=\frac12\left(\sum_{\substack{1\le j\le4\\j\ne i}}P_{i,j}-M\right)
			\quad(1\le i\le4),\qquad
			\beta_h=\frac12(M-s_h)\quad(1\le h\le3).
		\end{gathered}
		\]
		These will be the coefficients of the four singleton cuts and the three cuts with two vertices on each side, respectively. The common value $M$ makes all three coefficients $\beta_h$ nonnegative. It also makes the $\alpha_i$ nonnegative: for a fixed vertex $i$, each $s_h$ pairs an edge incident with $i$ with an edge between two other vertices, and the path inequality through $i$ bounds the latter edge by the two remaining edges incident with $i$. Thus
		\[
		M\le\sum_{\substack{1\le j\le4\\j\ne i}}P_{i,j}.
		\]
		Assign coefficient $\alpha_i$ to the cut $\{i\}$ and coefficients $\beta_1,\beta_2,\beta_3$ to the cuts $\{1,2\}$, $\{1,3\}$, $\{1,4\}$. Precisely four of these cuts separate vertices $1$ and $2$, and their total coefficient is
		\[
		\begin{aligned}
			\alpha_1+\alpha_2+\beta_2+\beta_3
			&=\frac12\bigl(2P_{1,2}+P_{1,3}+P_{1,4}
			+P_{2,3}+P_{2,4}-s_2-s_3\bigr)\\
			&=P_{1,2}.
		\end{aligned}
		\]
		Relabeling gives the same identity for every other pair. If $n<4$, first duplicate vertices at distance zero and then restrict the resulting cut decomposition to the original vertices. Discarding cuts whose restriction is empty or full, we obtain in all cases nonnegative coefficients $\lambda_S$ such that
		\[
		P_{i,j}=\sum_{\varnothing\ne S\subsetneq\{1,\ldots,n\}}
		\lambda_S\bigl(\mathbf1_S(i)-\mathbf1_S(j)\bigr)^2.
		\]
		Consequently, the vectors $y_i$ with coordinates
		$2\sqrt{\lambda_S}\mathbf1_S(i)$ satisfy
		$\|y_i-y_j\|^2=4P_{i,j}$.
		
		Next we use copositivity to control the size of this configuration. For every probability vector $a\ge0$ with $\sum_{i=1}^n a_i=1$, the weighted variance identity and $a^{\mathsf T}(J-2P)a\ge0$ give
		\[
		\sum_{i=1}^n a_i\|y_i\|^2
		-\left\|\sum_{i=1}^n a_i y_i\right\|^2
		=\sum_{1\le i<j\le n}a_i a_j\|y_i-y_j\|^2
		=4\sum_{1\le i<j\le n}a_i a_jP_{i,j}\le1.
		\]
		The largest such variance is the squared radius of the smallest enclosing ball. Indeed, if $c$ and $R$ are its center and radius, every variance is at most
		$\sum_{i=1}^n a_i\|y_i-c\|^2\le R^2$.
		For the reverse inequality, $c$ lies in the convex hull of the points at distance $R$ from $c$. Otherwise a separating direction would move $c$ closer to every boundary point; a sufficiently small movement would also leave every other point at distance less than $R$, contradicting minimality of the ball. Choose a probability vector supported on the boundary points with mean $c$. Its variance is exactly $R^2$. The variance bound above therefore gives $R\le1$, so $\|y_i-c\|\le1$ for every $i$.
		
		It remains to put all points on the unit sphere while preserving the path inequalities. Define
		\[
		d_i=\frac{1-\|y_i-c\|^2}{4}\ge0,
		\qquad
		u_i=(y_i-c,\,2\sqrt{d_i}\,\mathbf e_i).
		\]
		The extra coordinate for vertex $i$ is orthogonal to those for every other vertex. Thus the $u_i$ are unit vectors, and for $i\ne j$,
		\[
		\|u_i-u_j\|^2
		=\|y_i-y_j\|^2+4d_i+4d_j
		=4(P_{i,j}+d_i+d_j).
		\]
		Let $P'_{i,j}=(1-\langle u_i,u_j\rangle)/2$ for all $i,j$. Then $P'$ is symmetric with zero diagonal, and $J-2P'$ is the Gram matrix of the $u_i$, so it is positive semidefinite. The distance formula shows that
		$P'_{i,j}=P_{i,j}+d_i+d_j\ge P_{i,j}$ for $i\ne j$. Moreover, for distinct $i,j,k$,
		\[
		P'_{i,k}+P'_{k,j}-P'_{i,j}
		=P_{i,k}+P_{k,j}-P_{i,j}+2d_k\ge0.
		\]
		Thus $P'$ is feasible for $U^{\Delta}_{\SDP}(W)$. Since $W$ is nonnegative, its objective is at least that of $P$. Maximizing over feasible $P$ proves $U^{\Delta}_{\CP}(W)\le U^{\Delta}_{\SDP}(W)$; together with~\eqref{eq:sdp-triangle-comparison}, this gives equality. Substituting it into Theorem~\ref{thm:weighted-cp} yields $\log_2\aR_{\F}(\WT(Q))\le U^{\Delta}_{\SDP}(W)$ over every field.
	\end{proof}

	\begin{proposition}[complete graphs and the five-cycle]
		\label{prop:cone-comparison-examples}
		For every $n\geq2$,
		\[
		U^{\Delta}_{\SDP}(K_n)=U_{\SDP}(K_n)
		=U^{\Delta}_{\CP}(K_n)=U_{\CP}(K_n)=\frac{n^2}{4}.
		\]
		For the five-cycle,
		\begin{align*}
			U^{\Delta}_{\SDP}(C_5)=U^{\Delta}_{\CP}(C_5)&=\frac{25}{6},\\
			U_{\SDP}(C_5)&=\frac52\left(1+\cos\frac\pi5\right),\\
			U_{\CP}(C_5)&=5.
		\end{align*}
		Here a graph argument denotes its zero-one edge-weight matrix.
	\end{proposition}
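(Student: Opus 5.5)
For $K_n$ I would squeeze all four quantities between a single feasible dual matrix and a single covering vector. By~\eqref{eq:sdp-triangle-comparison} and Theorem~\ref{thm:weighted-cp} we have $U^{\Delta}_{\SDP}\le\min(U_{\SDP},U^{\Delta}_{\CP})\le\max(U_{\SDP},U^{\Delta}_{\CP})\le U_{\CP}$, using also $U_{\SDP}\le U_{\CP}$ from Proposition~\ref{prop:sdp-gw}. So it suffices to prove $U_{\CP}(K_n)\le n^2/4\le U^{\Delta}_{\SDP}(K_n)$.

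For the upper bound, take the one-vector cover $b=\mathbf1$. Then $Z=J$ covers every edge, and its cost is $\tfrac14\langle J,J\rangle=n^2/4$. For the lower bound, put $P=\tfrac{n}{2(n-1)}(J-I_n)$. Then
\[
J-2P=\tfrac{n}{n-1}I_n-\tfrac1{n-1}J,
\]
which has eigenvalue $0$ on $\mathbf1$ and $n/(n-1)$ on $\mathbf1^\perp$, so it is positive semidefinite. A constant off-diagonal $P$ satisfies every path inequality. Its objective is $\binom n2\cdot\tfrac{n}{2(n-1)}=n^2/4$.

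For $C_5$, I would treat the three values separately. For $U_{\CP}(C_5)=5$, the upper bound is $U_{\CP}\le\sum W_{i,j}=5$ from Theorem~\ref{thm:weighted-cp}. For the lower bound, take $P=A_{C_5}$. Then $J-2P$ has diagonal $1$, entries $-1$ on cycle edges and $+1$ on nonedges, so it is the Horn matrix, a classical copositive matrix~\cite{berman2003,dur2010}. Its objective is $5$.

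For $U_{\SDP}(C_5)$, I use the Goemans--Williamson form~\eqref{eq:sdp-gw}. Placing unit vectors in the plane at successive angles $4\pi/5$ gives $X_{i,i+1}=-\cos(\pi/5)$, hence value $\tfrac52(1+\cos\frac\pi5)$. For the matching upper bound, note that the objective equals $\tfrac14\langle L,X\rangle$, where $L$ is the Laplacian. Since $\operatorname{tr}X=5$, this is at most $\tfrac54\lambda_{\max}(L)$. The circulant $L$ has $\lambda_{\max}=2-2\cos(4\pi/5)=2+2\cos(\pi/5)$, so the bound is exactly $\tfrac52(1+\cos\frac\pi5)$.

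For the path-strengthened value $25/6$, I again squeeze between $U^{\Delta}_{\CP}$ and $U^{\Delta}_{\SDP}$.

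The upper bound on $U^{\Delta}_{\CP}$ is the main step. Let $E$ be the sum of $P$ over the five cycle edges and $N$ the sum over the five diagonals. Each edge $(i,i+1)$ satisfies the path inequality through $i+3$, namely $P_{i,i+1}\le P_{i,i+3}+P_{i+3,i+1}$, and both terms on the right are diagonals. Summing around the cycle, each diagonal is used twice, so $E\le 2N$. Copositivity tested on $a=\mathbf1$ gives $4(E+N)\le25$. Combining the two, $E\le 25/4-N\le25/4-E/2$, hence $E\le25/6$.

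For the lower bound on $U^{\Delta}_{\SDP}$, take the circulant $P$ with value $5/6$ on edges and $5/12$ on diagonals. The path inequalities reduce to $5/6\le 2\cdot5/12$ and $5/12\le 5/3$, plus trivially true mixed ones. The eigenvalues of the circulant $J-2P$ are
\[
1-\tfrac43\cos\tfrac{2\pi k}5+\tfrac13\cos\tfrac{4\pi k}5,\qquad k=0,1,2,
\]
which equal $0$, about $0.32$, and about $2.2$, all nonnegative. The objective is $5\cdot5/6=25/6$.

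I expect the averaging argument for the upper bound $25/6$ to be the delicate step, since it needs the right choice of intermediate vertex. Everything else is an explicit certificate check.
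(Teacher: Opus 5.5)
Your proof is correct and follows the paper's argument: the same cover $Z=J$ and constant dual matrix for $K_n$, the same chord-pairing path inequalities combined with copositivity tested on $\mathbf 1$ for the $25/6$ upper bound, and the same extremal dual points for the lower bounds. Your circulant with entries $5/6$ and $5/12$ is exactly the scaled average of alternating cuts from Theorem~\ref{thm:odd-cycle-cp}; the remaining differences are cosmetic, namely checking positive semidefiniteness via circulant eigenvalues rather than the centered-covariance decomposition, bounding $U_{\SDP}(C_5)$ through the Laplacian rather than the least adjacency eigenvalue, and citing copositivity of the Horn matrix instead of Motzkin--Straus.
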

	\begin{proof}
		For $K_n$, the cover $Z=J$ gives $U_{\CP}(K_n)\leq n^2/4$.
		The symmetric zero-diagonal matrix with $P_{i,j}=n/(2(n-1))$ for $i\ne j$ satisfies all path
		inequalities, and
		\[
		J-2P=\frac{n}{n-1}I_n-\frac{1}{n-1}J\succeq0.
		\]
		Its objective gives $U^{\Delta}_{\SDP}(K_n)\ge n^2/4$. By Proposition~\ref{prop:sdp-gw} and~\eqref{eq:sdp-triangle-comparison}, both $U_{\SDP}(K_n)$ and $U^{\Delta}_{\CP}(K_n)$ lie between $U^{\Delta}_{\SDP}(K_n)$ and $U_{\CP}(K_n)$. The cover $Z=J$ therefore forces all four values to equal $n^2/4$.
		
		The two path-strengthened values on $C_5$ are the $m=5$ case of Theorem~\ref{thm:odd-cycle-cp}: it proves $U^{\Delta}_{\CP}(C_5)\le25/6$ and $U^{\Delta}_{\SDP}(C_5)\ge25/6$, which force equality by~\eqref{eq:sdp-triangle-comparison}. We compute the unstrengthened values here to show the effect of the path constraints.
		
		Let $A$ be the adjacency matrix of $C_5$.  Its least eigenvalue is
		$-2\cos(\pi/5)$.  For $X\succeq0$ with $X_{i,i}=1$,
		\[
		\frac12\sum_{\substack{1\le i<j\le5\\ (i,j)\in E(C_5)}}(1-X_{i,j})
		=\frac52-\frac14\langle A,X\rangle
		\leq\frac52\left(1+\cos\frac\pi5\right).
		\]
		Equality is attained by the Gram matrix of the five planar unit vectors
		with angles $4\pi j/5$, $0\leq j\le4$. By~\eqref{eq:sdp-gw}, this proves the stated value of $U_{\SDP}(C_5)$.
		
		Finally, $P=A$ is feasible for the unstrengthened $\CP$ dual.
		Indeed, the triangle-free case of the Motzkin--Straus theorem~\cite{motzkinstraus1965}
		gives
		$x^{\mathsf T}Ax\leq(\mathbf1^{\mathsf T}x)^2/2$ for $x\geq0$;
		hence $J-2A$ is copositive.  This yields $U_{\CP}(C_5)\geq5$.
		Theorem~\ref{thm:weighted-cp} gives $U_{\CP}(C_5)\le|E(C_5)|=5$. Together with the feasible dual point $P=A$, this proves $U_{\CP}(C_5)=5$.
	\end{proof}
	
	\begin{remark}[the complete-graph value]
		The complete-graph value has a useful geometric interpretation. In the
		GW formulation,
		\[
		\sum_{1\le i<j\le n}\frac{1-\langle u_i,u_j\rangle}{2}
		=\frac14\left(n^2-\left\|\sum_{i=1}^n u_i\right\|^2\right).
		\]
		Unit vectors can sum to zero for every $n\geq2$, while $n$ signs cannot
		sum to zero when $n$ is odd.  Thus $n^2/4$ is the continuous balancing
		value, and $\lfloor n^2/4\rfloor$ is its discrete cut counterpart.  The value $9/4$ at the triangle is therefore a natural continuous balancing parameter.
	\end{remark}
	
	The five-cycle shows a different phenomenon: path constraints recover
	information that the basic GW relaxation does not impose.  In particular,
	$U^{\Delta}_{\CP}(C_5)<U_{\SDP}(C_5)<U_{\CP}(C_5)$.
	The reverse strict comparison also occurs. If $G$ is triangle-free,
	set $P_{i,i}=0$ for $1\le i\le n$, and for distinct $i,j$ set $P_{i,j}=2/3$ on edges and $P_{i,j}=1/3$ on nonedges. Path inequalities
	hold, and Motzkin--Straus gives
	\[
	J-2P=\tfrac13(J-2A_G)+\tfrac23I_n\in\COP_n,
	\qquad U^{\Delta}_{\CP}(G)\geq\tfrac23|E(G)|.
	\]
	For a $d$-regular graph with least adjacency eigenvalue $\lambda_{\min}$,
	$\langle A_G,X\rangle\geq n\lambda_{\min}$ for every feasible GW matrix,
	so
	\[
	U_{\SDP}(G)\leq\frac{|E(G)|}{2}
	\left(1-\frac{\lambda_{\min}}d\right).
	\]
	The nonbipartite Lubotzky--Phillips--Sarnak graphs of degree $38$ have
	arbitrarily large girth and $\lambda_{\min}\geq-2\sqrt{37}$~\cite[Theorems~3.4 and~4.1]{lps1988}.
	Choosing girth greater than three yields
	$U_{\SDP}(G)\leq(1/2+\sqrt{37}/38)|E(G)|<2|E(G)|/3$.
	Thus $U^{\Delta}_{\CP}$ and the basic $U_{\SDP}$ are incomparable. The effect of adding path constraints is distinct from the choice of matrix cone.
	The next section evaluates the completely positive tensor bound on graph families with explicit conic certificates.

	\Needspace{8\baselineskip}
	\section{Consequences for Graph Families and Matrix Multiplication}
	\label{sec:graph-applications}\label{sec:applications}
	
	We now use Theorem~\ref{thm:weighted-cp} to obtain explicit bounds for matrix multiplication and graph families. The examples illustrate two ways to bound the tensor exponent: completely positive covers certify upper bounds on $U_{\CP}$, while path inequalities can reduce the value to $U^\Delta_{\CP}$. We first recover the rectangular matrix multiplication bounds through a three-vertex cover, then treat bipartite graphs, odd cycles, complete multipartite graphs, split graphs, and wheels.
	
	Throughout this section, $\F$ is arbitrary. We use the weighted graph exponent $\Gamma_{\F}(W)$ and the graph exponent $\Gamma_{\F}(G)$ from Definition~\ref{def:weighted-exponent}. We write $U_{\mathrm{CP}}(G)$ and $U^{\Delta}_{\mathrm{CP}}(G)$ for the corresponding parameters of the adjacency matrix of $G$, and use the same convention for $U_{\mathrm{SDP}}$ and $U^{\Delta}_{\mathrm{SDP}}$.
	
	\subsection{Rectangular matrix multiplication as a three-vertex case}
	\label{subsec:rectangular-cp}
	
	Weighted triangles return us to matrix multiplication. They show how the general theorem recovers the known rectangular exponent bound through an optimal conic certificate.
	
	\begin{definition}[rectangular matrix multiplication exponent]
		\label{def:rectangular-exponent}
		For $a,b,c\geq0$, the \emph{rectangular matrix multiplication exponent}
		$\omega_{\F}(a,b,c)$ is the infimum of the real numbers $\gamma$ such that the rank over $\F$ of the matrix multiplication tensor with index dimensions
		$\lceil N^a\rceil,\lceil N^b\rceil,\lceil N^c\rceil$ is
		$O_{\varepsilon}(N^{\gamma+\varepsilon})$ for every $\varepsilon>0$~\cite[Section~1]{abz2026}.
		In particular, $\omega_{\F}(1,1,1)=\omega_{\F}$ is the square matrix multiplication exponent. The \emph{dual matrix multiplication exponent} is
		\[
		\alpha_{\F}=\sup_{\substack{k\ge0\\\omega_{\F}(1,k,1)=2}}(k).
		\]
	\end{definition}
	
	The standard identification of matrix multiplication with a three-vertex graph tensor~\cite{bcs1997,cvz2019}, together with Lemma~\ref{lem:weighted-spectral-formula}, gives
	\[
	\omega_{\F}(a,b,c)=\Gamma_{\F}(W),
	\qquad W_{1,2}=a,\quad W_{1,3}=b,\quad W_{2,3}=c.
	\]
	Here $W$ is symmetric with zero diagonal. For integer parameters, the tensor at $N=2^t$ is exactly the $t$-th Kronecker power of the weighted triangle with edge dimensions $2^a,2^b,2^c$. Projection and padding between consecutive powers of two give the same exponent for arbitrary $N$. Clearing denominators proves the rational case. For real parameters, approximate each coordinate from above and below by rational numbers. Both exponents are monotone, and increasing the three coordinates by $\varepsilon_1,\varepsilon_2,\varepsilon_3\ge0$ increases either exponent by at most $\varepsilon_1+\varepsilon_2+\varepsilon_3$: the additional edge resources have rank at most
	$\prod_{i=1}^{3}\lceil N^{\varepsilon_i}\rceil=N^{\varepsilon_1+\varepsilon_2+\varepsilon_3+o(1)}$.
	This proves the real-parameter identity as well.
	
	\begin{figure}[H]
		\centering
		\begin{tikzpicture}[every node/.style={font=\normalsize},
			vertex/.style={circle,draw=black,fill=white,inner sep=0pt,minimum size=5pt}]
			\coordinate (v1) at (0,1.15);
			\coordinate (v2) at (-1.3,-.75);
			\coordinate (v3) at (1.3,-.75);
			\draw[semithick] (v1)--node[midway,left=3pt] {$a$}(v2);
			\draw[semithick] (v1)--node[midway,right=3pt] {$b$}(v3);
			\draw[semithick] (v2)--node[midway,below=3pt] {$c$}(v3);
			\node[vertex,label=above:$1$] at (v1) {};
			\node[vertex,label=below left:$2$] at (v2) {};
			\node[vertex,label=below right:$3$] at (v3) {};
		\end{tikzpicture}
		\caption{The weighted triangle for rectangular matrix multiplication. Edge labels are entries of the logarithmic edge-weight matrix; the corresponding graph exponent is $\omega_{\F}(a,b,c)$.}
		\label{fig:weighted-triangle}
	\end{figure}
	
	\Needspace{19\baselineskip}
	\begin{proposition}[the rectangular bound of Uzna\'nski and Amanov--van den Berg--Zuiddam]
		\label{prop:rectangular-cp}
		Let $a\geq b\geq c\geq0$. Define
		\[
		\Psi(a,b,c)=
		\begin{cases}
			a+b,&c(a+b)\leq ab,\\[1mm]
			\displaystyle\frac{(ab+ac+bc)^2}{4abc},&c(a+b)>ab.
		\end{cases}
		\]
		The second branch has positive denominator. For the corresponding weighted triangle,
		\[
		U^{\Delta}_{\mathrm{SDP}}(W)
		=U_{\mathrm{SDP}}(W)
		=U^{\Delta}_{\mathrm{CP}}(W)
		=U_{\mathrm{CP}}(W)=\Psi(a,b,c).
		\]
		Consequently $\omega_{\F}(a,b,c)\leq\Psi(a,b,c)$, with equality whenever $c(a+b)\leq ab$.
	\end{proposition}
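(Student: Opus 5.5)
The four values are squeezed between a lower bound from an explicit feasible point of $U^{\Delta}_{\SDP}$ and an upper bound from an explicit completely positive cover for $U_{\CP}$. By~\eqref{eq:sdp-triangle-comparison} and Proposition~\ref{prop:sdp-gw}, we have $U^{\Delta}_{\SDP}\le U_{\SDP}\le U_{\CP}$ and $U^{\Delta}_{\SDP}\le U^{\Delta}_{\CP}\le U_{\CP}$. Hence it suffices to show $U_{\CP}(W)\le\Psi(a,b,c)$ and $U^{\Delta}_{\SDP}(W)\ge\Psi(a,b,c)$. The consequence $\omega_{\F}(a,b,c)\le\Psi(a,b,c)$ then follows from Theorem~\ref{thm:weighted-cp} and the identification $\omega_{\F}(a,b,c)=\Gamma_{\F}(W)$. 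In the first branch, equality holds because $a+b$ is the cut isolating vertex~$1$, and $\MaxCut(W)\le\Gamma_{\F}(W)$. Positivity of the denominator in the second branch is immediate: $c(a+b)>ab\ge0$ forces $c>0$, hence $a,b,c>0$.

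\textbf{Upper bound (covers).} In the first branch, use the two rank-one factors $\sqrt a(\mathbf e_1+\mathbf e_2)$ and $\sqrt b(\mathbf e_1+\mathbf e_3)$, adding a small amount on the $(2,3)$ entry if needed. A cleaner choice is a single vector $x=(x_1,x_2,x_3)\ge0$ with $x_1x_2=a$, $x_1x_3=b$, $x_2x_3\ge c$. I would instead take the cover $Z=xx^{\mathsf T}+\mathrm{diag}$-free correction. Concretely, the factor $x=(\sqrt{ab/c'},\,\sqrt{ac'/b},\,\sqrt{bc'/a})$ with $c'=ab/(a+b)$ gives $x_2x_3=c'\ge c$ and cost $\tfrac14(x_1+x_2+x_3)^2$; I will check this equals $a+b$. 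In the second branch, take the single factor $x_1=\sqrt{ab/c}$, $x_2=\sqrt{ac/b}$, $x_3=\sqrt{bc/a}$. It covers all three demands exactly, and
\[
\tfrac14(x_1+x_2+x_3)^2=\frac{(ab+ac+bc)^2}{4abc},
\]
matching $\Psi$. This is the single-vector special case of~\eqref{eq:cp-tensor-certificate}.

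\textbf{Lower bound (dual point).} In the first branch, the cut $\{1\}$ gives $U^{\Delta}_{\SDP}\ge\MaxCut(W)\ge a+b$. In the second branch, I will construct unit vectors $u_1,u_2,u_3$ and set $P_{i,j}=(1-\langle u_i,u_j\rangle)/2$, so that $J-2P$ is a Gram matrix. The natural guess comes from complementary slackness with the cover above: $\langle J-2P,xx^{\mathsf T}\rangle=0$ forces $\sum_i x_iu_i=0$. So the vectors are coplanar with weighted sum zero, i.e.\ they form a triangle with side lengths $x_1,x_2,x_3$. That triangle exists iff $x$ satisfies the triangle inequalities. The binding one is $x_1\le x_2+x_3$, which is equivalent to $ab\le c(a+b)$ after multiplying by $\sqrt{abc}$, exactly the second-branch condition. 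Then
\[
\sum_{i<j}W_{i,j}P_{i,j}=\sum_{i<j}x_ix_jP_{i,j}=\tfrac14\Bigl(\bigl(\textstyle\sum x_i\bigr)^2-\bigl\|\sum x_iu_i\bigr\|^2\Bigr)=\Psi.
\]
This uses the identity from the remark on the complete-graph value, weighted by $x$.

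\textbf{Main obstacle.} The hardest step is verifying the path inequalities for this $P$. They say the angular distances $\theta_{ij}$ between the $u_i$ satisfy
\[
\sin^2(\theta_{ij}/2)\le\sin^2(\theta_{ik}/2)+\sin^2(\theta_{kj}/2).
\]
For three coplanar unit vectors with zero weighted sum, the three angles are $\pi$ minus the angles of the triangle, so each $\theta_{ij}\ge\pi/2$ up to the triangle shape. The cleanest route is the substitution $P_{ij}=\cos^2(\gamma_{ij}/2)$, where $\gamma_{ij}$ is the triangle's angle opposite side $k$. The inequality then reduces to $\cos^2(\gamma_k/2)\le\cos^2(\gamma_i/2)+\cos^2(\gamma_j/2)$ for angles summing to $\pi$. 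I would prove this by writing $\cos^2(\gamma/2)=(1+\cos\gamma)/2$, which turns it into $1+\cos\gamma_i+\cos\gamma_j-\cos\gamma_k\ge0$. This holds since $\cos\gamma_i+\cos\gamma_j\ge\cos\gamma_k-1$ for angles of a triangle; for example, it follows from $\cos\gamma_i+\cos\gamma_j=2\sin(\gamma_k/2)\cos((\gamma_i-\gamma_j)/2)\ge0$, and the right side is at most zero. Alternatively, since $n=3$, Proposition~\ref{prop:small-order-path-equality} already gives $U^{\Delta}_{\CP}=U^{\Delta}_{\SDP}$. It then suffices to show that the $\CP$ dual feasible point without path constraints can be upgraded, so either route closes the argument.
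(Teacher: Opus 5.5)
Your proposal is correct and follows the paper's proof: the same single-vector CP covers (your $c'$-vector simplifies to the paper's $(\sqrt{a+b},\,a/\sqrt{a+b},\,b/\sqrt{a+b})$, which also covers the edge case $b=0$ where your $c'$ formula is undefined), the cut $\{1\}$ as the lower bound in the first branch, and the closed-triangle Gram point with $\sum_i x_iu_i=0$ in the second. The only difference is how you check the path inequalities: you use $P_{i,j}=\cos^2(\gamma_k/2)$ and $\cos\gamma_i+\cos\gamma_j>0$, which even gives $P_{i,k}+P_{k,j}\ge1\ge P_{i,j}$, whereas the paper uses the factorization $4xyz(P_{1,2}-P_{1,3}-P_{2,3})=(x-y-z)(x-y+z)(x+y+z)\le0$; your closing ``alternatively'' remark is not an argument, but your main route does not need it.
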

	\begin{proof}
		If $a+b=0$, all parameters vanish and the claim is immediate. Suppose first that $a+b>0$ and $c(a+b)\leq ab$. The nonnegative vector
		\[
		x=\left(\sqrt{a+b},\frac{a}{\sqrt{a+b}},\frac{b}{\sqrt{a+b}}\right)
		\]
		satisfies $x_1x_2=a$, $x_1x_3=b$, and $x_2x_3=ab/(a+b)\geq c$. Its completely positive covering cost is $(x_1+x_2+x_3)^2/4=a+b$. The cut separating vertex $1$ from vertices $2,3$ has weight $a+b$. The cut gives a lower bound of $a+b$ for each conic value, while the cover gives $U_{\CP}(W)\le a+b$. The comparisons in Proposition~\ref{prop:sdp-gw} and~\eqref{eq:sdp-triangle-comparison} therefore force all four values to equal $a+b$. Theorem~\ref{thm:weighted-cp} then gives $\omega_{\F}(a,b,c)=a+b$.
		
		Suppose next that $c(a+b)>ab$. Set
		\[
		x_1=\sqrt{ab/c},\qquad x_2=\sqrt{ac/b},\qquad x_3=\sqrt{bc/a}.
		\]
		Then $x_ix_j=W_{i,j}$ for $1\le i<j\le3$, and
		\[
		\frac14(x_1+x_2+x_3)^2=\frac{(ab+ac+bc)^2}{4abc}.
		\]
		Thus $U_{\CP}(W)\le\Psi(a,b,c)$. To prove a matching lower bound for $U^\Delta_{\SDP}(W)$, observe that $x_1$ is the largest coordinate and $x_1<x_2+x_3$. Hence there are unit vectors $u_1,u_2,u_3$ in the plane with $\sum_{i=1}^3 x_i u_i=0$: take three directed sides of a triangle with side lengths $x_1,x_2,x_3$. Define $P_{i,j}=(1-\langle u_i,u_j\rangle)/2$. The matrix $J-2P$ is their Gram matrix, and
		\[
		\sum_{1\le i<j\le3}W_{i,j}P_{i,j}
		=\frac14\left(\left(\sum_{i=1}^3 x_i\right)^2-
		\left\|\sum_{i=1}^3 x_iu_i\right\|^2\right)
		=\Psi(a,b,c).
		\]
		It remains to check the path inequalities. Writing $x=x_1$, $y=x_2$, and $z=x_3$, the closed-triangle identity gives
		\[
		P_{1,2}=\frac{(x+y)^2-z^2}{4xy},
		\qquad
		4xyz(P_{1,2}-P_{1,3}-P_{2,3})
		=(x-y-z)(x-y+z)(x+y+z)\leq0.
		\]
		The other two inequalities follow by permuting $x,y,z$. Thus this point is feasible even for the semidefinite relaxation with path inequalities. The comparison~\eqref{eq:sdp-triangle-comparison} and the basic SDP--CP comparison in Proposition~\ref{prop:sdp-gw} place all four values between this feasible value and the CP cover, so they all equal $\Psi(a,b,c)$. Applying Theorem~\ref{thm:weighted-cp} and the identity $\omega_{\F}(a,b,c)=\Gamma_{\F}(W)$ gives $\omega_{\F}(a,b,c)\le\Psi(a,b,c)$.
	\end{proof}
	
	Uzna\'nski~\cite[Theorem~1.1]{uznanski2026} obtained the exponent bound in Proposition~\ref{prop:rectangular-cp} over fields of characteristic zero. Amanov, van den Berg, and Zuiddam~\cite[Theorem~1.2]{abz2026} established the same bound over every field. Expanding the second branch gives
	\[
	\Psi(a,b,c)=\frac{a+b+c}{2}+\frac14\left(\frac{ab}{c}+\frac{ac}{b}+\frac{bc}{a}\right).
	\]
	The contribution here is the direct derivation from the weighted graph theorem and its interpretation as the three-vertex CP covering problem. In particular, their bounds
	\[
	\omega_{\F}(1,k,1)=2\quad(0\leq k\leq\tfrac12),
	\qquad
	\omega_{\F}(1,k,1)\leq1+k+\frac1{4k}\quad(k\geq\tfrac12)
	\]
	follow by substituting edge weights $1,k,1$ into Proposition~\ref{prop:rectangular-cp}, after permuting the three vertices so that the weights are in nonincreasing order~\cite[Corollary~4.17]{abz2026}.
	
	\subsection{Bipartite graphs}
	\label{subsec:bipartite-cp}
	The simplest exact graph exponents arise when one cut contains every positive-weight edge. The maximum-cut and total-weight terms in Theorem~\ref{thm:weighted-cp} then agree, determining the tensor exponent and both CP values.
	
	\begin{definition}[bipartite support]
		\label{def:bipartite-support}
		A nonnegative symmetric zero-diagonal $n\times n$ weight matrix $W$ has \emph{bipartite support} if its vertices can be partitioned as $L\sqcup R$ so that $W_{i,j}=0$ whenever $i,j$ belong to the same part. A graph is bipartite when its adjacency matrix has bipartite support.
	\end{definition}
	
	\begin{proposition}[bipartite equality]
		\label{prop:bipartite-weighted}
		If $W$ has bipartite support, then
		\[
		\Gamma_{\F}(W)=U^{\Delta}_{\mathrm{CP}}(W)
		=U_{\mathrm{CP}}(W)=\MaxCut(W)=\sum_{1\le i<j\le n}W_{i,j}.
		\]
		In particular, $\Gamma_{\F}(G)=|E(G)|$ for every bipartite graph $G$.
	\end{proposition}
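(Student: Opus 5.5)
The plan is to sandwich all four quantities between $\MaxCut(W)$ and the total weight $\sum_{1\le i<j\le n}W_{i,j}$ and then show that these two extremes coincide. Theorem~\ref{thm:weighted-cp}, applied to the real nonnegative weight matrix $W$, already gives
\[
\MaxCut(W)\le\Gamma_{\F}(W)\le U^{\Delta}_{\CP}(W)\le U_{\CP}(W)\le\sum_{1\le i<j\le n}W_{i,j}.
\]
So it only remains to prove that $\MaxCut(W)\ge\sum_{1\le i<j\le n}W_{i,j}$.

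For that, take the partition $L\sqcup R$ from Definition~\ref{def:bipartite-support}. If both parts are nonempty, then $S=L$ is a nonempty proper subset of the vertices. Every pair with $W_{i,j}>0$ has one endpoint in $L$ and one in $R$, so $\operatorname{cut}_W(L)$ counts every positive weight exactly once. Hence $\operatorname{cut}_W(L)=\sum_{1\le i<j\le n}W_{i,j}$.

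The only slightly delicate case is when one part is empty. Then $W=0$, because every pair lies within a single part. Any singleton cut then has weight $0$, which equals the total weight, since $n\ge2$ guarantees that a nonempty proper subset exists. In either case the chain of inequalities collapses to equalities, which proves all four identities.

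For the graph statement, take $W=A_G$. Then the total weight is $|E(G)|$, and Definition~\ref{def:weighted-exponent} identifies $\Gamma_{\F}(A_G)$ with $\Gamma_{\F}(G)$. No step is a real obstacle; the only point needing care is the degenerate empty-part case just handled.
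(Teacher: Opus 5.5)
Your proposal is correct and matches the paper's proof: both use Theorem~\ref{thm:weighted-cp} to get the chain $\MaxCut(W)\le\Gamma_{\F}(W)\le U^{\Delta}_{\CP}(W)\le U_{\CP}(W)\le\sum_{i<j}W_{i,j}$, and then observe that the support bipartition is a cut containing every positive-weight pair. Your explicit treatment of an empty part, where $W=0$, corresponds to the paper's remark that the identity also holds when $W=0$.
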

	\begin{proof}
		A support bipartition cuts every positive-weight edge, so $\MaxCut(W)=\sum_{1\le i<j\le n}W_{i,j}$; this identity also holds when $W=0$. Theorem~\ref{thm:weighted-cp} gives
		\[
		\MaxCut(W)\le\Gamma_{\F}(W)
		\le U^\Delta_{\CP}(W)\le U_{\CP}(W)
		\le\sum_{1\le i<j\le n}W_{i,j}=\MaxCut(W).
		\]
		Thus all these quantities are equal. Taking $W=A_G$ gives $\Gamma_{\F}(G)=|E(G)|$.
	\end{proof}

	\subsection{Odd cycles and the role of path inequalities}
	\label{subsec:odd-cycle-cp}
	An odd cycle has no cut containing every edge. Here the path inequalities improve the basic CP bound and bring the tensor upper bound close to the maximum-cut lower bound.
	
	\begin{definition}[cycles]
		\label{def:cycle-family}
		For $m\geq3$, let $C_m$ have vertex set $[m]$ and edges $(i,i+1)$ for $1\le i<m$, together with $(1,m)$. In formulas involving cycle vertices, indices are interpreted cyclically. Even cycles are bipartite, so we concentrate on odd $m$.
	\end{definition}
	
	\begin{figure}[H]
		\centering
		\begin{tikzpicture}[every node/.style={font=\normalsize},
			vertex/.style={circle,draw=black,fill=white,inner sep=0pt,minimum size=5pt}]
			\foreach \i in {1,...,5}\coordinate (v\i) at ({90+72*(\i-1)}:1.1);
			\draw[semithick] (v1)--(v2)--(v3)--(v4)--(v5)--cycle;
			\foreach \i in {1,...,5}{
				\pgfmathsetmacro{\ang}{90+72*(\i-1)}
				\node[vertex,label={\ang:$\i$}] at (v\i) {};
			}
		\end{tikzpicture}
		\caption{The odd cycle $C_5$. Its five edges form a cycle, with no chords.}
		\label{fig:odd-cycle}
	\end{figure}

	\begin{theorem}[odd-cycle bound]
		\label{thm:odd-cycle-cp}
		For every odd integer $m\geq3$,
		\[
		m-1\leq \Gamma_{\F}(C_m)
		\leq U^{\Delta}_{\mathrm{CP}}(C_m)
		=U^{\Delta}_{\mathrm{SDP}}(C_m)
		=\frac{m^2}{m+1}
		=m-1+\frac1{m+1}.
		\]
	\end{theorem}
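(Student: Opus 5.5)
The plan is to prove the four relations separately. The left inequality is the cut lower bound. The middle inequality is Theorem~\ref{thm:weighted-cp}. The equality of the conic values comes from an explicit upper bound on $U^{\Delta}_{\CP}(C_m)$ together with an explicit feasible point for $U^{\Delta}_{\SDP}(C_m)$, combined through~\eqref{eq:sdp-triangle-comparison}.

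\emph{Lower bound.} A cut of an odd cycle misses at least one edge, since every cycle crosses any cut an even number of times. Alternating sides around the cycle misses exactly one edge. Hence $\MaxCut(C_m)=m-1$, and Theorem~\ref{thm:weighted-cp} gives $m-1\le\Gamma_{\F}(C_m)\le U^{\Delta}_{\CP}(C_m)$.

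\emph{Upper bound $U^{\Delta}_{\CP}(C_m)\le m^2/(m+1)$.} Fix a feasible $P$ for~\eqref{eq:path-cp-bound}. Let $E=\sum_i P_{i,i+1}$ be the edge sum and $N$ the sum over nonedges. Copositivity tested on $\mathbf1$ gives $4(E+N)\le m^2$, so it suffices to prove
\[
N\ge\tfrac{m-3}{4}E,
\]
because then $E(m+1)/4\le m^2/4$. To obtain this, average path inequalities of the form $P_{i,i+1}\le P_{i,k}+P_{k,i+1}$ around the cycle. The naive average over all $k\notin\{i,i+1\}$ counts every nonedge pair four times and every edge pair twice, which yields only $N\ge\tfrac{m-4}{4}E$. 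So the weighting has to be refined. I would first try to use the path inequalities to bound each edge by chords of both parities: for the edge $(i,i+1)$ and each chord length, choose intermediate vertices $k$ so that the two pairs $(i,k)$ and $(k,i+1)$ have cyclic distances of opposite parity. Oddness of $m$ should enter exactly here, supplying the missing $E/4$. If this is not enough, I would add copositivity tested on other nonnegative vectors, for instance indicator vectors of arcs, and find the correct multipliers by solving the small dual LP symbolically for $m=5,7$ and then generalizing. Finding these multipliers is the step I expect to be the main obstacle.

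\emph{Matching feasible point.} By the cyclic and reflection symmetry of $C_m$ and convexity of the program, I would look for a circulant $P$ whose entries depend only on the cyclic distance $d$. The target value forces the edge entries to equal $m/(m+1)$, and the tight case above requires $4(E+N)=m^2$, so $\mathbf1$ must lie in the kernel of $J-2P$. Following the geometric construction in the proof of Proposition~\ref{prop:small-order-path-equality}, I would build $P$ as a symmetric average of cut semimetrics, namely the cuts given by arcs of length about $(m\pm1)/2$. This gives the path inequalities automatically. Then I would add a constant diagonal lift so that $J-2P$ becomes a Gram matrix of unit vectors summing to zero. Feasibility reduces to nonnegativity of the circulant eigenvalues of $J-2P$, computed by a cosine sum, and to the path inequalities, which are preserved under the lift. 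Its objective equals $m^2/(m+1)$, so $U^{\Delta}_{\SDP}(C_m)\ge m^2/(m+1)$. Combining this with the upper bound and~\eqref{eq:sdp-triangle-comparison} forces $U^{\Delta}_{\CP}(C_m)=U^{\Delta}_{\SDP}(C_m)=m^2/(m+1)$.
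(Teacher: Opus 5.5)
Your reduction is right: the cut lower bound, Theorem~\ref{thm:weighted-cp}, copositivity at $\mathbf1$, and the target $N\ge\frac{m-3}{4}E$ are exactly what is needed, and your count for the naive average is correct. But the upper bound is left unproved, because you never produce the multipliers. No extra copositivity tests are needed; the missing idea is to select path inequalities vertex by vertex. Write $m=2k+1$. The nonneighbors $v+2,\ldots,v+2k-1$ of a vertex $v$ are $m-3$ consecutive vertices, an even number, so they split into the $k-1$ cycle edges $(v+2r,v+2r+1)$ with $1\le r\le k-1$. Use only $P_{v+2r,v+2r+1}\le P_{v,v+2r}+P_{v,v+2r+1}$. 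Summing over $v$ and $r$, each cycle edge occurs $k-1$ times on the left. The right-hand sides contain only chords, each exactly twice (once from each endpoint). Hence $(k-1)E\le2N$, which is $N\ge\frac{m-3}{4}E$. The point of this selection is that no cycle edge ever appears on a right-hand side and every chord is used equally often; oddness of $m$ is what makes the matching of nonneighbors possible.

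The feasible point also fails as you describe it. A contiguous arc cuts only two edges of $C_m$, so averaging arc cut semimetrics gives edge entries $2/m$, far below $m/(m+1)$. The right cuts are the $m$ rotations $s^{(h)}\in\{1,-1\}^m$ of the alternating coloring $(1,-1,\ldots,1)$, each a maximum cut. Put $Y=\frac1m\sum_h s^{(h)}(s^{(h)})^{\mathsf T}$, so that $P_0=\frac12(J-Y)$ has edge entries $(m-1)/m$.

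Even from this base, a constant additive lift $P_0+\delta(J-I_m)$ cannot reach the target. Testing $J-2P$ on $\mathbf1$ forces $\delta\le1/(2m(m-1))$, while edge entries $m/(m+1)$ need $\delta=1/(m(m+1))$, which is larger once $m\ge5$. The orthogonal lift of Proposition~\ref{prop:small-order-path-equality} also adds the coordinates $2\sqrt{d_i}\,\mathbf e_i$, so its unit vectors sum to zero only when every $d_i=0$.

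What works is scaling rather than lifting. With $c=m^2/(m^2-1)$, put $P=cP_0$. Each $s^{(h)}$ has coordinate sum $1$ and their mean is $\mathbf1/m$, so
\[
J-2P=\frac cm\sum_{h=0}^{m-1}\Bigl(s^{(h)}-\tfrac1m\mathbf1\Bigr)\Bigl(s^{(h)}-\tfrac1m\mathbf1\Bigr)^{\mathsf T}\succeq0,
\]
with unit diagonal and $\mathbf1$ in its kernel. The path inequalities are preserved by positive scaling of an average of cut semimetrics. Since every edge is cut by $m-1$ of the $m$ colorings, the edge entries become $c(m-1)/m=m/(m+1)$. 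Geometrically, this normalizes the centered Gram vectors of the cut average to unit length, which is what makes them sum to zero.
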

	\begin{proof}
		We first bound $U^{\Delta}_{\CP}(C_m)$ from above by comparing the cycle-edge objective with the total weight on all vertex pairs. The quadratic constraint controls this total, while the path inequalities force a certain amount of weight onto the chords.
		
		Let $P$ be feasible for $U^{\Delta}_{\CP}(C_m)$, and put
		\[
		A=\sum_{i=1}^mP_{i,i+1},\qquad
		B=\sum_{\substack{1\le i<j\le m\\(i,j)\notin E(C_m)}}P_{i,j}.
		\]
		Thus $A$ is the objective and $A+B$ is the total weight on all pairs. Testing $J-2P\in\COP_m$ on $\mathbf1$ gives
		\begin{equation}
			A+B\le\frac{m^2}{4}.
			\label{eq:cycle-total-budget}
		\end{equation}
		To relate $B$ to $A$, write $m=2k+1$. From each vertex $v$, pair its nonneighbors in the order
		$(v+2,v+3),\ldots,(v+2k-2,v+2k-1)$. Each pair is a cycle edge, so
		\[
		P_{v+2r,v+2r+1}\le P_{v,v+2r}+P_{v,v+2r+1}
		\qquad(1\le r\le k-1).
		\]
		Summing over $v$ and $r$, every cycle edge occurs $k-1$ times on the left, and every chord occurs twice on the right, once from each endpoint. Consequently,
		\[
		\frac{m-3}{2}A\le2B,
		\qquad
		\frac{m+1}{4}A\le A+B\le\frac{m^2}{4}.
		\]
		This proves $U^{\Delta}_{\CP}(C_m)\le m^2/(m+1)$. For $m=3$, there are no chords, and the same conclusion follows directly from~\eqref{eq:cycle-total-budget} with $B=0$.
		
		To prove $U^\Delta_{\SDP}(C_m)\ge m^2/(m+1)$, average the $m$ maximum cuts obtained by rotating an alternating coloring of the odd cycle. Their average already satisfies every path inequality. We will multiply it by a factor greater than one; the centered coloring vectors show how much scaling the semidefinite constraint permits. Specifically, let $s^{(0)},\ldots,s^{(m-1)}\in\{1,-1\}^m$ be the cyclic shifts of $(1,-1,\ldots,1)$, and set
		\[
		Y=\frac1m\sum_{h=0}^{m-1}s^{(h)}(s^{(h)})^{\mathsf T},
		\qquad c=\frac{m^2}{m^2-1},
		\qquad P=\frac c2(J-Y).
		\]
		Each matrix $(J-s^{(h)}(s^{(h)})^{\mathsf T})/2$ is a cut semimetric. Hence $P$ is symmetric, nonnegative, has zero diagonal, and satisfies the path inequalities. Each coloring vector has coordinate sum $1$, and their average is $\mathbf1/m$. Centering them gives
		\[
		J-2P
		=\frac cm\sum_{h=0}^{m-1}
		\left(s^{(h)}-\frac1m\mathbf1\right)
		\left(s^{(h)}-\frac1m\mathbf1\right)^{\mathsf T}\succeq0.
		\]
		Thus $P$ is feasible for $U^{\Delta}_{\SDP}(C_m)$. Every cycle edge is cut in exactly $m-1$ of the $m$ colorings, so its entry is $c(m-1)/m=m/(m+1)$. The objective is therefore $m^2/(m+1)$, proving the claimed lower bound on $U^\Delta_{\SDP}(C_m)$. Together with $U^\Delta_{\SDP}(C_m)\le U^\Delta_{\CP}(C_m)$ from~\eqref{eq:sdp-triangle-comparison} and the upper bound already proved for $U^\Delta_{\CP}(C_m)$, this establishes both conic equalities.
		
		Finally, an odd cycle has maximum cut $m-1$: a cut cannot contain every cycle edge, and alternating colors attains $m-1$. Theorem~\ref{thm:weighted-cp} now places $\Gamma_{\F}(C_m)$ between $m-1$ and the common conic value $m^2/(m+1)$.
	\end{proof}
	
	\begin{remark}[the role of path inequalities]
		\label{rem:cycle-path-inequalities}
		The improvement over the unstrengthened CP bound comes entirely from the path inequalities. Indeed,
		\[
		U_{\mathrm{CP}}(C_m)=m\qquad(m\geq4).
		\]
		For $m\ge4$, the cycle is triangle-free. The Motzkin--Straus inequality~\cite{motzkinstraus1965} gives
		\[
		x^{\mathsf T}A_{C_m}x\le\frac12(\mathbf1^{\mathsf T}x)^2
		\qquad(x\in\mathbb R_{\ge0}^m).
		\]
		Thus $P=A_{C_m}$ is feasible in the copositive formulation of Proposition~\ref{prop:cp-duality}. Its objective is $m$, proving $U_{\CP}(C_m)\ge m$. The total-weight bound in Theorem~\ref{thm:weighted-cp} gives $U_{\CP}(C_m)\le m$, and hence equality.
	\end{remark}
	
	Over $\C$, Christandl and Zuiddam~\cite[Theorem~3.9]{cz2019} proved the odd-cycle bound $m(m-1)/(m-1+\alpha_{\C})$, also recorded by Christandl, Vrana, and Zuiddam~\cite[Section~1.1]{cvz2019}, where $\alpha_{\C}$ is the dual matrix multiplication exponent. Substituting the bound $\alpha_{\C}\geq1/2$ from~\cite{abz2026} gives $m(m-1)/(m-1/2)$. The bound above is strictly smaller for every odd $m\geq3$, since
	\[
	\frac{m(m-1)}{m-1/2}-\frac{m^2}{m+1}
	=\frac{m(m-2)}{2(m-1/2)(m+1)}>0.
	\]
	For $m=3$, the direct matrix multiplication bound is already $9/4$. For $C_5$ and $C_7$, the new intervals are $[4,25/6]$ and $[6,49/8]$, respectively.

	\subsection{Complete multipartite graphs}
	\label{subsec:multipartite-cp}
	
	Complete multipartite graphs illustrate why nonuniform edge dimensions are useful even for unweighted questions. Merging each independent part into one tensor leg produces a weighted complete graph. A tensor-product construction shows that this grouping preserves the graph exponent.
	
	\begin{definition}[independent blowups and complete multipartite graphs]
		\label{def:independent-blowup}
		Let $k\geq2$, let $H$ be a graph on $[k]$, and let $s_1,\ldots,s_k$ be positive integers. The \emph{independent blowup} $H[s_1,\ldots,s_k]$ replaces vertex $i$ by an independent set $V_i$ of size $s_i$. For each pair $1\le i<j\le k$, it joins every vertex of $V_i$ to every vertex of $V_j$ precisely when $(i,j)\in E(H)$. In particular,
		\[
		K_{s_1,\ldots,s_k}=K_k[s_1,\ldots,s_k]
		\]
		is the \emph{complete multipartite graph} with part sizes $s_1,\ldots,s_k$.
	\end{definition}
	
	\begin{figure}[H]
		\centering
		\begin{tikzpicture}[every node/.style={font=\normalsize},
			vertex/.style={circle,draw=black,fill=white,inner sep=0pt,minimum size=5pt}]
			\foreach \i/\x in {1/-2.4,2/0,3/2.4}{
				\coordinate (v\i a) at (\x,.65);
				\coordinate (v\i b) at (\x,-.65);
				\draw[dashed,rounded corners=3pt] ({\x-.38},-1)--({\x+.38},-1)--({\x+.38},1)--({\x-.38},1)--cycle;
				\node at (\x,-1.25) {$V_{\i}$};
			}
			\foreach \u in {a,b}\foreach \v in {a,b}{
				\draw[semithick] (v1\u)--(v2\v);
				\draw[semithick] (v2\u)--(v3\v);
			}
			\draw[semithick] (v1a).. controls (-1.2,1.8) and (1.2,1.8) ..(v3a);
			\draw[semithick] (v1b).. controls (-1.2,-2.15) and (1.2,-2.15) ..(v3b);
			\draw[semithick] (v1a)--(v3b);
			\draw[semithick] (v1b)--(v3a);
			\foreach \i in {1,2,3}\foreach \u in {a,b}\node[vertex] at (v\i\u) {};
		\end{tikzpicture}
		\caption{The complete tripartite graph $K_{2,2,2}$. Each dashed box is an independent part; every pair of vertices in different parts is joined.}
		\label{fig:multipartite}
	\end{figure}

	\Needspace{10\baselineskip}
	\begin{proposition}[independent-blowup identity]
		\label{prop:independent-blowup}
		Let $W$ be the symmetric zero-diagonal $k\times k$ matrix whose upper-triangular entries, for $1\le i<j\le k$, are $W_{i,j}=s_i s_j$ when $(i,j)\in E(H)$ and $W_{i,j}=0$ otherwise. Then
		\[
		\Gamma_{\F}(H[s_1,\ldots,s_k])=\Gamma_{\F}(W).
		\]
		In particular, if every part has size $t$, then
		$\Gamma_{\F}(H[t,\ldots,t])=t^2\Gamma_{\F}(H)$.
	\end{proposition}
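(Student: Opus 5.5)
We prove the two inequalities $\Gamma_{\F}(H[s_1,\ldots,s_k])\le\Gamma_{\F}(W)$ and $\Gamma_{\F}(H[s_1,\ldots,s_k])\ge\Gamma_{\F}(W)$ separately. Throughout we use the identification of Lemma~\ref{lem:weighted-spectral-formula} of both sides with logarithmic asymptotic ranks. Put $G=H[s_1,\ldots,s_k]$ with $N=\sum_i s_i$ vertices, and let $Q$ be the $k\times k$ dimension matrix with $Q_{i,j}=2^{s_is_j}$ on edges of $H$ and $1$ elsewhere. Its logarithmic weight matrix is exactly $W$, so it suffices to show $\aR_{\F}(\GT_2(G))=\aR_{\F}(\WT(Q))$.

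\textbf{Upper bound ($\le$).} Grouping the legs of $\GT_2(G)$ by parts $V_1,\ldots,V_k$ gives a $k$-th order tensor. Since $G$ has no edges inside a part, there are no internal edge vectors. Pairing the binary labels of the $s_is_j$ edges between $V_i$ and $V_j$ (Lemma~\ref{lem:graph-multiplicative}) identifies the grouped tensor with $\WT(Q)$ up to coordinate permutation. This is the same grouping used in Lemma~\ref{lem:quadratic-spectral-constraint}. Grouping legs maps simple tensors to simple tensors and commutes with Kronecker powers, so it gives $\R$ of grouped $\le\R$ of ungrouped for every power. Hence $\aR_{\F}(\WT(Q))\le\aR_{\F}(\GT_2(G))$.

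This is the wrong direction for what we need: it shows the grouped tensor is at most as hard. The genuinely needed inequality is therefore $\Gamma_{\F}(G)\le\Gamma_{\F}(W)$, i.e.\ the ungrouped tensor is no harder. For this we use a tensor-product construction. Take a $k$-th order decomposition of $\WT(Q)^{\Kprod M}$ and lift it to an $N$-th order decomposition of $\GT_2(G)^{\Kprod M'}$. The natural device is to tensor with a \emph{distribution} tensor. For each part $V_i$, consider the $s_i$-leg tensor $\Id_{s_i}(D_i)$, a GHZ state on the part, together with restrictions sending one leg carrying a big label to $s_i$ legs. Concretely, at the spectral level we aim for the inequality $P^{(\psi)}$-sum over edges of $G$ being at most $\max_\phi\sum W_{i,j}P^{(\phi)}_{i,j}$. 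Given $\psi\in\Phi_{N,\F}$, we construct $\phi\in\Phi_{k,\F}$ by $\phi(T)=\psi(\iota T)^{1/|\Sigma|}$ after symmetrizing. Here $\iota$ copies a $k$-leg tensor onto $N$ legs, each part receiving the tensor on one representative leg, and we average over the $\prod_i s_i$ choices of representatives (a Kronecker product over choices, as in the permutation averaging of Definition~\ref{def:profile}). The edge $(u,v)$ with $u\in V_i$, $v\in V_j$ then contributes, on average over the $s_is_j$ representative pairs, weight $\frac{1}{s_is_j}\sum_{u,v}P^{(\psi)}_{u,v}$. Consequently $P^{(\phi)}_{i,j}=\frac1{s_is_j}\sum_{u\in V_i,v\in V_j}P^{(\psi)}_{u,v}$, and multiplying by $W_{i,j}=s_is_j$ recovers $\sum_{(u,v)\in E(G)}P^{(\psi)}_{u,v}$.

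\textbf{Main obstacle.} Checking that this averaged map $\phi$ is a spectral point. Normalization, multiplicativity and monotonicity are immediate, since $\iota$ preserves restrictions and Kronecker products and a geometric mean of spectral points is multiplicative. Additivity is the difficulty: $\iota$ sends direct sums to direct sums, but the geometric mean of additive functions is not additive. We would instead avoid spectral averaging and argue directly with ranks. Consider $T=\bigKprod_{\text{choices}}\iota_{\text{choice}}(\WT(Q))$; this is an $N$-th order graph tensor with edge dimension $2^{\prod_l s_l}$ on each edge of $G$ (by Lemma~\ref{lem:graph-multiplicative}). Each factor has rank at most $\R(\WT(Q)^{\Kprod M})$ after powering, since $\iota$ preserves rank decompositions by putting constant vectors $e_0^{(1)}$ on unused legs. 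Taking logarithms, the identity $\log\aR(\GT_2(G))\cdot\prod s_l\le\prod s_l\cdot\log\aR(\WT(Q))$ should follow once the edge dimensions are matched via Corollary~\ref{cor:base-independence}. We would verify this exponent bookkeeping carefully. The edges between $V_i$ and $V_j$ receive dimension $2^{s_is_j}$ only from choices hitting their endpoints, namely $\prod_l s_l/(s_is_j)$ choices. So each $G$-edge gets total exponent $\prod_l s_l$, which is uniform. For the case of equal parts $t$, combine the result with positive homogeneity of $\Gamma_{\F}$: $W=t^2A_H$ gives $\Gamma_{\F}(W)=t^2\Gamma_{\F}(H)$.
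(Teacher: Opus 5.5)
Your proposal is correct and follows the paper's proof. Grouping each part gives $\Gamma_{\F}(W)\le\Gamma_{\F}(H[s_1,\ldots,s_k])$. The Kronecker product of copies of $\WT(Q)$, one placed on each of the $D=\prod_l s_l$ transversals, is $\GT_2(H[s_1,\ldots,s_k])^{\Kprod D}$ by your uniform exponent count, so submultiplicativity of asymptotic rank gives the reverse inequality. The abandoned spectral-averaging detour and the mislabeled ``upper bound'' heading, which actually proves the lower bound on $\Gamma_{\F}(H[s_1,\ldots,s_k])$, can simply be dropped.
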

	\begin{proof}
		We first prove $\Gamma_{\F}(W)\le\Gamma_{\F}(H[s_1,\ldots,s_k])$ by grouping each independent part.
		
		Put $N=\sum_{i=1}^k s_i$, $T=\GT_2(H[s_1,\ldots,s_k])$, and let $Q_{i,j}=2^{W_{i,j}}$. Group the $s_i$ legs in $V_i$ into one leg. If $(i,j)\in E(H)$, the $s_i s_j$ binary edge labels between $V_i$ and $V_j$ form one label of dimension $2^{s_i s_j}$; otherwise the dimension is one. There are no internal edges in a part. The grouped tensor is therefore $\WT(Q)$, up to local coordinate permutations. Grouping preserves simple tensors and commutes with Kronecker powers. Applying the asymptotic-rank definition and Lemma~\ref{lem:weighted-spectral-formula} yields
		\[
		\Gamma_{\F}(W)=\log_2\aR_{\F}(\WT(Q))
		\le\log_2\aR_{\F}(T)=\Gamma_{\F}(H[s_1,\ldots,s_k]).
		\]
		
		For the reverse inequality, we distribute copies of the grouped tensor over all choices of one vertex from each part. Each original edge will then receive the same total logarithmic dimension. Let $\mathcal T=V_1\times\cdots\times V_k$ and $D=|\mathcal T|=\prod_{i=1}^k s_i$. For each $(v_1,\ldots,v_k)\in\mathcal T$, place $\WT(Q)$ on the selected vertices and add one-dimensional legs at the other vertices; denote this $N$-leg tensor by $T_{v_1,\ldots,v_k}$. Adding these scalar legs preserves rank, in both directions, by adjoining a scalar factor or removing it with a local functional. The same holds for every tensor power, so $\aR_{\F}(T_{v_1,\ldots,v_k})=\aR_{\F}(\WT(Q))$.
		
		Fix an original edge between a vertex of $V_i$ and a vertex of $V_j$. Exactly $D/(s_i s_j)$ choices in $\mathcal T$ contain both endpoints. Each contributes logarithmic edge dimension $s_i s_j$, for a total of $D$. Thus edge-label pairing from Lemma~\ref{lem:graph-multiplicative} gives
		\[
		\bigKprod_{(v_1,\ldots,v_k)\in\mathcal T}T_{v_1,\ldots,v_k}\cong T^{\Kprod D}.
		\]
		Submultiplicativity of asymptotic rank across these $D$ factors, and multiplicativity on powers of the single tensor $T$, imply
		\[
		D\,\Gamma_{\F}(H[s_1,\ldots,s_k])
		=\log_2\aR_{\F}(T^{\Kprod D})
		\le\sum_{(v_1,\ldots,v_k)\in\mathcal T}\log_2\aR_{\F}(T_{v_1,\ldots,v_k})
		=D\,\Gamma_{\F}(W).
		\]
		Dividing by $D$ proves the reverse inequality and hence the identity. If every $s_i=t$, then $W=t^2A_H$; positive homogeneity in Lemma~\ref{lem:weighted-spectral-formula} gives the final assertion.
	\end{proof}
	
	\Needspace{11\baselineskip}
	\begin{theorem}[complete multipartite bounds]
		\label{thm:multipartite-cp}
		Let $k\geq2$, let $s_1,\ldots,s_k$ be positive integers, and put
		\[
		N=\sum_{i=1}^k s_i,\qquad
		b=\max_{1\leq i\leq k}(s_i),\qquad
		\delta(s_1,\ldots,s_k)=\min_{S\subseteq[k]}
		\left(\left|N-2\sum_{\substack{1\leq i\leq k\\i\in S}}s_i\right|\right).
		\]
		For $G=K_{s_1,\ldots,s_k}$,
		\[
		\MaxCut(G)=\frac{N^2-\delta(s_1,\ldots,s_k)^2}{4}
		\leq\Gamma_{\F}(G)
		\leq U_{\mathrm{CP}}(G)
		=\begin{cases}
			b(N-b),&b\geq N/2,\\[1mm]
			N^2/4,&b\leq N/2.
		\end{cases}
		\]
		The graph exponent equals its maximum cut whenever $b\geq N/2$ or $\delta(s_1,\ldots,s_k)=0$.
	\end{theorem}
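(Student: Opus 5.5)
We prove Theorem~\ref{thm:multipartite-cp} in three pieces: compute the maximum cut, compute $U_{\CP}(G)$ exactly, and then chain these with Theorem~\ref{thm:weighted-cp}.

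\textbf{Maximum cut.} A maximum cut of $G$ never splits a part. Moving a vertex of a split part to the side holding the larger remainder of that part does not decrease the cut size. So we may take $S$ to be a union of parts. If one side has total $x=\sum_{i\in S}s_i$, the crossing edges are all pairs in different parts minus those within sides. Since each part is independent, this count is exactly $x(N-x)=(N^2-(N-2x)^2)/4$. Minimizing $|N-2x|$ over subsets gives $\delta$.

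\textbf{Computing $U_{\CP}(G)$, upper bounds.} We work on the $N$-vertex adjacency matrix (the weighted $k$-vertex version via Proposition~\ref{prop:independent-blowup} is an alternative). For $b\le N/2$, the cover $Z=J$ is feasible with cost $N^2/4$. For $b\ge N/2$, let $V_1$ be a largest part. We use a single nonnegative vector with entry $\sqrt{(N-b)/b}$ on $V_1$ and entry $\sqrt{b/(N-b)}$ elsewhere. Products across $V_1$ and the rest equal $1$. Products between two vertices outside $V_1$ equal $b/(N-b)\ge1$. Its cost is
\[
\tfrac14\bigl(b\sqrt{(N-b)/b}+(N-b)\sqrt{b/(N-b)}\bigr)^2=b(N-b).
\]
This is the rank-one certificate used in Proposition~\ref{prop:rectangular-cp}.

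\textbf{Computing $U_{\CP}(G)$, lower bounds.} For $b\ge N/2$, $U_{\CP}(G)\ge\MaxCut(G)$, because cut matrices are copositive-dual feasible via Proposition~\ref{prop:sdp-gw}. The cut $V_1$ versus the rest already has $b(N-b)$ edges, so this branch is immediate.

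For $b\le N/2$ we need a dual point $P$ in Proposition~\ref{prop:cp-duality} with objective $N^2/4$. The first attempt is $P_{i,j}=c_{I,J}$ for vertices in distinct parts $I,J$, and $P=0$ inside parts. We then need $J-2P$ copositive. Because $P$ vanishes inside parts, the worst nonnegative vectors concentrate mass uniformly in each part. This reduces the problem to the $k\times k$ condition
\[
\Bigl(\sum_I y_I\Bigr)^2\ge 4\sum_{I<J}c_{I,J}\,y_Iy_J,\qquad y\ge0.
\]
Equivalently, we need a point feasible in the SDP dual for the weighted complete graph with weights $s_Is_J$, attaining $N^2/4$. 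Copositivity for such nonnegative entries follows from positive semidefiniteness.

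We build this point geometrically, as in the triangle case of Proposition~\ref{prop:rectangular-cp}. Since $s_I\le N/2$ for every $I$, the lengths $s_1,\ldots,s_k$ satisfy the polygon inequality. Hence there are unit vectors $u_I$ with $\sum_I s_Iu_I=0$; when the largest length equals half the total the polygon is degenerate, which is still allowed. Set $P_{I,J}=(1-\langle u_I,u_J\rangle)/2$. Then $J-2P$ is a Gram matrix, so it is positive semidefinite and hence copositive. The objective equals
\[
\tfrac14\Bigl(N^2-\bigl\|\textstyle\sum_I s_Iu_I\bigr\|^2\Bigr)=N^2/4.
\]
Lifting back to $N$ vertices, each vertex of part $I$ gets vector $u_I$. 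The weighted polygon identity then reproduces the same value, because blowing up by identical vectors preserves the Gram structure. The only care needed is the degenerate collinear case.

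\textbf{Main obstacle and conclusion.} The step needing the most care is this polygon construction and its lift, specifically checking that equality of $U_{\CP}$ with the SDP value requires no CP-specific argument. Once $U_{\CP}(G)$ is computed, the chain
\[
\MaxCut(G)\le\Gamma_{\F}(G)\le U_{\CP}(G)
\]
from Theorem~\ref{thm:weighted-cp} gives the displayed bounds. When $b\ge N/2$ the two sides coincide. When $\delta=0$ we have $\MaxCut(G)=N^2/4$, and $\delta=0$ forces $b\le N/2$, so the two sides again coincide.
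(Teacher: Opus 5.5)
Apart from the maximum-cut step, your argument is correct and follows the paper's route. It uses a part-preserving maximum cut, the rank-one cover with entries $\sqrt{(N-b)/b}$ and $\sqrt{b/(N-b)}$ matched by the cut separating a largest part, and the closed-polygon Gram certificate for $b\le N/2$. The only structural difference is that you stay on the $N$-vertex adjacency matrix. You cover with $Z=J$ and lift the $k$-vertex certificate by giving every vertex of part $I$ the vector $u_I$. The paper instead first proves $U_{\CP}(G)=U_{\CP}(W)$ for the compressed matrix $W_{i,j}=s_is_j$, by aggregating and splitting CP factors. Your lift is sound: $P$ vanishes inside parts, $J-2P$ is the Gram matrix of the lifted unit vectors, and the objective is $\frac14\bigl(N^2-\|\sum_I s_Iu_I\|^2\bigr)=N^2/4$. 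It simply makes the compression step unnecessary.

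The maximum-cut justification, however, is wrong as written. Since $V_i$ is independent, moving a vertex $v\in V_i$ to the other side changes the cut by the number of vertices outside $V_i$ on its old side minus the number outside $V_i$ on its new side. Where the rest of $V_i$ lies is irrelevant. So moving $v$ ``to the side holding the larger remainder of that part'' can strictly decrease the cut. For example, take $K_{3,2,2}$ with parts $\{a_1,a_2,a_3\},\{b_1,b_2\},\{c_1,c_2\}$ and sides $\{a_1,a_2,b_1,b_2,c_1\}$ and $\{a_3,c_2\}$; moving $a_3$ lowers the cut from $7$ to $5$. Likewise, ``a maximum cut never splits a part'' is false: in $K_{2,2,2}$ the sides $\{a_1,b_1,b_2\}$ and $\{a_2,c_1,c_2\}$ form a maximum cut of size $8$ that splits $\{a_1,a_2\}$.

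What you need, and what is true, is that \emph{some} maximum cut splits no part. With the other vertices fixed, the cut size is affine in $x_i=|V_i\cap S|$. So moving all of $V_i$ to the side containing fewer vertices outside $V_i$ does not decrease it, and doing this part by part produces a part-preserving cut at least as large. With this repair the rest of your proof goes through unchanged.
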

	\begin{proof}
		We first compute $\MaxCut(G)$ by showing that some maximum cut keeps every part intact.
		
		For any cut, let $x_i$ be the number of vertices of $V_i$ on one side. With the other coordinates fixed, the number of crossing edges is affine in $x_i$, so replacing $x_i$ by either $0$ or $s_i$ does not decrease it. Applying this to all parts produces a maximum cut that splits no part. If the selected parts have total size $u$, the cut has
		\[
		u(N-u)=\frac{N^2-(N-2u)^2}{4}
		\]
		edges. Minimizing $|N-2u|$ over subsets of the parts proves the asserted formula for $\MaxCut(G)$. The lower bound on $\Gamma_{\F}(G)$ follows from Theorem~\ref{thm:weighted-cp}.
		
		To evaluate $U_{\CP}(G)$, compress each part into one coordinate and let $W_{i,j}=s_i s_j$ for distinct $i,j$, with zero diagonal. We claim that $U_{\CP}(G)=U_{\CP}(W)$: both directions of compression preserve the cost of each nonnegative factor. In a CP cover of $G$, replace a factor $a$ by the vector with coordinates $\sum_{v\in V_i}a_v$. The capacity between compressed coordinates $i,j$ is the sum of capacities on all $s_i s_j$ edges between these parts, and hence is at least $s_i s_j$. The coordinate sum, and therefore the factor cost in~\eqref{eq:cp-factor-cost}, is unchanged. Conversely, in a cover of $W$, replace coordinate $i$ of every factor by $s_i$ copies of that coordinate divided by $s_i$. Each original edge receives at least unit capacity, again at the same cost. These two constructions prove
		\[
		U_{\CP}(G)=U_{\CP}(W).
		\]
		
		Suppose first that $b\ge N/2$. We cover the original graph with one factor, assigning coordinate $\sqrt{(N-b)/b}$ to vertices in a largest part and $\sqrt{b/(N-b)}$ to all other vertices. Every edge incident with the largest part receives capacity one; every other edge receives capacity $b/(N-b)\ge1$. The coordinate sum is $2\sqrt{b(N-b)}$, so this cover costs $b(N-b)$. The cut separating the largest part from the rest has exactly that many edges. Theorem~\ref{thm:weighted-cp} places $\MaxCut(G)$, $\Gamma_{\F}(G)$, and $U_{\CP}(G)$ between this cut weight and cover cost, so they all equal $b(N-b)$.
		
		Suppose now that $b\le N/2$. The factor $(s_1,\ldots,s_k)$ covers $W$ at cost $N^2/4$. To prove that this cover is optimal, we seek a feasible point of the basic SDP program whose objective also equals $N^2/4$. The Gram formulation in Proposition~\ref{prop:sdp-gw} suggests choosing unit vectors $u_i$ with zero weighted sum. Such vectors exist precisely because no part is longer than all the others combined: the lengths $s_1,\ldots,s_k$ form a closed, possibly degenerate polygon. To see the geometric condition directly, remove a longest side of length $b$. Aligning the remaining sides gives a resultant of length $N-b\ge b$; adding them successively opposite the current resultant keeps its length at most $b$. By continuity, an intermediate choice of directions gives resultant length exactly $b$. Orient the longest side oppositely to obtain $\sum_{i=1}^k s_i u_i=0$.
		
		Set $P_{i,j}=(1-\langle u_i,u_j\rangle)/2$. This is a symmetric nonnegative zero-diagonal matrix, and $J-2P$ is a Gram matrix, so $P$ is feasible for $U_{\SDP}(W)$. Its objective is
		\[
		\sum_{1\le i<j\le k}s_i s_jP_{i,j}
		=\frac14\left(N^2-\left\|\sum_{i=1}^k s_i u_i\right\|^2\right)
		=\frac{N^2}{4}.
		\]
		Since $U_{\SDP}(W)\le U_{\CP}(W)$ by Proposition~\ref{prop:sdp-gw}, this matches the cover and proves $U_{\CP}(G)=N^2/4$. Theorem~\ref{thm:weighted-cp} gives $\Gamma_{\F}(G)\le N^2/4$. If $\delta(s_1,\ldots,s_k)=0$, the cut formula also equals $N^2/4$, and hence $\Gamma_{\F}(G)=\MaxCut(G)$, as claimed.
	\end{proof}
	
	These families give further cases in which the maximum-cut equality in Question~1.3.2 of~\cite{cvz2019} holds. For example,
	\[
	\Gamma_{\F}(K_{2,3,4,5})=49,\qquad
	\Gamma_{\F}(K_{3,3,6})=36,\qquad
	\Gamma_{\F}(K_{t,t,t,t})=4t^2.
	\]
	The first equality uses the partition $2+5=3+4$. For three parts, Proposition~\ref{prop:independent-blowup} instead gives the exact reduction
	\[
	\Gamma_{\F}(K_{a,b,c})=\omega_{\F}(ab,ac,bc),
	\qquad
	\Gamma_{\F}(K_{t,t,t})=t^2\omega_{\F}.
	\]
	The octahedral graph $K_{2,2,2}$ in Figure~\ref{fig:multipartite}, for instance, has exponent $4\omega_{\F}\leq9$.

	\Needspace{15\baselineskip}
	\subsection{Split graphs and exact maximum-cut formulas}
	\label{subsec:split-cp}
	Complete split graphs provide further exact cases, governed by the balance between a clique and an independent part.
	
	\begin{definition}[split and complete split graphs]
		\label{def:split-graphs}
		A \emph{split graph} admits a vertex partition into a clique and an independent set. For positive integers $a,b$, the \emph{complete split graph}
		\[
		J_{a,b}=K_a\vee\overline K_b
		\]
		also contains every edge between the two parts. Here $\vee$ denotes the join of two graphs, obtained by adding all edges between their vertex sets. The book graph $B_t=J_{2,t}$ consists of $t$ triangles sharing one common edge.
	\end{definition}
	
	\begin{figure}[H]
		\centering
		\begin{tikzpicture}[every node/.style={font=\normalsize},
			vertex/.style={circle,draw=black,fill=white,inner sep=0pt,minimum size=5pt}]
			\coordinate (c1) at (-2,.8);
			\coordinate (c2) at (-2,-.8);
			\coordinate (c3) at (-.8,0);
			\foreach \i/\y in {1/1,2/0,3/-1}\coordinate (v\i) at (1.5,\y);
			\draw[dashed,rounded corners=3pt] (-2.35,-1.2) rectangle (-.4,1.2);
			\draw[dashed,rounded corners=3pt] (1.1,-1.35) rectangle (1.9,1.35);
			\foreach \i in {1,2,3}\foreach \j in {1,2,3}\draw[semithick] (c\i)--(v\j);
			\draw[thick] (c1)--(c2)--(c3)--cycle;
			\foreach \i in {1,2,3}{\node[vertex] at (c\i) {};\node[vertex] at (v\i) {};}
			\node at (-1.375,-1.6) {$K_3$};
			\node at (1.5,-1.6) {$\overline K_3$};
		\end{tikzpicture}
		\caption{The complete split graph $J_{3,3}=K_3\vee\overline K_3$. The left part is a clique, the right part is independent, and all edges between the two parts are present.}
		\label{fig:split-graph}
	\end{figure}

	\Needspace{11\baselineskip}
	\begin{theorem}[complete split graphs]
		\label{thm:split-cp}
		If $b\geq a$, then
		\[
		\Gamma_{\F}(J_{a,b})=U^{\Delta}_{\mathrm{CP}}(J_{a,b})
		=U_{\mathrm{CP}}(J_{a,b})=ab=\MaxCut(J_{a,b}).
		\]
		If $b<a$, then
		\[
		\left\lfloor\frac{(a+b)^2}{4}\right\rfloor
		\leq \Gamma_{\F}(J_{a,b})\leq\frac{(a+b)^2}{4}.
		\]
		The lower and upper bounds coincide when $a+b$ is even. More generally, for every split graph with a specified partition of sizes $a,b$, the corresponding upper bound above remains valid.
	\end{theorem}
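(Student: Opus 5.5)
The plan is to sandwich each exponent between a cut and a single-factor completely positive cover, as in the multipartite and rectangular cases. Write the clique as $A$ with $|A|=a$ and the independent set as $B$ with $|B|=b$.

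For the case $b\ge a$, the lower bound is immediate. The cut separating $A$ from $B$ crosses exactly the $ab$ join edges, so $\MaxCut(J_{a,b})\ge ab$.

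For the upper bound, I will use one nonnegative factor and the single-vector form of~\eqref{eq:cp-tensor-certificate}. Assign coordinate $x$ to clique vertices and $y$ to independent vertices, with constraints $xy\ge1$ and $x^2\ge1$; independent pairs impose nothing. The cost is $\tfrac14(ax+by)^2$. Taking $x=\sqrt{b/a}\ge1$ and $y=\sqrt{a/b}$ gives $xy=1$ and $x^2\ge1$, with $ax+by=2\sqrt{ab}$. The cost is therefore exactly $ab$.

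Hence $U_{\CP}(J_{a,b})\le ab$. Theorem~\ref{thm:weighted-cp} then gives
\[
\MaxCut\le\Gamma_{\F}\le U^\Delta_{\CP}\le U_{\CP}\le ab\le\MaxCut,
\]
which forces all the quantities to be equal.

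For the case $b<a$, the upper bound uses the all-ones factor $\mathbf1\in\mathbb R^{a+b}$. It covers every pair with capacity one at cost $(a+b)^2/4$, so $\Gamma_{\F}(J_{a,b})\le U_{\CP}(J_{a,b})\le(a+b)^2/4$. Equivalently, this is the restriction monotonicity $J_{a,b}\subseteq K_{a+b}$ combined with Theorem~\ref{thm:complete-graph-bound}.

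The lower bound needs a balanced cut. Put $s=\lfloor(a+b)/2\rfloor$ and take $S$ to consist of $s$ clique vertices, which is possible because $s\le a$ when $b<a$. Choosing $t$ clique vertices for the side $S$ cuts $t(a-t)+tb=t(a+b-t)$ edges. At $t=s$ this equals $\lfloor(a+b)^2/4\rfloor$. The two bounds coincide when $a+b$ is even.

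For a general split graph $G$ with partition sizes $a,b$, the graph $G$ is a subgraph of $J_{a,b}$ on the same vertices. Monotonicity of $\Gamma_{\F}$ (Lemma~\ref{lem:weighted-spectral-formula}), or equivalently feasibility of the same covers for the smaller demand matrix, transfers both upper bounds to $G$.

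The main obstacle is only making sure the cover in the case $b\ge a$ satisfies the clique constraints $x^2\ge1$. This is exactly where the hypothesis $b\ge a$ enters; if $a=1$ the clique has no internal edges, and the constraint is vacuous.
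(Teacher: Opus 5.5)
Your proposal is correct and follows the paper's proof essentially step for step. For $b\ge a$ you use the same rank-one cover with coordinates $\sqrt{b/a}$ and $\sqrt{a/b}$, sandwiched against the cut separating the two parts. For $b<a$ you use the complete-graph bound, or equivalently the all-ones cover, and for general split graphs you use monotonicity. The only difference is that for $b<a$ you exhibit a single balanced cut instead of computing $\MaxCut(J_{a,b})$ exactly, and that is all the stated lower bound requires.
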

	\begin{proof}
		Assume $b\geq a$. Give each clique vertex weight $\sqrt{b/a}$ and each independent vertex weight $\sqrt{a/b}$. The resulting rank-one CP matrix has capacity $1$ on every edge between the two parts and capacity $b/a\geq1$ on every clique edge. The sum of its coordinates is $2\sqrt{ab}$, so its cost is $ab$. The cut separating the parts has $ab$ edges. Theorem~\ref{thm:weighted-cp} places the tensor exponent and both CP values between this cut and the cover, proving the first chain of equalities.
		
		To determine the maximum cut in the other regime, start with any cut. Holding the clique vertices fixed, each independent vertex can be placed on whichever side gives it more crossing edges. All independent vertices may therefore be put on the same side without decreasing the cut size. If exactly $x$ clique vertices lie on the other side, the cut has
		\[
		x(a-x)+bx=x(a+b-x)
		\]
		edges, where $0\leq x\leq a$. When $b<a$, an integer nearest to $(a+b)/2$ belongs to this interval, so the maximum is $\lfloor(a+b)^2/4\rfloor$. Since $J_{a,b}$ is a subgraph of $K_{a+b}$, monotonicity in Lemma~\ref{lem:weighted-spectral-formula} and Theorem~\ref{thm:complete-graph-bound} give
		\[
		\Gamma_{\F}(J_{a,b})\le\Gamma_{\F}(K_{a+b})\le\frac{(a+b)^2}{4}.
		\]
		Together with the maximum-cut formula and Theorem~\ref{thm:weighted-cp}, this proves the stated bounds on $\Gamma_{\F}(J_{a,b})$, which coincide when $a+b$ is even. Finally, every split graph $G$ with the specified partition is a subgraph of $J_{a,b}$. The same monotonicity gives $\Gamma_{\F}(G)\le\Gamma_{\F}(J_{a,b})$, so it inherits the corresponding upper bound.
	\end{proof}
	
	The exact cases include all books $B_t$ with $t\geq2$, for which $\Gamma_{\F}(B_t)=2t$, and all complete split graphs of even order. For example, deleting any edge $e$ of $K_4$ gives $\Gamma_{\F}(K_4-e)=4$, and $\Gamma_{\F}(K_3\vee\overline K_3)=9$. These are nonbipartite instances of the maximum-cut equality asked for in~\cite[Question~1.3.2]{cvz2019}. The restriction $t\geq2$ matters: $B_1=K_3$, whose exponent is $\omega_{\F}$.
	
	\subsection{Wheel graphs}
	\label{subsec:wheel-cp}
	
	For wheels, we combine the path-counting argument for cycles with a nonuniform choice of vertex weights. Weighting the hub differently from the rim lets the quadratic constraint balance all spokes against the cycle at once.
	
	\begin{definition}[wheels]
		\label{def:wheel-family}
		For $m\geq3$, let $W_m=K_1\vee C_m$ be the wheel with hub $0$ and $m$ rim vertices. Thus $W_m$ has $m+1$ vertices and $2m$ edges. The index $m$ counts the rim vertices, not the total number of vertices.
	\end{definition}
	
	\begin{figure}[H]
		\centering
		\begin{tikzpicture}[every node/.style={font=\normalsize},
			vertex/.style={circle,draw=black,fill=white,inner sep=0pt,minimum size=5pt}]
			\coordinate (h) at (0,0);
			\foreach \i in {1,...,5}\coordinate (v\i) at ({90+72*(\i-1)}:1.2);
			\draw[thick] (v1)--(v2)--(v3)--(v4)--(v5)--cycle;
			\foreach \i in {1,...,5}\draw[semithick] (h)--(v\i);
			\foreach \i in {1,...,5}\node[vertex] at (v\i) {};
			\node[vertex] at (h) {};
		\end{tikzpicture}
		\caption{The wheel $W_5$: a five-vertex rim and one central hub joined to every rim vertex.}
		\label{fig:wheel}
	\end{figure}

	\begin{theorem}[wheel bounds]
		\label{thm:wheel-cp}
		For $m\geq3$,
		\[
		\left\lfloor\frac{3m}{2}\right\rfloor
		\leq \Gamma_{\F}(W_m)
		\leq U^{\Delta}_{\mathrm{CP}}(W_m)
		=U^{\Delta}_{\mathrm{SDP}}(W_m)
		=
		\begin{cases}
			\displaystyle\frac{25m}{16},&m\text{ even},\\[2mm]
			\displaystyle\frac{(5m+1)^2}{16(m+1)},&m\text{ odd}.
		\end{cases}
		\]
		For even $m$, the same value also equals $U_{\mathrm{CP}}(W_m)$.
	\end{theorem}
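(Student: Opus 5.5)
The proof has three parts: the cut lower bound, the conic upper bounds (a path-constraint argument for $U^{\Delta}_{\CP}$ and a factor cover for $U_{\CP}$), and matching feasible points for $U^{\Delta}_{\SDP}$. Together with \eqref{eq:sdp-triangle-comparison} and Theorem~\ref{thm:weighted-cp}, these squeeze all the displayed quantities together.

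\textbf{Cut lower bound.} I first compute $\MaxCut(W_m)=\lfloor 3m/2\rfloor$.
\begin{itemize}
\item For even $m$, put the hub on one side and alternate the rim colors. All $m$ rim edges are cut, together with $m/2$ spokes, giving $3m/2$.
\item For odd $m$, put the hub on one side and place $(m+1)/2$ rim vertices on the other side, alternating except for one adjacent monochromatic pair. This cuts $m-1$ rim edges and $(m+1)/2$ spokes, giving $(3m-1)/2$.
\item For the matching upper bound on the cut, let $x$ be the number of rim vertices on the side opposite the hub. The rim contributes at most $\min(m,2x,2(m-x))$ cut edges, and at most $m-1$ when $m$ is odd. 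Adding the $x$ cut spokes and maximizing over $x$ gives $\lfloor 3m/2\rfloor$.
\end{itemize}
Theorem~\ref{thm:weighted-cp} then gives $\lfloor 3m/2\rfloor\le\Gamma_{\F}(W_m)\le U^{\Delta}_{\CP}(W_m)$.

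\textbf{Upper bounds.} For every $m$, I bound $U_{\CP}(W_m)$ by an explicit cover with one factor per rim edge $(i,i+1)$. The factor has coordinate $1/2$ at the hub, $1$ at rim vertices $i$ and $i+1$, and $0$ elsewhere. Each rim edge then receives capacity $1$. Each rim vertex lies in two factors, so each spoke receives capacity $2\cdot\tfrac12=1$. Each factor costs $(5/2)^2/4=25/16$, so $U_{\CP}(W_m)\le 25m/16$. Since $U^{\Delta}_{\CP}\le U_{\CP}$, this settles the even case.

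For odd $m=2k+1$, take $P$ feasible for $U^{\Delta}_{\CP}(W_m)$ and write
\begin{itemize}
\item $S$ for the sum of the spoke entries,
\item $A$ for the sum of the rim-edge entries,
\item $B$ for the sum of the rim-chord entries.
\end{itemize}
The path-counting argument of Theorem~\ref{thm:odd-cycle-cp}, applied on the rim alone, gives $\tfrac{m-3}{2}A\le 2B$. Next, I test copositivity of $J-2P$ on the vector with hub weight $w$ and rim weights $1$, which gives $4wS+4(A+B)\le (w+m)^2$. Substituting the chord bound yields $4wS+(m+1)A\le (w+m)^2$. Choosing $w=(m+1)/4$ equalizes the coefficients of $S$ and $A$, so
\[
S+A\le\frac{(5m+1)^2}{16(m+1)}.
\]
This is the upper bound on $U^{\Delta}_{\CP}(W_m)$ for odd $m$.

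\textbf{Matching lower bounds for $U^{\Delta}_{\SDP}$.} Here I need feasible points whose objectives equal the two values. I take $P_{i,j}=(1-\langle u_i,u_j\rangle)/2$ for unit vectors $u_0,\dots,u_m$, guided by complementary slackness. The vectors should satisfy $w u_0+\sum_{i=1}^m u_i=0$ with the optimal hub weight $w$.
\begin{itemize}
\item Even $m$: the dual certificate is the cover above, with hub weight $1/2$ in each factor. I take $u_0=e_1$ and $u_i=(-\tfrac12\cdot\tfrac1m\cdot m,\ldots)$, that is, rim vectors with first coordinate $-1/4$ and alternating second coordinate $\pm\sqrt{15}/4$. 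The weighted sum $u_0\cdot(m/2)\cdot\tfrac12+\sum_i u_i$ vanishes in the first coordinate. The alternating signs cancel in the second coordinate and make every rim edge an antipodal-type pair.
\item Odd $m$: I imitate the odd-cycle construction. I take the $m$ rotated alternating colorings $s^{(h)}$ of the rim, append a hub coordinate, center these vectors with weights $(w,1,\dots,1)$ where $w=(m+1)/4$, and rescale. The resulting $J-2P$ is a Gram matrix of unit vectors.
\end{itemize}
In both cases I must then check that the objective equals the target value and that all path inequalities hold. The Gram identity $\sum w_iw_jP_{i,j}=\tfrac14\bigl((\sum w_i)^2-\|\sum w_iu_i\|^2\bigr)$, together with the fact that spokes and rim edges each receive the common value forced by slackness, should give the objective.

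\textbf{Main obstacle.} I expect the odd-$m$ semidefinite construction to be the hardest step. The hub vector must be merged with the averaged cut semimetrics so that three conditions hold simultaneously:
\begin{itemize}
\item the path inequalities survive, including those through the hub and those for chords;
\item the spoke entries and the rim-edge entries come out as exactly the values that make the objective $(5m+1)^2/(16(m+1))$;
\item $J-2P$ remains positive semidefinite after the rescaling.
\end{itemize}
If a direct Gram construction fails, I would fall back on writing $P$ as a nonnegative combination of cut semimetrics of $W_m$ plus a scaled multiple of $J-I$, as in Proposition~\ref{prop:small-order-path-equality}, and verify positive semidefiniteness separately.
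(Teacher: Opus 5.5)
Your cut computation, the rim-edge CP cover, and the odd-$m$ upper bound (chord pairing on the rim plus copositivity tested at hub weight $(m+1)/4$) coincide with the paper's proof. Your even-$m$ vectors $u_0=\mathbf e_1$, $u_i=\bigl(-\tfrac14,\pm\tfrac{\sqrt{15}}{4}\bigr)$ are a Gram realization of the paper's even certificate: spokes $\tfrac58$, and rim block $\tfrac{15}{16}$ times the alternating cut semimetric. There you still need to record two checks:
\begin{itemize}
\item the objective $m\bigl(\tfrac58+\tfrac{15}{16}\bigr)=\tfrac{25m}{16}$;
\item the path inequalities: the rim block is a scaled cut semimetric, and in a hub triangle the two spokes are equal while the rim entry is at most $\tfrac{15}{16}\le\tfrac54$.
\end{itemize}
Also, your displayed coordinate $-\tfrac12\cdot\tfrac1m\cdot m$ equals $-\tfrac12$, not $-\tfrac14$.

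The genuine gap is the odd-$m$ lower bound $U^{\Delta}_{\SDP}(W_m)\ge(5m+1)^2/(16(m+1))$. Without it, the equalities in the statement are unproved for odd $m$. Your sketch does not determine a matrix, and its natural reading fails. Suppose you append a sign at the hub to each coloring $s^{(h)}$, center with weights $(w,1,\ldots,1)$, and rescale by one scalar. Then the hub and rim rows have different norms: for $m=5$ and hub entry $-1$ the squared norms are $720/169$ and $876/169$, and hub entry $+1$ fails as well. So no rescaling gives unit vectors.

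The missing idea is to give the hub its own direction, orthogonal to the centered colorings and with its own scale. Take $w=(m+1)/4$, $Y=\frac1m\sum_h s^{(h)}(s^{(h)})^{\mathsf T}$, and $Y_0=Y-\frac1{m^2}J_m$. Put $\sigma=(1+w/m)/2$ on every spoke and $c(J_m-Y)/2$ on the rim, where $c=(m^2-w^2)/(m^2-1)$. Then
\[
J-2P=zz^{\mathsf T}+c\begin{pmatrix}0&0\\0&Y_0\end{pmatrix},\qquad z=\begin{pmatrix}1\\-\tfrac{w}{m}\mathbf1\end{pmatrix},
\]
and $c$ is exactly the value that makes the rim diagonal equal to one. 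The remaining checks are then routine:
\begin{itemize}
\item positive semidefiniteness is immediate from this decomposition;
\item the rim path inequalities hold because the rim block is $c$ times an average of cut semimetrics;
\item the hub triangles hold because the spokes are equal and every rim entry is at most $c\le1\le2\sigma$, using $1\le w\le m$;
\item the objective is $m(a+\sigma)$ with rim-edge entry $a=c(m-1)/m$, and this equals the target.
\end{itemize}

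Your slackness heuristic would not supply the objective by itself. The Gram identity with $wu_0+\sum_iu_i=0$ gives only $wS+A+B=(w+m)^2/4$. You also need $B=\frac{m-3}{4}A$, i.e.\ tightness of every chord-pairing path inequality for the averaged alternating cuts. This is true, since a rim pair at cyclic offset $\ell$ has entry $c\ell/m$ for even $\ell$ and $c(m-\ell)/m$ for odd $\ell$, but you did not state it. The fallback via cut semimetrics plus a multiple of $J-I$ is too unspecified to close the gap.
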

	\begin{proof}
		We first compute the maximum cut from a coloring of the rim. If $k$ rim vertices lie opposite the hub, the cut contains $k$ spokes and at most $2\min(k,m-k)$ rim edges. Maximizing this quantity over $0\le k\le m$ gives $\lfloor3m/2\rfloor$. Alternating colors around the rim attains it, with the larger rim color class opposite the hub when $m$ is odd. Thus $\MaxCut(W_m)=\lfloor3m/2\rfloor$, and Theorem~\ref{thm:weighted-cp} gives the claimed tensor lower bound.
		
		To prove $U_{\CP}(W_m)\le25m/16$ for even $m$, we use CP factors supported on a rim edge and the hub. Assign one such factor to each rim edge: its two endpoints have coordinate $1$, the hub has coordinate $1/2$, and all other coordinates are zero. Each rim edge receives unit capacity. Each spoke belongs to the two factors on its adjacent rim edges and receives $1/2$ from each. Every edge is therefore covered, and the total cost is
		\[
		m\cdot\frac{(1+1+1/2)^2}{4}=\frac{25m}{16}.
		\]
		Thus $U^\Delta_{\CP}(W_m)\le U_{\CP}(W_m)\le25m/16$.
		
		For odd $m$, we bound $U^\Delta_{\CP}(W_m)$ using the cycle path inequalities and choose a weight at the hub to balance the rim and spoke terms. Let $P$ be feasible for $U^{\Delta}_{\CP}(W_m)$, and let $A$ be the sum of its entries on rim edges, $B$ the sum on spokes, and $C$ the sum on all pairs of rim vertices, including chords. The objective is $A+B$. The principal submatrix of $P$ on the rim satisfies the cycle path inequalities. Applying the chord-pairing argument from the proof of Theorem~\ref{thm:odd-cycle-cp} to this submatrix gives $C\ge(m+1)A/4$. Testing copositivity on a vector with coordinate $x\ge0$ at the hub and coordinate $1$ at every rim vertex gives
		\[
		(m+1)A+4xB\le4(C+xB)\le(m+x)^2.
		\]
		Choosing $x=(m+1)/4$ makes the coefficients of $A$ and $B$ equal. We obtain
		\[
		A+B\le\frac{(m+x)^2}{m+1}
		=\frac{(5m+1)^2}{16(m+1)}.
		\]
		Taking the maximum over feasible $P$ gives the stated upper bound on $U^\Delta_{\CP}(W_m)$.
		
		We next construct feasible matrices for $U^\Delta_{\SDP}(W_m)$ with objective values equal to these estimates for $U^\Delta_{\CP}(W_m)$. Give every spoke the same entry $s$. This leaves us to choose the rim submatrix so that its path inequalities hold and the full matrix $J-2P$ is positive semidefinite. Order the vertices with the hub first and the rim vertices in cyclic order. In the block matrices below, $J$ has order $m+1$; write $\mathbf1\in\mathbb R^m$ for the all-one vector and $J_m=\mathbf1\mathbf1^{\mathsf T}$ for the all-one rim matrix.
		
		For even $m$, let $v\in\{1,-1\}^m$ alternate around the rim. Set $a=15/16$, put $s=5/8$ on every spoke, and take $a(J_m-vv^{\mathsf T})/2$ as the rim submatrix of $P$. The matrix $P$ is symmetric, nonnegative, and has zero diagonal. Its semidefinite constraint has the explicit decomposition
		\[
		J-2P=zz^{\mathsf T}
		+a\begin{pmatrix}0&0\\0&vv^{\mathsf T}\end{pmatrix}\succeq0,
		\qquad z=\begin{pmatrix}1\\-\tfrac14\mathbf1\end{pmatrix}\in\mathbb R^{m+1}.
		\]
		The rim submatrix is a scaled cut semimetric. For a triangle through the hub, the two spoke entries equal $s$, and the rim entry lies in $[0,a]\subseteq[0,2s]$; these are exactly the needed path inequalities. Every rim edge has entry $a$, so the objective is $m(a+s)=25m/16$. This proves $U^\Delta_{\SDP}(W_m)\ge25m/16$.
		
		For odd $m$, we use the averaged alternating cuts from the proof of Theorem~\ref{thm:odd-cycle-cp}. Let $v^{(0)},\ldots,v^{(m-1)}\in\{1,-1\}^m$ be the cyclic shifts of $(1,-1,\ldots,1)$, and define
		\[
		Y=\frac1m\sum_{h=0}^{m-1}v^{(h)}(v^{(h)})^{\mathsf T},
		\qquad Y_0=Y-\frac{1}{m^2}J_m.
		\]
		The vectors $v^{(h)}$ have mean $\mathbf1/m$, so $Y_0$ is their covariance matrix and is positive semidefinite. Use the same $x=(m+1)/4$ as in the estimate for $U^\Delta_{\CP}(W_m)$, and put
		\[
		a=\frac{m-x^2/m}{m+1},\qquad
		c=\frac{ma}{m-1}=\frac{m^2-x^2}{m^2-1},\qquad
		s=\frac{1+x/m}{2}.
		\]
		Set each spoke entry to $s$ and the rim submatrix to $c(J_m-Y)/2$. The choice of $s$ makes every hub--rim entry of $J-2P$ equal to $-x/m$. The remaining rim block is then accounted for by the centered matrix:
		\[
		J-2P=zz^{\mathsf T}
		+c\begin{pmatrix}0&0\\0&Y_0\end{pmatrix},
		\qquad z=\begin{pmatrix}1\\-\tfrac{x}{m}\mathbf1\end{pmatrix}.
		\]
		This identity proves positive semidefiniteness. Here $1\le x\le m$, so $0\le c\le1$. The rim entries form a nonnegative combination of cut semimetrics, have zero diagonal, and lie in $[0,c]$. They satisfy all rim path inequalities. The hub triangles also satisfy them, because their two spoke entries are equal and each rim entry is at most $1\le2s$. Thus $P$ is feasible for the strengthened SDP program. Each rim edge has entry $c(m-1)/m=a$, and consequently
		\[
		\sum_{(i,j)\in E(W_m)}P_{i,j}
		=m(a+s)=\frac{(5m+1)^2}{16(m+1)}.
		\]
		Hence $U^\Delta_{\SDP}(W_m)\ge(5m+1)^2/(16(m+1))$.
		
		In both parities, the bounds just proved and $U^\Delta_{\SDP}(W_m)\le U^\Delta_{\CP}(W_m)$ from~\eqref{eq:sdp-triangle-comparison} give the stated value of both strengthened programs. When $m$ is even, the chain $25m/16=U^\Delta_{\CP}(W_m)\le U_{\CP}(W_m)\le25m/16$ also determines the basic CP value. Theorem~\ref{thm:weighted-cp} places $\Gamma_{\F}(W_m)$ between the cut value $\lfloor3m/2\rfloor$ and the common strengthened conic value.
	\end{proof}
	
	For odd rims, the remaining gap has the simple form
	\[
	\frac{(5m+1)^2}{16(m+1)}-\frac{3m-1}{2}
	=\frac{(m-3)^2}{16(m+1)}.
	\]
	For example, $W_3=K_4$ has exponent $4$, while
	\[
	7\leq \Gamma_{\F}(W_5)\leq\frac{169}{24}=7+\frac1{24},
	\qquad
	10\leq \Gamma_{\F}(W_7)\leq\frac{81}{8}.
	\]

	\phantomsection
	\section*{Open Problems}
	The conic framework gives exact tensor exponents for some graphs and leaves a gap above the cut bound for others. Understanding these gaps leads to two questions about the strength of the bounds, and to a broader question about the relation between conic optimization and asymptotic spectra.
	
	\textbf{A semidefinite upper bound.}
	Is it true that, for every field $\F$ and every nonnegative symmetric zero-diagonal edge-weight matrix $W$,
	\[
	\Gamma_{\F}(W)\le U_{\SDP}(W)?
	\]
	Our theorem gives $\Gamma_{\F}(W)\le U_{\CP}^{\Delta}(W)$ in every dimension. For $2\le n\le4$, Proposition~\ref{prop:small-order-path-equality} identifies the strengthened CP and SDP values. Combining this equality with~\eqref{eq:sdp-triangle-comparison} gives $\Gamma_{\F}(W)\le U^{\Delta}_{\SDP}(W)\le U_{\SDP}(W)$, so the semidefinite upper bound already holds in these dimensions.
	
	The restriction to nonnegative factors is a concrete obstacle in the present proof. A completely positive factor $aa^{\mathsf T}$ is obtained by grouping a complete graph into parts whose sizes are proportional to the coordinates of $a$. This argument requires $a\ge0$: a negative coordinate cannot represent a negative number of vertices. A positive semidefinite matrix may require signed factors, even when its entries are nonnegative. Extending the upper bound to general semidefinite covers therefore calls for a new way to use these signed factors, or a tensor argument that avoids this grouping interpretation.
	
	Together with~\eqref{eq:gw-sandwich}, an affirmative answer would give
	\[
	\alpha_{\mathrm{GW}}U_{\SDP}(W)
	\le\MaxCut(W)\le\Gamma_{\F}(W)\le U_{\SDP}(W),
	\]
	where $\alpha_{\mathrm{GW}}\approx0.878567$ and
	$1/\alpha_{\mathrm{GW}}\approx1.138217$.
	Thus the SDP value would provide a polynomial-time upper approximation within a factor $1/\alpha_{\mathrm{GW}}$ to the logarithmic asymptotic-rank exponent. Equivalently, multiplying that value by $\alpha_{\mathrm{GW}}$ gives an $\alpha_{\mathrm{GW}}$ approximation from below. For rational weights this holds with arbitrarily small numerical slack; logarithmic edge weights arising from integer dimensions can be approximated to the required precision.
	
	\textbf{Beyond the conic value for matrix multiplication.}
	Even an affirmative answer to the semidefinite question would leave the same upper bound for the triangle: both basic conic bounds and both path-strengthened bounds equal $9/4$ on $K_3$, as shown in Proposition~\ref{prop:cone-comparison-examples}. Their agreement makes $9/4$ a natural parameter of the triangle and motivates considering $\omega_{\F}=9/4$ alongside $\omega_{\F}=2$ as a possible endpoint. Can one prove $\omega_{\F}>2$ over some field by a technique that goes beyond flattening ranks, or prove $\omega_{\F}<9/4$ by upper-bound techniques that go beyond the CP and SDP programs considered here?
	
	\textbf{Conic optimization across asymptotic spectra.}
	The connection with conic optimization also suggests a question about the spectral framework itself. In the graph semiring, several familiar spectral points already have conic formulations. With disjoint union as addition, strong graph product as multiplication, and the cohomomorphism preorder $G\le H$ defined by a homomorphism $\overline G\to\overline H$, both the Lov\'asz theta function and the fractional clique-cover number are spectral points~\cite{zuiddam2019}. For a graph $G$ on vertex set $\{1,\ldots,n\}$, the former has the semidefinite formulation~\cite{lovasz1979}
	\[
	\vartheta(G)=\max\left(
	\langle J,X\rangle\;\middle|\;
	X\in\mathbb S_+^n,\quad\operatorname{tr}(X)=1,\quad
	X_{i,j}=0\ ((i,j)\in E(G))
	\right).
	\]
	The latter is the linear program
	\[
	\overline\chi_f(G)=\min\left(
	\sum_{C\in\mathcal K(G)}\lambda_C\;\middle|\;
	\lambda_C\ge0\ (C\in\mathcal K(G)),\quad
	\sum_{\substack{C\in\mathcal K(G)\\v\in C}}\lambda_C\ge1\ (v\in V(G))
	\right),
	\]
	where $\mathcal K(G)$ is the set of nonempty cliques. These parameters are additive and multiplicative for the graph semiring operations and monotone for its preorder. By comparison, our CP programs bound a support function of tensor spectral edge weights. Can one relate these conic descriptions through a deeper correspondence between the graph spectrum and the spectra of weighted graph tensors, and use that correspondence to transfer spectral inequalities or construct new spectral points?
	
	\Needspace{8\baselineskip}
	\phantomsection
	\section*{Concurrent Work}
	\label{sec:concurrent-work}
	During the preparation of this manuscript, the author was informed of concurrent work by Kirill Khoruzhii~\cite{khoruzhii2026}. At the time of writing, its Lemmata Repository entry, \emph{A Square-Root Bound for Asymptotic Tensor Rank}, provides theorem statements and Lean formalization files, without an accompanying paper presenting the proofs in conventional mathematical prose. The repository reports the Lean formalizations as kernel-checked. Over $\C$, it proves
	\[
	\aR_{\C}(T)\le\left(\prod_{i=1}^{k}\dim V_i\right)^{1/2}
	\qquad\left(T\in\bigCprod_{i=1}^{k}V_i\right)
	\]
	for finite-dimensional complex vector spaces $V_1,\ldots,V_k$, with $k\ge1$, and sharpness in even cubical formats. Its complete-graph result gives
	\[
	q^{\lfloor n^2/4\rfloor}\le\aR_{\C}(\GT_q(K_n))\le q^{n^2/4}
	\qquad(n\ge2,\ q\ge2),
	\]
	with equality for even $n$. Thus its complete-graph bound agrees with Theorem~\ref{thm:main} over $\C$. The entry also extends the shared-leg separation method of~\cite{openai2026} to higher order. The present paper proves the complete-graph estimate over every field and develops its consequences for general weighted graph tensors through completely positive covers and path inequalities.
	
	\Needspace{8\baselineskip}
	
	\phantomsection
	\section*{AI Disclosure}
	\label{sec:ai-disclosure}
	This project originated from the author's earlier encounter with graph tensors in quantum information and subsequent questions about their asymptotic rank and asymptotic subrank. The author first used GPT-6 Pro to study the existing results and guided it in exploring possible improvements. The author then proposed applying the polynomial multiplication techniques from OpenAI's matrix multiplication result to higher-order graph tensors. Following this direction, GPT-6 Pro developed the higher-order construction and the complete-graph upper bound studied in this paper. The author subsequently used GPT-6 Astra Ultra, under detailed instructions concerning mathematical statements, proof completeness, notation, organization, and writing style, to assist in preparing an early version of the manuscript.
	
	The author initially expected that the technique would be difficult to extend from complete graphs to general graphs. While reading the rectangular matrix multiplication improvements of Amanov, van den Berg, and Zuiddam, the author considered whether the method might nevertheless yield a rough bound for general weighted graph tensors. The author then guided GPT-6 Pro in exploring this extension. Unexpectedly, GPT-6 Pro obtained the general weighted result through a short reduction and derived the completely positive bound. The author next directed GPT-6 Pro to investigate arbitrary coefficient fields; it gave an affirmative argument extending the preceding results from $\mathbb C$ to every field. The author then used GPT-6 Astra Ultra, with the same detailed guidance as for the early version, to assist in preparing the present manuscript. To keep the paper focused, the author removed the Operational Barrier result from the earlier version: the completely positive formulation now gives a more informative account of the origin of the bound $n^2/4$. The author also used Google's DeepReview for pre-submission feedback and GPT-6 Astra for subsequent checks of proofs, references, and presentation. The author takes full responsibility for the correctness, completeness, and integrity of the paper.
	
	\phantomsection
	\section*{Acknowledgements}
	The author thanks Baitian Li for helpful discussions.

	\phantomsection

	\appendix
	\section{Weighted Flattening Ranks and Maximum Cuts}
	\label{app:cuts}
	\label{sec:cuts}
	
	The lower bounds in the main results come from flattening a graph tensor across a vertex cut. This appendix gives that calculation for arbitrary edge dimensions and over every field, using the coordinate conventions of Section~\ref{sec:preliminaries}. The argument is standard in the unweighted setting; see Christandl, Vrana, and Zuiddam~\cite[Definition~1.1.19, Proposition~1.1.20, and Example~1.1.21]{cvz2019}. The same calculation also gives the minimum-cut upper bound on asymptotic subrank, which Appendix~\ref{sec:subrank} will show is attained.
	
	\begin{definition}[flattening rank]
		\label{def:flattening-rank}
		Let $k\ge2$ be an integer, let $T$ be a $k$-th order tensor, and choose coordinates of positive dimensions $(d_1,\ldots,d_k)$ for $T$. If $T=0$, use the zero coefficient table of these dimensions. Let $\varnothing\ne S\subsetneq\{1,\ldots,k\}$ and write $\overline S=\{1,\ldots,k\}\setminus S$. In these coordinates, the \emph{flattening across $S$} is the matrix $\operatorname{Flat}_S(T)$ with rows indexed by $\mathbf a\in\prod_{v\in S}[d_v]_0$ and columns indexed by $\mathbf b\in\prod_{v\in\overline S}[d_v]_0$, defined by
		\[
		\operatorname{Flat}_S(T)_{\mathbf a,\mathbf b}=T_{c_1,\ldots,c_k},
		\quad c_v=\begin{cases}a_v,&v\in S,\\b_v,&v\in\overline S,\end{cases}\;(1\le v\le k).
		\]
		The indices on each side are ordered by increasing vertex label and then lexicographically. The \emph{flattening rank} is $\operatorname{FR}_S(T)=\rank(\operatorname{Flat}_S(T))$, with matrix rank taken over $\F$. A different representative of a nonzero tensor permutes the rows and columns, so this rank is independent of the representative. For $T=0$, every choice gives a zero matrix, and hence $\operatorname{FR}_S(0)=0$.
	\end{definition}
	
	\begin{lemma}[properties of flattening rank]
		\label{lem:flattening-properties}
		Let $k\ge2$ be an integer, let $A,B$ be $k$-th order tensors, and let $\varnothing\ne S\subsetneq\{1,\ldots,k\}$, with complement $\overline S$. Then
		\[
		\begin{gathered}
			\Q_{\F}(A)\le \operatorname{FR}_S(A)\le\R_{\F}(A),\\
			\operatorname{FR}_S(A\Ksum B)=\operatorname{FR}_S(A)+\operatorname{FR}_S(B),\\
			\operatorname{FR}_S(A\Kprod B)=\operatorname{FR}_S(A)\operatorname{FR}_S(B).
		\end{gathered}
		\]
		Moreover, $A\preceq B$ implies $\operatorname{FR}_S(A)\le \operatorname{FR}_S(B)$, and $\operatorname{FR}_S(\Id_k(q))=q$ for every nonnegative integer $q$.
	\end{lemma}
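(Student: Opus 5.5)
The plan is to prove each assertion directly from the matrix description of $\operatorname{Flat}_S$, treating every identity as a statement about how local maps, direct sums and Kronecker products act on the flattened matrix. The zero tensor is handled separately: its flattening is a zero matrix and all claims are trivial under the stated conventions.

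\textbf{Monotonicity.} First I would show that if $B'=L\cdot B$ with $L=(L_1,\ldots,L_k)$, then
\[
\operatorname{Flat}_S(L\cdot B)=\Bigl(\bigotimes_{v\in S}L_v\Bigr)\operatorname{Flat}_S(B)\Bigl(\bigotimes_{v\in\overline S}L_v\Bigr)^{\mathsf T},
\]
with matrix Kronecker products ordered compatibly with the lexicographic row and column orders. This follows by expanding the defining sum in Definition~\ref{def:restriction} and grouping the indices in $S$ and in $\overline S$. Since multiplying by matrices cannot increase matrix rank, $A\preceq B$ gives $\operatorname{FR}_S(A)\le\operatorname{FR}_S(B)$.

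\textbf{The identity tensor and the sandwich.} For $\Id_k(q)$, the flattening has exactly one nonzero entry, equal to $1$, in each row $(d,\ldots,d)$ and column $(d,\ldots,d)$, for $0\le d<q$. Every other row and column is zero, so the rank is $q$. Monotonicity then gives the sandwich. From $\Id_k(\Q_{\F}(A))\preceq A$ we get $\Q_{\F}(A)\le\operatorname{FR}_S(A)$, and from $A\preceq\Id_k(\R_{\F}(A))$ we get $\operatorname{FR}_S(A)\le\R_{\F}(A)$.

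\textbf{Additivity.} In $A\Ksum B$, a nonzero entry must take all its coordinates from the $A$ block or all from the $B$ block. Consider the rows whose coordinates all lie in the $A$ block, those whose coordinates all lie in the $B$ block, and the mixed rows, and likewise for columns. Any entry with a mixed row or mixed column, or with an $A$-type row and $B$-type column (or the reverse), is zero. The flattening is therefore, up to a permutation of rows and columns, block diagonal $\operatorname{diag}(\operatorname{Flat}_S(A),\operatorname{Flat}_S(B))$ padded with zero rows and columns. This requires $S$ and $\overline S$ to be nonempty, so that a row and a column both actually carry coordinates. Matrix rank then adds.

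\textbf{Multiplicativity.} The main obstacle is a bookkeeping step. After reordering rows and columns by the permutation that separates the pair of indices $(i_v,j_v)$ on each leg, the flattening of $A\Kprod B$ equals the matrix Kronecker product $\operatorname{Flat}_S(A)\otimes\operatorname{Flat}_S(B)$, as follows from the entry formula $(A\Kprod B)_{\ldots}=A_{\ldots}B_{\ldots}$. I would write this out by identifying the row index $(i_vm_v+j_v)_{v\in S}$ with the pair $((i_v)_{v\in S},(j_v)_{v\in S})$, and doing the same for columns. Since $\rank(X\otimes Y)=\rank X\,\rank Y$ over any field, the product formula follows.
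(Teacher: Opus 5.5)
Your proposal is correct and follows the paper's proof essentially step by step: the same conjugation formula $\operatorname{Flat}_S(L\cdot B)=\bigl(\bigotimes_{v\in S}L_v\bigr)\operatorname{Flat}_S(B)\bigl(\bigotimes_{v\in\overline S}L_v\bigr)^{\mathsf T}$ for monotonicity, the same count for $\Id_k(q)$, block-diagonal structure for $\Ksum$, and reordering to a matrix Kronecker product for $\Kprod$. The only small difference is that you obtain $\operatorname{FR}_S(A)\le\R_{\F}(A)$ from monotonicity applied to $A\preceq\Id_k(\R_{\F}(A))$, whereas the paper flattens a decomposition into simple tensors to get a sum of rank-one matrices; the two arguments are equivalent by the two forms of rank in Definition~\ref{def:rank}.
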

	\begin{proof}
		The claims involving a zero tensor follow from $\R_{\F}(0)=\Q_{\F}(0)=\operatorname{FR}_S(0)=0$ and the fact that $A\preceq0$ implies $A=0$. We may therefore assume that $A$ and $B$ are nonzero and work in fixed coordinates.
		
		A simple tensor $x_1\Cprod\cdots\Cprod x_k$ flattens to the outer product of the vector $\bigKprod_{v\in S}x_v$ and the vector $\bigKprod_{v\in\overline S}x_v$. It therefore has matrix rank at most $1$. A rank decomposition of $A$ into $r$ simple tensors gives a sum of $r$ matrices of rank at most $1$, proving $\operatorname{FR}_S(A)\le\R_{\F}(A)$.
		
		Suppose that local matrices $L_1,\ldots,L_k$ map $B$ to $A$. Expanding the entries in Definition~\ref{def:restriction} gives
		\[
		\operatorname{Flat}_S(A)=
		\left(\bigKprod_{v\in S}L_v\right)
		\operatorname{Flat}_S(B)
		\left(\bigKprod_{v\in\overline S}L_v\right)^{\mathsf T}.
		\]
		Here the products are the usual Kronecker products of matrices, with their factors in increasing vertex order. Multiplication on the left or right cannot increase matrix rank, so $\operatorname{FR}_S(A)\le \operatorname{FR}_S(B)$. For $q\ge1$, the flattening of $\Id_k(q)$ has exactly one entry $1$ in each position whose row and column are constant tuples with the same value in $[q]_0$. These positions use $q$ distinct rows and $q$ distinct columns; all other entries are zero. Its rank is therefore $q$. The assertion for $q=0$ follows from the definition of $\Id_k(0)$. Applying monotonicity to $\Id_k(\Q_{\F}(A))\preceq A$ proves $\Q_{\F}(A)\le \operatorname{FR}_S(A)$.
		
		The flattening of $A\Ksum B$ has the flattenings of $A$ and $B$ in two disjoint diagonal blocks, with all other entries zero. Its rank is their rank sum.
		
		After separate permutations of its rows and columns, $\operatorname{Flat}_S(A\Kprod B)$ is the matrix Kronecker product of $\operatorname{Flat}_S(A)$ and $\operatorname{Flat}_S(B)$. To recall why matrix rank is multiplicative under this operation, invertible row and column operations put a matrix of rank $r$ into a rectangular matrix with an $r\times r$ identity block and zeros elsewhere. Applying the Kronecker products of these operations to two matrices of ranks $r$ and $s$ produces a matrix with exactly $rs$ independent nonzero rows and columns. Its rank is $rs$, proving the remaining identity.
	\end{proof}

	\begin{lemma}[flattening across a weighted cut]
		\label{lem:cut-flattening}
		Let $Q$ be a dimension matrix of order $n\ge2$ with logarithmic edge-weight matrix $W$. For every nonempty proper subset $S\subset\{1,\ldots,n\}$,
		\[
		\operatorname{FR}_S(\WT(Q))
		=\prod_{\substack{1\le i<j\le n\\|\{i,j\}\cap S|=1}}Q_{i,j}
		=2^{\operatorname{cut}_W(S)}.
		\]
		In particular, $\operatorname{FR}_S(\GT_q(G))=q^{|\delta_G(S)|}$ for every positive integer $q$.
	\end{lemma}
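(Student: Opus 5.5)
The plan is to compute the flattening edge by edge, using the Kronecker factorization of $\WT(Q)$ and the multiplicativity of flattening rank from Lemma~\ref{lem:flattening-properties}.

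By Definition~\ref{def:weighted-tensor},
\[
\WT(Q)=\bigKprod_{1\le i<j\le n}E^{(Q_{i,j})}_{i,j},
\]
and flattening rank is multiplicative under $\Kprod$. It therefore suffices to compute $\operatorname{FR}_S(E^{(q)}_{i,j})$ for a single edge tensor and multiply the results over all pairs. Flattening rank is independent of the chosen representative, so the coordinate conventions of the Kronecker product play no role here.

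For a single edge $E^{(q)}_{i,j}$ there are two cases.

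\emph{The edge crosses the cut}, meaning exactly one endpoint lies in $S$. Every leg other than $i$ and $j$ has dimension one, so the flattening is, up to these trivial one-dimensional index factors, the $q\times q$ matrix with entry $1$ at $(a,a)$ for $0\le a<q$ and zero elsewhere. This is the identity matrix, of rank $q$.

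\emph{The edge does not cross the cut}, meaning both endpoints lie on the same side. Then the flattening is the outer product of the vector $\sum_a e_a\otimes e_a$ on that side with a one-dimensional vector $e_0$ on the other side. It is a nonzero rank-one matrix, of rank $1$.

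Multiplying over all pairs gives
\[
\operatorname{FR}_S(\WT(Q))=\prod_{|\{i,j\}\cap S|=1}Q_{i,j}.
\]
Writing $Q_{i,j}=2^{W_{i,j}}$ turns this product into $2^{\operatorname{cut}_W(S)}$. The graph case follows by taking $Q_{i,j}=q$ on the edges of $G$ and $Q_{i,j}=1$ on the nonedges, which yields $q^{|\delta_G(S)|}$.

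The only point requiring care is justifying that legs of dimension one may be ignored in the flattening. A leg of dimension one contributes only the single index $0$ to its side, so removing it leaves the matrix unchanged. Everything else in the argument is routine.
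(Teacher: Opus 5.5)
Your proposal is correct and follows the paper's proof essentially verbatim: factor $\WT(Q)$ into its edge tensors, compute the flattening of each single edge (an identity matrix of rank $Q_{i,j}$ for a crossing edge, a nonzero row or column vector of rank one otherwise), and multiply using Lemma~\ref{lem:flattening-properties}. Your remark on ignoring one-dimensional legs plays the same role as the paper's phrase ``up to coordinates of dimension one.''
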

	\begin{proof}
		If exactly one endpoint of $(i,j)$ belongs to $S$, the flattening of $E^{(Q_{i,j})}_{i,j}$ is an identity matrix of size $Q_{i,j}$, up to coordinates of dimension one, and has rank $Q_{i,j}$. If both endpoints belong to $S$, the flattening is the nonzero column vector
		$\sum_{a=0}^{Q_{i,j}-1}e_{aQ_{i,j}+a}^{(Q_{i,j}^2)}$,
		with possible dimension-one coordinates. It has rank one. If neither endpoint belongs to $S$, the flattening is the corresponding row vector and again has rank one. The edge decomposition in Definition~\ref{def:weighted-tensor} and multiplicativity from Lemma~\ref{lem:flattening-properties} therefore give the displayed product. Its logarithm is precisely the cut weight. Setting $Q_{i,j}=q$ on the edges of $G$ and $Q_{i,j}=1$ on nonedges gives the unweighted specialization.
	\end{proof}
	
	\begin{proposition}[weighted maximum-cut lower bound]
		\label{prop:cut}
		For every field $\F$ and every dimension matrix $Q$ of order $n\ge2$ with logarithmic edge-weight matrix $W$,
		\[
		\MaxCut(W)\le\log_2\aR_{\F}(\WT(Q)).
		\]
		If the support graph of $W$ is bipartite, then
		\[
		\log_2\aR_{\F}(\WT(Q))=\MaxCut(W)
		=\sum_{1\le i<j\le n}W_{i,j}.
		\]
		Consequently, $\Gamma_{\F}(G)\ge\MaxCut(G)$ for every simple graph $G$, with equality for bipartite $G$, and
		$\Gamma_{\F}(K_n)\ge\lfloor n^2/4\rfloor$.
	\end{proposition}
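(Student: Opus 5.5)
The plan is to prove the lower bound by combining the cut-flattening computation of Lemma~\ref{lem:cut-flattening} with the multiplicativity of flattening rank from Lemma~\ref{lem:flattening-properties}. Then handle the bipartite equality by a matching upper bound on ordinary rank.

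\textbf{Lower bound.} Fix a nonempty proper subset $S\subset\{1,\ldots,n\}$ and a positive integer $N$. By Lemma~\ref{lem:flattening-properties},
\[
\R_{\F}(\WT(Q)^{\Kprod N})\ge\operatorname{FR}_S(\WT(Q)^{\Kprod N})=\operatorname{FR}_S(\WT(Q))^N .
\]
By Lemma~\ref{lem:cut-flattening} the right-hand side equals $2^{N\operatorname{cut}_W(S)}$. Taking $N$-th roots and letting $N\to\infty$ gives $\log_2\aR_{\F}(\WT(Q))\ge\operatorname{cut}_W(S)$. Maximizing over $S$ gives $\MaxCut(W)$. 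Here $\WT(Q)$ is nonzero, so all logarithms are defined.

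\textbf{Bipartite case.} Suppose the support graph has a bipartition $L\sqcup R$. Then every positive-weight edge crosses the cut $L$, so $\operatorname{cut}_W(L)=\sum_{i<j}W_{i,j}$. For the upper bound, I use submultiplicativity of rank over the Kronecker factors:
\[
\R_{\F}\bigl(\WT(Q)\bigr)\le\prod_{i<j}\R_{\F}\bigl(E^{(Q_{i,j})}_{i,j}\bigr)=\prod_{i<j}Q_{i,j},
\]
where each edge tensor has rank at most $Q_{i,j}$ because its defining sum has $Q_{i,j}$ simple terms. Applying the same bound to Kronecker powers, with $\R_{\F}(A\Kprod B)\le\R_{\F}(A)\R_{\F}(B)$, gives $\log_2\aR_{\F}(\WT(Q))\le\sum_{i<j}W_{i,j}$. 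This matches the cut value, which proves the equality. If some part is empty, the bipartition is degenerate, but then $W=0$ and both sides vanish.

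\textbf{Graph corollaries.} For a simple graph $G$, take $Q$ with entries $2$ on edges and $1$ elsewhere, so that $W=A_G$. Lemma~\ref{lem:weighted-spectral-formula} turns the statements into $\Gamma_{\F}(G)\ge\MaxCut(G)$, with equality for bipartite $G$. Finally, $\MaxCut(K_n)=\max_{s}s(n-s)=\lfloor n^2/4\rfloor$.

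None of these steps is a real obstacle. The only point needing care is the passage from a single flattening to asymptotic rank. Multiplicativity of flattening rank handles it, and the rest is bookkeeping.
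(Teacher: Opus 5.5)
Your proof is correct and follows the paper's argument: the lower bound comes from $\R_{\F}(\WT(Q)^{\Kprod N})\ge\operatorname{FR}_S(\WT(Q)^{\Kprod N})=2^{N\operatorname{cut}_W(S)}$, and the bipartite upper bound comes from the $\prod_{i<j}Q_{i,j}$ simple terms of $\WT(Q)$, which is the same count as your submultiplicativity over the edge factors. The graph corollaries via Lemma~\ref{lem:weighted-spectral-formula} and the evaluation $\MaxCut(K_n)=\lfloor n^2/4\rfloor$ also match the paper.
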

	\begin{proof}
		For a nonempty proper vertex set $S$ and every positive integer $N$, the two preceding lemmas give
		\[
		\R_{\F}(\WT(Q)^{\Kprod N})
		\ge\operatorname{FR}_S(\WT(Q)^{\Kprod N})
		=2^{N\operatorname{cut}_W(S)}.
		\]
		Taking $N$-th roots, then the limit, and finally base-two logarithms proves that the logarithmic asymptotic rank is at least $\operatorname{cut}_W(S)$. Maximizing over $S$ proves the lower bound.
		
		If the support is bipartite, its bipartition crosses every positive-weight edge, so its cut weight is $\sum_{1\le i<j\le n}W_{i,j}$. The defining expansion of $\WT(Q)$ has exactly $\prod_{1\le i<j\le n}Q_{i,j}$ simple terms, one for each edge labeling. This bounds its rank, and hence its asymptotic rank, by that product. The resulting logarithmic upper bound matches the cut. The statement also holds when every edge weight is zero, because the tensor is $\Id_n(1)$. For $W=A_G$, Lemma~\ref{lem:weighted-spectral-formula} identifies the logarithmic asymptotic rank with $\Gamma_{\F}(G)$, giving the unweighted conclusions. A cut of $K_n$ with $s$ vertices on one side has $s(n-s)$ edges, maximized at $\lfloor n^2/4\rfloor$.
	\end{proof}
	
	\section{Weighted Asymptotic Subrank and the Minimum Cut}
	\label{app:subrank}
	\label{sec:subrank}
	
	Appendix~\ref{sec:cuts} bounds asymptotic subrank from above by the minimum cut. We now prove the matching lower bound for weighted graph tensors over every field. Vrana and Christandl~\cite[Theorem~1 and Section~3]{vc2017} established the minimum-cut formula in the setting of hypergraph entanglement conversion, allowing parallel hyperedges and, in their final discussion, nonuniform edge dimensions. We retain their orthogonal-representation method, make its local tensor maps explicit over $\F$, and account for the polynomial coefficient-extraction overhead.
	
	The proof starts with binary tensors of loopless multigraphs, where parallel edges encode integer weights. Approximating logarithmic edge weights by these multiplicities in tensor powers then gives the formula for arbitrary positive integer edge dimensions. The real and integral vectors used below specify integer exponents of formal local maps, so their construction is independent of the coefficient field.
	
	For this appendix, a \emph{loopless multigraph} $H$ has a finite vertex set and a finite set of distinct edge labels, each with two different endpoints; different edge labels may have the same pair of endpoints. Degrees and cut sizes count these labels with multiplicity. The tensor $\GT_q(H)$ has one factor $E^{(q)}_{i,j}$ for every edge label with endpoints $i,j$. Thus, if there are $m_{i,j}$ such labels, this tensor is isomorphic to $\WT(Q)$ with $Q_{i,j}=q^{m_{i,j}}$. Pairing, projection, and flattening of its edge labels have exactly the properties established in Lemmas~\ref{lem:graph-multiplicative} and~\ref{lem:cut-flattening}. We fix an ordering of the edge labels whenever coordinates are used.
	
	\begin{definition}[general-position orthogonal representation]
		\label{def:orthogonal-representation}
		Let $H$ be a finite simple graph, and let $d$ be a positive integer with $d\le|V(H)|$. A \emph{general-position orthogonal representation} of $H$ in $\mathbb R^d$ is a family $(c_v)_{v\in V(H)}$ of vectors in $\mathbb R^d$ such that $\langle c_u,c_v\rangle=0$ for every two distinct nonadjacent vertices $u,v\in V(H)$, and every subfamily of at most $d$ vectors is linearly independent. Here $\langle x,y\rangle=\sum_{j=1}^{d}x_jy_j$.
		
		The \emph{line graph} $L(H)$ of a loopless multigraph $H$ is the simple graph with vertex set $E(H)$; two distinct vertices are adjacent exactly when their edge labels share at least one endpoint in $H$.
	\end{definition}
	
	The next lemma is the graph case of~\cite[Theorem~3]{vc2017}, with parallel edges allowed. Its real-vector input is the theorem of Lov\'asz, Saks, and Schrijver~\cite[Theorem~1.1]{lss1989}, with the corrected proof in~\cite[Theorem~1]{lss2000}. We include the rational approximation argument of~\cite{vc2017} to obtain the integer weights used below.
	\begin{lemma}[integral orthogonal representations]
		\label{lem:integral-orthogonal}
		Let $G$ be a finite connected loopless multigraph with at least three vertices. Put $m=|E(G)|$, $\ell=\MinCut(G)$, and $d=m-\ell$. Then $d\ge1$, and $L(G)$ has a general-position orthogonal representation $(c_e)_{e\in E(G)}$ with $c_e\in\mathbb Z^d$ for every $e\in E(G)$.
	\end{lemma}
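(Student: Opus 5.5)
The plan has three parts: apply the Lovász--Saks--Schrijver theorem to the line graph $L(G)$ with $d=m-\ell$, then perturb the resulting real representation to a rational one, then clear denominators.

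\textbf{Positivity of $d$.} Since $G$ is connected, $\ell\ge1$. A minimum cut is bounded by the minimum degree. Because $G$ has at least three vertices and is connected, the total edge count strictly exceeds any single vertex's degree unless every edge is incident with one vertex $v$. If all edges are incident with $v$, then $G$ is a multi-star, and some leaf cut $\{w\}$ has size $\deg(w)<m$, because at least two leaves exist. In either case $\ell<m$, so $d\ge1$.

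\textbf{Real representation.} By the Lovász--Saks--Schrijver theorem, a simple graph on $m$ vertices has a general-position orthogonal representation in $\mathbb R^d$ whenever it is $(m-d)$-vertex-connected. So the main step is to show that $L(G)$ is $\ell$-connected, that is, deleting any set $F$ of fewer than $\ell$ edge-vertices leaves $L(G)-F$ connected. We need the following.
\begin{itemize}
\item $G-F$ is connected, since $|F|<\ell=\MinCut(G)$.
\item At least $m-\ell+1\ge2$ edges remain, so the remaining line graph is nonempty.
\end{itemize}
Two remaining edges are then joined by a path in $G-F$, and consecutive edges along that path share endpoints, so they are adjacent in $L(G)-F$. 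Parallel edges share endpoints and are therefore adjacent, which is consistent with this. We also need $m>\ell$, which was shown above. Thus LSS supplies real vectors $c_e$.

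\textbf{Rationality and integrality.} This is the expected obstacle: orthogonality is an exact algebraic condition, so a naive perturbation destroys it. I would follow the Vrana--Christandl approach and use the constructive greedy nature of the LSS construction. Order the vertices of $L(G)$. Each $c_e$ is chosen in the orthogonal complement of the previously chosen vectors for its non-neighbours, and generically within that subspace, so that general position persists. Induct in this order with rational vectors. The orthogonal complement of finitely many rational vectors is a rational subspace, so rational points are dense in it. The general-position conditions, which require nonvanishing of certain minors involving the new vector, are open conditions that the LSS proof shows can be satisfied by some point of this subspace. The new vector constrains only later vectors, and the LSS proof guarantees a generic choice at each step. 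Hence one can pick a rational point in that open nonempty set.

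The delicate point is that LSS does not choose generically at every step in one fixed order; the corrected 2000 proof uses a specific order or argument. If necessary, I would instead argue on the whole solution variety: the set of representations is a real variety defined over $\mathbb Q$, and general position is an open condition. The LSS theorem gives a point of it, and I would show that rational points are dense near that point, as in~\cite{vc2017}. Finally, I would multiply each $c_e$ by a common denominator. Scaling by nonzero integers preserves orthogonality and linear independence, so this gives $c_e\in\mathbb Z^d$.
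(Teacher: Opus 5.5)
\textbf{Gap in the rationality step.} Your argument for $d\ge1$ via the multi-star case is correct; it replaces the paper's bound $\ell\le\min\deg\le 2m/k<m$. Your proof that $L(G)$ is $\ell$-connected and your application of Lov\'asz--Saks--Schrijver are also correct and match the paper. The gap is the passage from a real representation to a rational one, which is the actual content of the lemma, and neither of your two routes establishes it.

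In the greedy route, the genericity in LSS is almost-sure genericity of a random sequential choice. Rational vectors form a null set, so it says nothing about rational choices. Nor is it enough to keep the minors involving the new vector and earlier vectors nonzero. Later vectors are confined to orthogonal complements fixed by the current choices, and a prefix in general position can block every completion. For instance, suppose that in $\mathbb R^3$ three later vectors must each be orthogonal to one of three disjoint pairs among six earlier vectors. Then they lie on the normals of three planes, and they are coplanar whenever those planes share a line. This can happen even though the six earlier vectors are in general position. So you would need the set of extendable choices at each step to be relatively open and nonempty, and you give no argument for this.

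Your fallback claims that rational points are dense near a real point of a variety defined over $\mathbb Q$. This is false in general (consider $x^2=2$). Deferring it to Vrana and Christandl defers exactly the step that has to be proved.

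\textbf{The missing idea.} You need an explicit rational parametrization of nearby representations, anchored at the real LSS representation $a_1,\ldots,a_m$.
\begin{enumerate}
\item Order the edges and let $A_i$ be the set of earlier indices nonadjacent to $i$ in $L(G)$. Every vertex of the $\ell$-connected graph $L(G)$ has degree at least $\ell$, so $|A_i|\le m-1-\ell=d-1$.
\item For $f_1,\ldots,f_m$ near $a_1,\ldots,a_m$, define recursively $z_i=f_i-B_i(B_i^{\mathsf T}B_i)^{-1}B_i^{\mathsf T}f_i$, where the columns of $B_i$ are the $z_j$ with $j\in A_i$.
\item At $f=a$ this returns $z=a$, because $a_i$ is already orthogonal to its earlier non-neighbours. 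Each Gram matrix is invertible there, since any at most $d$ of the $a_j$ are independent.
\item By continuity, the map stays defined near $a$ and its output stays in general position, because finitely many determinants are nonzero at $a$. Orthogonality to the columns of $B_i$ holds exactly by construction.
\item Rational inputs give rational outputs, and clearing denominators yields the integral vectors.
\end{enumerate}
The same re-projection of later vectors would also repair your greedy induction. It shows that the extendable choices form a relatively open, nonempty subset of each rational constraint subspace, where rational points are dense.
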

	\begin{proof}
		If $G$ has $k\ge3$ vertices, then $\ell\le\min_{v\in V(G)}\bigl(\deg(v)\bigr)\le2m/k<m$. Also $\ell\ge1$ because $G$ is connected. Consequently $1\le d\le m-1$.
		
		Deleting fewer than $\ell$ edges leaves $G$ connected: otherwise a component of the remaining graph would give a cut of $G$ with fewer than $\ell$ edges. It follows that deleting fewer than $\ell$ vertices leaves $L(G)$ connected. Indeed, the remaining vertices are the edges of the connected remaining graph, and a path in that graph connects any two of its edges by a sequence of incident edges. Thus $L(G)$ is $\ell$-vertex-connected: it has more than $\ell$ vertices and remains connected after deleting any set of fewer than $\ell$ vertices.
		
		The theorem of Lov\'asz, Saks, and Schrijver~\cite{lss1989,lss2000} states that an $(m-d)$-vertex-connected graph on $m$ vertices admits a general-position orthogonal representation in $\mathbb R^d$. Apply it to $L(G)$ and fix such a representation $a_1,\ldots,a_m$, after ordering its vertices by the edges of $G$.
		
		The remaining task is to replace the real representation by rational vectors without losing exact orthogonality. Simply rounding each vector would destroy that condition. Instead, we round nearby input vectors and project each one away from its already constructed nonneighbors; linear independence survives a sufficiently small perturbation. For each $1\le i\le m$, let $A_i$ be the set of indices $j<i$ whose vertices are not adjacent to vertex $i$ in $L(G)$. Every vertex of an $\ell$-vertex-connected graph has degree at least $\ell$: otherwise deleting its fewer than $\ell$ neighbors would isolate it while leaving another vertex. Hence $|A_i|\le m-1-\ell=d-1$. Given vectors $f_1,\ldots,f_m$ sufficiently close to $a_1,\ldots,a_m$, define vectors $z_1,\ldots,z_m$ in order. Let $B_i$ be the matrix whose columns are $z_j$ for $j\in A_i$, ordered increasingly, and set
		\[
		z_i=f_i-B_i(B_i^{\mathsf T}B_i)^{-1}B_i^{\mathsf T}f_i.
		\]
		If $A_i$ is empty, the subtracted term is defined to be zero. At $f_i=a_i$ for every $i$, induction gives $z_i=a_i$: the columns of $B_i$ are independent by general position, and $a_i$ is orthogonal to all of them. Thus every nonempty Gram matrix $B_i^{\mathsf T}B_i$ is invertible at this choice.
		
		There are only finitely many such matrices. Induction on $i$ and continuity of matrix inversion therefore give an open neighborhood of $(a_1,\ldots,a_m)$ in which every displayed inverse exists and every $z_i$ depends continuously on $(f_1,\ldots,f_m)$. Shrinking this neighborhood if necessary keeps every $d\times d$ determinant formed from distinct vectors $z_i$ nonzero, since there are finitely many and all are nonzero at $(a_1,\ldots,a_m)$. The resulting vectors are in general position. Moreover, $z_i$ is orthogonal to each column of $B_i$, so every required orthogonality relation holds exactly.
		
		Choose all coordinates of $f_1,\ldots,f_m$ rational within this open neighborhood. The displayed formula then produces rational coordinates for every $z_i$, because it uses only rational matrix operations and inverses of invertible rational matrices. Multiply all vectors by a common positive integer that clears their denominators. This preserves orthogonality and linear independence and gives the required integral vectors $c_e$.
	\end{proof}
	
	We also use an elementary exact restriction to absorb the coefficient-extraction overhead. It extends the one-copy extraction for paths discussed in~\cite[Section~3]{vc2017} to any connected loopless multigraph by selecting a spanning tree.
	\begin{lemma}[a diagonal restriction from a spanning tree]
		\label{lem:tree-diagonal}
		For every finite connected loopless multigraph $G$ on $k\ge2$ vertices and every integer $q\ge2$, we have $\Id_k(q)\preceq\GT_q(G)$.
	\end{lemma}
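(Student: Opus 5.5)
I would reduce to a spanning tree and then collapse the tree's edge labels one vertex at a time. First, fix a spanning tree $F$ of $G$, which has $k-1$ edge labels. By the monotonicity part of Lemma~\ref{lem:graph-multiplicative}, and its multigraph analogue noted at the start of this appendix, we can project every non-tree edge label onto its coordinate $0$ at both endpoints. This restricts $\GT_q(G)$ to $\GT_q(F)$, with each deleted edge contributing only the scalar factor $e_0\Cprod e_0$. Since restriction is transitive, it then suffices to prove $\Id_k(q)\preceq\GT_q(F)$ for a tree $F$ on $k$ vertices.

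For the tree, I would write $\GT_q(F)$ in the formal multilinear notation of Definition~\ref{def:multilinear-notation}. It is a sum over labelings $\lambda:E(F)\to[q]_0$ of products $\prod_v x^{(v)}_{I_v(\lambda)}$, where the coordinate at $v$ records the tuple of labels on the tree edges incident with $v$. At each vertex $v$, I apply the local map sending the basis vector indexed by a tuple $(b_1,\ldots,b_{d_v})$ to $e_b^{(q)}$ if all $b_t$ equal a common value $b$, and to $0$ otherwise. This is the same diagonal functional used for the star inequality in Lemma~\ref{lem:edge-weights}, now applied at every vertex simultaneously.

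A supported term survives exactly when, at every vertex, all incident tree edges carry the same label. Because $F$ is connected, this forces $\lambda$ to be constant. Each constant labeling $\lambda\equiv b$ maps to $\bigCprod_v e_b^{(q)}$ with coefficient one. The image is therefore $\sum_{b=0}^{q-1}\bigCprod_{v=1}^{k}e_b^{(q)}=\Id_k(q)$. The hypotheses $k\ge2$ and connectivity guarantee that every vertex has degree at least one in $F$, so every leg has dimension at least $q$ and the map to $e_b^{(q)}$ is defined.

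The only point needing care is the bookkeeping for parallel edges and for the fixed ordering of edge labels in the mixed-radix coordinates. Because the local maps are defined on tuples, and the non-tree labels are already pinned to $0$ by the first restriction, the ordering is irrelevant. I would therefore compose the two restrictions directly into a single local map per vertex, which checks only that the tree labels agree and that the non-tree labels are $0$. That composite map gives the statement over every field without any appeal to degeneration.
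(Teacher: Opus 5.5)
Your proposal is correct and is essentially the paper's proof. The paper also fixes a spanning tree $F$ and uses at each vertex a single local map: it sends a basis vector to $e_a^{(q)}$ when every incident non-tree label is $0$ and every incident tree edge carries the same label $a$, and to zero otherwise; connectivity of $F$ then leaves exactly the $q$ constant labelings, and your two-step factorization (projecting onto $\GT_q(F)$, then collapsing the tree labels) composes to precisely this map.
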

	\begin{proof}
		Choose a spanning tree $F$ of $G$. Every vertex is incident to at least one edge of $F$. Define a local map $L_v$ on the basis indexed by the edge labels incident to vertex $v$: send a basis vector to $e_a^{(q)}$ if all incident edges outside $F$ have label $0$ and all incident edges in $F$ have the same label $a\in[q]_0$, and send it to zero otherwise. Extend each $L_v$ linearly.
		
		A surviving global edge labeling has label $0$ on every edge outside $F$. If $a_v$ is the common tree-edge label at vertex $v$, then a tree edge $(u,v)$ has both labels $a_u$ and $a_v$, so $a_u=a_v$. Connectivity of $F$ forces $a_v=a$ for all vertices and a single $a\in[q]_0$. Conversely, each such $a$ gives exactly one surviving labeling. Consequently,
		\[
		(L_1,\ldots,L_k)\cdot\GT_q(G)
		=\sum_{a=0}^{q-1}\bigCprod_{v=1}^{k}e_a^{(q)}=\Id_k(q),
		\]
		which proves the stated restriction.
	\end{proof}
	
	The orthogonal representation will select a large family of edge labelings that can be separated on every tensor leg. The spanning-tree restriction will absorb the cost of turning the resulting degeneration into ordinary restrictions. Together, these constructions give the minimum-cut equality for binary multigraph tensors.
	
	\Needspace{9\baselineskip}
	\begin{lemma}[binary multigraph minimum-cut formula; cf.~\cite{vc2017}]
		\label{lem:multigraph-mincut}
		For every finite loopless multigraph $G$ with at least two vertices,
		\[
		\log_2\aQ_{\F}(\GT_2(G))=\MinCut(G).
		\]
	\end{lemma}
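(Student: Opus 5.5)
The upper bound is immediate from Appendix~\ref{sec:cuts}. For each nonempty proper $S$, Lemma~\ref{lem:flattening-properties} and Lemma~\ref{lem:cut-flattening}, which extend verbatim to multigraph tensors, give $\Q_{\F}(\GT_2(G)^{\Kprod N})\le\operatorname{FR}_S(\GT_2(G)^{\Kprod N})=2^{N|\delta_G(S)|}$. Taking $N$-th roots and minimizing over $S$ yields $\log_2\aQ_{\F}\le\MinCut(G)$. The real work is the lower bound.

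\textbf{Reductions.} First dispose of degenerate cases. If $G$ is disconnected, then $\MinCut(G)=0$ and $\aQ_{\F}\ge1$ trivially, since the tensor is nonzero. If $G$ has exactly two vertices joined by $m$ parallel edges, then $\GT_2(G)\cong\Id_2(2^m)$ and the claim is exact. So assume $G$ is connected with $k\ge3$ vertices. Put $m=|E(G)|$, $\ell=\MinCut(G)$ and $d=m-\ell\ge1$, and fix the integral general-position orthogonal representation $(c_e)\subset\mathbb Z^d$ of $L(G)$ from Lemma~\ref{lem:integral-orthogonal}.

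\textbf{Degeneration via orthogonal representation.} Work in $\GT_q(G)=\GT_2(G)^{\Kprod\log_2 q}$ with edge labels $\lambda_e\in[q]_0$. Embed these labels as integer vectors, for instance via $p$-adic digits or via a large integer basis $x\in[q]_0$ viewed in $\mathbb Z$. The goal is a weight degeneration in the sense of Remark~\ref{rem:weights} whose surviving terms are labelings satisfying $\sum_e\lambda_e c_e=\tau$ for a fixed target $\tau\in\mathbb Z^d$. These are $d$ linear constraints, so roughly $q^{m-d}=q^\ell$ labelings survive. At each vertex $v$, the leg sees the labels of its incident edges, which pairwise share $v$ and hence may be non-orthogonal. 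The weight on a vertex coordinate will be a quadratic form such as $\|\sum_{e\ni v}\lambda_e c_e-\tau_v\|^2$ with suitable local targets, arranged so that the total weight equals $\|\sum_e\lambda_ec_e-\tau\|^2$ plus cross terms $2\lambda_e\lambda_f\langle c_e,c_f\rangle$. Orthogonality for nonadjacent edges, meaning edges with no common vertex, ensures that every cross term is seen by a single vertex. This mirrors the squared-weight trick of Theorem~\ref{thm:separation}. Linear terms in $\tau$ are distributed across vertices as in that proof. General position implies that any two surviving labelings differing on the edges at one vertex differ visibly at that vertex, since $d$ independent vectors determine the labels. This gives a local injectivity which, after projecting each leg onto the surviving local patterns, yields a diagonal tensor $\Id_k(\approx q^{\ell}/\mathrm{poly}(q))$. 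Getting this bookkeeping exactly right — choosing the targets $\tau$ by pigeonhole so that at least $q^{\ell}/\mathrm{poly}(q)$ solutions exist, and verifying that the constant coefficient is genuinely diagonal — is the main obstacle. I would first try the Vrana--Christandl construction verbatim, replacing their complex phases by powers of $\varepsilon$.

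\textbf{Removing the degeneration.} Lemma~\ref{lem:degeneration-asymptotic} converts $\Id_k(r)\unlhd\GT_q(G)$ into $\Id_k(r)^{\Kprod N}\preceq\GT_q(G)^{\Kprod N}\Kprod\Id_k(h_N)$ with polynomial $h_N$. The extra factor $\Id_k(h_N)$ must be absorbed on the source side using Lemma~\ref{lem:tree-diagonal}: $\Id_k(2^{t})\preceq\GT_2(G)^{\Kprod t}$. Choosing $t=\lceil\log_2 h_N\rceil=O(\log N)$ gives $\Id_k(r^N)\preceq\GT_2(G)^{\Kprod(N\log_2q+t)}$, after composing with a subrank-style restriction. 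One must check that a diagonal tensor tensored with a diagonal restricts in the right direction; here I would instead absorb by restricting $\Id_k(h_N)$ out of extra copies, so that the sublinear overhead $t=o(N)$ vanishes. Taking $N\to\infty$ and then $q\to\infty$ gives $\log_2\aQ_{\F}(\GT_2(G))\ge\ell-o(1)$, hence equality.
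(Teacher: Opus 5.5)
Your outline follows the paper's route: the flattening upper bound, the same reductions, the integral orthogonal representation of $L(G)$ in $\mathbb Z^{m-\ell}$, a weight degeneration onto one fiber of $\lambda\mapsto\sum_e\lambda_ec_e$, and coefficient extraction absorbed via Lemma~\ref{lem:tree-diagonal}. But you leave the central step open, and the one concrete weight you propose fails.

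If vertex $v$ carries $\|\sum_{e\in E_v}\lambda_ec_e-\tau_v\|^2$, the total weight is a sum of squares. The surviving labelings are then those satisfying all $k$ local equations $\sum_{e\in E_v}\lambda_ec_e=\tau_v$ at once, not the single global equation. For $G=K_4$, where $d=3$ and every degree is $3$, general position makes each local equation fix all labels at its vertex, so at most one labeling survives.

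The correct bookkeeping is to expand the one global square $\|\sum_e\lambda_ec_e-g\|^2$ and give each monomial to exactly one vertex whose coordinate sees all its labels:
\begin{itemize}
\item the constant $\|g\|^2$ goes to a fixed vertex;
\item $\|c_e\|^2\lambda_e^2-2\langle g,c_e\rangle\lambda_e$ goes to one endpoint of $e$;
\item $2\langle c_e,c_f\rangle\lambda_e\lambda_f$ goes to one common endpoint when $e,f$ are incident, including parallel pairs, which share two endpoints.
\end{itemize}
Orthogonality kills all other cross terms. Individual vertex weights may then be negative, which Remark~\ref{rem:weights} allows. Only the total must be nonnegative, and it vanishes exactly on the fiber $\mathcal F_g$.

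Two further points need fixing.

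\emph{Injectivity.} Your injectivity claim is tautological as phrased. What is needed is that two labelings $\lambda,\mu\in\mathcal F_g$ that agree on $E_v$ agree everywhere. Subtracting their fiber equations gives $\sum_{e\notin E_v}c_e(\lambda_e-\mu_e)=0$. There are $m-\deg(v)\le m-\ell=d$ such vectors because $\deg(v)\ge\MinCut(G)$, so general position forces $\lambda=\mu$. This is precisely why $d=m-\ell$ is the right dimension. With injectivity on every leg, the constant coefficient is automatically a diagonal tensor of size $|\mathcal F_g|$ after relabeling, so your second obstacle disappears.

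\emph{Pigeonhole.} The loss must be independent of $q$, or at least $q^{o(1)}$. Use the labels directly as integers in $[q]_0$; no digit expansion is needed. If $B\ge1$ bounds $\sum_e|c_{e,j}|$ for every coordinate $j$, the sum takes at most $(2B(q-1)+1)^d\le(2B+1)^dq^d$ values. Hence some fiber has size at least $q^\ell/C$ with $C=(2B+1)^d$. A genuine $\mathrm{poly}(q)$ loss, as you wrote, would give only $\ell-c$ after setting $q=2^s$ and taking $s$-th roots.

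Your last step, absorbing $\Id_k(h_N)$ into $O(\log N)$ extra copies via the spanning-tree diagonal, matches the paper.
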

	\begin{proof}
		Let $V(G)=\{1,\ldots,k\}$ and put $\ell=\MinCut(G)$. For each positive integer $N$ and each nonempty proper subset $S\subset V(G)$, Lemmas~\ref{lem:flattening-properties} and~\ref{lem:cut-flattening} give
		\[
		\Q_{\F}(\GT_2(G)^{\Kprod N})
		\le \operatorname{FR}_S(\GT_2(G)^{\Kprod N})
		=2^{N|\delta_G(S)|}.
		\]
		Taking $N$-th roots, then the limit as $N\to\infty$, and minimizing over $S$ proves $\aQ_{\F}(\GT_2(G))\le2^\ell$.
		
		If $G$ is disconnected, then $\ell=0$. The tensor is nonzero, so its subrank is at least $1$, and the upper bound proves the claim. If $G$ is connected and $k=2$, it has $m=|E(G)|$ parallel edges and $\GT_2(G)\cong\Id_2(2^m)$, whose asymptotic subrank is $2^m$; its minimum cut is $m$. We may therefore assume that $G$ is connected and $k\ge3$.
		
		Put $m=|E(G)|$ and $d=m-\ell$. Fix integral vectors $(c_e)_{e\in E(G)}$ as in Lemma~\ref{lem:integral-orthogonal}, and write $c_e=(c_{e,1},\ldots,c_{e,d})$ for every $e\in E(G)$. For every vertex $v$, let $E_v$ be the set of edges incident to $v$. Because $\deg(v)\ge\ell$, the family $(c_e)_{e\in E(G)\setminus E_v}$ has at most $m-\ell=d$ members and is linearly independent.
		
		We first select a large set of edge labelings. Fix an integer $q\ge2$ and define
		\[
		B=\max\left(1,\max_{1\le j\le d}\left(\sum_{e\in E(G)}|c_{e,j}|\right)\right),
		\quad C=(2B+1)^d.
		\]
		The constants $B,C$ depend only on the fixed vectors, not on $q$. For each vertex $v$, list its incident edge labels in the fixed order as $e_{v,1},\ldots,e_{v,\deg(v)}$. For an edge labeling $\lambda:E(G)\to[q]_0$, define its local coordinate by
		\[
		I_v(\lambda)=\sum_{t=1}^{\deg(v)}\lambda(e_{v,t})q^{\deg(v)-t}.
		\]
		This is the coordinate convention of Definition~\ref{def:graph-tensor}, with parallel edges kept as distinct labels. Each coordinate of $\sum_{e\in E(G)}c_e\lambda(e)$ is an integer in $[-B(q-1),B(q-1)]$. There are therefore at most $(2Bq+1)^d\le Cq^d$ possible values of this sum. There are $q^m$ labelings. By the pigeonhole principle, some $g\in\mathbb Z^d$ has a fiber
		\[
		\mathcal F_g=
		\left\{\lambda:E(G)\to[q]_0\;\middle|\;
		\sum_{e\in E(G)}c_e\lambda(e)=g\right\}.
		\]
		Writing $R=|\mathcal F_g|$, this choice satisfies $R\ge C^{-1}q^{m-d}=C^{-1}q^\ell$.
		
		We next construct a degeneration that keeps exactly this fiber. The desired total weight is the squared distance $\|\sum_{e\in E(G)}c_e\lambda(e)-g\|^2$. Its only potentially nonlocal terms couple two edges; orthogonality makes those terms vanish unless the edges share a vertex. We can therefore assign every remaining term to one vertex that sees all of its labels. Use the fixed ordering of edge labels. For an edge $e$ with endpoints $u<v$, put $o(e)=u$. For two distinct incident edge labels $e,f$, let $o(e,f)$ be their smallest common endpoint. Thus even when $e,f$ are parallel, the pair is assigned to exactly one vertex. For each vertex $v$, define an integer weight $w_v(j)$ on its local coordinate indices $j\in[q^{\deg(v)}]_0$ by
		\[
		\begin{aligned}
			w_v(I_v(\lambda))={}&
			\mathbf 1_{\{v=1\}}\langle g,g\rangle
			+\sum_{\substack{e\in E(G)\\o(e)=v}}
			\bigl(\langle c_e,c_e\rangle\lambda(e)^2-2\langle g,c_e\rangle\lambda(e)\bigr)\\
			&+2\sum_{\substack{e,f\in E_v\\e<f,\ o(e,f)=v}}
			\langle c_e,c_f\rangle\lambda(e)\lambda(f).
		\end{aligned}
		\]
		Here $\mathbf 1_{\{v=1\}}$ is $1$ when $v=1$ and $0$ otherwise. The coordinate $I_v(\lambda)$ uniquely determines the incident labels, and each summand uses only those labels. Thus this defines a weight on every local coordinate. Every unordered pair of incident edge labels contributes its cross term at exactly one common endpoint. Also, $\langle c_e,c_f\rangle=0$ when $e$ and $f$ are disjoint, by orthogonality in the line graph. Consequently, summing these local weights gives
		\[
		\sum_{v=1}^{k}w_v(I_v(\lambda))
		=\left\langle\sum_{e\in E(G)}c_e\lambda(e)-g,
		\sum_{e\in E(G)}c_e\lambda(e)-g\right\rangle.
		\]
		This is a nonnegative integer and is zero exactly for $\lambda\in\mathcal F_g$. The equality is an identity of integer exponents; these integers are not reduced modulo the characteristic of $\F$. In the formal multilinear notation of Definition~\ref{def:multilinear-notation}, make the coordinate substitutions
		\[
		x_j^{(v)}\mapsto\varepsilon^{w_v(j)}x_j^{(v)}
		\qquad(1\le v\le k,\ 0\le j<q^{\deg(v)}).
		\]
		The total-weight calculation shows that the resulting expression is polynomial in $\varepsilon$ and that its constant coefficient represents the tensor
		\[
		T_g=\sum_{\lambda\in\mathcal F_g}
		\bigCprod_{v=1}^{k}e_{I_v(\lambda)}^{(q^{\deg(v)})}.
		\]
		Remark~\ref{rem:weights} therefore gives $T_g\unlhd\GT_q(G)$.
		
		We next show $\Id_k(R)\preceq T_g$ by deleting unused coordinates and relabeling the remaining coordinates on each leg. Suppose that $\lambda,\mu\in\mathcal F_g$ agree on all edges incident to a vertex $v$. Subtracting their two fiber equations yields $\sum_{e\in E(G)\setminus E_v}c_e(\lambda(e)-\mu(e))=0$. The vectors in this sum are linearly independent, as observed above. Thus $\lambda(e)=\mu(e)$ for every $e\in E(G)\setminus E_v$, and the labelings agree on all edges. Hence distinct labelings in $\mathcal F_g$ give distinct basis vectors at every vertex. Enumerate the $R$ labelings as $\lambda_0,\ldots,\lambda_{R-1}$, and make the coordinate substitutions
		\[
		x_{I_v(\lambda_h)}^{(v)}\longmapsto y_h^{(v)}
		\qquad(1\le v\le k,\ 0\le h<R),
		\]
		where the $y_h^{(v)}$ are new output coordinate symbols and all unused input symbols are set to zero. The injectivity just proved makes these substitutions well defined. The resulting expression is
		\[
		\sum_{h=0}^{R-1}\prod_{v=1}^{k}y_h^{(v)},
		\]
		which represents $\Id_k(R)$. Thus $\Id_k(R)\preceq T_g$.
		Composing this restriction with $T_g\unlhd\GT_q(G)$ proves the required degeneration $\Id_k(R)\unlhd\GT_q(G)$.
		
		To pass from this degeneration to ordinary asymptotic subrank, we absorb its polynomial coefficient-extraction cost into a sublinear number of extra graph-tensor copies. Fix $q$ and choose polynomial local maps witnessing $\Id_k(R)\unlhd\GT_q(G)$, clearing Laurent powers if necessary. Let $a$ be their leading degree in Definition~\ref{def:degeneration}. Applying Lemma~\ref{lem:degeneration-asymptotic} with $S=\Id_k(R)$ and $T=\GT_q(G)$ gives the following restriction with $h_N=\binom{aN+k-1}{k-1}$:
		\[
		\Id_k(R^N)\preceq
		\GT_q(G)^{\Kprod N}\Kprod\Id_k(h_N)
		\quad(N\ge1).
		\]
		Let $t_N=\lceil\log_q(h_N)\rceil$. Lemma~\ref{lem:tree-diagonal}, applied to $t_N$ copies and followed by coordinate projection, gives
		\[
		\Id_k(h_N)\preceq\Id_k(q^{t_N})
		\preceq\GT_q(G)^{\Kprod t_N}.
		\]
		The same statement holds for $t_N=0$ by the conventions for zeroth powers. Combining the two restrictions gives $\Q_{\F}(\GT_q(G)^{\Kprod(N+t_N)})\ge R^N$.
		
		Now choose $q=2^s$ for a positive integer $s$. By Lemma~\ref{lem:graph-multiplicative}, $\GT_q(G)$ is isomorphic to $\GT_2(G)^{\Kprod s}$. Since $t_N=O(\log N)$ for fixed $q$, taking $s(N+t_N)$-th roots in the preceding inequality and letting $N\to\infty$ gives $\aQ_{\F}(\GT_2(G))\ge R^{1/s}\ge2^\ell C^{-1/s}$. Finally, $C$ does not depend on $s$. Letting $s\to\infty$ proves $\aQ_{\F}(\GT_2(G))\ge2^\ell$, which matches the upper bound.
	\end{proof}

	The binary multigraph formula also handles nonuniform dimensions asymptotically. In a large tensor power, each edge dimension can be approximated from below by a power of two; parallel binary edges then realize the rounded logarithmic weights. The following theorem makes this passage precise.
	
	\begin{theorem}[weighted minimum-cut equality]
		\label{thm:mincut-subrank}
		For every field $\F$ and every dimension matrix $Q$ of order $n\ge2$ with logarithmic edge-weight matrix $W$,
		\[
		\log_2\aQ_{\F}(\WT(Q))=\MinCut(W).
		\]
	\end{theorem}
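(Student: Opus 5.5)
The upper bound comes from cut flattenings. For every nonempty proper $S$ and every $N$, Lemmas~\ref{lem:flattening-properties} and~\ref{lem:cut-flattening} give
\[
\Q_{\F}(\WT(Q)^{\Kprod N})\le\operatorname{FR}_S(\WT(Q)^{\Kprod N})=2^{N\operatorname{cut}_W(S)}.
\]
Taking $N$-th roots and minimizing over $S$ yields $\log_2\aQ_{\F}(\WT(Q))\le\MinCut(W)$.

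For the lower bound I reduce to Lemma~\ref{lem:multigraph-mincut} by binary rounding inside tensor powers. Fix a positive integer $s$ and set $m_{i,j}=\lfloor sW_{i,j}\rfloor$. Let $H_s$ be the loopless multigraph on $[n]$ with $m_{i,j}$ parallel edges between $i$ and $j$. Then $\GT_2(H_s)\cong\WT(Q^{(s)})$ with $Q^{(s)}_{i,j}=2^{m_{i,j}}$. By construction $2^{m_{i,j}}\le Q_{i,j}^s$. Lemma~\ref{lem:graph-multiplicative} gives $\WT(Q)^{\Kprod s}\cong\WT(Q^{\circ s})$, where $Q^{\circ s}$ is the entrywise power, and its monotonicity statement then gives
\[
\GT_2(H_s)\preceq\WT(Q)^{\Kprod s}.
\]
Monotonicity of asymptotic subrank under restriction, together with $\aQ_{\F}(T^{\Kprod s})=\aQ_{\F}(T)^s$ (as in the proof of Corollary~\ref{cor:base-independence}), then gives
\[
s\log_2\aQ_{\F}(\WT(Q))\ge\log_2\aQ_{\F}(\GT_2(H_s))=\MinCut(H_s).
\]
The last equality is Lemma~\ref{lem:multigraph-mincut}.

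It remains to compare cuts. For every $S$,
\[
|\delta_{H_s}(S)|=\sum_{\text{crossing }(i,j)}\lfloor sW_{i,j}\rfloor\ge s\operatorname{cut}_W(S)-\binom n2.
\]
Hence $\MinCut(H_s)\ge s\MinCut(W)-\binom n2$. Dividing by $s$ and letting $s\to\infty$ gives $\log_2\aQ_{\F}(\WT(Q))\ge\MinCut(W)$, which matches the upper bound.

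The only delicate points are bookkeeping. First, Lemma~\ref{lem:multigraph-mincut} must be applicable to $H_s$, which may be disconnected or edgeless. This is fine, since the lemma covers every loopless multigraph with at least two vertices, and in the disconnected case both sides are $0$. Second, the identification of $\GT_2(H_s)$ with $\WT(Q^{(s)})$ uses the pairing of parallel labels noted before Definition~\ref{def:orthogonal-representation}. I expect no real obstacle beyond verifying these identifications; the hard work is all in Lemma~\ref{lem:multigraph-mincut}.
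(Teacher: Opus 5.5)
Your proposal is correct and matches the paper's proof essentially step for step. Both use the cut-flattening upper bound, then the multigraph $H_s$ with $\lfloor sW_{i,j}\rfloor$ parallel edges restricted from $\WT(Q)^{\Kprod s}$, then Lemma~\ref{lem:multigraph-mincut}, and finally the rounding estimate $|\delta_{H_s}(S)|\ge s\operatorname{cut}_W(S)-\binom n2$ before letting $s\to\infty$; your remarks on disconnected or edgeless $H_s$ are consistent with how that lemma is stated.
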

	\begin{proof}
		Write $T=\WT(Q)$. The upper bound comes from a minimum-cut flattening: for every nonempty proper vertex set $S$ and every positive integer $N$, Lemmas~\ref{lem:flattening-properties} and~\ref{lem:cut-flattening} give
		\[
		\Q_{\F}(T^{\Kprod N})
		\le\operatorname{FR}_S(T^{\Kprod N})
		=2^{N\operatorname{cut}_W(S)}.
		\]
		Taking $N$-th roots and the limit, then minimizing the base-two logarithm over $S$, proves
		$\log_2\aQ_{\F}(T)\le\MinCut(W)$.
		
		For the reverse bound, we approximate the logarithmic edge weights in a tensor power by integer multiplicities, so that Lemma~\ref{lem:multigraph-mincut} applies. Fix a positive integer $s$ and form the loopless multigraph $H_s$ on the same $n$ vertices with
		$M^{(s)}_{i,j}=\lfloor sW_{i,j}\rfloor$
		parallel edges between vertices $i$ and $j$, for $1\le i<j\le n$. Pairing and projection as in Lemma~\ref{lem:graph-multiplicative} give
		\[
		\GT_2(H_s)\cong\WT(Q^{(s)})\preceq T^{\Kprod s},
		\qquad Q^{(s)}_{i,j}=2^{M^{(s)}_{i,j}}\le Q_{i,j}^{s}
		\quad(1\le i<j\le n),
		\]
		where each dimension matrix is completed symmetrically with diagonal one. Monotonicity and exact multiplicativity on powers of asymptotic subrank now imply
		\[
		s\log_2\aQ_{\F}(T)
		\ge\log_2\aQ_{\F}(\GT_2(H_s))
		=\MinCut(H_s),
		\]
		using Lemma~\ref{lem:multigraph-mincut}. For every nonempty proper vertex set $S$, rounding down changes the total cut weight by less than $\binom n2$, so
		\[
		s\operatorname{cut}_W(S)-\binom n2
		\le |\delta_{H_s}(S)|
		\le s\operatorname{cut}_W(S).
		\]
		Taking the minimum over the same finite set of cuts gives
		$s\MinCut(W)-\binom n2\le\MinCut(H_s)\le s\MinCut(W)$.
		Divide by $s$ and let $s\to\infty$. This proves the lower bound, including disconnected supports and zero edge weights, and hence equality.
	\end{proof}
	
\end{document}